\documentclass[11pt,english]{article}
\usepackage[T1]{fontenc}
\usepackage[latin9]{inputenc}
\usepackage{geometry}
\geometry{verbose,tmargin=2.5cm,bmargin=2.5cm,lmargin=2.5cm,rmargin=2.5cm}
\synctex=-1
\usepackage{color}
\usepackage{babel}
\usepackage{verbatim}
\usepackage{amsmath}
\usepackage{amsthm}
\usepackage{amssymb}
\usepackage{graphicx}
\usepackage{setspace}
\usepackage[authoryear]{natbib}
\PassOptionsToPackage{normalem}{ulem}
\usepackage{ulem}
\doublespacing
\usepackage[unicode=true,pdfusetitle,
 bookmarks=true,bookmarksnumbered=false,bookmarksopen=false,
 breaklinks=false,pdfborder={0 0 0},pdfborderstyle={},backref=false,colorlinks=false]
 {hyperref}

\makeatletter
\theoremstyle{remark}
\newtheorem{rem}{\protect\remarkname}
\theoremstyle{plain}
\newtheorem{lem}{\protect\lemmaname}
\theoremstyle{plain}
\newtheorem{assumption}{\protect\assumptionname}
\theoremstyle{plain}
\newtheorem{prop}{\protect\propositionname}
\theoremstyle{plain}
\newtheorem{thm}{\protect\theoremname}
\theoremstyle{definition}
 \newtheorem{example}{\protect\examplename}
\theoremstyle{plain}
\newtheorem{cor}{\protect\corollaryname}


\usepackage{appendix}
\usepackage{multirow}
\usepackage{float}
\usepackage{booktabs}
\usepackage{longtable}
\usepackage{rotating}
\renewcommand{\hat}{\widehat}
\renewcommand{\tilde}{\widetilde}

\@ifundefined{showcaptionsetup}{}{%
 \PassOptionsToPackage{caption=false}{subfig}}
\usepackage{subfig}
\makeatother

\providecommand{\assumptionname}{Assumption}
\providecommand{\corollaryname}{Corollary}
\providecommand{\examplename}{Example}
\providecommand{\lemmaname}{Lemma}
\providecommand{\propositionname}{Proposition}
\providecommand{\remarkname}{Remark}
\providecommand{\theoremname}{Theorem}

\begin{document}
\title{On LASSO for High Dimensional Predictive Regression}
\author{Ziwei Mei and Zhentao Shi}
\date{}
\maketitle
\begin{abstract}
This paper examines LASSO, a widely-used $L_{1}$-penalized regression
method, in high dimensional linear predictive regressions, particularly
when the number of potential predictors exceeds the sample size and
numerous unit root regressors are present. The consistency of LASSO
is contingent upon two key components: the deviation bound of the
cross product of the regressors and the error term, and the restricted
eigenvalue of the Gram matrix. We present new probabilistic bounds
for these components, suggesting that LASSO's rates of convergence
are different from those typically observed in cross-sectional cases.
When applied to a mixture of stationary, nonstationary, and cointegrated
predictors, LASSO maintains its asymptotic guarantee if predictors
are scale-standardized. Leveraging machine learning and macroeconomic
domain expertise, LASSO demonstrates strong performance in forecasting
the unemployment rate, as evidenced by its application to the FRED-MD
database.
\end{abstract}
\thispagestyle{empty}

\vspace{0.8cm}

\noindent Key words: Cointegration, Forecast, Macroeconomics, Time
series, Unit root

\noindent JEL code: C22, C53, C55

\vspace{0.8cm}

\small \noindent Ziwei Mei: \texttt{zwmei@link.cuhk.edu.hk}. Corresponding
author: Zhentao Shi: \texttt{zhentao.shi@cuhk.edu.hk}. Tel: (852)
3943 1432. Fax: (852) 2603 5805. Address: 928 Esther Lee Building,
the Chinese University of Hong Kong, Shatin, New Territories, Hong
Kong SAR, China. We thank Anna Bykhovskaya, Jinyuan Chang, Jianqing
Fan, Yingying Li, Alexey Onatskiy, Whitney Newey, Liangjun Su, and
Etienne Wijler for helpful comments.

\newpage{}

\normalsize
\onehalfspacing

\section{Introduction\label{sec:Introduction}}

Machine learning is a rapidly evolving field that has significantly
reshaped numerous academic disciplines. While statisticians often
concentrate on scenarios where the sample comprises independently
and identically distributed (i.i.d.)~observations, econometricians
pay special attention to settings where variables are gathered over
time. Temporal dependence plays a crucial role in these data generating
processes (DGP). In the context of parameter estimation, weakly dependent
data, under certain technical conditions, bear resemblance to i.i.d.~data
as the time span increases. However, many established theoretical
results under i.i.d.~data are inapplicable to instances where time
series exhibit high persistence.

Prediction is an important theme of empirical macroeconomics and finance.
Although forecasting the stock market is notoriously challenging,
recent advancements in machine learning offer some silver lining \citep{gu2020empirical};
forecasting macroeconomic variables, on the other hand, is more feasible
\citep{Stock2012,medeiros2021forecasting}. Macroeconomic time series
encompass a diverse range of dynamic patterns. GDP, industrial production
index, exchange rates, and money supply all provide perspectives on
the economy's current state and may offer insights into its future
trajectory.

Linear predictive regression is a straightforward model. However,
the presence of persistent regressors can pose a multitude of theoretical
and practical challenges in its estimation and inference. In recent
years, the advent of macroeconomic big data has sparked research interest
in digesting numerous potential variables in macroeconomics \citep{ng2013variable}.
The \emph{least absolute shrinkage and selection operator} (LASSO)
\citep{tibshirani1996regression}, an off-the-shelf machine learning
method for linear regressions, is one of such tools. While LASSO and
its variants have been extensively explored in statistics for i.i.d.~data,
only a handful of recent econometric papers have examined predictive
regression in the context of nonstationary regressors with growing
dimensions.

This paper serves as a stepping stone toward understanding LASSO in
high dimensional predictive regressions with persistent variables.
In particular, it considers the setting when a large number of unit
root regressors are present, and in scenarios where the number of
regressors ($p$) exceeds the sample size ($n$). We allow the innovation
processes to be time dependent and non-Gaussian. Under these conditions,
we introduce novel asymptotic rates for the \emph{deviation bound}
(DB) and the \emph{restricted eigenvalue} (RE), which will be discussed
in Sections \ref{subsec:DB} and \ref{subsec:RE}. DB and RE are two
pivotal conditions that govern LASSO's behavior. We establish convergence
rates under the nonstationary time series setting, which differ from
those of i.i.d.~data \citep{buhlmann2011statistics} and weakly dependent
data \citep{kock2015oracle,medeiros2016,Mogliani2021}.

To enhance the practical relevance of our theory, we expand the stylized
regression model with all unit root regressors in two ways. Firstly,
we examine not only \citet{tibshirani1996regression}'s original LASSO,
which imposes the same penalty level to all coefficients, but also
a LASSO variant that standardizes each regressor with its sample standard
deviation (s.d.). We refer to the former as \emph{Plain LASSO} (Plasso)
and the latter as \emph{Standardized LASSO} (Slasso), following \citet{lee2022lasso}.
While Plasso is more straightforward for theoretical analysis, Slasso
is commonly the default in applications. The asymptotic theory for
Slasso under i.i.d.~data can be easily extended from Plasso, as each
sample s.d.~is expected to converge \emph{in probability} to a positive
constant. However, the sample s.d.~of a unit root process, when divided
by $\sqrt{n}$, converges \emph{in distribution} to a non-degenerate
stochastic integral, introducing additional randomness and altering
the convergence rate. Secondly, we consider a mix of unit root, stationary,
and cointegrated regressors. The researcher maintains an agnostic
stance and includes all these variables in the regression without
pre-testing to categorize them. We find that while Slasso maintains
asymptotic guarantees, Plasso encounters multiple challenges. These
enhancements enrich the theory and broaden the applicability. 

We utilize the FRED-MD database to employ LASSO in forecasting the
unemployment rate in the United States. Initially, we include all
121 variables from the database as predictors. Adhering to a standard
practice in empirical studies, we transform each nonstationary time
series into a stationary one, and then compare the forecast results
with those based on the raw data without any transformation. Our findings
reveal that Plasso underperforms in comparison to Slasso when given
the same set of predictors. Moreover, the raw data are stronger than
the stationarized data as they better match the persistence of the
dependent variable. Subsequently, we experiment with a more comprehensive
setting, incorporating four lags of each predictor along with lagged
dependent variables and extracted factors. With a total of 504 regressors,
Slasso further reduces the prediction error, suggesting that macroeconomic
domain knowledge is beneficial in guiding initial specifications.

\medskip

This paper adds to a burgeoning literature concerning many nonstationary
time series. \citet{lee2022lasso} explore variable selection of (adaptive)
LASSO in the low dimensional setting where $p$ is fixed and highlights
that some well-known LASSO properties for i.i.d.~data collapse when
faced with nonstationary data. Several papers consider the minimum
eigenvalue or the RE of the Gram matrix of root unit processes as
$p/\sqrt{n}\to0$, which we call the case of \emph{moderate dimension}.
\citet{koo2020high} study a predictive regression with the unit root
regression forming cointegration systems, and they leave the symbol
of RE in the rate of convergence; they do not provide a lower bound
for the RE. \citet{fan2023predictive} work with quantile regressions
in a similar setting with an assumed RE. An explicit rate that bounds
the minimum eigenvalue is deduced in \citet{zhang2019identifying}
as a by-product of their exploration of cointegration systems, and
\citet{smeekes2021automated} use it to bound the RE in moderate dimensional
predictive regressions.

When we were preparing this manuscript, \citet{wijler2022restricted}
independently derived the RE of high dimensional ($p>n$) unit root
processes under the assumption that the innovations are i.i.d.~Gaussian,
as in \citet{kock2015oracle}. Our paper differs from \citet{wijler2022restricted}
in the following aspects. First, \citet{wijler2022restricted} uses
non-asymptotic tail bounds based on sub-Gaussian distributions \citep[Eq.(2.9)]{Wainwright2019high}
to obtain the rate of convergence of Plasso. Our approach, based on
the non-asymptotic deviation inequalities for the maximum and minimum
eigenvalues of Wishart random matrices \citep[Theorem 6.1]{Wainwright2019high}
offers sharper rates. Second, we go beyond i.i.d.~Gaussian and accommodate
sub-exponential and temporally dependent innovations by leveraging
the technique of Koml\'{o}s-Major-Tusn\'{a}dy coupling \citep[1976]{Komlos1975}
to achieve Gaussian approximation. Third, as in \citet{lee2022lasso}
our paper provides a comprehensive discussion of Plasso and Slasso
in the setting with a mix of unit root, stationary time series, and
cointegrated variables.

While high dimensional estimation counts on a well-behaved RE, testing
problems often involve the maximum eigenvalue or a few large eigenvalues.
Techniques of eigen-analysis for large random matrices are carried
over into nonstationary time series by \citet{zhang2018clt} for unit
root tests, by \citet{onatski2018alternative} and \citet{bykhovskaya2022asymptotics,bykhovskaya2022cointegration}
for cointegration tests, and by \citet{onatski2021spurious} for principal
component analysis and spurious regressions. 

Besides a handful of papers mentioned above, machine learning grows
fast in econometrics, covering i.i.d.~data \citep{chernozhukov2017double,caner2018asymptotically},
panel data \citep{su2016identifying,Shiforthcoming,shi2023forward},
weakly dependent time series \citep{yousuf2021boosting,babii2022machine},
and nonstationary time series \citep{phillips2021boosting,mei2022boosted,masini2022counterfactual},
to name a few.

\medskip

The rest of the paper is organized as follows. Section \ref{sec:Preliminaries}
introduces LASSO and two variants in implementation, namely Plasso
and Slasso. We put them into a unified framework, via a lemma that
highlights the two key building blocks. In Section \ref{sec:Theory},
we first focus on the low-level assumptions for DB and RE. We then
apply them to obtain the rates of convergence of Plasso and Slasso
given pure unit root regressors, respectively. To better match practical
circumstances, we further study LASSO given mixed regressors. Section
\ref{sec:Simulations} carries out Monte Carlo simulations and the
results corroborate the theoretical analysis. Section \ref{sec:Empirical-demo}
applies LASSO to predict the unemployment rate. Section \ref{sec:Conclusion}
concludes the paper. All technical proofs and additional simulation
results are relegated to the Online Appendices. 

\section{LASSO \label{sec:Preliminaries} }

Prior to formal presentation, we set up the notations. The set of
natural numbers, integers, real numbers, and complex numbers are denoted
as $\mathbb{N}$, $\mathbb{Z}$, $\mathbb{R}$, and $\mathbb{C}$,
respectively. The integer set $\{1,2,\cdots,n\}$ is denoted as $[n]$
for some $n\in\mathbb{N}$. The integer floor function and ceiling
function are denoted as $\left\lfloor \cdot\right\rfloor $ and $\left\lceil \cdot\right\rceil $,
respectively. For an $n$-dimensional vector $x=(x_{t})_{t\in[n]}$,
the $L_{2}$-norm is $\left\Vert x\right\Vert _{2}=\sqrt{\sum_{t=1}^{n}x_{t}^{2}}$,
the $L_{1}$-norm is $\left\Vert x\right\Vert _{1}=\sum_{t=1}^{n}\left|x_{t}\right|$,
and its sup-norm is $\|x\|_{\infty}=\sup_{t\in[n]}|x_{t}|$; we use
``double dots'' to denote the demeaned version $\ddot{x}=x-\bar{x}\cdot1_{n}$,
where $\bar{x}=n^{-1}\sum_{t=1}^{n}x_{t}$ and $1_{n}$ is a vector
of $n$ ones. Let $0_{n}$ be an $n\times1$ zero vector, and $I_{n}$
be the $n\times n$ identity matrix. For a generic index set $\mathcal{M}\subset[p]$
for some $p\in\mathbb{N}$, we use $\mathcal{M}^{c}=[p]\backslash\mathcal{M}$
to denote its complement, and $x_{\mathcal{M}}=\{x_{j}\}_{j\in\mathcal{M}}$
to denote the subvector of $x$ with coordinates located in $\mathcal{M}.$
For a generic matrix $B,$ let $B_{ij}$ be the $(i,j)$th element,
and $B^{\top}$ be its transpose. Let $\|B\|_{\max}=\max_{i,j}|B_{ij}|$,
and $\lambda_{\min}(B)$ and $\lambda_{\max}(B)$ be the minimum and
maximum eigenvalues, respectively. Define $a\wedge b:=\min\left\{ a,b\right\} $,
and $a\vee b:=\max\left\{ a,b\right\} $. An \emph{absolute constant}
is a positive, finite constant that is invariant with the sample size.
The abbreviation ``w.p.a.1'' is short for ``with probability approaching
one''. ``$a_{n}\stackrel{\mathrm{p}}{\preccurlyeq}b_{n}$'' means
that there is an absolute constant, say $c$, such that the event
$\left\{ a_{n}\leq cb_{n}\right\} $ holds w.p.a.1. Symmetrically,
``$a_{n}\stackrel{\mathrm{p}}{\succcurlyeq}b_{n}$'' means ``$b_{n}\stackrel{\mathrm{p}}{\preccurlyeq}a_{n}$''.

\subsection{Formulations}

Let $W_{t}=(W_{jt})_{j\in[p]}$ be a $p$-vector of regressors. At
time $n$, an econometrician is interested in using a linear combination
$\alpha+W_{n}^{\top}\theta$ to predict a future outcome $y_{n+1}$.
To learn the coefficients $\alpha$ and $\theta$, she collects historical
data $Y=\left(y_{t}\right)_{t\in[n]}$ ($n\times1$ vector) and $W=\left(W_{0},W_{1},\ldots,W_{n-1}\right)^{\top}=(W_{t-1}^{\top})_{t\in[n]}$
($n\times p$ matrix). When $p$ is close to $n$, or larger than
$n$, LASSO \citep{tibshirani1996regression} is one of the off-the-shelf
estimation methods. It minimizes the sum of squared residuals plus
an $L_{1}$ penalty 
\begin{equation}
(\widehat{\alpha}^{{\rm P}},\widehat{\theta}^{{\rm P}}):=\arg\min_{\alpha,\theta}\left\{ n^{-1}\left\Vert Y\boldsymbol{-}\alpha1_{n}-W\theta\right\Vert _{2}^{2}+\lambda\left\Vert \theta\right\Vert _{1}\right\} ,\label{eq:Lasso.theta.origin}
\end{equation}
where the intercept $\alpha$ is not penalized. The superscript ``$\mathrm{P}"$
of the estimator signifies \emph{P}lasso. Prediction is made as $\widehat{y}_{n+1}^{{\rm P}}=\widehat{\alpha}^{{\rm P}}+W_{n}^{\top}\widehat{\theta}^{{\rm P}}$.

Plasso is not scale-invariant, meaning that if we multiply a non-zero
constant $c_{j}$ to a regressor $W_{j,t-1}$, the corresponding LASSO
estimate will not change proportionally to $\widehat{\theta}_{j}^{\mathrm{P}}/c_{j}$.
Given that scale-invariance is a desirable property, a common practice
--- like the default option of LASSO via \texttt{glmnet::glmnet(x,y)}
in the \texttt{R} software --- scale-standardizes each regressor
by its sample s.d.~$\widehat{\sigma}_{j}=(n^{-1}\sum_{t=1}^{n}(W_{j,t-1}-\bar{W}_{j})^{2})^{1/2}$.
Let $D={\rm diag}(\widehat{\sigma}_{1},\widehat{\sigma}_{2},\cdots,\widehat{\sigma}_{p})$
be the diagonal matrix that stores the sample s.d., and the \emph{S}lasso
estimator is 
\begin{equation}
(\hat{\alpha}^{{\rm S}},\hat{\theta}^{{\rm S}}):=\arg\min_{\alpha,\theta}\left\{ n^{-1}\left\Vert Y\boldsymbol{-}\alpha1_{n}-W\theta\right\Vert _{2}^{2}+\lambda\left\Vert D\theta\right\Vert _{1}\right\} ,\label{eq:Lasso.theta.std}
\end{equation}
for which the prediction is made as $\widehat{y}_{n+1}^{{\rm S}}=\widehat{\alpha}^{{\rm S}}+W_{n}^{\top}\widehat{\theta}^{{\rm S}}$. 

To analyze Plasso and Slasso under the same framework, we write 
\begin{equation}
\left(\hat{\alpha},\hat{\theta}\right):=\arg\min_{\alpha,\theta}\left\{ n^{-1}\left\Vert Y\boldsymbol{-}\alpha1_{n}-W\theta\right\Vert _{2}^{2}+\lambda\left\Vert H\theta\right\Vert _{1}\right\} \label{eq:gene_lasso}
\end{equation}
where $H$ is a positive definite placeholder: $H=I_{p}$ in Plasso,
or $H=D$ in Slasso. In this paper, we will focus on the high dimensional
component $\hat{\theta}$, and the intercept is obviously $\hat{\alpha}=\bar{Y}-\bar{W}\hat{\theta}$
as it is unpenalized. Substitute $\widehat{\alpha}$ back to the criterion
function in (\ref{eq:gene_lasso}), the $\theta$ component is numerical
equivalent to 
\begin{equation}
\hat{\theta}:=\arg\min_{\theta}\left\{ n^{-1}\left\Vert \ddot{Y}-\ddot{W}\theta\right\Vert _{2}^{2}+\lambda\left\Vert H\theta\right\Vert _{1}\right\} \label{eq:demean_gene_lasso}
\end{equation}
where $\ddot{Y}=Y-\bar{Y}1_{n}$ and similar demeaning applies to
each column of $W$ to produce $\ddot{W}$. 

\subsection{Generic Convergence}

The above is the numerical programming independent of the DGP. Now,
suppose that the dependent variable is generated by 
\begin{equation}
y_{t}=\alpha^{*}+W_{t-1}^{\top}\theta^{*}+u_{t},\label{eq:DGP_W}
\end{equation}
where $\left(\alpha^{*},\theta^{*}\right)$ are the true parameters.
Sparsity means that most elements in $\theta^{*}$ are exactly zero.
Let $\mathcal{S}=\{j\in[p]:\theta_{j}^{*}\neq0\}$ be the true \emph{active
set}, i.e., the location of the non-zero components, with its cardinality
$s=|\mathcal{S}|.$ 

\medskip
\begin{rem}
Throughout this paper, we work with \emph{exact sparsity} for simplicity.
Extension to \emph{approximate sparsity} \citep[p.108-110]{buhlmann2011statistics}
is straightforward, although it will substantially complicate the
notations when we deal with the mixed regressors. One route of such
an extension is to follow \citet[Condition AS]{belloni2012sparse}
by modeling $y_{t}=\alpha^{*}+W_{t-1}^{\top}\theta_{n}^{*}+r_{t-1}+u_{t}$,
where the approximation error of the sparse coefficient is controlled
by $r_{t-1}$ which satisfies $\sum_{t=1}^{n}r_{t-1}^{2}=O_{p}(s)$.
For example, we can allow ``local-to-zero'' coefficients $\theta_{n}^{*}$
to relax exact sparsity in the form $\theta_{n}^{*}=\theta^{*}+\theta_{r}^{*}$
with $\theta^{*}$ is a sparse coefficient. When $W_{t-1}$ is a unit
root vector with i.i.d.~standard normal innovations, the sparse approximation
error $r_{t-1}=W_{t-1}^{\top}\theta_{r}^{*}\sim\mathcal{N}(0,(t-1)\|\theta_{r}^{*}\|_{2}^{2})$.
If the violation of exact sparsity is mild to the degree $\|\theta_{r}^{*}\|_{2}=O(\sqrt{s}/n)$,
then $\sum_{t=1}^{n}r_{t-1}^{2}=O_{p}\left(n^{2}\|\theta_{r}^{*}\|_{2}^{2}\right)=O_{p}(s)$
is satisfied. 
\end{rem}
\medskip

Well-known since \citet{bickel2009simultaneous}, the two essential
building blocks for the convergence of high dimensional LASSO are
the DB (See the condition in Lemma \ref{lem:gene-Lasso} below) and
the RE. Let $\hat{\Sigma}=\ddot{W}^{\top}\ddot{W}/n$ be the sample
covariance matrix of all regressors. For some $L>0$, the \emph{restricted
eigenvalue} is defined, in our context, as 
\begin{equation}
\kappa_{H}(\hat{\Sigma},L,s):=\inf_{\delta\in\mathcal{R}(L,s)}\dfrac{\delta^{\top}H^{-1}\hat{\Sigma}H^{-1}\delta}{\delta^{\top}\delta},\label{eq:RE}
\end{equation}
where $\mathcal{R}(L,s)=\{\delta\in\mathbb{R}^{p}\backslash\{0\}:\|\delta_{\mathcal{M}^{c}}\|_{1}\leq L\|\delta_{\mathcal{M}}\|_{1},\ \text{for all }|\mathcal{M}|\leq s\}.$
As our paper focuses on the rate of convergence, without loss of generality
we follow \citet[p.106]{buhlmann2011statistics} and \citet[Theorem 7.2]{bickel2009simultaneous}
by setting $L=3$ as a convenient choice of the constant and use $\widehat{\kappa}_{H}=\kappa_{H}(\hat{\Sigma},3,s)$
to simplify the notation. The following finite sample bounds hold
for the generic LASSO estimator (\ref{eq:demean_gene_lasso}). 
\begin{lem}
\label{lem:gene-Lasso} If $\lambda\geq4\|n^{-1}\sum_{t=1}^{n}H^{-1}\ddot{W}_{t-1}u_{t}\|_{\infty},$
then 
\begin{align*}
n^{-1}\|\ddot{W}(\ensuremath{\hat{\theta}-\theta^{*}})\|_{2}^{2} & \leq\frac{4\lambda^{2}s}{\widehat{\kappa}_{H}}\\
\|H(\hat{\theta}-\theta^{*})\|_{1} & \leq\frac{4\lambda s}{\widehat{\kappa}_{H}}\\
\|H(\hat{\theta}-\theta^{*})\|_{2} & \leq\frac{2\lambda\sqrt{s}}{\widehat{\kappa}_{H}}.
\end{align*}
\end{lem}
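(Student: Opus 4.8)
The plan is to run the classical LASSO basic‑inequality argument of \citet{bickel2009simultaneous}, adapted to the reweighted penalty $\lambda\|H\cdot\|_{1}$ and to the demeaned program (\ref{eq:demean_gene_lasso}). Throughout, write $\delta:=\hat\theta-\theta^{*}$ and $\eta:=H\delta$. Since $\hat\theta$ is the minimizer in (\ref{eq:demean_gene_lasso}), evaluating the criterion at $\theta^{*}$, substituting the demeaned form of the DGP (\ref{eq:DGP_W}) --- namely $\ddot Y=\ddot W\theta^{*}+\ddot u$, the intercept $\alpha^{*}$ cancelling under centering --- and expanding the quadratic isolates
\[
n^{-1}\|\ddot W\delta\|_{2}^{2}\;\le\;2n^{-1}\ddot u^{\top}\ddot W\delta+\lambda\bigl(\|H\theta^{*}\|_{1}-\|H\hat\theta\|_{1}\bigr).
\]

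First I would control the noise cross‑term. Because each column of $\ddot W$ is already centered, $\ddot W^{\top}\ddot u=\sum_{t=1}^{n}\ddot W_{t-1}u_{t}$; writing $\ddot W\delta=\ddot W H^{-1}\eta$ and applying H\"older's inequality gives $2n^{-1}|\ddot u^{\top}\ddot W\delta|\le 2\|n^{-1}\sum_{t=1}^{n}H^{-1}\ddot W_{t-1}u_{t}\|_{\infty}\|\eta\|_{1}\le\tfrac{\lambda}{2}\|\eta\|_{1}$, where the last inequality is exactly the hypothesis imposed on $\lambda$. For the penalty difference, the diagonal matrix $H$ preserves the support of $\theta^{*}$, so $H\theta^{*}$ is supported on $\mathcal S$; decomposing coordinates over $\mathcal S$ and $\mathcal S^{c}$ and using the triangle inequality yields $\|H\theta^{*}\|_{1}-\|H\hat\theta\|_{1}\le\|\eta_{\mathcal S}\|_{1}-\|\eta_{\mathcal S^{c}}\|_{1}$. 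Substituting both bounds and rearranging produces the key inequality $n^{-1}\|\ddot W\delta\|_{2}^{2}+\tfrac{\lambda}{2}\|\eta_{\mathcal S^{c}}\|_{1}\le\tfrac{3\lambda}{2}\|\eta_{\mathcal S}\|_{1}$.

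The remainder is the standard cone‑plus‑RE step. Nonnegativity of the left side forces $\|\eta_{\mathcal S^{c}}\|_{1}\le 3\|\eta_{\mathcal S}\|_{1}$, so $\eta$ lies in the restricted cone of (\ref{eq:RE}) with the index set taken to be the true active set $\mathcal S$; since $H^{-1}\eta=\delta$ and $\hat\Sigma=\ddot W^{\top}\ddot W/n$, the definition of $\widehat\kappa_{H}$ gives $n^{-1}\|\ddot W\delta\|_{2}^{2}=\eta^{\top}H^{-1}\hat\Sigma H^{-1}\eta\ge\widehat\kappa_{H}\|\eta\|_{2}^{2}$. Adding $\tfrac{\lambda}{2}\|\eta_{\mathcal S}\|_{1}$ to both sides of the key inequality and then invoking Cauchy--Schwarz on the $s$ coordinates in $\mathcal S$ yields $n^{-1}\|\ddot W\delta\|_{2}^{2}+\tfrac{\lambda}{2}\|\eta\|_{1}\le 2\lambda\|\eta_{\mathcal S}\|_{1}\le 2\lambda\sqrt{s}\,\|\eta\|_{2}$. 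Combining this with $\widehat\kappa_{H}\|\eta\|_{2}^{2}\le n^{-1}\|\ddot W\delta\|_{2}^{2}$ gives $\|\eta\|_{2}\le 2\lambda\sqrt{s}/\widehat\kappa_{H}$ (the third bound); re‑inserting this into $n^{-1}\|\ddot W\delta\|_{2}^{2}\le 2\lambda\sqrt{s}\,\|\eta\|_{2}$ gives the prediction bound $n^{-1}\|\ddot W\delta\|_{2}^{2}\le 4\lambda^{2}s/\widehat\kappa_{H}$; and the AM--GM estimate $2\lambda\sqrt{s}\,\|\eta\|_{2}-\widehat\kappa_{H}\|\eta\|_{2}^{2}\le\lambda^{2}s/\widehat\kappa_{H}$ applied to $\widehat\kappa_{H}\|\eta\|_{2}^{2}+\tfrac{\lambda}{2}\|\eta\|_{1}\le 2\lambda\sqrt{s}\,\|\eta\|_{2}$ gives $\|\eta\|_{1}\le 2\lambda s/\widehat\kappa_{H}\le 4\lambda s/\widehat\kappa_{H}$, the second bound.

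This argument is essentially mechanical, so there is no deep obstacle; the delicate points are organizational rather than substantive. The first is to route the cross‑product through $H^{-1}$ so that it pairs against $\|H\delta\|_{1}$ and the assumed deviation bound applies verbatim, which relies on the centering identity $\sum_{t}\ddot W_{t-1}\ddot u_{t}=\sum_{t}\ddot W_{t-1}u_{t}$. The second is to carry the cone membership and the restricted eigenvalue in terms of $\eta=H\delta$ rather than $\delta$, using the diagonal structure of $H$ and the identity $H^{-1}\eta=\delta$ so that $n^{-1}\|\ddot W\delta\|_{2}^{2}$ and $\delta^{\top}H^{-1}\hat\Sigma H^{-1}\delta$ are correctly matched when the generic vector in (\ref{eq:RE}) is instantiated at $\eta$. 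Up to harmless slack in the absolute constants, these steps deliver all three inequalities simultaneously.
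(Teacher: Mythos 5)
Your proposal is correct and follows essentially the same route as the paper's proof: the basic inequality for the program in $\eta=H(\hat\theta-\theta^{*})$, H\"older's inequality with the stated choice of $\lambda$, the cone condition $\|\eta_{\mathcal S^{c}}\|_{1}\le 3\|\eta_{\mathcal S}\|_{1}$, and the restricted eigenvalue applied to $\eta$. The only difference is bookkeeping at the end --- the paper extracts all three bounds from one master inequality via $4ab\le 4a^{2}+b^{2}$, while you derive them sequentially (obtaining an even slightly sharper $L_{1}$ constant) --- which is immaterial.
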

The condition requires that the tuning parameter $\lambda$ in the
LASSO estimation should be chosen above the deviation $\|n^{-1}\sum_{t=1}^{n}H^{-1}\ddot{W}_{t-1}u_{t}\|_{\infty}$,
which is governed by the DGP of $W_{t}$ and $u_{t}$. This is the
DB condition. The convergence rates of the LASSO estimator, signified
by the right-hand side expressions in Lemma \ref{lem:gene-Lasso},
are determined by the sparsity index $s$, the RE $\widehat{\kappa}_{H}$,
and the tuning parameter $\lambda$. We study in the next section
the conditions under which we can establish desirable rates for the
DB and RE, and then apply these two quantities to Plasso and Slasso
for their rates of convergence.

\section{Theory\label{sec:Theory}}

\subsection{Unit Root Regressors \label{sec:UnitRoot}}

This paper highlights unit root regressors. While $W_{j}$ is for
a generic regressor with coefficient $\theta_{j}$, we denote the
$j$th unit root regressor as $X_{j}=\left(X_{j0},\ldots,X_{j,t-1}\right)^{\top}$
and use $\beta_{j}$ as its coefficient. In this section we consider
a DGP 
\begin{equation}
y_{t}=\alpha^{*}+X_{t-1}^{\top}\beta^{*}+u_{t}\label{eq:DGP_X}
\end{equation}
where $X_{t}=\left(X_{1t},\ldots,X_{pt}\right)^{\top}$ is a vector
of $p$ unit root processes $X_{t}=X_{t-1}+e_{t}$, and for simplicity
let the initial value $\left\Vert X_{t=0}\right\Vert _{\infty}=O_{p}(1)$.
We concatenate it with the error term $u_{t}$ into a $(p+1)$-vector
$v_{t}=(e_{t}^{\top},u_{t})^{\top}$, and assume it is generated from
\begin{equation}
v_{t}=\Phi\varepsilon_{t},\label{eq:def-error}
\end{equation}
where $\ensuremath{\varepsilon_{t}=(\varepsilon_{jt})_{j\in[p+1]}}$
is a $(p+1)\times1$ random vector and $\Phi$ is a $(p+1)\times(p+1)$
deterministic matrix. For each $j\in[p+1]$, the shock
\begin{equation}
\varepsilon_{jt}=\sum_{d=0}^{\infty}\psi_{jd}\eta_{j,t-d}\label{eq:linrProc}
\end{equation}
follows a linear process \citep{phillips1992asymptotics}, which yields
temporal dependence.

We will use low-level assumptions to build up the two high-level asymptotic
properties DB and RE. Throughout this paper, we take the number of
regressors $p=p\left(n\right)$ and the sparsity index $s=s\left(n\right)$
as deterministic functions of the sample size $n$. In formal asymptotic
statements, we explicitly send $n\to\infty$ only, while it is understood
that $p(n)\to\infty$ as $n\to\infty$ whereas $s\left(n\right)$
is either fixed or divergent. 

\subsubsection{Deviation Bound \label{subsec:DB}}

{} We begin with the DB, which involves $n$ and $p$ only. As we allow
high dimensionality in that $p>n$, the model (\ref{eq:DGP_X}) must
be regularized by assumptions. We first impose Assumption \ref{assu:tail}
concerning the marginal distribution of the underlying shocks $\eta_{jt}$.
\begin{assumption}
\label{assu:tail} Suppose $(\eta_{jt})$ is i.i.d.~over the cross
section $j\in[p+1]$ and time $t\in\mathbb{Z}$ with $\mathbb{E}\eta_{jt}=0$
and $\mathbb{E}\eta_{jt}^{2}=1$. There exists an absolute constant
$C_{\mathrm{f}}$ such that 
\begin{equation}
\int_{-\infty}^{\infty}|f(x+a)-f(x)|dx\leq C_{\mathrm{f}}|a|,\ \forall a\in\mathbb{R},\label{eq:density}
\end{equation}
where $f$ is the density function of $\eta_{jt}$. For all $t\in\mathbb{Z}$
and $\mu>0$, there exist absolute constants $C_{\eta}$ and $b_{\eta}$
such that 
\begin{equation}
\Pr\left\{ |\eta_{jt}|>\mu\right\} \leq C_{\eta}\exp(-\mu/b_{\eta}).\label{eq:innovtail}
\end{equation}
 
\end{assumption}
In Assumption \ref{assu:tail} we assume i.i.d.~$\eta_{jt}$ with
density $f$ over both $j$ and $t$, following \citet{zhang2019identifying}
and \citet{smeekes2021automated}. It allows us to invoke concentration
inequalities in the high-dimensional setting. Condition (\ref{eq:innovtail})
is known as the \emph{sub-exponential} tail condition, which includes
the familiar \emph{sub-Gaussian} tail as a special case. 

Assumption \ref{assu:alpha} is concerning the coefficient $\psi_{jd}$
in the linear process (\ref{eq:linrProc}), which governs the temporal
dependence of $\varepsilon_{jt}$. For any $z\in\mathbb{C}$, we denote
the polynomial $\psi_{j}(z)=\psi_{j0}+\sum_{d=1}^{\infty}\text{\ensuremath{\psi_{jd}\cdot z^{d}}}=1+\sum_{d=1}^{\infty}\text{\ensuremath{\psi_{jd}\cdot z^{d}}}$,
where without loss of generality we normalize $\psi_{j0}=1$.
\begin{assumption}
\label{assu:alpha} For all $j\in[p+1]$, there exists some absolute
constants $C_{\psi}$, $c_{\psi}$ and $r$ such that the coefficients
of the linear processes
\begin{equation}
|\psi_{jd}|\leq C_{\psi}\exp\left(-c_{\psi}d^{r}\right),\ \ \forall d\in\mathbb{N},\label{eq:psi}
\end{equation}
and $\left|\psi_{j}(z)\right|>c_{\psi}>0$ for any $z\in\left\{ a\in\mathbb{C}:|a|\leq1\right\} $. 
\end{assumption}
Assumption \ref{assu:alpha} is a sufficient condition for linear
processes to satisfy the geometrically strong mixing ($\alpha$-mixing)
condition (See Lemma \ref{lem:mixing} in the Appendix). Finite-order
strictly stationary ARMA processes are special cases of (\ref{eq:psi})
as they admit MA($\infty$) representations with exponentially decaying
coefficients. The sub-exponential tail in Assumption \ref{assu:tail}
and strong mixing in Assumption \ref{assu:alpha} are common conditions
in high dimensional time series regressions \citep{fan2011high,fan2013large,ding2021high}. 

The cross-sectional dependence across the regressors, encoded in $\Phi$,
must be regularized as well. Assumption \ref{assu:covMat} is concerning
$\Phi$, which maps $\varepsilon_{t}$ into $v_{t}$ via (\ref{eq:def-error}).
Let $\Omega=\Phi\Phi^{\top}$.
\begin{assumption}
\label{assu:covMat} There are absolute constants $c_{\Omega}$, $C_{\Omega}$
and $C_{L}$ such that: (a) $c_{\Omega}\leq\lambda_{\min}(\Omega)\leq\lambda_{\max}(\Omega)\leq C_{\Omega}$;
(b) $\max_{j\in[p+1]}\sum_{\ell=1}^{p+1}\left|\Phi_{j\ell}\right|\leq C_{L}$. 
\end{assumption}
Assumption \ref{assu:covMat} (a) controls the magnitude of cross-sectional
correlation. It rules out the unfavorable cases where innovations
are very strongly correlated. Part (b), together with the condition
(\ref{eq:innovtail}), guarantees the sub-exponential tail of $v_{jt}=\sum_{\ell=1}^{p+1}\Phi_{j\ell}\varepsilon_{jt}$
for all $j\in[p+1]$. 
\begin{rem}
The literature on low dimensional regressions has developed a range
of general concepts to characterize dynamics in times series. Many
papers assume martingale difference sequence on $\eta_{jt}$, and
a vector moving average VMA($\infty$) process for $v_{t}$. In high
dimensional settings, we must ensure probabilistic results to hold
uniformly over a large $p$. For this purpose, we assume $\eta_{jt}$
i.i.d.~and form $v_{t}$ by linear combination of $(\varepsilon_{jt})_{t}$
to invoke existing concentration inequalities \citep{merlevede2011bernstein}
and coupling inequalities \citep[1976]{Komlos1975}. 
\end{rem}
The above assumptions have been prepared for DB.
\begin{prop}
\label{prop:UnitDB} Under Assumptions \ref{assu:tail}-\ref{assu:covMat},
if $(\log p)^{1+\frac{2}{r}}=o(n)$ , there exists an absolute constant
$C_{{\rm DB}}$ such that\textcolor{red}{{} }
\begin{equation}
4\,\bigg\Vert\dfrac{1}{n}\sum_{t=1}^{n}\ddot{X}_{t-1}u_{t}\bigg\Vert_{\infty}\leq C_{{\rm DB}}(\log p)^{1+\frac{1}{2r}}\label{eq:UnitDB}
\end{equation}
 w.p.a.1.~as $n\to\infty.$
\end{prop}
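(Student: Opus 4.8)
\noindent\emph{Proof strategy.} The plan is to reduce the sup-norm over the $p$ coordinates to a single-coordinate tail bound by a union bound, and then to control each coordinate by Gaussian approximation of the unit-root partial sums and the error process.

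First I would fix $j\in[p]$. Because $\sum_{t=1}^{n}\ddot{X}_{j,t-1}=0$, one may replace $u_{t}$ by $\ddot{u}_{t}$ and write $X_{j,t-1}=X_{j0}+\sum_{s=1}^{t-1}e_{js}$, with the $O_{p}(1)$ initial value negligible after division by $n$. It then suffices to prove $\Pr\{4\,|n^{-1}\sum_{t=1}^{n}\ddot{X}_{j,t-1}u_{t}|>C_{\mathrm{DB}}(\log p)^{1+1/(2r)}\}=o(p^{-1})$ uniformly over $j$, since a union bound over $j\in[p]$ then delivers (\ref{eq:UnitDB}).

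To peel off the serial dependence I would apply the Beveridge--Nelson decomposition \citep{phillips1992asymptotics} to each linear process, $\varepsilon_{\ell s}=\psi_{\ell}(1)\eta_{\ell s}+(\xi_{\ell,s-1}-\xi_{\ell s})$, where $\xi_{\ell s}$ is stationary with sub-exponential marginals and, by Assumption \ref{assu:alpha} and Lemma \ref{lem:mixing}, geometrically strong mixing. Since $e_{js}$ and $u_{t}$ are linear combinations of the $\varepsilon_{\ell\cdot}$ with $\ell_{1}$-bounded weights given by the rows of $\Phi$ (Assumption \ref{assu:covMat}(b)), this rewrites $\sum_{s\le t-1}e_{js}$ as a partial sum of the i.i.d.\ array $(\eta_{\ell s})$ plus a telescoping remainder that is $O_{p}(1)$ uniformly in $t$, and rewrites $u_{t}$ as an i.i.d.-innovation term plus a mixing stationary term; substituting, $n^{-1}\sum_{t}\ddot{X}_{j,t-1}u_{t}$ becomes a discretized stochastic integral in the i.i.d.\ innovations plus cross terms. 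I would then invoke the Koml\'{o}s--Major--Tusn\'{a}dy construction \citep{Komlos1975} to embed the relevant partial sums of $(\eta_{\ell s})$ into Brownian motions $(B_{x,j},B_{u,j})$ on a common space, and a Bernstein-type inequality for the mixing remainders \citep{merlevede2011bernstein}. The sub-exponential moment control (Assumption \ref{assu:tail}), Assumption \ref{assu:covMat}(a), and the rate condition $(\log p)^{1+2/r}=o(n)$ together make the total approximation error $o((\log p)^{1+1/(2r)})$ w.p.a.1, uniformly over $j$; up to that error, $n^{-1}\sum_{t}\ddot{X}_{j,t-1}u_{t}=\omega_{xu,j}\int_{0}^{1}\ddot{B}_{x,j}\,dB_{u,j}+\lambda_{xu,j}$, with $\omega_{xu,j},\lambda_{xu,j}$ uniformly bounded long-run (co)variance constants and $\ddot{B}_{x,j}$ the demeaned Brownian motion.

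Finally, $\int_{0}^{1}\ddot{B}_{x,j}\,dB_{u,j}$ is a quadratic Brownian functional with an exponential, chi-square-type tail, so $\Pr\{|\int_{0}^{1}\ddot{B}_{x,j}\,dB_{u,j}|>x\}\le 2e^{-cx}$ for $x$ large; together with the extra $(\log p)^{1/(2r)}$ factor that the mixing remainders force on the threshold, taking the level of order $(\log p)^{1+1/(2r)}$ drives the per-coordinate probability below $o(p^{-1})$, and the union bound over $j\in[p]$ completes the argument. I expect the main obstacle to be the coupling step: securing the strong approximation and the remainder bounds \emph{uniformly} over all $p$ coordinates under merely sub-exponential (not Gaussian) innovations and possibly sub-geometric mixing, and pinning down the precise logarithmic exponent --- in particular, tracing how the serial dependence contributes exactly the power $1/(2r)$ (heuristically, the square root of the $(\log p)^{1/r}$ effective block length associated with mixing rate $r$) and no more. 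Cross-sectional dependence, in contrast, is held in check throughout by Assumption \ref{assu:covMat}, which bounds all long-run (co)variances and keeps the $\ell_{1}$ aggregation through $\Phi$ uniformly controlled.
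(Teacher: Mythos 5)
Your proposal takes a genuinely different route from the paper. The paper never uses Gaussian approximation for the deviation bound: Proposition \ref{prop:UnitDB} is obtained as the special case $p_x=p$, $p_z=0$ of Proposition \ref{prop:DB-All}, whose proof splits $\sum_{t}X_{j,t-1}u_t$ into three pieces with a blocking parameter $G\asymp(\log (np))^{1/r}$ — a short head block $T_1$, a ``recent innovations'' piece $T_2$ controlled by the Bernstein-type inequality for strongly mixing sequences, and the dominant piece $T_3=\sum_{t>G}X_{j,t-G}u_t$, which is handled by splitting $[n]$ into $G$ interleaved sparse subsequences and running a conditional moment-generating-function (Chernoff) argument that exploits the $\rho$-mixing decay to make $X_{j,t-G}$ nearly independent of $u_t$. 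The exponent $1+\frac{1}{2r}$ is exactly the $\sqrt{G}$ price of that blocking, which matches your heuristic. Your plan instead transplants the machinery the paper reserves for the restricted eigenvalue (Beveridge--Nelson plus the Koml\'os--Major--Tusn\'ady coupling, as in Lemma \ref{lem:GaussianApprox}) to the DB, reduces each coordinate to a second-Wiener-chaos functional $\int_0^1\ddot{\mathcal{B}}_{x,j}\,d\mathcal{B}_u$ with exponential tails, and closes with a union bound. If completed, this would in fact deliver the sharper rate $\log p$ rather than $(\log p)^{1+\frac{1}{2r}}$, since the KMT error is only $O_p(\log p)$ uniformly; the extra $(\log p)^{1/(2r)}$ is an artifact of the paper's blocking, not of your scheme. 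What the paper's approach buys is that it never needs to pass to a Brownian limit at all, only mixing inequalities.

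The genuine gap — which you flag but do not resolve — is the transfer step itself. The map from the partial-sum paths to the discretized stochastic integral $\sum_t X_{j,t-1}u_t$ is not continuous in the sup norm, so sup-norm closeness of $\sum_{s\le t}\eta_{\ell s}$ to $\mathcal{B}_\ell(t)$ does not by itself justify replacing $u_t$ by Brownian increments; asserting that ``the total approximation error is $o((\log p)^{1+1/(2r)})$ uniformly over $j$'' is precisely the claim that needs proof. To close it you would need (i) a summation-by-parts argument writing $u_t-\Delta\mathcal{B}_u(t)$ as increments of the coupling error $R_t$ and bounding $n^{-1}\sum_t R_t\,\Delta\mathcal{B}_{x,j}(t)$ (crudely $\sup_t|R_t|\cdot n^{-1}\sum_t|\Delta\mathcal{B}_{x,j}(t)|=O_p(\log p)$, uniformly in $j$ after a union bound over the $p+1$ underlying coordinates, using the bounded row-$\ell_1$ norms of $\Phi$); and (ii) explicit treatment of the Beveridge--Nelson cross terms such as $n^{-1}\sum_t\tilde\varepsilon$-type products with $e_{jt}$ or $u_t$, whose means are the one-sided long-run covariances (your $\lambda_{xu,j}$, which is $O(1)$ and thus harmless) and whose fluctuations need the mixing Bernstein inequality uniformly over $j$. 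These pieces are all available under Assumptions \ref{assu:tail}--\ref{assu:covMat}, so your route is repairable, but as written the pivotal quantitative step is an acknowledged obstacle rather than an argument; one should also note that the joint correlation between $\mathcal{B}_{x,j}$ and $\mathcal{B}_u$ is legitimate here only because both are fixed $\Phi$-weighted combinations of coordinatewise-independent KMT couplings, a point worth stating explicitly.
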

\begin{rem}
\citet[Lemma 13]{wong2020lasso} work with stationary mixing time
series, and their DB is a direct corollary of the Bernstein-type concentration
inequality for mixing sequences \citep[Theorem 1]{merlevede2011bernstein}.
For nonstationary regressors, our DB goes with a decomposition of
$n^{-1}\sum_{t=1}^{n}\ddot{X}_{t-1}u_{t}$ into three terms and they
are handled one by one. Due to the weak dependence of $\varepsilon_{t}$,
time series blocking techniques help to separate the observations
into groups across which the temporal dependence vanishes asymptotically.
\end{rem}
It is known that $n^{-1}\sum_{t=1}^{n}\ddot{X}_{j,t-1}u_{t}=O_{p}\left(1\right)$
as $n\to\infty$ for an individual unit root process $X_{j,t-1}$
and stationary error $u_{t}$ \citep{phillips1986understanding}.
Here to accommodate all $p$ unit root time series in a uniform matter,
the DB grows at a mild speed $(\log p)^{1+\frac{1}{2r}}$, where the
$r$ from Assumption \ref{assu:alpha} governs the rate of diminishing
temporal dependence. When $r$ is arbitrarily large, $\varepsilon_{t}$
will approach to temporal independence and the rate on the right-hand
side of (\ref{eq:UnitDB}) is reduced to $(\log p)^{1+\frac{1}{\infty}}=\log p$.
The constant ``4'' on the left-hand side replicates the same constant
required for $\lambda$ in Lemma \ref{lem:gene-Lasso}.

\subsubsection{Restricted Eigenvalue \label{subsec:RE}}

When $H$ in (\ref{eq:demean_gene_lasso}) is an identity matrix,
we study the RE $\widehat{\kappa}_{I}=\kappa_{I}(\widehat{\Sigma},3,s)$
for Plasso, associated with the sample Gram matrix $\widehat{\Sigma}=\ddot{X}^{\top}\ddot{X}/n$.
In the i.i.d.~case, it is easy to establish RE as the $(j,k)$th
entry $\widehat{\Sigma}_{jk}$ \emph{converges in probability} to
the population covariance ${\rm cov}(X_{jt},X_{kt})$ for any fixed
$j,k$, and then in high dimension we can apply concentration inequalities
to construct a uniform bound for $\|\widehat{\Sigma}-\Sigma_{X}\|_{\max}=o_{p}(1)$,
where $\Sigma_{X}:=\mathbb{E}(\widehat{\Sigma})$ has minimum eigenvalue
bounded away from 0; See \citet[Eq.(3.3)]{bickel2009simultaneous}.
This strategy does not carry over into non-stationary data. Recall
that $X_{t}=X_{t-1}+e_{t}$ and define $\Sigma_{e}:=\mathbb{E}\left(e_{t}e_{t}^{\top}\right).$
After scaling by $1/n$, for each fixed pair $(j,k)$ the random variable
\[
n^{-1}\widehat{\Sigma}_{jk}\stackrel{d}{\to}\mathcal{D}_{jk}=\int_{0}^{1}\mathcal{B}_{j}(r)\mathcal{B}_{k}(r)dr-\int_{0}^{1}\mathcal{B}_{j}(r)dr\int_{0}^{1}\mathcal{B}_{k}(r)dr
\]
where \textquotedblleft \textbf{$\stackrel{d}{\to}$}\textquotedblright{}
denotes \emph{convergence in distribution}, $\mathcal{B}_{j}$ and
$\mathcal{B}_{k}$ are two Brownian motions, and the limiting distribution
$\mathcal{D}_{jk}$ is a non-degenerate stable law. Since the diagonal
elements $\mathcal{D}_{jj}$ has non-trivial probability in any small
neighbor of zero, when the dimension $p$ accumulates $\widehat{\kappa}_{I}/n$
will shrink to 0. (\textcolor{black}{See Section \ref{sec:RE_example}
for elaboration.}) This is in sharp contrast with the case of i.i.d.~regressors,
where the RE is bounded away from 0. 

As the RE appears in the denominator of the error bounds in Lemma
\ref{lem:gene-Lasso}, the convergence of LASSO requests that the
RE shrinks to zero slowly enough. Lemma \ref{lem:Normal_RE} prepares
an RE condition when the underlying innovations are i.i.d.~normal,
and Proposition \ref{prop:UnitRE} allows non-Gaussian and time dependent
innovations. This is one of the main theoretical contributions of
this paper. 
\begin{lem}
\label{lem:Normal_RE} Suppose $\varepsilon_{t}\sim i.i.d.\ \mathcal{N}(0,I_{p+1})$
and $\Phi$ satisfies Assumption \ref{assu:covMat} (a). Then there
exists an absolute constant $c_{\kappa}$ such that 
\begin{equation}
\frac{\widehat{\kappa}_{I}}{n}\geq\dfrac{c_{\kappa}}{s\log p}\label{eq:RE-unit}
\end{equation}
holds w.p.a.1.~as $n\to\infty$ and $s/(n\wedge p)\to0$.
\end{lem}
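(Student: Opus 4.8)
The plan is to relate the restricted eigenvalue $\widehat{\kappa}_I$ to the minimum eigenvalue of a suitable submatrix of $\widehat{\Sigma}/n$, and then control that minimum eigenvalue uniformly over all $s$-subsets using a non-asymptotic deviation inequality for Wishart-type matrices. First I would recall the standard reduction (see \citet[p.106]{buhlmann2011statistics}): for any $\delta\in\mathcal{R}(3,s)$ and any $\mathcal{M}$ with $|\mathcal{M}|\le s$, the cone constraint $\|\delta_{\mathcal{M}^c}\|_1\le 3\|\delta_{\mathcal{M}}\|_1$ gives $\|\delta\|_1\le 4\|\delta_{\mathcal{M}}\|_1\le 4\sqrt{s}\,\|\delta_{\mathcal{M}}\|_2\le 4\sqrt{s}\,\|\delta\|_2$; combining this with a bound of the form $\delta^\top(\widehat{\Sigma}/n)\delta\ge \phi\,\|\delta\|_2^2 - \tau\|\delta\|_1^2$ yields $\widehat{\kappa}_I/n\ge \phi - 16 s\tau$. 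So it suffices to produce, w.p.a.1., a lower bound $\delta^\top(\widehat{\Sigma}/n)\delta\ge \phi\|\delta\|_2^2-\tau\|\delta\|_1^2$ with $\phi$ bounded away from zero and $\tau\stackrel{\mathrm{p}}{\preccurlyeq}(\log p)/n$ up to constants, since then $\widehat{\kappa}_I/n\succcurlyeq 1/(s\log p)$ as claimed (after absorbing constants, and noting $\phi$ is eventually the dominant term only on the relevant scale — more precisely we want $16s\tau$ comparable to, not larger than, $\phi$, which is why the stated bound carries the $1/(s\log p)$ factor rather than a constant).

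Second, I would establish such an inequality by the $\max$-norm deviation route adapted to the nonstationary scaling. Write $\widehat{\Sigma}/n = \ddot{X}^\top\ddot{X}/n^2$. Because $X_t=\sum_{i\le t}e_i$ with $e_t\sim i.i.d.\ \mathcal{N}(0,\Sigma_e)$ and $\Sigma_e=\Phi\Phi^\top=\Omega$ under the Gaussian specialization, the partial-sum process $n^{-1/2}X_{\lfloor nr\rfloor}$ satisfies a functional CLT to $\Omega^{1/2}\mathcal{W}(r)$ for a standard $(p+1)$-dimensional Brownian motion $\mathcal{W}$, and the demeaned quadratic form $\widehat{\Sigma}/n$ converges to $\Omega^{1/2}\big(\int_0^1 \widetilde{\mathcal{W}}\widetilde{\mathcal{W}}^\top\big)\Omega^{1/2}$ where $\widetilde{\mathcal{W}}$ is the demeaned Brownian motion; but rather than pass to the limit (which does not give uniformity in $p$), I would use a non-asymptotic argument. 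The key is that for each direction, $\delta^\top(\widehat{\Sigma}/n)\delta=n^{-2}\sum_{t=1}^n(\ddot{X}_{t-1}^\top\delta)^2$, and $X_{t-1}^\top\delta=\sum_{i<t}e_i^\top\delta$ is itself a Gaussian random walk with innovation variance $\delta^\top\Omega\delta\in[c_\Omega\|\delta\|_2^2, C_\Omega\|\delta\|_2^2]$ by Assumption \ref{assu:covMat}(a). Thus a lower bound on $n^{-2}\sum_t(\ddot{X}_{t-1}^\top\delta)^2$ relative to $\delta^\top\Omega\delta$ is, after normalization, a statement about a scalar demeaned Gaussian random walk, for which $n^{-2}\sum_t \ddot{S}_{t-1}^2\Rightarrow \int_0^1\widetilde{B}^2$ with the limiting variable bounded away from zero in probability. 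The difficulty is converting this scalar control into a uniform-over-$\delta$ (equivalently uniform-over-$s$-subsets) bound; here I would invoke \citet[Theorem 6.1]{Wainwright2019high} on eigenvalues of Wishart matrices applied to the whitened increments, together with a peeling/union bound over the $\binom{p}{s}$ choices of support, which costs a $\log\binom{p}{s}\asymp s\log(p/s)\le s\log p$ factor — this is exactly the source of the $s\log p$ in the denominator.

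Third, I would assemble the pieces: the sample Gram matrix at the $1/n^2$ scale is, uniformly over $s$-sparse directions, bounded below by $c\,\lambda_{\min}(\Omega)\cdot(\text{something of order } 1)$ minus a fluctuation term of order $(s\log p)/n$ times $\|\delta\|_2^2$ (coming from the FCLT approximation error and the union bound), which I would encode as the $\phi\|\delta\|_2^2-\tau\|\delta\|_1^2$ form above with $\phi\asymp c_\Omega$ fixed but the effective guarantee being $\phi - 16s\tau \asymp 1 - c\,s^2(\log p)/n$; to get a clean positive lower bound of order $1/(s\log p)$ one restricts to $\delta$ in the cone, where $\|\delta\|_1^2\le 16 s\|\delta\|_2^2$ already delivered the reduction, and chooses the constants so that the residual is $\succcurlyeq 1/(s\log p)$ under $s/(n\wedge p)\to 0$. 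The main obstacle I anticipate is precisely the uniformity: a fixed-$\delta$ FCLT is immediate, but controlling the infimum over the non-convex cone $\mathcal{R}(3,s)$ requires either a careful discretization of the sphere intersected with each $2s$-dimensional coordinate subspace (standard $\varepsilon$-net, contributing the $s\log p$) or a direct sub-Gaussian eigenvalue concentration for the random-walk Gram matrix on each subspace followed by a union bound — the bookkeeping to show the net error and the tail probabilities both vanish while retaining the sharp $1/(s\log p)$ rate, rather than a cruder $s(\log p)^2/n$-type rate, is the delicate part.
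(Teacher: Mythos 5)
There is a genuine gap, and it lies in the very first step. Your master inequality $\delta^{\top}(\widehat{\Sigma}/n)\delta\geq\phi\|\delta\|_{2}^{2}-\tau\|\delta\|_{1}^{2}$ with $\phi$ bounded away from zero cannot hold here: taking $\delta$ equal to a coordinate vector, it would force $\min_{j}\widehat{\Sigma}_{jj}/n\geq\phi-\tau$ to stay bounded away from zero w.p.a.1, whereas for unit-root regressors each $\widehat{\Sigma}_{jj}/n$ converges in distribution to $\mathcal{D}_{jj}=\int_{0}^{1}\mathcal{B}_{j}^{2}-\bigl(\int_{0}^{1}\mathcal{B}_{j}\bigr)^{2}$, and Proposition \ref{prop:RE_KL} in Section \ref{sec:RE_example} shows $\min_{j}\mathcal{D}_{jj}\stackrel{p}{\to}0$ as the number of relevant coordinates grows. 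So the i.i.d.-style ``population eigenvalue minus $\ell_{1}$-fluctuation'' device is structurally unavailable, and your later reconciliation (``$\phi-16s\tau\asymp1-cs^{2}(\log p)/n$, choose constants so the residual is $\succcurlyeq1/(s\log p)$'') does not follow from the pieces you have: if your inequality were true it would deliver a constant lower bound, which is false, and nothing in your argument manufactures the $1/(s\log p)$ rate.

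The missing idea, which is how the paper actually proceeds (Proposition \ref{prop:UnitRE-cn}(a)), is a deterministic spectral truncation that turns the random-walk Gram matrix into a genuine Wishart matrix with a reduced effective sample size. Writing $X=Re$ with $R$ the lower-triangular summation matrix, one has $\widehat{\Sigma}=n^{-1}e^{\top}R^{\top}(I_{n}-J_{n})Re$; using the exact eigenvalues $\widetilde{\lambda}_{\ell}=[2(1-\cos((2\ell-1)\pi/(2n+1)))]^{-1}\asymp(n/\ell)^{2}$ of $R^{\top}R$ and keeping only the top $\ell$ eigen-directions gives $\delta^{\top}\widehat{\Sigma}\delta\geq(\ell\widetilde{\lambda}_{\ell+1}/n)\,\delta^{\top}\Gamma_{\ell}\delta$ with $\ell\widetilde{\lambda}_{\ell+1}/n\geq n/(2\pi^{2}\ell)$ and $\Gamma_{\ell}=\ell^{-1}e^{\top}V_{\cdot[\ell]}V_{\cdot[\ell]}^{\top}e$ exactly Wishart (Gaussian $e$, orthonormal $V_{\cdot[\ell]}$). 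The Wishart deviation bounds of \citet[Theorem 6.1]{Wainwright2019high}, combined with a union bound over the $p^{s+m}$ supports of size $s+m$ with $m\asymp s$, force $\ell\asymp(s+m)\log p\asymp s\log p$; the cone $\mathcal{R}(3,s)$ is then handled not by $\|\delta\|_{1}\leq4\sqrt{s}\|\delta\|_{2}$ but by \citet{bickel2009simultaneous}'s Lemma 4.1(ii), which needs the buffer condition $m\phi_{\min}(\Gamma_{\ell},s+m)>L^{2}s\phi_{\max}(\Gamma_{\ell},m)$. The $1/(s\log p)$ in (\ref{eq:RE-unit}) is thus the factor $n/\ell$ paid for the truncation, not a union-bound degradation of a constant-order restricted eigenvalue; your proposal correctly senses that $\binom{p}{s}$ costs $s\log p$ and that uniformity is the delicate point, but without the $R$-spectral step there is no Wishart matrix to which the concentration inequality applies, and the rows $\ddot{X}_{t-1}$ are too strongly dependent for a direct net-plus-concentration argument on $\ddot{X}^{\top}\ddot{X}/n^{2}$.
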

With a fixed $p$, \citet[Lemma A.2]{phillips1990statistical} show
$\widehat{\Sigma}/n=\ddot{X}^{\top}\ddot{X}/n^{2}$ is positive-definite
w.p.a.1\@.~as $n\to\infty$. For unit root regressors the denominator
under $\ddot{X}^{\top}\ddot{X}$ is $n^{2}$, instead of $n$ as in
the i.i.d.~case.\footnote{In high dimensional regressions, the rates of convergence under weakly
dependent data are largely similar to those under the i.i.d.~data.
In the rest of the paper we mostly compare our results with what happens
under the i.i.d.~case for simplicity.} To align with this convention, we put $\widehat{\kappa}/n$ on the
left-hand side of (\ref{eq:RE-unit}). When $\varepsilon_{t}$ are
i.i.d.~normal, the right-hand side of (\ref{eq:RE-unit}) gives a
lower bound of RE proportional to $1/(s\log p)$. This result echoes
\citet[Theorem B.2-B.3]{smeekes2021automated} where they establish
$\lambda_{\min}(\widehat{\Sigma}/n)\stackrel{\mathrm{p}}{\succcurlyeq}1/p$
in the moderate dimensional case when $p/\sqrt{n}\to0$. Our (\ref{eq:RE-unit})
replaces $\lambda_{\min}(\widehat{\Sigma}/n)$ by the restricted version
$\widehat{\kappa}_{I}/n$ on the left-hand side, and replaces $p$
by $s\log p$ on the right-hand side. 

\medskip
\begin{rem}
\label{rem:sketch-proof-RE}Here we sketch the proof of Lemma \ref{lem:Normal_RE}.
The sparsity embodied by the restricted set $\mathcal{R}(3,s)$ reduces
the essential number of regressors from $p$ to the order of $s$.
As a unit root time series consists of partial sums of i.i.d.~shocks,
the minimum eigenvalue of the Gram matrix can be bounded below by
considering the largest $\ell$ eigenvalues of the deterministic transformation
matrix, where we choose $\ell\asymp s\log p$. This step shifts the
focus from the Gram matrix of unit root time series to that of Gaussian
random vectors, and in the same time reduces the sample size from
$n$ to effectively $\ell$. The deduction in both dimensions allows
us to invoke existing results about the Wishart matrices, in particular
the non-asymptotic deviation inequalities \citep[Theorem 6.1]{Wainwright2019high},
to bound away from 0 the sample minimum eigenvalues for any submatrix
of dimensions of the same order as $s$, and then extend the bound
uniformly to all such sub-matrices. 
\end{rem}
\medskip
\begin{rem}
\citet{wijler2022restricted} studies Plasso with data generated from
(\ref{eq:DGP_X}) with no intercept. Considering the i.i.d.~normal
$\varepsilon_{t}$ exclusively, \citet{wijler2022restricted} takes
advantage of the fact that the quadratic form of independent normal
distribution follows the $\chi^{2}$ distribution, and invokes the
tail probability bound of the maximum of sub-Gaussian random variables
\citep[Eq.(2.9)]{Wainwright2019high}. His main result \citep[Theorem 1]{wijler2022restricted}
have a slower diminishing rate than ours.
\end{rem}
The i.i.d.~normality in Lemma \ref{lem:Normal_RE} is a strong assumption.
To obtain an RE that accommodates more general innovations, we must
control the relative magnitude among $n$, $p$ and $s$.
\begin{assumption}
\label{assu:asym_n} (a) $p=O(n^{\nu})$ for an arbitrary absolute
constant $\nu\in\left(0,\infty\right)$; (b)\textcolor{red}{{} }$s=O\left(n^{1/4-\zeta}\wedge p^{1-\zeta}\right)$
for an arbitrary small constant $\zeta>0$.\textcolor{red}{{} }
\end{assumption}
Assumption \ref{assu:asym_n} (a) allows $p$ to be of high dimension.
The polynomial rate $n^{\nu}$ is for simplicity of presentation.\footnote{The proofs can still go through if we relax $p$ to grow at some exponential
rate of $n$. But the speed of such rates will be peculiar to each
of our convergence statement, thereby complicate the notations. For
example, the right-hand side of (\ref{eq:proof_logpn}) would involve
many specific terms.} In the meantime, by Assumption \ref{assu:asym_n} (b) the sparsity
index cannot grow faster than $n^{1/4}$ when the innovations are
non-normal. 

\medskip
\begin{prop}
\label{prop:UnitRE} If Assumptions \ref{assu:tail}-\ref{assu:asym_n}
hold, then (\ref{eq:RE-unit}) is satisfied w.p.a.1.~as $n\to\infty$. 
\end{prop}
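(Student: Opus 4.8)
The plan is to reduce Proposition~\ref{prop:UnitRE} to the Gaussian case settled in Lemma~\ref{lem:Normal_RE} through a strong-approximation argument, with the sparsity restriction of Assumption~\ref{assu:asym_n}(b) absorbing the coupling error. First I would linearize the unit root paths. Applying a Beveridge--Nelson decomposition to each linear process (\ref{eq:linrProc}) gives $\sum_{i=1}^{m}\varepsilon_{\ell i}=\psi_{\ell}(1)\sum_{i=1}^{m}\eta_{\ell i}+\tilde\varepsilon_{\ell0}-\tilde\varepsilon_{\ell m}$, where the stationary remainder $\tilde\varepsilon_{\ell\cdot}$ is a linear process whose coefficients inherit the exponential decay (\ref{eq:psi}); by Assumption~\ref{assu:tail} it is sub-exponential, so $\max_{\ell\in[p+1]}\max_{m\le n}|\tilde\varepsilon_{\ell m}|\stackrel{\mathrm{p}}{\preccurlyeq}\log n$ by a union bound (permissible since $p=O(n^{\nu})$). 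Recalling $X_{t}=X_{t-1}+e_{t}$ with $e_{t}$ the first $p$ coordinates of $\Phi\varepsilon_{t}$, summing over $i$ yields $X_{j,t-1}=Z_{j,t-1}+\rho_{j,t-1}$, where $Z_{j,t-1}=\sum_{\ell\in[p+1]}\Phi_{j\ell}\psi_{\ell}(1)\sum_{i=1}^{t-1}\eta_{\ell i}$ is the dominant random-walk component and $\max_{j,t}|\rho_{j,t-1}|\stackrel{\mathrm{p}}{\preccurlyeq}\log n$, using Assumption~\ref{assu:covMat}(b), $\|X_{t=0}\|_{\infty}=O_{p}(1)$, and the bounds $c_{\psi}<|\psi_{\ell}(1)|\le C_{\psi,1}:=1+\sum_{d\ge1}C_{\psi}e^{-c_{\psi}d^{r}}<\infty$ implied by Assumption~\ref{assu:alpha}.

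Next I would couple with a Gaussian sequence. For each $\ell$ the increments $(\eta_{\ell i})_{i}$ are i.i.d.\ with mean zero, unit variance, and the exponential tail (\ref{eq:innovtail}), so the Koml\'os--Major--Tusn\'ady coupling furnishes an i.i.d.\ Gaussian sequence $(\eta_{\ell i}^{G})_{i}$ with $\Pr\{\max_{m\le n}|\sum_{i\le m}(\eta_{\ell i}-\eta_{\ell i}^{G})|>c_{1}\log n+x\}\le c_{2}e^{-c_{3}x}$ for absolute constants $c_{1},c_{2},c_{3}$. Choosing the threshold large and taking a union bound over $\ell\in[p+1]$ gives $\max_{\ell}\max_{m\le n}|\sum_{i\le m}(\eta_{\ell i}-\eta_{\ell i}^{G})|\stackrel{\mathrm{p}}{\preccurlyeq}\log n$, so if $Z_{j,t-1}^{G}$ denotes $Z_{j,t-1}$ with $\eta^{G}$ in place of $\eta$, then $X_{j,t-1}=Z_{j,t-1}^{G}+\Delta_{j,t-1}$ with $\max_{j,t}|\Delta_{j,t-1}|\stackrel{\mathrm{p}}{\preccurlyeq}\log n$. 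The process $(Z_{j,t-1}^{G})_{t}$ is a unit root process with i.i.d.\ Gaussian increments $\Phi'\eta_{t}^{G}$, $\eta_{t}^{G}\sim\mathcal{N}(0,I_{p+1})$, where $\Phi'=\Phi\,\mathrm{diag}(\psi_{1}(1),\dots,\psi_{p+1}(1))$; since $|\psi_{\ell}(1)|\in(c_{\psi},C_{\psi,1}]$ and $\Omega=\Phi\Phi^{\top}$ has eigenvalues in $[c_{\Omega},C_{\Omega}]$, the matrix $\Phi'(\Phi')^{\top}$ has eigenvalues bounded away from $0$ and $\infty$, so Assumption~\ref{assu:covMat}(a) holds for $\Phi'$ and Lemma~\ref{lem:Normal_RE} (whose condition $s/(n\wedge p)\to0$ is implied by Assumption~\ref{assu:asym_n}(b)) gives $\widehat\kappa_{I}^{G}/n\ge c_{\kappa}/(s\log p)$ w.p.a.1, where $\widehat\kappa_{I}^{G}=\kappa_{I}(\ddot Z^{G\top}\ddot Z^{G}/n,3,s)$.

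Finally I would transfer the bound from $\widehat\Sigma^{G}=\ddot Z^{G\top}\ddot Z^{G}/n$ to $\widehat\Sigma=\ddot X^{\top}\ddot X/n$. Demeaning preserves the uniform bound, so $\ddot X_{j}=\ddot Z_{j}^{G}+\ddot\Delta_{j}$ with $\max_{j}\|\ddot\Delta_{j}\|_{2}\le\sqrt n\max_{j,t}|\ddot\Delta_{j,t-1}|\stackrel{\mathrm{p}}{\preccurlyeq}\sqrt n\log n$, while $\max_{j}\|\ddot Z_{j}^{G}\|_{2}\stackrel{\mathrm{p}}{\preccurlyeq}n(\log n)^{1/2}$ by the maximal inequality for Gaussian random walks and a union bound over $(j,t)$. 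Expanding $\ddot X_{j}^{\top}\ddot X_{k}-\ddot Z_{j}^{G\top}\ddot Z_{k}^{G}=\ddot\Delta_{j}^{\top}\ddot Z_{k}^{G}+\ddot Z_{j}^{G\top}\ddot\Delta_{k}+\ddot\Delta_{j}^{\top}\ddot\Delta_{k}$, dividing by $n^{2}$, and applying Cauchy--Schwarz yields $\|\widehat\Sigma/n-\widehat\Sigma^{G}/n\|_{\max}\stackrel{\mathrm{p}}{\preccurlyeq}n^{-1/2}(\log n)^{3/2}$. For any $\delta\in\mathcal{R}(3,s)$ one has $\|\delta\|_{1}\le4\|\delta_{\mathcal{M}}\|_{1}\le4\sqrt s\,\|\delta\|_{2}$, so $|\delta^{\top}(\widehat\Sigma/n-\widehat\Sigma^{G}/n)\delta|\le\|\widehat\Sigma/n-\widehat\Sigma^{G}/n\|_{\max}\|\delta\|_{1}^{2}\le16s\,\|\widehat\Sigma/n-\widehat\Sigma^{G}/n\|_{\max}\|\delta\|_{2}^{2}$, whence $|\widehat\kappa_{I}/n-\widehat\kappa_{I}^{G}/n|\stackrel{\mathrm{p}}{\preccurlyeq}s\,n^{-1/2}(\log n)^{3/2}$. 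By Assumption~\ref{assu:asym_n}, $s^{2}(\log p)\cdot n^{-1/2}(\log n)^{3/2}=O\big(n^{1/2-2\zeta}\cdot n^{-1/2}(\log n)^{5/2}\big)=O\big(n^{-2\zeta}(\log n)^{5/2}\big)=o(1)$, and this is precisely where the restriction $s=O(n^{1/4-\zeta})$ of Assumption~\ref{assu:asym_n}(b) is used. Combining, $\widehat\kappa_{I}/n\ge\widehat\kappa_{I}^{G}/n-|\widehat\kappa_{I}/n-\widehat\kappa_{I}^{G}/n|\ge(c_{\kappa}/2)/(s\log p)$ w.p.a.1, which is (\ref{eq:RE-unit}).

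The main obstacle is the coupling step: obtaining the Gaussian approximation \emph{uniformly} over the $p+1$ coordinate processes at a rate whose accumulation stays within the available logarithmic factors, and cleanly disentangling the Beveridge--Nelson remainder and the initial condition so that the dominant component $Z^{G}$ is \emph{exactly} a unit root process with i.i.d.\ Gaussian increments satisfying Assumption~\ref{assu:covMat}(a), allowing Lemma~\ref{lem:Normal_RE} to be applied verbatim. The remaining steps are bookkeeping with maximal inequalities and the restricted-cone inequality $\|\delta\|_{1}\le4\sqrt s\,\|\delta\|_{2}$.
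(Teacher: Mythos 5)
Your proposal is correct and follows essentially the same route as the paper: the paper proves this result via Proposition \ref{prop:UnitRE-cn}(b), whose argument is exactly your Beveridge--Nelson decomposition plus coordinate-wise Koml\'{o}s--Major--Tusn\'{a}dy coupling (packaged there as Lemma \ref{lem:GaussianApprox}), application of the Gaussian/Wishart restricted-eigenvalue bound to the coupled Gaussian random walk (your $\Phi'=\Phi\,\mathrm{diag}(\psi_{\ell}(1))$ plays the role of the paper's $\hat{\Upsilon}$ construction), and a max-norm perturbation bound of order $n^{-1/2}(\log p)^{3/2}$ transferred through the cone inequality $\|\delta\|_{1}\leq4\sqrt{s}\|\delta\|_{2}$, absorbed using $s=O(n^{1/4-\zeta})$. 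The only cosmetic difference is that you invoke Lemma \ref{lem:Normal_RE} verbatim with the rescaled loading matrix rather than rerunning its argument on the Brownian-motion proxy, which is the same computation.
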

\medskip

Proposition \ref{prop:UnitRE} substantially relaxes the distributional
and dependence conditions by substituting the normality in Lemma \ref{lem:Normal_RE}
with the sub-exponential tails in Assumptions \ref{assu:tail}, and
replacing i.i.d.~with the mixing condition in Assumption \ref{assu:alpha}.

\begin{rem}
The proof of Proposition \ref{prop:UnitRE} extends that of Lemma
\ref{lem:Normal_RE}. We use the Beveridge-Nelson decomposition to
obtain a leading term of the sum of independent innovations, which
asymptotically mimics the behavior of a Brownian motion. For each
$j$, the sub-exponential tail in Assumption \ref{assu:tail} allows
applying the Koml\'{o}s-Major-Tusn\'{a}dy coupling \citep{komlos1976approximation},
which is again a non-asymptotic inequality. Assisted by the union
bound, we carry the result in Lemma \ref{lem:Normal_RE} over into
the case of time dependent non-Gaussian $\varepsilon_{t}$. 
\end{rem}
\medskip

Given the two building blocks, DB and RE, we are ready to apply them
to study the LASSO estimators.

\subsubsection{Plain LASSO }

Consider Plasso 
\[
\hat{\beta}^{{\rm P}}=\text{\ensuremath{\arg\min_{\beta}\left\{ \dfrac{1}{n}\|\ddot{Y}-\ddot{X}\beta\|_{2}^{2}+\lambda\|\beta\|_{1}\right\} }}
\]
in the form of a special case of (\ref{eq:demean_gene_lasso}). Parallel
results to Lemma \ref{lem:gene-Lasso} immediately follows. 
\begin{thm}
\label{thm:LassoError}Suppose Assumptions \ref{assu:tail}-\ref{assu:asym_n}
hold. If we choose $\lambda=C_{\mathrm{DB}}(\log p)^{1+\frac{1}{2r}}$,
the Plasso estimator satisfies 
\begin{align}
\dfrac{1}{n}\|\ddot{X}(\ensuremath{\widehat{\beta}^{{\rm P}}-\beta}^{*})\|_{2}^{2} & =O_{p}\left(\dfrac{s^{2}}{n}(\log p)^{3+\frac{1}{r}}\right)\label{eq:UnitForecast}\\
\|\ensuremath{\hat{\beta}^{{\rm P}}-\beta}^{*}\|_{1} & =O_{p}\left(\dfrac{s^{2}}{n}(\log p)^{2+\frac{1}{2r}}\right)\label{eq:UnitL1}\\
\|\ensuremath{\hat{\beta}^{{\rm P}}-\beta}^{*}\|_{2} & =O_{p}\left(\dfrac{s^{3/2}}{n}(\log p)^{2+\frac{1}{2r}}\right).\label{eq:UnitL2}
\end{align}
\end{thm}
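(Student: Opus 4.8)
The plan is to combine the generic finite-sample bounds in Lemma~\ref{lem:gene-Lasso} with the two probabilistic ingredients established for unit root regressors: the deviation bound in Proposition~\ref{prop:UnitDB} and the restricted eigenvalue lower bound in Proposition~\ref{prop:UnitRE}. Since $\hat\beta^{\mathrm{P}}$ is the special case of \eqref{eq:demean_gene_lasso} with $H=I_p$, $W=X$, and $\theta^*=\beta^*$, Lemma~\ref{lem:gene-Lasso} applies verbatim once we verify its hypothesis $\lambda\ge 4\|n^{-1}\sum_{t=1}^n\ddot X_{t-1}u_t\|_\infty$ and insert a lower bound for $\widehat\kappa_I$.

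First I would check the tuning parameter condition. By Proposition~\ref{prop:UnitDB}, under Assumptions~\ref{assu:tail}--\ref{assu:covMat} and the side condition $(\log p)^{1+2/r}=o(n)$ — which is implied by Assumption~\ref{assu:asym_n}(a), since $\log p=O(\log n)$ — we have $4\|n^{-1}\sum_{t=1}^n\ddot X_{t-1}u_t\|_\infty\le C_{\mathrm{DB}}(\log p)^{1+\frac{1}{2r}}$ w.p.a.1. Hence the choice $\lambda=C_{\mathrm{DB}}(\log p)^{1+\frac{1}{2r}}$ satisfies the hypothesis of Lemma~\ref{lem:gene-Lasso} on an event of probability approaching one. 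Second, by Proposition~\ref{prop:UnitRE}, Assumptions~\ref{assu:tail}--\ref{assu:asym_n} deliver $\widehat\kappa_I/n\ge c_\kappa/(s\log p)$ w.p.a.1, i.e. $1/\widehat\kappa_I\le (s\log p)/(c_\kappa n)$ on that event. Intersecting the two events (still probability approaching one), I would plug these into the three inequalities of Lemma~\ref{lem:gene-Lasso} with $H=I_p$.

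The rest is bookkeeping with the rates. For the prediction-norm bound, $n^{-1}\|\ddot X(\hat\beta^{\mathrm{P}}-\beta^*)\|_2^2\le 4\lambda^2 s/\widehat\kappa_I \le 4\lambda^2 s\cdot\frac{s\log p}{c_\kappa n}=\frac{4}{c_\kappa}\cdot\frac{s^2\log p}{n}\cdot\lambda^2$, and substituting $\lambda^2=C_{\mathrm{DB}}^2(\log p)^{2+\frac{1}{r}}$ gives the $O_p\!\big(\frac{s^2}{n}(\log p)^{3+\frac1r}\big)$ rate of \eqref{eq:UnitForecast}. For the $\ell_1$ bound, $\|\hat\beta^{\mathrm{P}}-\beta^*\|_1\le 4\lambda s/\widehat\kappa_I\le \frac{4}{c_\kappa}\cdot\frac{s^2\log p}{n}\cdot\lambda=O_p\!\big(\frac{s^2}{n}(\log p)^{2+\frac{1}{2r}}\big)$, which is \eqref{eq:UnitL1}. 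For the $\ell_2$ bound, $\|\hat\beta^{\mathrm{P}}-\beta^*\|_2\le 2\lambda\sqrt s/\widehat\kappa_I\le \frac{2}{c_\kappa}\cdot\frac{s^{3/2}\log p}{n}\cdot\lambda=O_p\!\big(\frac{s^{3/2}}{n}(\log p)^{2+\frac{1}{2r}}\big)$, which is \eqref{eq:UnitL2}. In each line I am careful to remember that Lemma~\ref{lem:gene-Lasso} is stated with $\widehat\kappa_I=\widehat\kappa_H$ for $H=I_p$ in the denominator \emph{without} the $1/n$ scaling that appears in Proposition~\ref{prop:UnitRE}, so the factor $s\log p/n$ (not $s\log p$) is what multiplies $\lambda$ or $\lambda^2$.

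There is no serious obstacle here, since both hard analytic results — the deviation bound and the restricted eigenvalue — are already in hand; the theorem is essentially a corollary. The only point requiring minor care is the logical bookkeeping: all statements hold only ``w.p.a.1,'' so I would state explicitly that the final bounds hold on the intersection of the DB event and the RE event, each of which has probability tending to one, hence their intersection does too, which is exactly what $O_p(\cdot)$ encodes. I would also note in passing that the side condition of Proposition~\ref{prop:UnitDB} is subsumed by Assumption~\ref{assu:asym_n}(a), so no extra hypothesis is needed in the theorem statement.
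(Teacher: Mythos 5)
Your proposal is correct and follows exactly the paper's route: the paper proves this theorem by plugging the DB of Proposition \ref{prop:UnitDB} and the RE of Proposition \ref{prop:UnitRE} into Lemma \ref{lem:gene-Lasso}, which is what you do, with the rate bookkeeping (including the $\widehat\kappa_I$ versus $\widehat\kappa_I/n$ scaling) handled correctly.
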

It is well-known that with high dimensional i.i.d.~data, Plasso's
$L_{1}$ and $L_{2}$ estimation error bounds are $s\sqrt{\left(\log p\right)/n}$
and $\sqrt{s\left(\log p\right)/n}$, respectively, under standard
conditions. Instead of the usual $\sqrt{n}$ for i.i.d.~data, in
the rates of convergence the denominators are $n$, yielding the familiar
\emph{super-consistency} when unit root regressors are present. The
numerators, on the other hand, are multiplied by an extra factor $s(\log p)^{\frac{3}{2}+\frac{1}{2r}}$.
This additional factor reflects the effect of the nonstationary time
series, where $s\log p$ comes from the denominator of the lower bound
of RE in (\ref{eq:RE-unit}), and another $(\log p)^{\frac{1}{2}+\frac{1}{2r}}$
term stems from DB. When $r$ is arbitrarily large, the temporal dependence
in $\varepsilon_{t}$ vanishes and the extra factor is reduced to
$s\left(\log p\right)^{3/2}$. For example, in the special case of
i.i.d.~normal $\varepsilon_{t}$, in (\ref{eq:UnitL1}) our convergence
rate under the $L_{1}$-norm is $\dfrac{s^{2}}{n}(\log p)^{2}$ as
$r=\infty$. This rate is faster than \citet{wijler2022restricted}'s
Corollary 1, which shows 
\[
\|\ensuremath{\hat{\beta}^{{\rm P}}-\beta}^{*}\|_{1}=O_{p}\left(\frac{s^{3}}{n^{1-\zeta_{1}}}(\log p)^{2}\right)=O_{p}\left(\frac{s^{2}}{n}(\log p)^{2}\times sn^{\zeta_{1}}\right)
\]
 for any $\zeta_{1}>0$.

The tuning parameter $\lambda$ in Theorem \ref{thm:LassoError} involves
an absolute constant $C_{\mathrm{DB}}$, which in turn depends on
the absolute constants in the assumptions that are unknown in practice.
Nevertheless, for all the left-hand side quantities in (\ref{eq:UnitForecast})--(\ref{eq:UnitL2})
to converge to zero in probability, it suffices if
\begin{equation}
\frac{(\log p)^{1+\frac{1}{2r}}}{\lambda}+\dfrac{s(\log p)}{\sqrt{n}}\lambda\to0,\label{eq:P_adm_rate}
\end{equation}
which specifies a wide range of admissible rates for $\lambda$. 

\subsubsection{Standardized LASSO }

Plasso is the prototype of the $L_{1}$-penalized regression. In practice,
Slasso is more often implemented in statistical software as scale-invariance
is a desirable property. Again, we focus on the high dimensional coefficient
$\beta$ in 
\[
\hat{\beta}^{{\rm S}}:=\text{\ensuremath{\arg}\ensuremath{\min_{\beta}\left\{ \dfrac{1}{n}\|\ddot{Y}-\ddot{X}\beta\|_{2}^{2}+\lambda\|D\beta\|_{1}\right\} }}.
\]

\begin{rem}
The only difference between Slasso and Plasso is that the former uses
$\widehat{\sigma}_{j}$ to scale-standardize each original regressor.
This transformation is theoretically uninteresting for i.i.d.~data,
where the sample s.d.~will converge to its population s.d as $n\to\infty$.
As a result, in this case Slasso shares the same rates of convergence
as Plasso, because the constant population s.d.~does not alter the
rates in DB and RE. The commonality breaks down when the regressors
are unit roots, since $\widehat{\sigma}_{j}/\sqrt{n}$ converging
\emph{in distribution} to a non-degenerate non-negative random variable,
which is the square root of an integral of the squared Brownian bridge.
 Since $\widehat{\sigma}_{j}/\sqrt{n}=O_{p}(1)$ appears in the penalty,
it incurs extra randomness. 
\end{rem}
Proposition \ref{prop:MinMax-Unit} establishes the bounds for $\widehat{\sigma}_{\min}$
and $\widehat{\sigma}_{\max}$, which refresh the DB for Slasso as
well as the RE $\widehat{\kappa}_{D}:=\kappa_{D}(\hat{\text{\ensuremath{\Sigma}}},3,s)$,
which is the restricted eigenvalue of the sample \emph{correlation
coefficient} matrix $D^{-1}\widehat{\Sigma}D^{-1}$ of the original
data.
\begin{prop}
\label{prop:MinMax-Unit}Suppose that Assumptions \ref{assu:tail}-\ref{assu:asym_n}
hold. As $n\to\infty$ w.p.a.1.~we have
\begin{enumerate}
\item Bounds for the sample s.d.:
\begin{equation}
n(\log p)^{-1}\stackrel{\mathrm{p}}{\preccurlyeq}\widehat{\sigma}_{\min}^{2}\leq\widehat{\sigma}_{\max}^{2}\stackrel{\mathrm{p}}{\preccurlyeq}n\log p.\label{eq:sigma-minmax}
\end{equation}
\item DB: There exists an absolute constant $\tilde{C}_{{\rm DB}}$ such
that
\begin{equation}
4\,\|\dfrac{1}{n}\sum_{t=1}^{n}D^{-1}\ddot{X}_{t-1}u_{t}\|_{\infty}\leq\dfrac{\tilde{C}_{{\rm DB}}}{\sqrt{n}}(\log p)^{\frac{3}{2}+\frac{1}{2r}}.\label{eq:SUnitDB}
\end{equation}
\item RE: there exists an absolute constant $c_{\kappa}$ such that
\begin{equation}
\widehat{\kappa}_{D}\geq\frac{c_{\kappa}}{s(\log p)^{4}}.\label{eq:SUnitRE}
\end{equation}
\end{enumerate}
\end{prop}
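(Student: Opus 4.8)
The plan is to handle the three parts of the proposition in order, with Part 1 carrying essentially all the probabilistic content and Parts 2--3 then following by elementary manipulations together with the results already established. Throughout, note that $\widehat{\sigma}_j^2=\widehat{\Sigma}_{jj}=n^{-1}\sum_{t=1}^n\ddot{X}_{j,t-1}^2$, so Part 1 is a two-sided bound on the diagonal of $\widehat{\Sigma}$, uniform over $j\in[p]$. For the upper bound I would use the crude inequality $\widehat{\sigma}_j^2\le 4\max_{t\le n}X_{j,t-1}^2$ together with a Bernstein-type tail bound for partial sums of the stationary, geometrically mixing, sub-exponential increments $e_{jt}=\sum_{\ell}\Phi_{j\ell}\varepsilon_{\ell t}$ (whose uniform-in-$j$ sub-exponential tails are granted by Assumptions \ref{assu:tail}--\ref{assu:covMat}). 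With threshold $M\sqrt{n\log p}$ the per-$(j,t)$ exceedance probability is of order $p^{-cM^2}$; since $p$ grows only polynomially in $n$ by Assumption \ref{assu:asym_n}(a), a union bound over the $np$ pairs yields $\widehat{\sigma}_{\max}^2\stackrel{\mathrm{p}}{\preccurlyeq}n\log p$.

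For the lower bound I would reuse the Gaussian-approximation machinery built in the proof of Proposition \ref{prop:UnitRE}. A Beveridge--Nelson decomposition writes $X_{j,t-1}=\omega_j\,S_{j,t-1}+R_{jt}$, where $S_{j,t-1}$ is a partial sum of i.i.d.\ innovations, $\sup_{j,t}|R_{jt}|\stackrel{\mathrm{p}}{\preccurlyeq}\log n$, and $\omega_j^2=\sum_\ell\Phi_{j\ell}^2\psi_\ell(1)^2$ is bounded above and away from zero uniformly in $j$ by Assumptions \ref{assu:alpha} and \ref{assu:covMat}(a). The Koml\'{o}s--Major--Tusn\'{a}dy coupling replaces $S_{j,t-1}$ by a Brownian motion $B_j$ with error $\sup_{j,t}|S_{j,t-1}-B_j(t-1)|$ of polylogarithmic order, so that $n^{-1}\|\ddot{X}_j\|_2^2\ge n\,\omega_j^2\,\xi_j\,(1+o_p(1))-O_p(\sqrt n\,(\log n)^2)$ with $\xi_j:=\int_0^1 B_j(r)^2\,dr-\big(\int_0^1 B_j(r)\,dr\big)^2$ (the discretized integral converging to $\xi_j$ by a Riemann-sum argument). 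The Laplace transform of $\xi_j$ behaves like $\exp(-a\sqrt{\lambda})$ as $\lambda\to\infty$, hence $\Pr\{\xi_j<\epsilon\}\le C\exp(-c/\epsilon)$; taking $\epsilon\asymp 1/\log p$ makes this bound $\le Cp^{-c'}$ with $c'>1$ once the constant in the statement is chosen small enough, so the union bound over $j\in[p]$ closes and gives $\widehat{\sigma}_{\min}^2\stackrel{\mathrm{p}}{\succcurlyeq}n/\log p$.

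Part 2 is then immediate: $D^{-1}$ is diagonal with entries at most $\widehat{\sigma}_{\min}^{-1}$, so
\[
4\Big\|\tfrac1n\sum_{t=1}^n D^{-1}\ddot{X}_{t-1}u_t\Big\|_\infty\le\widehat{\sigma}_{\min}^{-1}\cdot 4\Big\|\tfrac1n\sum_{t=1}^n\ddot{X}_{t-1}u_t\Big\|_\infty,
\]
and the first factor is $\stackrel{\mathrm{p}}{\preccurlyeq}\sqrt{(\log p)/n}$ by Part 1 while the second is $\le C_{\mathrm{DB}}(\log p)^{1+\frac1{2r}}$ w.p.a.1.\ by Proposition \ref{prop:UnitDB}; the product is (\ref{eq:SUnitDB}). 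For Part 3, put $\gamma=D^{-1}\delta$: then $\delta^\top D^{-1}\widehat{\Sigma}D^{-1}\delta/(\delta^\top\delta)=\gamma^\top\widehat{\Sigma}\gamma/(\gamma^\top D^2\gamma)\ge\widehat{\sigma}_{\max}^{-2}\,\gamma^\top\widehat{\Sigma}\gamma/\|\gamma\|_2^2$, and $\delta\in\mathcal{R}(3,s)$ forces $\gamma\in\mathcal{R}(L,s)$ with $L=3\widehat{\sigma}_{\max}/\widehat{\sigma}_{\min}$, so $\widehat{\kappa}_D\ge\widehat{\sigma}_{\max}^{-2}\,\kappa_I(\widehat{\Sigma},L,s)$. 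On the w.p.a.1.\ event of Part 1 we have $L\le c_1\log p$, and since $\kappa_I(\widehat{\Sigma},\cdot,s)$ is non-increasing in its cone argument it suffices to bound $\kappa_I(\widehat{\Sigma},c_1\log p,s)$. Tracking the cone constant through the proof of Proposition \ref{prop:UnitRE} (the only place $L$ enters is the compatibility inequality $\|\delta\|_1\le(1+L)\sqrt s\,\|\delta\|_2$, so $s$ is effectively replaced by $s(1+L)^2$, still admissible under Assumption \ref{assu:asym_n}(b) because $s(\log p)^2=O(n^{1/4-\zeta/2})$) gives $\kappa_I(\widehat{\Sigma},c_1\log p,s)\stackrel{\mathrm{p}}{\succcurlyeq}n/(s(\log p)^3)$. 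Combined with $\widehat{\sigma}_{\max}^2\stackrel{\mathrm{p}}{\preccurlyeq}n\log p$ from Part 1, this delivers $\widehat{\kappa}_D\stackrel{\mathrm{p}}{\succcurlyeq}(n\log p)^{-1}\cdot n/(s(\log p)^3)=1/(s(\log p)^4)$, i.e.\ (\ref{eq:SUnitRE}).

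The heart of the argument — and the main obstacle — is the uniform lower bound $\widehat{\sigma}_{\min}^2\stackrel{\mathrm{p}}{\succcurlyeq}n/\log p$. One must (i) have a non-asymptotic small-ball estimate for the integrated squared demeaned Brownian motion $\xi_j$ with genuinely exponential decay $\exp(-c/\epsilon)$; (ii) control the Beveridge--Nelson remainder, the initial condition, the KMT coupling error, and the Riemann-sum discretization error \emph{all simultaneously uniformly over the $p$ coordinates}, i.e.\ with per-coordinate failure probability $o(1/p)$; and (iii) verify that these error terms, of order $\sqrt n\cdot\mathrm{polylog}(n)$, are negligible against the target level $n/\sqrt{\log p}$, which is exactly where the growth restriction $(\log p)^{1+2/r}=o(n)$ enters. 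Parts 2 and 3 are routine by comparison, the only non-cosmetic point being the extension of the restricted-eigenvalue bound of Proposition \ref{prop:UnitRE} to cone constants that grow like $\log p$.
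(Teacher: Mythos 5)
Your Parts 2 and 3, and the upper bound in Part 1, follow essentially the paper's own route: the crude bound $\widehat{\sigma}_{\max}^{2}\leq\max_{j,t}X_{j,t-1}^{2}\stackrel{\mathrm{p}}{\preccurlyeq}n\log p$ via the Bernstein/mixing lemma, the factorization of the DB through $\widehat{\sigma}_{\min}^{-1}$ combined with Proposition \ref{prop:UnitDB}, and the rescaling $\gamma=D^{-1}\delta$ giving $\widehat{\kappa}_{D}\geq\widehat{\sigma}_{\max}^{-2}\kappa_{I}(\widehat{\Sigma},3\widehat{\varsigma},s)$ with $\widehat{\varsigma}\stackrel{\mathrm{p}}{\preccurlyeq}\log p$, followed by a general-cone-constant version of the RE bound (this is exactly Proposition \ref{prop:REDBstdUnit} combined with Proposition \ref{prop:UnitRE-cn} in the appendix). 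Where you genuinely depart from the paper is the lower bound $\widehat{\sigma}_{\min}^{2}\stackrel{\mathrm{p}}{\succcurlyeq}n/\log p$: the paper gets it in one line by observing that the minimum diagonal entry is itself a restricted eigenvalue and invoking the general-$(L,s)$ result with $L=1$, $s=1$, whereas you build a standalone per-coordinate argument — Beveridge--Nelson plus KMT coupling to $\omega_{j}\,$times a Brownian motion (with $\omega_{j}^{2}\geq c_{\Omega}c_{\psi}^{2}$ by Assumptions \ref{assu:alpha} and \ref{assu:covMat}(a)), a small-ball bound $\Pr\{\xi_{j}<\epsilon\}\leq C\exp(-c/\epsilon)$ for the demeaned Wiener functional at $\epsilon\asymp1/\log p$, and a union bound over $j\in[p]$ (which needs no cross-sectional independence). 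That route is sound, yields the same $n/\log p$ rate, and is in fact the exact complement of the paper's Appendix calculation in Proposition \ref{prop:RE_KL}, which bounds the same small-ball probability from below; what it buys is a transparent explanation of where the $1/\log p$ loss comes from, at the cost of redoing coupling and discretization error control that the paper's argument inherits for free. One minor inaccuracy: in the paper's general-$L$ RE proof the cone constant does not enter only through the compatibility inequality $\|\delta\|_{1}\leq(1+L)\sqrt{s}\|\delta\|_{2}$ but through the Bickel--Ritov--Tsybakov construction (the choice $m\asymp L^{2}s$ and $\ell\asymp(s+m)\log p$); nevertheless your claimed $L^{2}$ inflation, the resulting $\kappa_{I}\stackrel{\mathrm{p}}{\succcurlyeq}n/(s(\log p)^{3})$ at $L\asymp\log p$, and the admissibility of the rate conditions under Assumption \ref{assu:asym_n} all agree with the paper's appendix statement, so the conclusion (\ref{eq:SUnitRE}) goes through as you describe.
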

\begin{rem}
Compared with (\ref{eq:UnitDB}), the absolute constant $\tilde{C}_{{\rm DB}}$
in (\ref{eq:SUnitDB}) is Slasso's counterpart of $C_{{\rm DB}}$
for Plasso; Slasso incurs another extra factor $(n^{-1}\log p)^{1/2}$
on the right-hand side which comes from the probabilistic bounds of
$\widehat{\sigma}_{\min}$ in (\ref{eq:sigma-minmax}). The self-normalization
of $\hat{\Sigma}$ by $D^{-1}$ eliminates $n$ from (\ref{eq:SUnitRE}),
whereas an extra price $(\log p)^{3}$ is paid to cope with the randomness
in $D$. 

The DB and RE for Slasso ready another straightforward application
of Lemma \ref{lem:gene-Lasso}. 
\end{rem}
\begin{thm}
\label{thm:SlassoError} Specify $\lambda=\dfrac{\tilde{C}_{{\rm DB}}}{\sqrt{n}}(\log p)^{\frac{3}{2}+\frac{1}{2r}}$
given the same $\tilde{C}_{{\rm DB}}$ in (\ref{eq:SUnitDB}). Under
Assumptions \ref{assu:tail}-\ref{assu:asym_n}, we have 
\begin{align}
\dfrac{1}{n}\|\ddot{X}(\hat{\beta}^{{\rm S}}-\beta^{*})\|_{2}^{2} & =O_{p}\left(\dfrac{s^{2}}{n}(\log p)^{7+\frac{1}{r}}\right)\label{eq:UnitForecast-1}\\
\|\hat{\beta}^{{\rm S}}-\beta^{*}\|_{1} & =O_{p}\left(\dfrac{s^{2}}{n}(\log p)^{6+\frac{1}{2r}}\right)\label{eq:UnitL1-1}\\
\|\hat{\beta}^{{\rm S}}-\beta^{*}\|_{2} & =O_{p}\left(\dfrac{s^{3/2}}{n}(\log p)^{6+\frac{1}{2r}}\right).\label{eq:UnitL2-1}
\end{align}
\end{thm}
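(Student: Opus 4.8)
The plan is to derive Theorem~\ref{thm:SlassoError} by feeding the three conclusions of Proposition~\ref{prop:MinMax-Unit} into the generic bounds of Lemma~\ref{lem:gene-Lasso} with the positive-definite placeholder taken as $H=D$. First I would check that the prescribed tuning parameter $\lambda=\tilde{C}_{\mathrm{DB}}n^{-1/2}(\log p)^{3/2+1/(2r)}$ satisfies the hypothesis of Lemma~\ref{lem:gene-Lasso}: this is exactly part~(2) of Proposition~\ref{prop:MinMax-Unit}, i.e.\ inequality (\ref{eq:SUnitDB}), which gives $\lambda\ge 4\|n^{-1}\sum_{t=1}^{n}D^{-1}\ddot{X}_{t-1}u_t\|_\infty$ w.p.a.1. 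On that event Lemma~\ref{lem:gene-Lasso} delivers
\[
n^{-1}\|\ddot{X}(\hat\beta^{\mathrm S}-\beta^*)\|_2^2\le\frac{4\lambda^2 s}{\widehat{\kappa}_D},\qquad \|D(\hat\beta^{\mathrm S}-\beta^*)\|_1\le\frac{4\lambda s}{\widehat{\kappa}_D},\qquad \|D(\hat\beta^{\mathrm S}-\beta^*)\|_2\le\frac{2\lambda\sqrt{s}}{\widehat{\kappa}_D}.
\]

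Next I would substitute the restricted-eigenvalue bound $\widehat{\kappa}_D\ge c_\kappa/(s(\log p)^4)$ from part~(3), so that $1/\widehat{\kappa}_D\stackrel{\mathrm{p}}{\preccurlyeq}s(\log p)^4$, together with $\lambda^2=\tilde{C}_{\mathrm{DB}}^2 n^{-1}(\log p)^{3+1/r}$. The first inequality in the display then yields $n^{-1}\|\ddot{X}(\hat\beta^{\mathrm S}-\beta^*)\|_2^2\stackrel{\mathrm{p}}{\preccurlyeq}n^{-1}(\log p)^{3+1/r}\cdot s\cdot s(\log p)^4=(s^2/n)(\log p)^{7+1/r}$, which is (\ref{eq:UnitForecast-1}). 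To pass from the $D$-weighted norms to the plain norms I would use that $D$ is diagonal with entries bounded below by $\widehat{\sigma}_{\min}$, hence $\|\hat\beta^{\mathrm S}-\beta^*\|_q\le\widehat{\sigma}_{\min}^{-1}\|D(\hat\beta^{\mathrm S}-\beta^*)\|_q$ for $q\in\{1,2\}$; part~(1) of Proposition~\ref{prop:MinMax-Unit}, i.e.\ (\ref{eq:sigma-minmax}), gives $\widehat{\sigma}_{\min}^{-1}\stackrel{\mathrm{p}}{\preccurlyeq}\sqrt{(\log p)/n}$. Combining this with the $L_1$ and $L_2$ lines of the display and the $\widehat{\kappa}_D$ substitution produces $\|\hat\beta^{\mathrm S}-\beta^*\|_1\stackrel{\mathrm{p}}{\preccurlyeq}\sqrt{(\log p)/n}\cdot n^{-1/2}(\log p)^{3/2+1/(2r)}\cdot s\cdot s(\log p)^4=(s^2/n)(\log p)^{6+1/(2r)}$ and, with $\sqrt{s}$ replacing the last factor of $s$, $\|\hat\beta^{\mathrm S}-\beta^*\|_2\stackrel{\mathrm{p}}{\preccurlyeq}(s^{3/2}/n)(\log p)^{6+1/(2r)}$, which are (\ref{eq:UnitL1-1}) and (\ref{eq:UnitL2-1}). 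Since Proposition~\ref{prop:MinMax-Unit} delivers its three conclusions jointly w.p.a.1.\ and on that event the above manipulations are deterministic, the stated $O_p$ bounds follow; Assumption~\ref{assu:asym_n} (which forces $s=O(n^{1/4-\zeta})$ and $\log p=O(\log n)$) moreover makes each right-hand side $o_p(1)$.

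Regarding the main obstacle: essentially all of the genuine difficulty has already been absorbed into Proposition~\ref{prop:MinMax-Unit}---the Koml\'{o}s-Major-Tusn\'{a}dy coupling and blocking arguments behind the Slasso deviation bound (\ref{eq:SUnitDB}), the restricted-eigenvalue bound (\ref{eq:SUnitRE}) built on Lemma~\ref{lem:Normal_RE}, and the two-sided control of $\widehat{\sigma}_{\min},\widehat{\sigma}_{\max}$ in (\ref{eq:sigma-minmax})---so the proof of the theorem is mostly bookkeeping. The one place that still demands care is the accounting of the powers of $\log p$: at least a half power from $\lambda$, one power from $\widehat{\sigma}_{\min}^{-1}$, and four powers from $1/\widehat{\kappa}_D$ must be added correctly to land on the exponents $7+1/r$ and $6+1/(2r)$; one should also verify that the extra $\sqrt{n}$ in the denominator of the Slasso $\lambda$ (absent in Plasso) is precisely what keeps the leading rate at $s^2/n$, matching Theorem~\ref{thm:LassoError}, while inflating the logarithmic factors.
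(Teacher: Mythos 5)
Your proposal is correct and takes essentially the same route as the paper: the three conclusions of Proposition \ref{prop:MinMax-Unit} (DB, RE, and the bounds on $\widehat{\sigma}_{\min},\widehat{\sigma}_{\max}$) are plugged into Lemma \ref{lem:gene-Lasso} with $H=D$, and your log-power accounting lands on exactly the stated exponents $7+\frac{1}{r}$ and $6+\frac{1}{2r}$. Your conversion $\|\hat{\beta}^{\mathrm{S}}-\beta^{*}\|_{q}\leq\widehat{\sigma}_{\min}^{-1}\|D(\hat{\beta}^{\mathrm{S}}-\beta^{*})\|_{q}$ with $\widehat{\sigma}_{\min}^{-1}\stackrel{\mathrm{p}}{\preccurlyeq}\sqrt{(\log p)/n}$ is the same bookkeeping the paper performs (written there in a rescaled parametrization), so no further comment is needed.
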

In terms of fitting, the leading term $s^{2}/n$ in (\ref{eq:UnitForecast-1})
is the same as that in (\ref{eq:UnitForecast}) for Plasso, up to
an extra logarithm term. Super-consistency is preserved in $\hat{\beta}^{{\rm S}}$
for the true original coefficient $\beta^{*}$. To counter the unknown
absolute constant $\tilde{C}_{{\rm DB}}$, it is sufficient to specify
$\lambda$ as 
\begin{equation}
\frac{(\log p)^{\frac{3}{2}+\frac{1}{2r}}}{\sqrt{n}\lambda}+s^{2}(\log p)^{\frac{9}{2}}\lambda\to0\label{eq:S_adm_rate}
\end{equation}
for the consistency of the quantities in Theorem \ref{thm:SlassoError}.

\subsection{Mixed Regressors\label{sec:Mixed-Regressors}}

In reality when we predict a target variable $y_{t}$ with many potential
regressors, the regressors are most likely to have various dynamic
patterns and we would not restrict ourselves by using nonstationary
regressors exclusively. In the low dimensional case it is possible
by pre-testing to classify variables into stationary and nonstationary
ones, but the power of unit root tests are known to be weak in finite
sample. What is worse, in high dimensional cases the individual test
errors will accumulate in multiple testing procedures. For these reasons,
\citet{lee2022lasso} study LASSO with mixed-root regressors without
pre-testing. That is, if we have a pool of mixed stationary and nonstationary
regressors, we keep an agnostic view and throw them all into LASSO.
It is in line with the attitude that a good machine learning method
should adapt to the complex nature of the regression\emph{.}

In this section we allow additional stationary regressors $Z_{t}$
in a DGP 
\begin{equation}
y_{t}=\alpha^{*}+X_{t-1}^{\top}\beta^{*}+Z_{t-1}^{\top}\gamma^{*}+u_{t}.\label{eq:DGP_XZ}
\end{equation}
The generic regressor $W_{t}$ and parameter $\theta^{*}$ in (\ref{eq:DGP_W})
represent $(X_{t}^{\top},Z_{t}^{\top})^{\top}$ and $(\beta^{*\top},\gamma^{*\top})^{\top}$,
respectively. Let $p_{x}$ be the length of $X_{t-1}$, and $p_{z}$
be the length of $Z_{j,t-1}$. We assume that the stationary components
$Z_{t}$, $e_{t}$ and $u_{t}$ are potentially correlated in the
following form
\begin{equation}
v_{t}=(e_{t}^{\top},Z_{t}^{\top},u_{t})^{\top}=\Phi\varepsilon_{t},\label{eq:def-I0-mix}
\end{equation}
where we slightly abuse the notations to keep using $v_{t}$, $\varepsilon_{t}$
and $\Phi$, understanding that they are adapted to the mixed root
case with the total number of regressors $p=p_{x}+p_{z}$.\footnote{The fact $E[\varepsilon_{t}]=0_{p+1}$ and (\ref{eq:def-I0-mix})
imply that the stationary regressor $E\left[Z_{jt}\right]=0$. This
restriction is merely for the conciseness of notation and there is
no loss of generality. In the model (\ref{eq:DGP_XZ}) the intercept
$\alpha^{*}$ can absorb the non-zero means of the stationary regressors.} In addition to Assumptions \ref{assu:tail}--\ref{assu:asym_n}
under the redefined symbols in this section, we impose one more condition. 
\begin{assumption}
\label{assu:EZu} $\mathbb{E}(Z_{t-1}u_{t})=0_{p_{z}}$ for all $t\in\mathbb{Z}$.
\end{assumption}
Assumption \ref{assu:EZu} is a necessary condition for identifying
the coefficient $\gamma^{*}$ in (\ref{eq:DGP_XZ}); otherwise $Z_{t-1}$
becomes endogenous and we must resort to external instrumental variables
for identification and consistent estimation. 
\begin{rem}
A zero-correlation assumption between $e_{t}$ and $u_{t}$ is not
needed, because the large variation of $X_{t-1}$ yields convergence
faster than the $\sqrt{n}$-rate. The bias caused by the endogeneity
of $e_{t}$ affects neither consistency nor the rate of convergence
for the unit root regressors, although it complicates hypothesis testing
asymptotically with nonstandard limiting distributions \citep{phillips2015halbert}.
\end{rem}
We study Slasso first in the mixed root case. We refresh $\hat{\Sigma}=n^{-1}\ddot{W}^{\top}\ddot{W}$
and $\widehat{\kappa}_{D}=\kappa_{D}(\hat{\Sigma},3,s)$ here. We
have established in Proposition \ref{prop:MinMax-Unit} (a) that \textbf{$\sqrt{n(\log p)^{-1}}\stackrel{\mathrm{p}}{\preccurlyeq}\widehat{\sigma}_{j}\stackrel{\mathrm{p}}{\preccurlyeq}\sqrt{n\log p}$
}w.p.a.1.~uniformly for all $j$ associated with the unit root variables.
Under Assumptions \ref{assu:tail}--\ref{assu:asym_n}\textbf{ $\max_{j}\left|\widehat{\sigma}_{j}-\mathrm{s.d.}(Z_{j,t-1})\right|\stackrel{p}{\to}0$}
for all $j$ associated the stationary variables, where $\mathrm{s.d.(\cdot)}$
denotes the population s.d.~of a stationary regressor. The scale-standardization
makes the DB and RE for the nonstationary variables comparable to
those of the stationary variables up to some $\log p$ terms. As a
result, convergence rates similar to Theorem \ref{thm:SlassoError}
follow.
\begin{thm}
\label{thm:SlassoError-Mix} Suppose that Assumptions \ref{assu:tail}-\ref{assu:EZu}
hold. There exists an absolute constant $C_{{\rm DB}}^{w}$ such that
if $\lambda=\dfrac{C_{{\rm DB}}^{w}}{\sqrt{n}}(\log p)^{\frac{3}{2}+\frac{1}{2r}}$,
we have 
\begin{align}
\dfrac{1}{n}\|\ddot{W}(\hat{\theta}^{{\rm S}}-\theta^{*})\|_{2}^{2} & =O_{p}\left(\dfrac{s^{2}}{n}(\log p)^{7+\frac{1}{r}}\right)\label{eq:MixForecast}\\
\left\Vert \hat{\beta}^{{\rm S}}-\beta^{*}\right\Vert _{1}+\sqrt{\dfrac{\log p}{n}}\left\Vert \hat{\gamma}^{{\rm S}}-\gamma^{*}\right\Vert _{1} & =O_{p}\left(\dfrac{s^{2}}{n}(\log p)^{6+\frac{1}{2r}}\right)\label{eq:MixL1}\\
\left\Vert \hat{\beta}^{{\rm S}}-\beta^{*}\right\Vert _{2}+\sqrt{\dfrac{\log p}{n}}\left\Vert \hat{\gamma}^{{\rm S}}-\gamma^{*}\right\Vert _{2} & =O_{p}\left(\dfrac{s^{3/2}}{n}(\log p)^{6+\frac{1}{2r}}\right)\label{eq:MixL2}
\end{align}
\end{thm}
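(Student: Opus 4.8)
The plan is to fit the mixed-root DGP (\ref{eq:DGP_XZ}) into the generic framework of Lemma \ref{lem:gene-Lasso} with $H=D$, and then verify the DB and RE conditions separately on the unit-root block and the stationary block, finally combining them. First I would record that Lemma \ref{lem:gene-Lasso} requires $\lambda\geq 4\|n^{-1}\sum_{t=1}^n D^{-1}\ddot W_{t-1}u_t\|_\infty$, where $\ddot W=(\ddot X,\ddot Z)$ and $D=\mathrm{diag}(D_x,D_z)$ with $D_x$ collecting the sample s.d.\ of the unit-root regressors and $D_z$ those of the stationary ones. For the unit-root coordinates, Proposition \ref{prop:MinMax-Unit}(2) already gives the bound $\tilde C_{\mathrm{DB}}n^{-1/2}(\log p)^{3/2+1/(2r)}$. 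For the stationary coordinates, I would show $\|n^{-1}\sum_t D_z^{-1}\ddot Z_{t-1}u_t\|_\infty\stackrel{\mathrm{p}}{\preccurlyeq} n^{-1/2}(\log p)^{1/2+1/(2r)}$: Assumption \ref{assu:EZu} makes $Z_{t-1}u_t$ a mean-zero, geometrically mixing, sub-exponential array (by Assumptions \ref{assu:tail}, \ref{assu:alpha}, \ref{assu:covMat}), so the Bernstein-type inequality of \citet{merlevede2011bernstein} plus a union bound over the $p_z$ coordinates yields this rate, and dividing by $\widehat\sigma_j=\mathrm{s.d.}(Z_{j,t-1})+o_p(1)$, bounded away from zero by Assumption \ref{assu:covMat}(a), is harmless. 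Taking the maximum over the two blocks shows $\lambda=C^w_{\mathrm{DB}}n^{-1/2}(\log p)^{3/2+1/(2r)}$ dominates the whole DB w.p.a.1.

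Next I would handle the restricted eigenvalue $\widehat\kappa_D=\kappa_D(\widehat\Sigma,3,s)$ of the sample correlation matrix of the full regressor vector. The key structural point is that after $D^{-1}$-standardization the unit-root block and the stationary block are on the same scale: the unit-root diagonal entries of $D^{-1}\widehat\Sigma D^{-1}$ are $1$ and its stationary diagonal entries are $1$, while the cross-block entries are inner products of normalized demeaned series. I would split a generic $\delta=(\delta_x,\delta_y)\in\mathcal R(3,s)$ according to the two blocks and bound $\delta^\top D^{-1}\widehat\Sigma D^{-1}\delta$ below using: (i) for the $\delta_x$ part, the self-normalized RE bound $\widehat\kappa_{D_x}\succeq c_\kappa/(s(\log p)^4)$ from Proposition \ref{prop:MinMax-Unit}(3); (ii) for the $\delta_y$ part, the standard stationary RE argument — $D_z^{-1}\widehat\Sigma_{zz}D_z^{-1}$ is the sample correlation matrix of a mixing sub-exponential sequence and converges in $\|\cdot\|_{\max}$ to its population analogue whose minimum eigenvalue is bounded below by $c_\Omega/C_\Omega$ via Assumption \ref{assu:covMat}(a), so by the \citet{bickel2009simultaneous} / \citet{buhlmann2011statistics} perturbation argument $\kappa_{D_z}\succeq c>0$; and (iii) controlling the cross terms. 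Because the dominant of (i) and (ii) is $c_\kappa/(s(\log p)^4)$, I get $\widehat\kappa_D\succeq c_\kappa/(s(\log p)^4)$ w.p.a.1.

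With the DB and RE in hand, the three bounds of Lemma \ref{lem:gene-Lasso} give, with $\lambda\asymp n^{-1/2}(\log p)^{3/2+1/(2r)}$ and $\widehat\kappa_D^{-1}\preceq s(\log p)^4$,
\begin{align*}
n^{-1}\|\ddot W(\hat\theta^{\mathrm S}-\theta^*)\|_2^2 &\preceq \lambda^2 s/\widehat\kappa_D = O_p\!\big(s^2 n^{-1}(\log p)^{7+1/r}\big),\\
\|D(\hat\theta^{\mathrm S}-\theta^*)\|_1 &\preceq \lambda s/\widehat\kappa_D = O_p\!\big(s^2 n^{-1/2}(\log p)^{11/2+1/(2r)}\big),\\
\|D(\hat\theta^{\mathrm S}-\theta^*)\|_2 &\preceq \lambda \sqrt s/\widehat\kappa_D = O_p\!\big(s^{3/2} n^{-1/2}(\log p)^{11/2+1/(2r)}\big).
\end{align*}
The last step is to translate the $D$-weighted norms back to the original coefficients: on the unit-root block $\widehat\sigma_j\succeq n^{1/2}(\log p)^{-1/2}$ (Proposition \ref{prop:MinMax-Unit}(1)), so $\|\hat\beta^{\mathrm S}-\beta^*\|_1\preceq n^{-1/2}(\log p)^{1/2}\|D_x(\hat\beta^{\mathrm S}-\beta^*)\|_1$, picking up the denominator $n$ and one more $(\log p)^{1/2}$ factor to give (\ref{eq:MixL1})–(\ref{eq:MixL2}) for the $\beta$ parts; on the stationary block $\widehat\sigma_j\asymp 1$, so $\|\hat\gamma^{\mathrm S}-\gamma^*\|_1\asymp\|D_z(\hat\gamma^{\mathrm S}-\gamma^*)\|_1$, which after multiplying by the scaling factor $\sqrt{(\log p)/n}$ matches the same rate — hence the additive form on the left of (\ref{eq:MixL1})–(\ref{eq:MixL2}). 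I expect the main obstacle to be step (iii) of the RE argument, i.e.\ showing the cross-block entries of the standardized Gram matrix do not destroy the lower bound: the normalized inner products between the $O_p(1)$-after-scaling unit-root series and the stationary series are themselves $o_p(1)$ (a demeaned unit root divided by $\sqrt n$ is asymptotically orthogonal to a zero-mean stationary series), but making this uniform over the $p_x p_z$ pairs and over the restricted cone, while tracking the $(\log p)$ powers so they remain absorbed in the $(\log p)^4$ budget, is the delicate part of the proof.
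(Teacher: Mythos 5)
Your overall architecture (verify DB and RE for the standardized mixed design, plug into Lemma \ref{lem:gene-Lasso}, then rescale by $\widehat\sigma_{j}$ blockwise to get the $\beta$ and $\gamma$ rates) is the same as the paper's, and your DB step and final norm conversions are essentially correct. The gap is in the RE step. You propose to lower-bound $\delta^{\top}D^{-1}\widehat{\Sigma}D^{-1}\delta$ by applying, separately, the unit-root RE of Proposition \ref{prop:MinMax-Unit}(3) to $\delta_{x}$ and a stationary RE to $\delta_{z}$, and then "take the dominant of the two." This inference is not valid: the cone restriction $\delta\in\mathcal{R}(3,s)$ constrains the \emph{full} vector, and the sub-vectors $\delta_{x},\delta_{z}$ need not lie in the restricted cones of their own blocks (e.g., the active coordinates may sit entirely in the $z$-block, leaving $\delta_{x}$ unrestricted; since $p_{x}>n$ the unit-root block Gram matrix is singular, so such $\delta_{x}$ can even annihilate its quadratic form). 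One cannot therefore invoke the block-level restricted eigenvalues at all; what saves the bound in such configurations is the cone geometry of the \emph{joint} vector, which is exactly what the paper exploits. Its Proposition \ref{prop:MixRE-cn} lower-bounds the joint quadratic form of the $\sqrt{n}$-scaled design by that of the block-diagonal matrix ${\rm diag}(\Gamma_{\ell},\Sigma^{(z)})$ and then runs the \citet{bickel2009simultaneous} sparse-eigenvalue argument ($\phi_{\min}(\cdot,s+m)$, $\phi_{\max}(\cdot,m)$, Lemma 4.1(ii)) on the full vector, with the cone constant enlarged to $L=3\widehat{\varsigma}$ to absorb the random standardization; the $(\log p)^{4}$ in (\ref{eq:RE-mix-std}) comes precisely from $\widehat{\varsigma}^{2}\widehat{\sigma}_{\max}^{*2}\stackrel{\mathrm{p}}{\preccurlyeq}(\log p)^{3}$ times the $s\log p$ of the Gaussian/Wishart bound, not from stacking two separate block REs.

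A second, smaller issue: you correctly identify the cross-block terms as delicate but treat their control heuristically ("a demeaned unit root divided by $\sqrt{n}$ is asymptotically orthogonal to a stationary series"). The uniform bound $\max_{j,k}n^{-3/2}|\sum_{t}\ddot{X}_{j,t-1}\ddot{Z}_{k,t-1}|\stackrel{\mathrm{p}}{\preccurlyeq}n^{-1/2}(\log p)^{1+\frac{1}{2r}}$ is a substantive result (Proposition \ref{prop:DB-All}, proved via time-series blocking, the Bernstein inequality of \citet{merlevede2011bernstein}, and a conditional-MGF argument), and in the RE proof it is multiplied by $\|\delta\|_{1}^{2}\leq4L^{2}s\|\delta\|_{2}^{2}$, so its negligibility relative to the target $1/(s\log p\cdot L^{2})$ rests on $s^{2}L^{4}(\log p)^{5/2+1/(2r)}=o(n^{1/2})$, i.e.\ on Assumption \ref{assu:asym_n}(b) with $s=O(n^{1/4-\zeta})$ — a rate condition your sketch never confronts. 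With the joint-cone RE argument and this uniform cross-product bound supplied, the rest of your proposal goes through and reproduces (\ref{eq:MixForecast})--(\ref{eq:MixL2}).
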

For consistency, the admission rate for $\lambda$ in (\ref{eq:S_adm_rate})
remains valid. Slasso provides provable rates of convergence for the
mixed root case, thanks to the scale-standardization that aligns the
unit root variables with the stationary ones. Up to some logarithmic
term, the estimator $\hat{\beta}^{{\rm S}}$ is super-consistent and
$\hat{\gamma}^{{\rm S}}$ maintains the standard $\sqrt{n}$ rate.
Overall, $\widehat{\theta}^{\mathrm{S}}$ is consistent for $\theta^{*}$
under both the $L_{1}$ and $L_{2}$ norms.
\begin{rem}
\label{rem:var-sel} This paper does not attempt to formally develop
asymptotic results for variable selection, because LASSO in general
does not enjoy variable selection consistency \citep{Zou2006}. Here
we briefly discuss the selected variables. According to the Karush-Kuhn-Tucker
(KKT) condition for LASSO \citep[Lemma 2.1]{buhlmann2011statistics},
the solution to (\ref{eq:demean_gene_lasso}) must satisfy 
\begin{gather}
\frac{2}{n}H^{-1}\ddot{W}_{j}^{\top}\hat{u}=\lambda\times\mathrm{sign}(\hat{\theta}_{j})\ \ \text{ if }\hat{\theta}_{j}\neq0\label{eq:var_sel}\\
\big|\frac{2}{n}H^{-1}\ddot{W}_{j}^{\top}\hat{u}\big|\leq\lambda\ \ \text{ if }\hat{\theta}_{j}=0,\nonumber 
\end{gather}
where $\hat{u}=\ddot{Y}-\ddot{W}\hat{\theta}$ is the estimated residual.
Slasso transforms the stationary and nonstationary components into
comparable scales and the variable selection mechanism does not distinguish
these two parts. However, without the transformation the nonstationary
variables are of much larger scales in terms of $\widehat{\sigma}_{j}$
than the stationary variables. With a single tuning parameter for
both components, Plasso tends to select the nonstationary variables
more frequently according to the above KKT condition. This phenomenon
will be observed in our numerical exercises below.
\end{rem}
\begin{rem}
\label{rem:plasso-mix} In the low dimensional case \citet[Corollary 1]{lee2022lasso}
prove an inconvenient property: Plasso with mixed roots cannot achieve
variable estimation consistency and variable selection effect\footnote{\emph{Variable selection effect} in \citet{lee2022lasso} means that
an estimator of a true zero coefficient will be shrunk to zero with
non-trivial probability. This concept is weaker than \emph{variable
selection consistency}: an estimator asymptotically correctly distinguishes
the active and inactive coefficients w.p.a.1.} for both components simultaneously, due to the super-consistency
of the nonstationary component and the standard $\sqrt{n}$-consistency
of the stationary component. This dilemma in the low dimensional regression
naturally carries over into the high dimensional case. In our context,
if Plasso's tuning parameter is chosen according to the DB of the
nonstationary variables, then $\lambda\to\infty$ as in (\ref{eq:P_adm_rate}),
which is far larger than what is required for the DB of the stationary
variables in view of 
\[
\left\Vert \frac{1}{n}\sum_{t=1}^{n}Z_{t-1}u_{t}\right\Vert _{\infty}=\frac{1}{\sqrt{n}}\left\Vert \frac{1}{\sqrt{n}}\sum_{t=1}^{n}Z_{t-1}u_{t}\right\Vert _{\infty}\stackrel{\mathrm{p}}{\preccurlyeq}\frac{(\log p)^{\frac{3}{2}+\frac{1}{2r}}}{\sqrt{n}}\to0.
\]
On the other hand, if we choose Plasso's tuning parameter according
to the DB of the stationary regressors with $\lambda\to0$, then $\lambda$
will be too small to control the DB for the nonstationary components.
In a word, in the mixed root case we have no provable consistency
result for Plasso, and in our numerical studies Plasso performs poorly.
\end{rem}
Up to now we have addressed the asymptotic theory for a mix of $I(0)$
and $I(1)$ regressors. What happens if some nonstationary regressors
are actually cointegrated? We discuss it in the next section.

\subsection{Cointegration\label{subsec:Cointegration}}

To introduce the cointegrated variables into the predictive regression,
we consider a cointegration system of $p_{c}$ observable variables
with cointegration rank $p_{c1}$. We write the cointegration system
into the triangular representation \citep{Phillips1991cointinference}:
\begin{align}
X_{t}^{\mathrm{co}(1)} & =AX_{t}^{\mathrm{co}(2)}+v_{t}^{(1)}\label{eq:cointegration X1 X2}\\
X_{t}^{\mathrm{co}(2)} & =X_{t-1}^{\mathrm{co}(2)}+e_{t}^{(2)}\nonumber 
\end{align}
where $A^{(1)}$ ($p_{c1}\times p_{c2}$ matrix, where $p_{c2}:=p_{c}-p_{c1}$)
stores the $p_{c1}$ cointegration vectors, and the cointegration
error $v_{t}^{(1)}$ and the innovation $e_{t}^{(2)}$ are strictly
stationary. Following \citet[p.327]{lee2022lasso}, we consider a
model where the cointegration error $v_{t}^{(1)}$ enters the regression
linearly via a coefficient $\phi_{1}^{*}$:
\begin{align}
y_{t} & =\alpha^{*}+X_{t-1}^{\top}\beta^{*}+Z_{t-1}^{\top}\gamma^{*}+v_{t-1}^{(1)\top}\phi_{1}^{*}+u_{t},\label{eq:cointegration_y}
\end{align}
and $u_{t}$ is uncorrelated with $Z_{t-1}$ and $v_{t-1}^{(1)}$
to ensure the identification of their respective parameters $\gamma^{*}$
and $\phi_{1}^{*}$. 

In practice the econometrician has no knowledge about the nature of
the regressions \emph{a priori}. Without the identities of $X_{t}^{\mathrm{co}(1)}$
and $X_{t}^{\mathrm{co}(2)}$ she cannot identify or estimate $A$,
and thus $v_{t}^{(1)}$ is a latent variable, making (\ref{eq:cointegration_y})
an infeasible regression. She can, nevertheless, throw the $p$ ($=p_{x}+p_{z}+p_{c}$)
observable regressors into the feasible regression
\begin{align}
y_{t} & =\alpha^{*}+X_{t-1}^{\top}\beta^{*}+Z_{t-1}^{\top}\gamma^{*}+X_{t-1}^{\mathrm{co}(1)\top}\phi_{1}^{*}+X_{t-1}^{\mathrm{co}(2)\top}\phi_{2}^{*}+u_{t}\nonumber \\
 & =\alpha^{*}+X_{t-1}^{\top}\beta^{*}+Z_{t-1}^{\top}\gamma^{*}+X_{t-1}^{\mathrm{co}\top}\phi^{*}+u_{t}\nonumber \\
 & =\alpha^{*}+W_{t-1}^{\top}\theta^{*}+u_{t},\label{eq:cointegration_feasible}
\end{align}
where $\phi_{2}^{*}=-A^{\top}\phi_{1}^{*}$ by substituting (\ref{eq:cointegration X1 X2})
into (\ref{eq:cointegration_y}), $X_{t-1}^{\mathrm{co}}:=(X_{t-1}^{\mathrm{co}(1)\top},X_{t-1}^{\mathrm{co}(2)\top})^{\top}$
with associated parameter $\phi^{*}:=(\phi_{1}^{*\top},\phi_{2}^{*\top})^{\top}$,
and $W_{t-1}$ collects all regressors with the corresponding parameter
$\theta^{*}$ to fit into our framework (\ref{eq:DGP_W}). 

Penalized estimation methods face a generic challenge with cointegration
systems in the regressors, as demonstrated in the following example. 
\begin{example}
\label{exa:coint_DGP}Consider a toy DGP with two scalar regressors
only: 
\[
y_{t}=\phi_{1}^{*}v_{t}^{(1)}+u_{t}=\phi_{1}^{*(1)}x_{t}^{\mathrm{co}(1)}-\phi_{2}^{*(2)}x_{t}^{\mathrm{co}(2)}+u_{t},
\]
where $\phi_{1}^{*}\neq0$ (and $\phi_{2}^{*}=-A\phi_{1}^{*}\neq0$;
here $A$ is a non-zero scalar). Notice OLS is \emph{variable rotation
invariant}, meaning that given $v_{t-1}^{(1)}$ is a linear combination
of $X_{t-1}^{\mathrm{co}}$ the following two regressions produce
exactly the same residual vectors: (i) Regressing $y_{t}$ on $X_{t-1}^{\mathrm{co}}$;
(ii) Regressing $y_{t}$ on $v_{t-1}^{(1)}$. On the contrary, LASSO
estimators vary with variable rotations, because it penalizes different
$L_{1}$ norms on the coefficients in Regressions (i) and (ii). We
are unaware of any penalized method that is rotation invariant, including
ridge regression, smoothly clipped absolute deviation \citep[SCAD,][]{fan2001variable},
and minimax concave penalty \citep[MCP,][]{zhang2010nearly}. 
\end{example}
We continue the example with LASSO's specific issue.

\setcounter{example}{0}
\begin{example}[continue]
\label{exa:coint_Slasso} Slasso cannot consistently estimate the
coefficients $\left(\phi_{1}^{*},\phi_{2}^{*}\right)^{\top}$ under
the tuning parameter $\lambda=\tilde{C}_{{\rm DB}}^{w}\frac{\left(\log p\right)^{3/2}}{\sqrt{n}}$
(with $r=\infty$) as in Theorem \ref{thm:SlassoError-Mix}.\footnote{This choice of $\lambda$ is made because in a full model the tuning
parameter must also accommodate $X_{t-1}$ and $Z_{t-1}$.} Instead, Slasso will lead to $\widehat{\phi}^{S}\stackrel{p}{\to}(0,0)^{\top}$.
As a minimizer $\widehat{\phi}^{\mathrm{S}}$ must satisfy 
\begin{align}
\frac{1}{n}\left\Vert \ddot{Y}-\ddot{X}^{\mathrm{co}\top}\widehat{\phi}^{\mathrm{S}}\right\Vert _{2}^{2}+\lambda\left\Vert D\widehat{\phi}^{\mathrm{S}}\right\Vert _{1} & \leq\left\{ \frac{1}{n}\left\Vert \ddot{Y}-\ddot{X}^{\mathrm{co}\top}\phi\right\Vert _{2}^{2}+\lambda\left\Vert D\phi\right\Vert _{1}\right\} \bigg|_{\phi=(0,0)^{\top}}\nonumber \\
 & =\frac{1}{n}\left\Vert \ddot{Y}\right\Vert _{2}^{2}\stackrel{p}{\to}\left(\mathrm{s.d.}(y_{t})\right){}^{2}<\infty\label{eq:Slasso_0}
\end{align}
given that $y_{t}$ is stationary if the variances of the innovations
$(u_{t},v_{t}^{(1)})$ are finite. Notice that $x_{t}^{\mathrm{co}(2)}$
is a unit root process individually with its sample s.d.~$\widehat{\sigma}_{2}=O_{p}\left(\sqrt{n}\right)$,
and the sample s.d.~of $x_{t}^{\mathrm{co}(1)}$ is $\widehat{\sigma}_{1}/\sqrt{n}=|A|\cdot\widehat{\sigma}_{2}/\sqrt{n}+o_{p}(1)$
due to cointegration. Suppose $\widehat{\phi}^{S}$ converges in probability
to some non-zero constant, then it violates (\ref{eq:Slasso_0}) because
\begin{align*}
\frac{1}{n}\left\Vert \ddot{Y}-\ddot{X}^{\mathrm{co}\top}\widehat{\phi}^{\mathrm{S}}\right\Vert _{2}^{2}+\lambda\left\Vert D\widehat{\phi}^{\mathrm{S}}\right\Vert _{1} & \geq\lambda\left(\widehat{\sigma}_{1}\wedge\widehat{\sigma}_{2}\right)\left\Vert \widehat{\phi}^{\mathrm{S}}\right\Vert _{1}\asymp\left(\log p\right)^{3/2}\frac{\widehat{\sigma}_{2}}{\sqrt{n}}\left(1+o_{p}(1)\right)\to\infty.
\end{align*}
As a result, asymptotically (\ref{eq:Slasso_0}) holds only if $\widehat{\phi}^{\mathrm{S}}\stackrel{p}{\to}(0,0)^{\top}$. 

The same argument of inconsistency applies to Plasso when $\lambda=C_{{\rm DB}}\log p$
as in Theorem \ref{thm:LassoError} (with $r=\infty$). The minimizer
$\widehat{\phi}^{\mathrm{p}}$ satisfies 
\[
\frac{1}{n}\left\Vert Y-X^{\mathrm{co}\top}\widehat{\phi}^{\mathrm{P}}\right\Vert _{2}^{2}+\lambda\left\Vert \widehat{\phi}^{\mathrm{P}}\right\Vert _{1}\leq\left\{ \frac{1}{n}\left\Vert Y-X^{\mathrm{co}\top}\widehat{\phi}^{\mathrm{P}}\right\Vert _{2}^{2}+\lambda\left\Vert \phi\right\Vert _{1}\right\} \bigg|_{\phi=(0,0)^{\top}}=\frac{1}{n}\left\Vert Y\right\Vert _{2}^{2}=O_{p}(1)
\]
On the other hand, unless $\widehat{\phi}^{\mathrm{p}}\stackrel{p}{\to}(0,0)^{\top}$
the criterion function diverges as 
\[
\frac{1}{n}\left\Vert Y-X^{\mathrm{co}\top}\widehat{\phi}^{\mathrm{P}}\right\Vert _{2}^{2}+\lambda\left\Vert \widehat{\phi}^{\mathrm{P}}\right\Vert _{1}\geq C_{{\rm DB}}\log p\left\Vert \widehat{\phi}^{\mathrm{P}}\right\Vert _{1}\to\infty.
\]
 Plasso is also inconsistent in this toy model.
\end{example}
\bigskip

The above Example \ref{exa:coint_Slasso} implies that in (\ref{eq:cointegration_feasible})
LASSO cannot achieve consistent estimation for the whole parameter
$\theta^{*}$, which includes $\phi^{*}$ as a component. Instead,
we should benchmark it with the a tailored regression of $y_{t}$
on $X_{t-1}$ and $Z_{t-1}$ only, where $\phi^{*}$ is suppressed
to zero. In the DGP (\ref{eq:cointegration_y}) the latent variable
$v_{t-1}^{(1)}$, if correlated with $Z_{t-1}$, will induce the well-known
\emph{omitted variable bias }in the population model
\begin{align}
y_{t} & =\alpha^{*}+X_{t-1}^{\top}\beta^{*}+Z_{t-1}^{\top}(\gamma^{*}+\omega^{*})+\left(v_{t-1}^{(1)\top}\phi_{1}^{*}-Z_{t-1}^{\top}\omega^{*}\right)+u_{t}\nonumber \\
 & =\alpha^{*}+X_{t-1}^{\top}\beta^{*}+Z_{t-1}^{\top}\gamma^{*(1)}+u_{t}^{(1)}\label{eq:cointegration_y_2}
\end{align}
where $\omega^{*}:=\left[\mathbb{E}(Z_{t}Z_{t}^{\top})\right]^{-1}\mathbb{E}(Z_{t}v_{t}^{(1)\top})\phi_{1}^{*}$
is the projection coefficient of $v_{t-1}^{(1)\top}\phi_{1}^{*}$
onto the linear space spanned by $Z_{t-1}$, and the projection leads
to the new regression coefficient $\gamma^{(1)*}=\gamma^{*}+\omega^{*}$
for $Z_{t-1}$ to ensure that the new residual $u_{t}^{(1)}=u_{t}+v_{t-1}^{(1)\top}\phi_{1}^{*}-Z_{t-1}^{\top}\omega^{*}$
is orthogonal to $Z_{t-1}$. From the perspective of prediction, this
re-calibration of the population coefficient from $\gamma^{*}$ to
$\gamma^{(1)*}$ is desirable in that the predictive power of $v_{t}^{(1)}$
can be partially absorbed by the observable $Z_{t-1}$ to reduce the
variance of the error term as $\mathrm{var}\big(u_{t}^{(1)}\big)\leq\mathrm{var}\big(v_{t-1}^{(1)\top}\phi_{1}^{*}+u_{t}\big)$
by construction. 

Now we present the formal asymptotic analysis of Slasso when cointegrated
variables are present. We consider that the stationary components
$v_{t}^{(1)},$ $e_{t}^{(2)}$, $e_{t}$, $Z_{t}$, and $u_{t}$ are
potentially correlated in the form

\begin{equation}
v_{t}=(e_{t}^{(2)\top},e_{t}^{\top},v_{t}^{(1)\top},Z_{t}^{\top},u_{t})^{\top}=\Phi\varepsilon_{t},\label{eq:cointegration-def-I0-mix}
\end{equation}
understanding that $\Phi$ and $\varepsilon_{t}$ are redefined to
adapt to the DGP (\ref{eq:cointegration X1 X2}) and (\ref{eq:cointegration_y}).
Define $\|A\|_{r1}:=\max_{j}\sum_{k}|A_{jk}|$ as the maximum row-wise
$L_{1}$ norm. We further regularize the new coefficients $\gamma^{(1)*}$
and the cointegration matrix $A$. Let $\left\Vert \cdot\right\Vert _{0}$
be the cardinality of non-zero elements in a vector. 
\begin{assumption}
\label{assu:cointegration}Suppose that $\|\beta^{*}\|_{0}+\|\gamma^{*(1)}\|_{0}\leq s$
and $\|\omega^{*}\|_{1}+\|\phi_{1}^{*}\|_{1}\leq C_{0}$ for some
absolute constant $C_{0}$. Furthermore, there exist an absolute constant
$C_{A}$ such that $\|A\|_{r1}+\left[\lambda_{\min}(AA^{\top})\right]^{-1}+\lambda_{\max}(AA^{\top})\leq C_{A}$.
\end{assumption}
The restriction on $\|\beta^{*}\|_{0}+\|\gamma^{*(1)}\|_{0}$ controls
the sparsity of the coefficients in (\ref{eq:cointegration_y_2}).
The finite $L_{1}$-norm of $\|\omega^{*}\|_{1}+\|\phi_{1}^{*}\|_{1}$
governs the deviation bound for $u_{t}^{(1)}$. The restrictions on
the cointegration matrix $A$ regularize the high dimensional cointegration
system. 

Given the discussion in Remark \ref{rem:plasso-mix} that Slasso is
favored over Plasso when $X_{t-1}$ and $Z_{t-1}$ are present, we
apply Slasso to (\ref{eq:cointegration_feasible}) and obtain the
following results. 
\begin{thm}
\label{thm:cointegration}Suppose that Assumptions \ref{assu:tail}-\ref{assu:cointegration}
hold. There exists an absolute constant $\tilde{C}_{{\rm DB}}^{w}$
such that if $\lambda=\dfrac{\tilde{C}_{{\rm DB}}^{w}}{\sqrt{n}}(\log p)^{\frac{5}{2}+\frac{1}{2r}}$,
we have 
\begin{eqnarray}
\dfrac{1}{n}\|\ddot{W}\hat{\theta}^{{\rm S}}-\left(\ddot{X}_{t-1}^{\top}\beta^{*}+\ddot{Z}_{t-1}^{\top}\gamma^{*(1)}\right)\|_{2}^{2} & = & O_{p}\left(\dfrac{s^{2}}{n}(\log p)^{9+\frac{1}{r}}\right)\label{eq:Coint_fit}\\
\left\Vert \hat{\phi}^{{\rm S}}\right\Vert _{1}+\left\Vert \hat{\beta}^{{\rm S}}-\beta^{*}\right\Vert _{1}+\sqrt{\dfrac{\log p}{n}}\left\Vert \hat{\gamma}^{{\rm S}}-\gamma^{*(1)}\right\Vert _{1} & = & O_{p}\left(\dfrac{s^{2}}{n}(\log p)^{10+\frac{1}{2r}}\right).\label{eq:Coint_I1}
\end{eqnarray}
\end{thm}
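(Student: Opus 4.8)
The plan is to reduce the cointegrated regression to the mixed-root framework of Theorem~\ref{thm:SlassoError-Mix} by benchmarking Slasso against the tailored sparse model~(\ref{eq:cointegration_y_2}), and then to invoke the generic inequality in Lemma~\ref{lem:gene-Lasso}. First I would rewrite the feasible regression~(\ref{eq:cointegration_feasible}) as $\ddot{Y}=\ddot{W}\theta^{*(1)}+\ddot{u}^{(1)}$, where $\theta^{*(1)}$ puts $\beta^{*}$ on the $X_{t-1}$-block, $\gamma^{*(1)}=\gamma^{*}+\omega^{*}$ on the $Z_{t-1}$-block, and $0$ on the entire $X_{t-1}^{\mathrm{co}}$-block, and $u_{t}^{(1)}=u_{t}+v_{t-1}^{(1)\top}\phi_{1}^{*}-Z_{t-1}^{\top}\omega^{*}$ is the recalibrated error of~(\ref{eq:cointegration_y_2}). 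By Assumption~\ref{assu:cointegration} the benchmark coefficient is sparse, $\|\theta^{*(1)}\|_{0}=\|\beta^{*}\|_{0}+\|\gamma^{*(1)}\|_{0}\le s$, so Lemma~\ref{lem:gene-Lasso} applies to this decomposition with $H=D$, $\widehat{\theta}=\widehat{\theta}^{\mathrm{S}}$ and $\theta^{*}=\theta^{*(1)}$, once I verify (i) the deviation bound $\lambda\ge4\|n^{-1}\sum_{t=1}^{n}D^{-1}\ddot{W}_{t-1}u_{t}^{(1)}\|_{\infty}$ w.p.a.1., and (ii) a lower bound on the restricted eigenvalue $\widehat{\kappa}_{D}=\kappa_{D}(\widehat{\Sigma},3,s)$ of the full Gram matrix.

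For the deviation bound I would treat the three blocks of $\ddot{W}$ separately, using that $u_{t}^{(1)}$ is still a zero-mean, geometrically mixing, sub-exponential sequence: it is an $L_{1}$-weighted combination of $u_{t}$, $\{v_{k,t-1}^{(1)}\}$ and $\{Z_{k,t-1}\}$ with total weight at most $1+C_{0}$ by Assumption~\ref{assu:cointegration}, hence it inherits Assumptions~\ref{assu:tail}--\ref{assu:alpha} up to absolute constants. On the pure $I(1)$ block the estimate behind Proposition~\ref{prop:MinMax-Unit}(2) applies verbatim with $u_{t}$ replaced by $u_{t}^{(1)}$; on the stationary block $\mathbb{E}(Z_{t-1}u_{t}^{(1)})=0$ by Assumption~\ref{assu:EZu} and the definition of $\omega^{*}$, so those cross-products are mean-zero and concentrate at the standard stationary rate. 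The cointegrated block is the delicate one, since $u_{t}^{(1)}$ is correlated with $X_{t-1}^{\mathrm{co}}$ through $v_{t-1}^{(1)}$; there I would substitute the triangular representation~(\ref{eq:cointegration X1 X2}) to write $\ddot{X}_{j,t-1}^{\mathrm{co}(1)}=(A\ddot{X}^{\mathrm{co}(2)})_{j,t-1}+\ddot{v}_{j,t-1}^{(1)}$, bound the unit-root-against-stationary cross-products uniformly via Proposition~\ref{prop:UnitDB} (with $\|A\|_{r1}\le C_{A}$ keeping the sub-exponential norms under control), and observe that the purely stationary piece $n^{-1}\sum_{t}\ddot{v}_{j,t-1}^{(1)}u_{t}^{(1)}$, although it has a non-vanishing mean, is $O_{p}(1)$ with mean controlled by $\|\phi_{1}^{*}\|_{1}+\|\omega^{*}\|_{1}\le C_{0}$. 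The decisive feature is the scale: Proposition~\ref{prop:MinMax-Unit}(1), together with $\lambda_{\min}(AA^{\top})\ge C_{A}^{-1}$, $\lambda_{\max}(AA^{\top})\le C_{A}$ and $\lambda_{\min}(\Omega)\ge c_{\Omega}$, gives $\widehat{\sigma}_{j}\stackrel{\mathrm{p}}{\succcurlyeq}\sqrt{n/\log p}$ uniformly over all $I(1)$ and cointegrated coordinates, so division by $\widehat{\sigma}_{j}$ turns the $O_{p}(1)$ numerators into $O_{p}(\sqrt{(\log p)/n})$ terms; collecting the logarithmic factors across the blocks (including the $\log n\asymp\log p$ corrections in the Bernstein inequality for mixing sequences) gives $\lambda=\tilde{C}_{{\rm DB}}^{w}(\log p)^{5/2+1/(2r)}/\sqrt{n}$.

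The restricted eigenvalue is the crux. Because $\ddot{X}^{\mathrm{co}(1)}\approx A\ddot{X}^{\mathrm{co}(2)}$ up to a stationary $O_{p}(\sqrt{n})$ remainder, a naive computation would drive $\widehat{\kappa}_{D}$ to zero at the fast rate $1/n$. The resolution is that the restricted cone $\mathcal{R}(3,s)$, whose active set lies entirely inside the $X_{t-1}$ and $Z_{t-1}$ blocks, confines every admissible $\delta$ by $\|\delta_{X^{\mathrm{co}}}\|_{1}\le3\|\delta_{\mathcal{S}}\|_{1}$. I would then apply the invertible linear map sending $(X_{t-1}^{\mathrm{co}(1)},X_{t-1}^{\mathrm{co}(2)})$ to $(v_{t-1}^{(1)},X_{t-1}^{\mathrm{co}(2)})$, whose inverse has operator norm controlled by $\|A\|_{r1}\le C_{A}$, so that after rotation the cointegrated block becomes a genuine mix of a stationary regressor $v_{t-1}^{(1)}$ and the $I(1)$ regressor $X_{t-1}^{\mathrm{co}(2)}$ (the latter merging with $X_{t-1}$ into a larger unit-root vector). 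Since the rotation preserves cone membership up to a constant depending only on $C_{A}$ and keeps the effective sparsity of order $s$, the mixed-root restricted-eigenvalue analysis underlying Proposition~\ref{prop:MinMax-Unit}(3) and Theorem~\ref{thm:SlassoError-Mix} then yields $\widehat{\kappa}_{D}\stackrel{\mathrm{p}}{\succcurlyeq}1/(s(\log p)^{O(1)})$, of the same order as in the pure mixed-root case up to the extra logarithmic factors contributed by the rotation.

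Finally, plugging the deviation bound and the restricted-eigenvalue bound into Lemma~\ref{lem:gene-Lasso} gives $n^{-1}\|\ddot{W}(\widehat{\theta}^{\mathrm{S}}-\theta^{*(1)})\|_{2}^{2}\le4\lambda^{2}s/\widehat{\kappa}_{D}$, which is exactly the left-hand side of~(\ref{eq:Coint_fit}) because $\ddot{W}\theta^{*(1)}=\ddot{X}\beta^{*}+\ddot{Z}\gamma^{*(1)}$; similarly $\|D(\widehat{\theta}^{\mathrm{S}}-\theta^{*(1)})\|_{1}\le4\lambda s/\widehat{\kappa}_{D}$, from which~(\ref{eq:Coint_I1}) follows after splitting the weighted $L_{1}$ norm over the three blocks --- dividing the $X_{t-1}$ and $X_{t-1}^{\mathrm{co}}$ coordinates by $\widehat{\sigma}_{j}\stackrel{\mathrm{p}}{\succcurlyeq}\sqrt{n/\log p}$ and the $Z_{t-1}$ coordinates by $\widehat{\sigma}_{j}$ bounded away from zero, and recalling that $\theta^{*(1)}$ vanishes on the cointegrated block so that the $X^{\mathrm{co}}$-component of $\widehat{\theta}^{\mathrm{S}}$ is $\widehat{\phi}^{\mathrm{S}}$ itself. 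Tracking the powers of $\log p$ through these steps delivers the stated rates. The step I expect to be hardest is the restricted-eigenvalue bound: making the rotation-to-$v^{(1)}$ argument rigorous requires checking that the rotated coefficient vector still lies in a restricted cone with a bounded constant, that the effective sparsity after rotation stays of order $s$, and that the condition number of the rotation --- finite via $\lambda_{\min}(AA^{\top})^{-1},\lambda_{\max}(AA^{\top})\le C_{A}$ --- does not erode the rate; a secondary complication, absent from Theorems~\ref{thm:LassoError}--\ref{thm:SlassoError-Mix}, is that the benchmark error $u_{t}^{(1)}$ is correlated with the cointegrated regressors, so the deviation bound on that block cannot come from a mean-zero concentration inequality and must instead exploit the $\sqrt{n}$ scale of the sample standard deviations.
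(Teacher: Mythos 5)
Your setup (benchmarking Slasso against the recalibrated sparse model (\ref{eq:cointegration_y_2}) with error $u_{t}^{(1)}$, and the scale argument that lets the deviation bound survive on the cointegrated block because $\widehat{\sigma}_{j}\stackrel{\mathrm{p}}{\succcurlyeq}\sqrt{n/\log p}$ there) is sound, but the plan breaks at the restricted eigenvalue, and this is not a repairable detail --- it is exactly the obstacle the paper flags when it says that the over-penalization ``violates the conditions in Lemma \ref{lem:gene-Lasso}'' and that a new technique is needed. Your claimed bound $\widehat{\kappa}_{D}\stackrel{\mathrm{p}}{\succcurlyeq}1/(s(\log p)^{O(1)})$ for the \emph{full} standardized Gram matrix is false. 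Take $\eta$ equal to a cointegrating vector on the $X^{\mathrm{co}}$-block (so $\ddot{X}^{\mathrm{co}}\eta=\ddot{v}_{j}^{(1)}$ up to centering) and set $\delta=D\eta$; then $\delta^{\top}D^{-1}\widehat{\Sigma}D^{-1}\delta=n^{-1}\|\ddot{v}_{j}^{(1)}\|_{2}^{2}=O_{p}(1)$ while $\|\delta\|_{2}^{2}\asymp n$. This $\delta$ is not yet in the cone relative to the active set $\mathcal{S}$ of $\theta^{*(1)}$, but you can pad it with $\ell_{1}$ mass of order $\sqrt{n}/s$ spread over the $s$ active coordinates: the cone constraint in (\ref{eq:RE}) only compares $\ell_{1}$ norms, and a short calculation shows the resulting Rayleigh quotient is of order $\max\{s,(\log p)^{2}\}/n$. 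Hence $\widehat{\kappa}_{D}$ collapses at essentially the $1/n$ rate, and Lemma \ref{lem:gene-Lasso} applied directly gives $4\lambda^{2}s/\widehat{\kappa}_{D}\asymp(\log p)^{5+1/r}$, which does not even vanish. Your rotation heuristic cannot rescue this: mapping $(X^{\mathrm{co}(1)},X^{\mathrm{co}(2)})$ to $(v^{(1)},X^{\mathrm{co}(2)})$ is a well-conditioned change of coordinates, but the infimum of the same Rayleigh quotient of the same standardized Gram matrix over (essentially) the same cone is invariant up to constants under such a map --- the near-collinearity is a property of the standardized data, not of the parametrization, so no RE of the full matrix of the claimed order exists.

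The paper's proof therefore does not bound $\widehat{\kappa}_{D}$ at all. It rotates the \emph{criterion function}: writing $\ddot{W}\theta=\tilde{W}_{\Pi}\tilde{\theta}$ with $\tilde{\theta}=D\Pi\theta$, the penalty becomes $\lambda\|Q\tilde{\theta}\|_{1}$ with $Q=D\Pi^{-1}D^{-1}$, and the cost of this distortion is quantified by $\|Q^{-1}\|_{c1}\stackrel{\mathrm{p}}{\preccurlyeq}\log p$. A modified basic inequality then yields a cone condition with constant $3\|Q^{-1}\|_{c1}$, the quadratic form of the $v^{(1)}$-block is kept as a separate nonnegative term (never divided by an eigenvalue), the cross-Gram matrix between the $v^{(1)}$-block and the rotation-invariant block is shown to be $O_{p}(n^{-1/2}(\log p)^{3/2+1/(2r)})$ entrywise, and an RE is needed only for $\tilde{\Sigma}^{(0)}$, the genuinely mixed-root (cointegration-free) block, where the machinery behind Theorem \ref{thm:SlassoError-Mix} and Proposition \ref{prop:MinMax-Unit} applies. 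This is also where the tuning parameter $(\log p)^{5/2+\frac{1}{2r}}/\sqrt{n}$ comes from: the extra $\log p$ relative to the mixed-root case is the price of $\|Q^{-1}\|_{c1}$, not an accumulation of Bernstein-type logarithmic corrections as you suggest; under your direct-Lemma-\ref{lem:gene-Lasso} route there would be no reason for $\lambda$ to exceed the mixed-root level, which is another sign the accounting of rates in your sketch does not reconstruct the statement.
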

The in-sample fitting performance (\ref{eq:Coint_fit}) shows that
Slasso for the feasible regression (\ref{eq:cointegration_feasible})
effectively learns the information in (\ref{eq:cointegration_y_2}).
The parameter estimation performance in (\ref{eq:Coint_I1}) illustrates
in terms of the $L_{1}$-norm that $\hat{\beta}^{{\rm S}}$ is consistent
for the pure unit root predictors $X_{t}$, whereas for the $I(0)$
regressors $\hat{\gamma}^{{\rm S}}$ consistently estimates $\gamma^{*(1)}$
to absorb the information in $v_{t}^{(1)}$. As explained in Example
\ref{exa:coint_Slasso}, Slasso shrinks $\hat{\phi}^{{\rm S}}$ all
the way to 0 due to the excessive penalty after scale-normalization.
Such over-penalization violates the conditions in Lemma \ref{lem:gene-Lasso}
and we must devise a new technique to cope with the variable rotation
in the proof (see Appendix \ref{subsec:ProofMain}). In summary, the
Slasso estimator $\hat{\theta}^{{\rm S}}$ converges in probability
to the parameters in (\ref{eq:cointegration_y_2}), and the component
associated with cointegrated variables is suppressed to 0 asymptotically. 
\begin{rem}
Under a fixed $p$, the twin-adaptive LASSO is proposed by \citet{lee2022lasso}
to deal with cointegrated predictors. However, it is difficult to
extend the twin-adaptive LASSO into high dimension. First, a consistent
initial estimator, which is essential for adaptive LASSO, is unavailable
for the original model (\ref{eq:cointegration_feasible}). Even if
a consistent initial estimator is provided, high dimensional adaptive
LASSO requires the \emph{adaptive irrepresentable condition} \citep[Condition A3]{huang2008adaptive},
which does not hold for nonstationary time series. Under $p\ll n$,
\citet{smeekes2021automated} handle cointegration in the framework
of the vector error correction model (VECM). To the best of our knowledge,
in the $p\gg n$ regime there is no method yet that achieves consistent
estimation for the parameter in (\ref{eq:cointegration_feasible}). 
\end{rem}

\section{Simulations\label{sec:Simulations}}

In this section we carry out Monte Carlo simulations with mixed roots
and pure unit roots.\textbf{}\footnote{We perform additional simulations based on DGP (\ref{eq:cointegration X1 X2})
and (\ref{eq:cointegration_y}) with the presence of cointegrated
variables. To save space, we defer the designs and the results to
Section \ref{subsec:simu_coint} in the Appendix.} We first consider the DGP (\ref{eq:DGP_XZ}) and generate the innovation
$v_{t}=(e_{t}^{\top},Z_{t}^{\top},u_{t})^{\top}$ by a (vector) autoregressive
(AR) process 
\begin{gather}
v_{t}=0.4v_{t-1}+\varepsilon_{t},\text{ for }\varepsilon_{t}\sim i.i.d.\,\mathcal{N}(0,\,0.84\Omega),\label{eq:sim_DGP_v}\\
\text{where }\Omega_{ij}=0.8^{|j-j'|}\times\boldsymbol{1}(\text{if }(j,j')\text{ is not associated with \ensuremath{Z_{t}} and \ensuremath{u_{t}}}),\nonumber 
\end{gather}
where ${\bf 1}(\cdot)$ is the indicator function. The AR(1) coefficient
is chosen to set the unconditional variance $0.84\Omega/(1-0.4^{2})=\Omega$,
and the indicator function ensures Assumption \ref{assu:EZu} with
uncorrelated $Z_{j,t-1}$ and $u_{t}$. We consider $n\in\{120,240,360\}$
and $p=2n$. We try $p_{x}=\{0.5n,0.8n,1.2n,1.5n\}$ for each $n$,
and thus $p_{z}=2n-p_{x}$ for each pair of $(n,p_{x})$$.$ The sparsity
indices are $s_{x}=s_{z}=2\lceil\log n\rceil$, so that $s=4\lceil\log n\rceil$.
We set the true coefficients of the stationary component $\gamma^{*}=(0.3\times[s_{z}]^{\top},0_{p_{z}-s_{z}}^{\top})^{\top}$.
We specify two cases for the unit root regressors, which vary only
in the coefficients $\beta_{(1)}^{*}=(n^{-1/2}1_{s_{x}}^{\top},0_{p_{x}-s_{x}}^{\top})^{\top}$
where the factor $n^{-1/2}$ bound the dependent variable $y_{t}$
to be of non-explosive, and $\beta_{(2)}^{*}=(1,n^{-1/2}1_{s_{x}-1}^{\top},0_{p_{x}-s_{x}}^{\top})^{\top}$where
the coefficient of the first regressor is invariant with the sample
size $n$. We label the data generated by the following coefficients
as
\begin{description}
\item [{DGP1}] $\theta_{(1)}^{*}=(\beta_{(1)}^{*\top},\gamma^{*\top})^{\top}$
\item [{DGP2}] $\theta_{(2)}^{*}=(\beta_{(2)}^{*\top},\gamma^{*\top})^{\top}$. 
\end{description}
For each DGP, we report the one-period-ahead out-of-sample \emph{root
mean squared prediction error} (RMSPE), defined as $\left\{ \mathbb{E}\left[(y_{n+1}-\hat{y}_{n+1})^{2}\right]\right\} ^{1/2}$,
and the parameter estimation root mean squared error, defined as $\left[\mathbb{E}\left(\Vert\hat{\theta}-\theta^{*}\Vert{}_{2}^{2}\right)\right]^{1/2}$.
The estimation method is either Plasso or Slasso. The expectations
are approximated by the empirical average over 5000 replications.
As a benchmark, we compare LASSO with an oracle estimator---the OLS
with known active variables. 

\begin{table}[]
\begin{center}
\caption{RMSPE for Mixed Regressors}
\label{tab:mix_rmse_ARMA}
\small
\begin{tabular}{ccc|r|rr|rr|r|rr|rr}
\hline\hline 
\multirow{3}{*}{$n$}   & \multicolumn{1}{c}{\multirow{3}{*}{$p_x$}} & \multicolumn{1}{c|}{\multirow{3}{*}{$p_z$}} & \multicolumn{5}{c|}{RMSPE}                                                                                                                                   & \multicolumn{5}{c}{RMSE for estimated coefficients}                                                                                                                         \\ \cline{4-13}
                     & \multicolumn{1}{c}{}                    & \multicolumn{1}{c|}{}                    & \multicolumn{1}{c|}{\multirow{2}{*}{Oracle}} & \multicolumn{2}{c|}{CV $\lambda$} & \multicolumn{2}{c|}{Calibrated $\lambda$} & \multicolumn{1}{c|}{\multirow{2}{*}{Oracle}} & \multicolumn{2}{c|}{CV $\lambda$} & \multicolumn{2}{c}{Calibrated $\lambda$} \\ \cline{5-8} \cline{10-13}
                     & \multicolumn{1}{c}{}                    & \multicolumn{1}{c|}{}                    & \multicolumn{1}{c|}{}                        & Plasso                & Slasso                       & Plasso                     & Slasso                    & \multicolumn{1}{c|}{}                        & Plasso                & Slasso                       & Plasso            & Slasso                             \\
                     \hline 
\multicolumn{13}{c}{DGP1}                                                                                                                                                     \\
\hline 
\multirow{4}{*}{120} & 60  & 180 & 1.140 & 1.699 & \textit{1.268} & 1.547 & \textit{\textbf{1.255}} & 0.848 & 1.234 & \textit{0.912} & 1.107 & \textit{\textbf{0.897}} \\
                     & 96  & 144 & 1.133 & 1.762 & \textit{1.248} & 1.522 & \textit{\textbf{1.229}} & 0.848 & 1.306 & \textit{0.907} & 1.124 & \textit{\textbf{0.892}} \\
                     & 144 & 96  & 1.141 & 1.807 & \textit{1.258} & 1.565 & \textit{\textbf{1.239}} & 0.847 & 1.315 & \textit{0.894} & 1.131 & \textit{\textbf{0.879}} \\
                     & 180 & 60  & 1.158 & 1.865 & \textit{1.252} & 1.552 & \textit{\textbf{1.239}} & 0.843 & 1.346 & \textit{0.878} & 1.132 & \textit{\textbf{0.863}} \\
                     \hline 
\multirow{4}{*}{240} & 120 & 360 & 1.063 & 2.095 & \textit{1.223} & 1.517 & \textit{\textbf{1.164}} & 0.609 & 1.420 & \textit{0.708} & 0.964 & \textit{\textbf{0.684}} \\
                     & 192 & 288 & 1.072 & 2.228 & \textit{1.232} & 1.527 & \textit{\textbf{1.173}} & 0.612 & 1.455 & \textit{0.707} & 0.973 & \textit{\textbf{0.679}} \\
                     & 288 & 192 & 1.073 & 2.266 & \textit{1.216} & 1.524 & \textit{\textbf{1.152}} & 0.610 & 1.523 & \textit{0.704} & 0.978 & \textit{\textbf{0.671}} \\
                     & 360 & 120 & 1.067 & 2.376 & \textit{1.231} & 1.572 & \textit{\textbf{1.166}} & 0.609 & 1.541 & \textit{0.701} & 0.985 & \textit{\textbf{0.664}} \\
                     \hline 
\multirow{4}{*}{360} & 180 & 540 & 1.057 & 2.400 & \textit{1.206} & 1.538 & \textit{\textbf{1.143}} & 0.482 & 1.443 & \textit{0.572} & 0.863 & \textit{\textbf{0.552}} \\
                     & 288 & 432 & 1.049 & 2.408 & \textit{1.207} & 1.518 & \textit{\textbf{1.142}} & 0.480 & 1.495 & \textit{0.572} & 0.868 & \textit{\textbf{0.548}} \\
                     & 432 & 288 & 1.055 & 2.558 & \textit{1.201} & 1.545 & \textit{\textbf{1.130}} & 0.477 & 1.543 & \textit{0.568} & 0.874 & \textit{\textbf{0.539}} \\
                     & 540 & 180 & 1.041 & 2.601 & \textit{1.194} & 1.550 & \textit{\textbf{1.125}} & 0.482 & 1.551 & \textit{0.570} & 0.878 & \textit{\textbf{0.537}} \\
                     \hline 
\multicolumn{13}{c}{DGP2}                                                                                                                                                                                                                                                                                                                                                                                          \\
\hline 
\multirow{4}{*}{120} & 60  & 180 & 1.139 & 2.571 & \textit{1.313} & 2.038 & \textit{\textbf{1.297}} & 0.846 & 1.852 & \textit{0.966} & 1.514 & \textit{\textbf{0.943}} \\
                     & 96  & 144 & 1.147 & 2.646 & \textit{1.318} & 2.058 & \textit{\textbf{1.300}} & 0.842 & 1.900 & \textit{0.961} & 1.529 & \textit{\textbf{0.938}} \\
                     & 144 & 96  & 1.115 & 2.747 & \textit{1.289} & 2.089 & \textit{\textbf{1.271}} & 0.845 & 1.945 & \textit{0.953} & 1.543 & \textit{\textbf{0.931}} \\
                     & 180 & 60  & 1.120 & 2.809 & \textit{1.267} & 2.080 & \textit{\textbf{1.249}} & 0.843 & 2.010 & \textit{0.943} & 1.558 & \textit{\textbf{0.920}} \\
                     \hline 
\multirow{4}{*}{240} & 120 & 360 & 1.096 & 3.893 & \textit{1.303} & 2.074 & \textit{\textbf{1.221}} & 0.610 & 2.538 & \textit{0.762} & 1.385 & \textit{\textbf{0.714}} \\
                     & 192 & 288 & 1.076 & 4.094 & \textit{1.293} & 2.094 & \textit{\textbf{1.210}} & 0.611 & 2.608 & \textit{0.762} & 1.389 & \textit{\textbf{0.710}} \\
                     & 288 & 192 & 1.099 & 4.308 & \textit{1.317} & 2.127 & \textit{\textbf{1.229}} & 0.611 & 2.685 & \textit{0.764} & 1.390 & \textit{\textbf{0.706}} \\
                     & 360 & 120 & 1.066 & 4.446 & \textit{1.277} & 2.162 & \textit{\textbf{1.188}} & 0.605 & 2.810 & \textit{0.758} & 1.410 & \textit{\textbf{0.695}} \\
                     \hline 
\multirow{4}{*}{360} & 180 & 540 & 1.051 & 5.123 & \textit{1.270} & 2.090 & \textit{\textbf{1.162}} & 0.479 & 2.936 & \textit{0.617} & 1.261 & \textit{\textbf{0.568}} \\
                     & 288 & 432 & 1.043 & 5.378 & \textit{1.279} & 2.132 & \textit{\textbf{1.165}} & 0.479 & 3.123 & \textit{0.620} & 1.274 & \textit{\textbf{0.566}} \\
                     & 432 & 288 & 1.061 & 5.631 & \textit{1.277} & 2.129 & \textit{\textbf{1.174}} & 0.480 & 3.211 & \textit{0.619} & 1.280 & \textit{\textbf{0.561}} \\
                     & 540 & 180 & 1.063 & 5.735 & \textit{1.274} & 2.144 & \textit{\textbf{1.170}} & 0.478 & 3.297 & \textit{0.619} & 1.291 & \textit{\textbf{0.555}} \\
                     \hline \hline  
\end{tabular}
\end{center}
\footnotesize{
Note: 
Italic numbers indicate the better performance between Plasso and Slasso with the same tuning method. 
Bold numbers indicate the best LASSO performance. }
\end{table}

A key ingredient in implementing LASSO is the choice of the tuning
parameter $\lambda$. One common data-driven approach is \emph{cross
validation} (CV). In our time series context, we cut $t\in[n]$ into
10 chronically ordered blocks and choose the $\lambda$ that minimizes
the CV means squared error as each block serves as a validation dataset
in turn whereas the other 9 blocks work as the training data. We refer
to the $\lambda$ chosen by this time series 10-fold CV as ``CV $\lambda$''.
CV $\lambda$ is completely data-driven. 

Alternatively, to evaluate our theoretical statement where $\lambda$
is specified as a constant multiplied by an expansion rate determined
by $n$, $p$ and $r$, we follow \citet{lee2022lasso} to use a small-scale
experiment to calibrate an initial choice. We try 100 replications
with $(n_{0},p_{x0})$, the smallest $n$ and $p_{x}$ considered
in the simulations, save $\lambda$ in each replication according
to the 10-fold CV described in the previous paragraph, and let $\widehat{\lambda}_{0}$
be the median of these $\lambda$'s. We then scale up $\widehat{\lambda}_{0}$
based on the theoretical expansion rate. We refer to this scheme as
``calibrated $\lambda$''. For Slasso with an initial $\widehat{\lambda}_{0}^{(\mathrm{s})}$
obtained from $(n_{0},p_{x0},p_{z0})=(120,60,180)$, we use 
\begin{equation}
\widehat{\lambda}^{(\mathrm{s})}=\widehat{\lambda}_{0}^{(\mathrm{s})}\left(\dfrac{n^{-1/2}\log p}{n_{0}^{-1/2}\log p_{0}}\right)^{2}\label{eq:calibrate_s}
\end{equation}
 to adhere to the rate $n^{-1/2}(\log p)^{\frac{3}{2}+\frac{1}{2r}}$
in Theorem \ref{thm:SlassoError-Mix} when $p$ is proportional to
$n$, where $r=1$ for the AR(1) innovation. Section \ref{sec:Mixed-Regressors}
has elaborated that Slasso enjoys theoretical guarantees whereas the
convergence of Plasso with mixed regressors is unknown. As a numerical
exercise we naively borrow the rate $(\log p)^{1+\frac{1}{2r}}$ in
Theorems \ref{thm:LassoError} after obtaining the initial calibrated
$\widehat{\lambda}_{0}^{(\mathrm{p})}$ and then calculate 
\begin{equation}
\widehat{\lambda}^{(\mathrm{p})}=\widehat{\lambda}_{0}^{(\mathrm{p})}\left(\log p/\log p_{0}\right)^{3/2}.\label{eq:calibrate_p}
\end{equation}

Consistent with our theory, in Table \ref{tab:mix_rmse_ARMA} we find
that Slasso with the calibrated $\lambda$ performs well. In both
DGP1 and DGP2 we observe that the prediction error and parameter estimation
error decrease as $n$ increases. Similar error reduction is observed
under CV $\lambda$. On the contrary, the simulation evidence suggests
possible inconsistency of Plasso under either the calibrated $\lambda$
or the CV $\lambda$. Similar patterns are found in terms of mean
absolute prediction error (MAPE) and mean absolute parameter estimation
error in Table \ref{tab:mix_mae_ARMA} in the Appendix. 

To better understand the unsatisfactory performance of Plasso, Table
\ref{tab:cate_sel} in the Appendix shows the percentage of variables
selected from the active and inactive coefficients. For example, under
CV $\lambda$ Plasso selects fewer active $\beta^{*}$ than Slasso,
and the gap is particularly big when $n=360$. In the meantime, it
makes more mistakes in selecting the inactive $\beta^{*}$. Plasso
faces an inherent dilemma concerning the suitable tuning parameter
levels of the stationary and nonstationary components, as discussed
in Remark \ref{rem:plasso-mix}. Nonstationary variables have larger
variations and are more influential in prediction. In order to achieve
the variable selection effect amongst the nonstationary variables,
Plasso requires a large $\lambda$ as in (\ref{eq:P_adm_rate}). Such
emphasis in the nonstationary component imposes a heavy cost in the
stationary component, where a non-trivial proportion of the active
$\gamma^{*}$ is eliminated, although it also rules out almost all
the inactive $\gamma^{*}$. This observation echoes the discussion
in Remark \ref{rem:var-sel} about variable selection in relation
to the scales. 

The scale normalization in Slasso balances the two types of time series,
which allows it to choose active variables in both components. Slasso
improves upon Plasso in $\beta^{*}$ for both the active and inactive
ones. Moreover, in Table \ref{tab:cate_sel} Slasso produces nearly
perfect variable selection in the active $\gamma^{*}$, and in the
meantime it controls the estimation error in the inactive one, as
shown in Table \ref{tab:cate_rmse} in the Appendix about the parameter
estimation RMSE of each subset of the coefficients. In this table,
the most prominent estimation error comes from the active $\gamma^{*}$
by Plasso, where its large $\lambda$ that accommodates the nonstationary
component results in substantial shrinkage bias in the estimation.

\bigskip

\begin{table}[]
\begin{center}
\caption{RMSPE for Pure Unit Root Regressors}
\label{tab:unit_rmse_ARMA}
\small
\begin{tabular}{cc|r|rr|rr|r|rr|rr}
  \hline   \hline 
\multirow{3}{*}{$n$}   & \multicolumn{1}{c|}{\multirow{3}{*}{$p_x$}} & \multicolumn{5}{c|}{RMSPE}                                                                                                                                   & \multicolumn{5}{c}{RMSE for estimated coefficients}                                                                                                                          \\ \cline{3-12}
                     & \multicolumn{1}{c|}{}                   & \multicolumn{1}{c|}{\multirow{2}{*}{Oracle}} & \multicolumn{2}{c|}{CV $\lambda$} & \multicolumn{2}{c|}{Calibrated $\lambda$} & \multicolumn{1}{c|}{\multirow{2}{*}{Oracle}} & \multicolumn{2}{c|}{CV $\lambda$} & \multicolumn{2}{c}{Calibrated $\lambda$} \\  \cline{4-7} \cline{9-12}
                     & \multicolumn{1}{c|}{}                   & \multicolumn{1}{c|}{}                        & Plasso                    & Slasso                   & Plasso                     & Slasso                    & \multicolumn{1}{c|}{}                        & Plasso                    & Slasso                   & Plasso                             & Slasso            \\
                     \hline 
\multicolumn{12}{c}{DGP3}                                                                                                                                                        \\
 \hline 
\multirow{4}{*}{120} & 60  & 1.098 & \textit{1.104} & 1.122          & \textit{\textbf{1.081}} & 1.096 & 0.383 & \textit{0.328} & 0.348          & \textit{\textbf{0.282}} & 0.305 \\
                     & 96  & 1.080 & \textit{1.094} & 1.115          & \textit{\textbf{1.068}} & 1.080 & 0.384 & \textit{0.324} & 0.350          & \textit{\textbf{0.281}} & 0.311 \\
                     & 144 & 1.069 & 1.131          & \textit{1.109} & \textit{\textbf{1.062}} & 1.074 & 0.385 & \textit{0.285} & 0.322          & \textit{\textbf{0.281}} & 0.315 \\
                     & 180 & 1.063 & 1.126          & \textit{1.109} & \textit{\textbf{1.074}} & 1.082 & 0.385 & \textit{0.288} & 0.326          & \textit{\textbf{0.282}} & 0.317 \\
                     \hline 
\multirow{4}{*}{240} & 120 & 1.041 & \textit{1.055} & 1.065          & \textit{\textbf{1.039}} & 1.052 & 0.227 & \textit{0.210} & 0.233          & \textit{\textbf{0.195}} & 0.217 \\
                     & 192 & 1.060 & \textit{1.069} & 1.091          & \textit{\textbf{1.056}} & 1.075 & 0.226 & \textit{0.212} & 0.236          & \textit{\textbf{0.195}} & 0.221 \\
                     & 288 & 1.044 & 1.129          & \textit{1.090} & \textit{\textbf{1.051}} & 1.070 & 0.227 & \textit{0.206} & 0.231          & \textit{\textbf{0.195}} & 0.226 \\
                     & 360 & 1.049 & 1.164          & \textit{1.103} & \textit{\textbf{1.074}} & 1.080 & 0.225 & \textit{0.207} & 0.234          & \textit{\textbf{0.196}} & 0.229 \\
                     \hline 
\multirow{4}{*}{360} & 180 & 1.023 & \textit{1.039} & 1.051          & \textit{\textbf{1.025}} & 1.041 & 0.149 & \textit{0.155} & 0.176          & \textit{\textbf{0.146}} & 0.166 \\
                     & 288 & 1.033 & \textit{1.050} & 1.073          & \textit{\textbf{1.041}} & 1.057 & 0.150 & \textit{0.157} & 0.180          & \textit{\textbf{0.147}} & 0.171 \\
                     & 432 & 1.037 & 1.142          & \textit{1.083} & \textit{\textbf{1.047}} & 1.062 & 0.150 & \textit{0.160} & 0.178          & \textit{\textbf{0.148}} & 0.174 \\
                     & 540 & 1.019 & 1.126          & \textit{1.072} & \textit{\textbf{1.035}} & 1.055 & 0.150 & \textit{0.161} & 0.181          & \textit{\textbf{0.149}} & 0.177 \\
                      \hline 
\multicolumn{12}{c}{DGP4}                                                                                                                                                        \\
 \hline 
\multirow{4}{*}{120} & 60  & 1.106 & \textit{1.113} & 1.127          & \textit{\textbf{1.087}} & 1.112 & 0.388 & \textit{0.348} & 0.379          & \textit{\textbf{0.299}} & 0.326 \\
                     & 96  & 1.087 & \textit{1.102} & 1.129          & \textit{\textbf{1.082}} & 1.108 & 0.386 & \textit{0.353} & 0.385          & \textit{\textbf{0.304}} & 0.341 \\
                     & 144 & 1.078 & 1.275          & \textit{1.141} & \textit{\textbf{1.079}} & 1.113 & 0.384 & \textit{0.360} & 0.368          & \textit{\textbf{0.309}} & 0.358 \\
                     & 180 & 1.100 & 1.304          & \textit{1.171} & \textit{\textbf{1.114}} & 1.145 & 0.387 & \textit{0.368} & 0.372          & \textit{\textbf{0.312}} & 0.363 \\
                     \hline 
\multirow{4}{*}{240} & 120 & 1.060 & \textit{1.080} & 1.093          & \textit{\textbf{1.067}} & 1.089 & 0.227 & \textit{0.222} & 0.251          & \textit{\textbf{0.201}} & 0.230 \\
                     & 192 & 1.043 & \textit{1.075} & 1.091          & \textit{\textbf{1.056}} & 1.086 & 0.225 & \textit{0.223} & 0.260          & \textit{\textbf{0.203}} & 0.242 \\
                     & 288 & 1.036 & 1.406          & \textit{1.131} & \textit{\textbf{1.056}} & 1.090 & 0.224 & 0.283          & \textit{0.272} & \textit{\textbf{0.206}} & 0.255 \\
                     & 360 & 1.035 & 1.430          & \textit{1.146} & \textit{\textbf{1.058}} & 1.104 & 0.229 & 0.290          & \textit{0.279} & \textit{\textbf{0.208}} & 0.262 \\
                     \hline 
\multirow{4}{*}{360} & 180 & 1.056 & \textit{1.075} & 1.084          & \textit{\textbf{1.060}} & 1.076 & 0.150 & \textit{0.162} & 0.191          & \textit{\textbf{0.149}} & 0.176 \\
                     & 288 & 0.997 & \textit{1.023} & 1.039          & \textit{\textbf{1.009}} & 1.040 & 0.149 & \textit{0.165} & 0.201          & \textit{\textbf{0.150}} & 0.187 \\
                     & 432 & 1.041 & 1.530          & \textit{1.178} & \textit{\textbf{1.063}} & 1.104 & 0.149 & 0.242          & \textit{0.221} & \textit{\textbf{0.152}} & 0.197 \\
                     & 540 & 1.027 & 1.546          & \textit{1.152} & \textit{\textbf{1.049}} & 1.087 & 0.148 & 0.246          & \textit{0.222} & \textit{\textbf{0.153}} & 0.200     \\
                       \hline   \hline 
\end{tabular}
\end{center}
\footnotesize{
Note: 
Italic numbers indicate the better performance between Plasso and Slasso with the same tuning method. 
Bold numbers indicate the best LASSO performance in each row. }
\end{table}

 For completeness, we check LASSO's performance under the prototype
pure unit root case. We consider the same set of $n$ and $p_{x}$
following (\ref{eq:DGP_X}):
\begin{description}
\item [{DGP3}] $\theta_{(3)}^{*}=\beta_{(1)}^{*\top}$
\item [{DGP4}] $\theta_{(4)}^{*}=\beta_{(2)}^{*\top}$
\end{description}
where we simply remove all the stationary regressors from DGP1 and
2, respectively. The innovation of $v_{t}=(e_{t}^{\top},u_{t}^{\top})^{\top}$
is again generated according to (\ref{eq:sim_DGP_v}), where $\Omega_{ij}$
is update to $0.8^{|j-j'|}$ for all $i,j\in[p_{x}]$ to allow correlation
between $e_{t}$ and $u_{t}$. Table \ref{tab:unit_rmse_ARMA} reports
the RMSPE and Table \ref{tab:unit_mae_ARMA} in the Appendix displays
MAPE. Plasso is slightly stronger than Slasso, reflecting the tighter
rates of convergence in Theorem \ref{thm:LassoError} than those in
Theorem \ref{thm:SlassoError} as Slasso involves extra randomness
in $\widehat{\sigma}_{j}$. 

\section{Empirical Application \label{sec:Empirical-demo} }

Faced with multiple regressors, some applied econometricians may be
inclined to avoid nonstationary regressors in view of the resulting
nonstandard asymptotic inference; they may prefer transforming them
into stationary ones. Whether we use the stationarized variable or
the nonstationary original variable count on the true DGP. The advantage
of nonstationary data arises from the super-consistency as the large
variation of the stochastic trend can accelerate the rate of convergence,
making the parameter estimation more accurate and thereby improving
prediction. There is little compelling justification for excluding
nonstationary variables \emph{a priori }in predictive regressions.

\begin{figure}
\begin{centering}
\subfloat[\label{fig:UNRATE} Unemployment Rate (\texttt{UNRATE})]{\centering{}\includegraphics[scale=0.3]{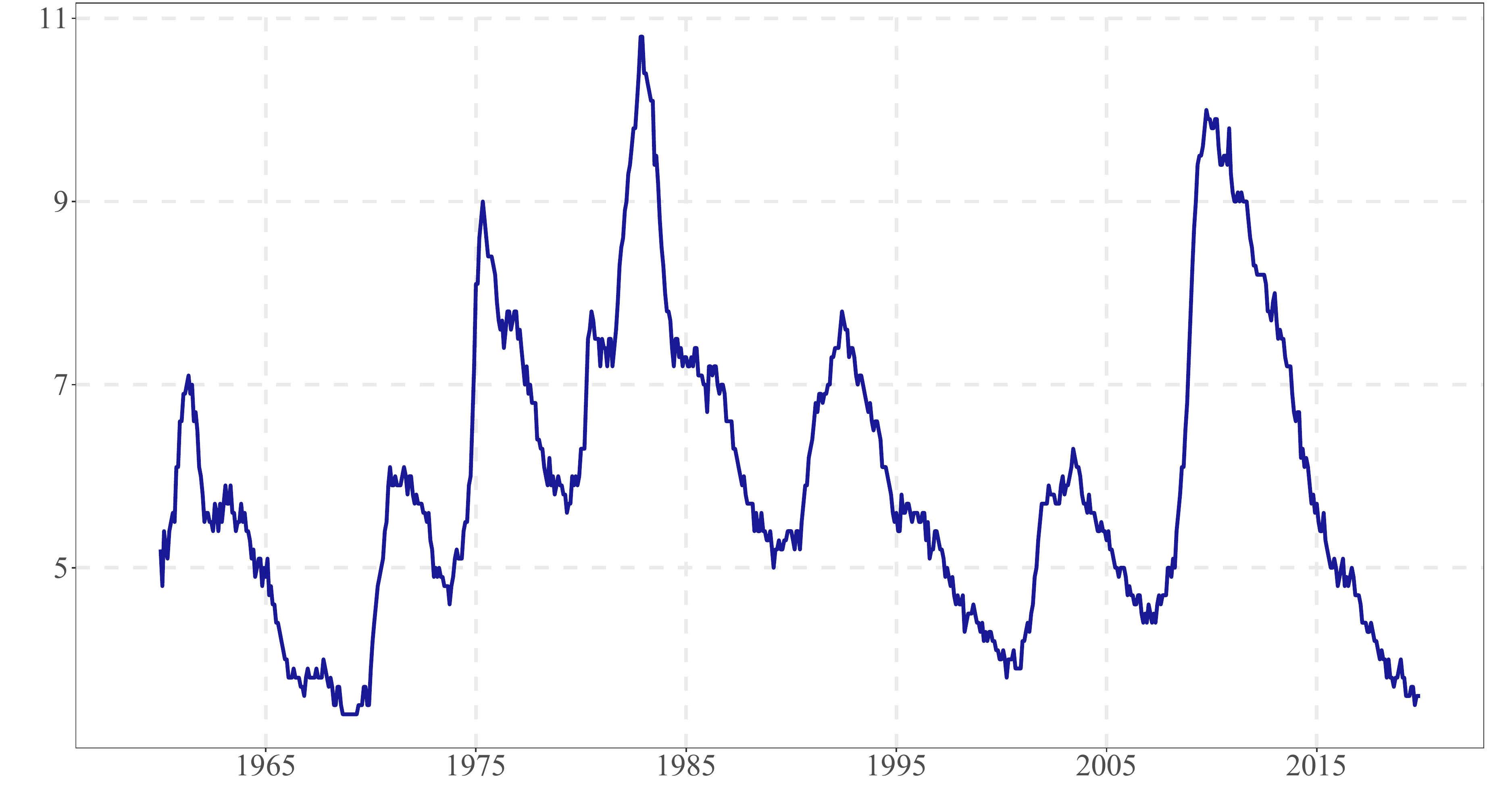}}
\par\end{centering}
\begin{centering}
\subfloat[\label{fig:Representative-Time-Series} Representative Time Series]{\begin{centering}
\includegraphics[scale=0.3]{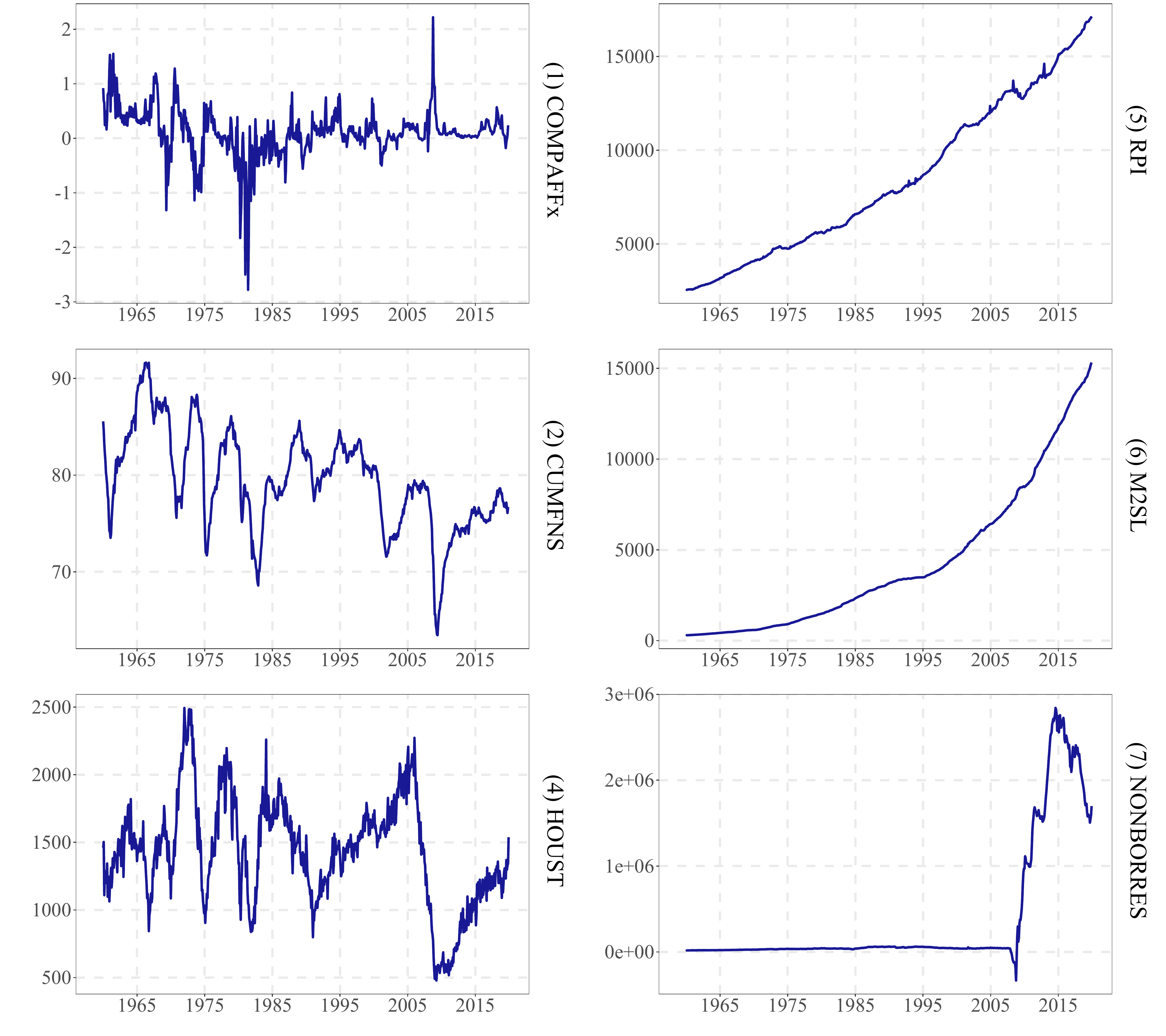}
\par\end{centering}
\raggedright{}}
\par\end{centering}
\begin{raggedright}
\medskip \footnotesize Notes: The number in the parenthesis is the
\texttt{TCODE}. (1):\texttt{COMPAPFFx} 3-Month Commercial Paper Minus
FEDFUNDS (Effective Federal Funds Rate); (2): \texttt{CUMFNS} Capacity
Utilization: Manufacturing; (4): \texttt{HOUST} Housing Starts: Total
New Privately Owned; (5): \texttt{RPI} Real Personal Income; (6):
\texttt{M2SL} M2 Money Stock; (7): \texttt{NONBORRES} Reserves Of
Depository Institutions. No variable in our predictors is of \texttt{TCODE}
(3).
\par\end{raggedright}
\centering{}\caption{\texttt{UNRATE} and Representative Time Series by \texttt{TCODE}}
\end{figure}

We use the FRED-MD macroeconomic database \citep{McCracken2016} to
check the predictability of the unemployment rate of the United States.
Given that the data cover 6 decades from 1960:Jan to 2019:Dec, we
adopt a rolling window of length 10 years, 20 years or 30 years, and
to make the results comparable we set the entire testing sample as
1990:Jan to 2019:Dec. 

The dependent variable, labeled as \texttt{UNRATE} in the database,
is plotted in Figure \ref{fig:UNRATE}. It ranges from 3.4\% to 10.8\%,
and peaks in the early 1980s recession and the 2008 Global Financial
Crisis. It is a persistent time series. If we run a simple AR(1) regression
in the entire sample, the AR coefficient 0.995 is close to unity. 

We include as potential predictors all the other 121 variables in
the database which have no missing values during the sample period.\textbf{
}Each variable in FRED-MD is accompanied with a \emph{transformation
code} (\texttt{TCODE}), which suggests a way to transform the raw
sequence into a stationary time series. There are 7 categories in
total. For a generic scalar time series $(w_{t})_{t=1}^{n}$, the
labels 1--7 correspond to the following transformations: (1) null
(10 variables out of our 121 predictors); (2) $\Delta w_{t}$ (17
variables); (3) $\Delta^{2}w_{t}$ (none); (4) $\log(w_{t})$ (10
variables); (5) $\Delta\log(w_{t})$ (50 variables); (6) $\Delta^{2}\log(w_{t})$
(33 variables); (7) $\Delta(w_{t}/w_{t-1}-1)$ (1 variable). \texttt{UNRATE}
is classified into (2). Figure \ref{fig:Representative-Time-Series}
draws a representative time series in each category. Obviously the
dynamic patterns vary substantively. For example, the one labeled
(5) exhibits a clear upward trend, the one labeled (6) shows exponential
acceleration, and the one labeled (7) has a dramatic structural break
after 2008. One option to avoid nonstationary time series is to stationarize
all the raw sequences according to the \texttt{TCODE}. We call this
practice stationarization transformation (ST). 

Figure \ref{fig:sd} compares the scale of the variables with no transformation
(NT) and those with ST. Each dot on the left panel represents the
sample s.d.~$\widehat{\sigma}_{j}$ (in logarithm base 10 along the
y-axis) of each variable, ordered from lower to high for every \texttt{TCODE}
marked along the x-axis. The right panel shows the histogram of all
variables (the axis again in logarithm base 10). We observe enormous
diversity in the upper sub-figures where the data are at their original
scales. The smallest sample s.d.~is about $10^{-2}$ and the biggest
goes over $10^{6}$. Large-scale variables are particularly common
in \texttt{TCODE} (4) and (5). Under ST, the variables are much more
concentrated. In particular, ST pulls down considerably the scale
of all variables that need to be stationarized. 

\begin{figure}
\begin{centering}
\includegraphics[scale=0.73]{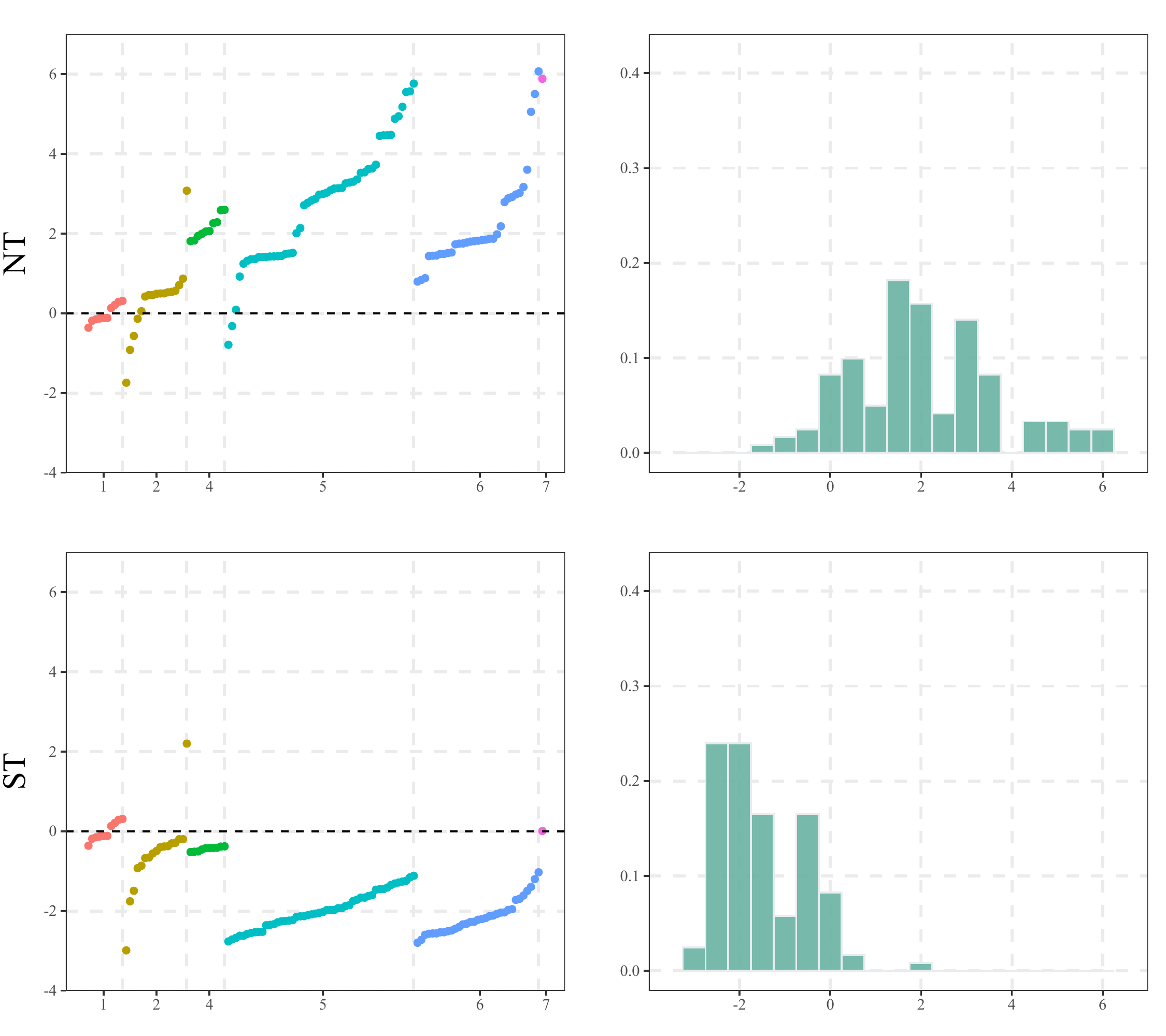}
\par\end{centering}
\footnotesize Note: In the left column, the y-axis is \emph{logarithm
base 10, }and the x-axis is the \texttt{TCODE}. For example, the single
point with \texttt{TCODE} (7) has a sample s.d.~as large as $10^{6}$.
The right column is the histogram of all dots on the left (x-axis
in logarithm base 10). 

\caption{\label{fig:sd} Standard Deviations and Histograms}
\end{figure}

\begin{table}[]
\caption{RMSPE for \texttt{UNRATE}}
\label{tab:unrate_RMSPE_504}
\small
\begin{center}
\begin{tabular}{cc|rr|rr|rr|rr|rr}
\hline\hline 
\multirow{3}{*}{$h$} & \multicolumn{1}{c|}{\multirow{3}{*}{$n$}} & \multicolumn{2}{c|}{\multirow{2}{*}{Benchmarks}} & \multicolumn{4}{c|}{121  Predictors}                                                        & \multicolumn{4}{c}{504 Predictors}                                                         \\ \cline{5-12}
                   & \multicolumn{1}{c|}{}                   & \multicolumn{2}{c|}{}                            & \multicolumn{2}{c}{NT}           & \multicolumn{2}{c|}{ST}                                  & \multicolumn{2}{c}{NT}           & \multicolumn{2}{c}{ST}                                  \\ \cline{3-12}
                   & \multicolumn{1}{c|}{}                   & \multicolumn{1}{c}{RWwD}                   & \multicolumn{1}{c|}{AR}                    & \multicolumn{1}{c}{Plasso} & \multicolumn{1}{c|}{Slasso}       & \multicolumn{1}{c}{Plasso} & \multicolumn{1}{c|}{Slasso}  & \multicolumn{1}{c}{Plasso} & \multicolumn{1}{c|}{Slasso}                  & \multicolumn{1}{c}{Plasso} & \multicolumn{1}{c}{Slasso} \\
                   \hline 
\multicolumn{12}{c}{\textbf{Entire testing sample: 1990--2019}}                                                                                                                                                                                                                                                                          \\

\multirow{3}{*}{1} & 120 & 0.154 & 0.150 & 0.639 & \textit{0.144} & 0.889 & 0.511 & 0.578 & \textit{\textbf{0.139}} & 0.467 & 0.148                   \\
                   & 240 & 0.154 & 0.149 & 0.614 & \textit{0.145} & 0.632 & 0.647 & 0.766 & \textit{\textbf{0.128}} & 0.238 & 0.133                   \\
                   & 360 & 0.154 & 0.144 & 0.518 & \textit{0.150} & 1.864 & 1.920 & 0.736 & \textit{\textbf{0.129}} & 0.192 & 0.134                   \\
                   \hline 
\multirow{3}{*}{2} & 120 & 0.230 & 0.214 & 0.689 & \textit{0.195} & 0.903 & 0.536 & 0.642 & \textit{\textbf{0.186}} & 0.556 & 0.204                   \\
                   & 240 & 0.230 & 0.205 & 0.821 & \textit{0.173} & 0.635 & 0.643 & 0.878 & \textit{\textbf{0.164}} & 0.303 & 0.176                   \\
                   & 360 & 0.229 & 0.199 & 0.600 & \textit{0.189} & 0.744 & 1.561 & 0.753 & \textit{\textbf{0.172}} & 0.255 & 0.176                   \\
                   \hline 
\multirow{3}{*}{3} & 120 & 0.306 & 0.281 & 0.732 & \textit{0.266} & 0.953 & 0.563 & 0.710 & \textit{\textbf{0.264}} & 0.667 & 0.266                   \\
                   & 240 & 0.306 & 0.262 & 0.726 & \textit{0.242} & 0.641 & 0.654 & 1.011 & 0.245                   & 0.393 & \textit{\textbf{0.212}} \\
                   & 360 & 0.305 & 0.255 & 0.654 & \textit{0.225} & 0.741 & 1.177 & 0.786 & \textit{\textbf{0.213}} & 0.326 & 0.218                   \\
                   \hline  
\\                   
\multicolumn{12}{c}{Testing sub-sample:  1990--1999}                                                                                                                                                                                                                                                                          \\
\multirow{3}{*}{1} & 120 & 0.141 & 0.138 & 0.469 & \textit{0.139} & 0.558 & 0.443 & 0.411 & \textit{\textbf{0.134}} & 0.258 & 0.143                   \\
                   & 240 & 0.141 & 0.138 & 0.204 & \textit{0.146} & 0.666 & 0.735 & 0.530 & \textit{\textbf{0.131}} & 0.193 & 0.133                   \\
                   & 360 & 0.141 & 0.141 & 0.221 & \textit{0.156} & 0.640 & 0.593 & 0.597 & 0.134                   & 0.181 & \textit{\textbf{0.131}} \\
                   \hline 
\multirow{3}{*}{2} & 120 & 0.189 & 0.180 & 0.483 & \textit{0.178} & 0.621 & 0.449 & 0.441 & \textit{\textbf{0.174}} & 0.294 & 0.184                   \\
                   & 240 & 0.189 & 0.184 & 0.266 & \textit{0.165} & 0.665 & 0.706 & 0.555 & \textit{\textbf{0.163}} & 0.269 & 0.175                   \\
                   & 360 & 0.190 & 0.187 & 0.272 & \textit{0.185} & 0.644 & 0.597 & 0.610 & \textit{\textbf{0.167}} & 0.244 & 0.168                   \\
                   \hline 
\multirow{3}{*}{3} & 120 & 0.234 & 0.224 & 0.503 & \textit{0.222} & 0.653 & 0.458 & 0.454 & \textit{\textbf{0.224}} & 0.340 & 0.233                   \\
                   & 240 & 0.234 & 0.226 & 0.393 & \textit{0.217} & 0.656 & 0.715 & 0.572 & \textit{\textbf{0.201}} & 0.338 & 0.212                   \\
                   & 360 & 0.237 & 0.229 & 0.336 & \textit{0.215} & 0.645 & 0.594 & 0.632 & 0.207                   & 0.307 & \textit{\textbf{0.200}} \\                
                   \hline 
\multicolumn{12}{c}{Testing sub-sample:  2000--2009}                                                                                                                                                                                                                                                                          \\
\multirow{3}{*}{1} & 120 & 0.168 & 0.150 & 0.439 & \textit{0.146} & 0.990 & 0.444 & 0.333 & \textit{\textbf{0.137}} & 0.648 & 0.149                   \\
                   & 240 & 0.169 & 0.145 & 0.663 & \textit{0.147} & 0.659 & 0.626 & 0.504 & \textit{\textbf{0.122}} & 0.236 & 0.122                   \\
                   & 360 & 0.169 & 0.141 & 0.388 & \textit{0.160} & 3.053 & 3.186 & 0.485 & \textit{\textbf{0.122}} & 0.197 & 0.123                   \\
                   \hline 
\multirow{3}{*}{2} & 120 & 0.282 & 0.237 & 0.576 & \textit{0.210} & 1.037 & 0.498 & 0.459 & \textit{\textbf{0.188}} & 0.700 & 0.225                   \\
                   & 240 & 0.283 & 0.219 & 1.105 & \textit{0.173} & 0.679 & 0.635 & 0.854 & \textit{\textbf{0.162}} & 0.282 & 0.174                   \\
                   & 360 & 0.282 & 0.212 & 0.515 & \textit{0.202} & 0.751 & 2.525 & 0.481 & 0.186                   & 0.269 & \textit{\textbf{0.174}} \\
                     \hline 
\multirow{3}{*}{3} & 120 & 0.399 & 0.333 & 0.652 & \textit{0.318} & 1.049 & 0.510 & 0.600 & \textit{\textbf{0.311}} & 0.718 & 0.313                   \\
                   & 240 & 0.400 & 0.305 & 0.709 & \textit{0.292} & 0.679 & 0.606 & 1.174 & 0.311                   & 0.310 & \textit{\textbf{0.212}} \\
                   & 360 & 0.399 & 0.293 & 0.555 & \textit{0.251} & 0.738 & 1.790 & 0.511 & 0.231                   & 0.334 & \textit{\textbf{0.228}}  \\
                   \hline  
\multicolumn{12}{c}{Testing sub-sample:  2010--2019}                                                                                                                                                                                                                                                                          \\
\multirow{3}{*}{1} & 120 & 0.151 & 0.160 & 0.902 & \textit{0.147} & 1.038 & 0.623 & 0.851 & \textit{\textbf{0.147}} & 0.408 & 0.151                   \\
                   & 240 & 0.150 & 0.163 & 0.806 & \textit{0.142} & 0.567 & 0.568 & 1.106 & \textit{\textbf{0.133}} & 0.278 & 0.145                   \\
                   & 360 & 0.149 & 0.151 & 0.779 & \textit{0.132} & 0.834 & 0.748 & 1.015 & \textit{\textbf{0.131}} & 0.199 & 0.146                   \\
                     \hline 
\multirow{3}{*}{2} & 120 & 0.208 & 0.221 & 0.926 & \textit{0.197} & 0.993 & 0.641 & 0.911 & \textit{\textbf{0.194}} & 0.592 & 0.200                   \\
                   & 240 & 0.207 & 0.210 & 0.855 & \textit{0.180} & 0.554 & 0.582 & 1.128 & \textit{\textbf{0.169}} & 0.352 & 0.180                   \\
                   & 360 & 0.204 & 0.197 & 0.861 & \textit{0.178} & 0.825 & 0.762 & 1.047 & \textit{\textbf{0.163}} & 0.250 & 0.185                   \\
                     \hline 
\multirow{3}{*}{3} & 120 & 0.259 & 0.274 & 0.964 & \textit{0.249} & 1.096 & 0.694 & 0.973 & 0.249                   & 0.839 & \textit{\textbf{0.246}} \\
                   & 240 & 0.257 & 0.248 & 0.962 & \textit{0.208} & 0.583 & 0.636 & 1.168 & \textit{\textbf{0.209}} & 0.502 & 0.213                   \\
                   & 360 & 0.253 & 0.240 & 0.928 & \textit{0.206} & 0.829 & 0.773 & 1.092 & \textit{\textbf{0.200}} & 0.335 & 0.225       \\
                        \hline \hline       
\end{tabular}
\end{center}
\footnotesize{Notes: NT and ST are abbreviations for no transformation and stationarization transformation, respectively. Bold numbers indicate the best performance in each row. Italic numbers indicate the best LASSO performance with the same number of predictors.}
\end{table}

\begin{figure}[h]
\begin{centering}
\includegraphics[scale=0.6]{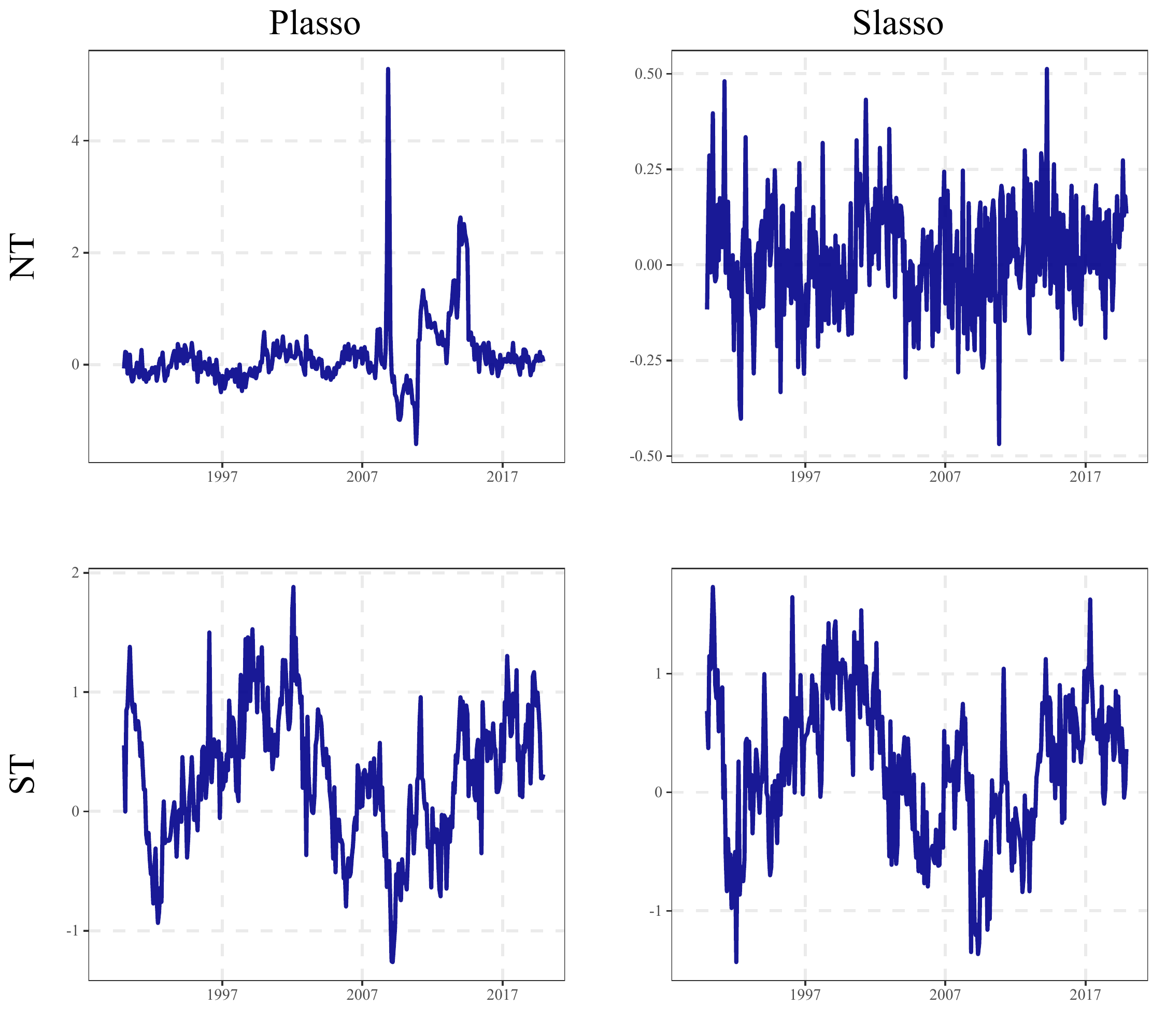}
\par\end{centering}
\caption{\label{fig:res_UNRATE} Prediction Errors under $h=1$ and 20-year
Rolling Windows}
\end{figure}

We conduct 1, 2, or 3-month ahead out-of-sample prediction, denoted
by $h=1,2$, or $3$. We set two simple benchmark models: (i) \emph{Random
walk with drift} (RWwD), where $\widehat{y}_{n+h}=y_{n}+\frac{h}{n}(y_{n}-y_{0})$;
and (ii) AR model $\hat{y}_{n+h}=\hat{\pi}_{0,h}+\hat{\pi}_{1,h}y_{n}+\cdots+\hat{\pi}_{q,h}y_{n-q+1}$
where the AR coefficients are estimated by OLS and the number of lags
$q$ is determined by the Bayesian information criterion. All these
models use information up to time $n$. Table \ref{tab:unrate_RMSPE_504}
shows RMSPE averaged over the entire testing sample 1990:Jan--2019:Dec,
and three testing sub-samples for each decade. Across the lengths
of the rolling windows, a 30-year rolling window does not necessarily
improve RMSPE, indicating potential model uncertainty over a long
training sample. Across the testing sub-samples, RMSPE is the largest
during 2000--2010, which includes the Great Recession. Across the
forecast horizons, the estimation error increases as the horizon gets
farther in the future. 

When implementing LASSO, we use the data-driven 10-fold CV as introduced
in Section \ref{sec:Simulations}. Regarding the potential regressors,
we first throw all the 121 predictors into the linear regression.
With the rich mix of time series of various temporal patterns, we
find in Table \ref{tab:unrate_RMSPE_504} that Plasso is much worse
than Slasso, and Slasso under NT outperforms the best benchmark model
in most cases. If we transform all variables to ST according to \texttt{TCODE},
the outcomes deteriorate. Relative performance is similar when the
error is measured by MAPE, reported in Table \ref{tab:unrate_MAPE_504}
in the Appendix. These empirical results echo \citet{smeekes2020unit},
who find that the best forecast strategy should be devised based on
the nature of the target time series as well as the predictors; the
information contained in NT often has an edge over ST.\footnote{\citet{smeekes2020unit} carry out empirical exercises targeting several
variables in FRED-MD and they also use Google Trend to nowcast Dutch
unemployment. These empirical applications are elaborated in \citet{smeekes2018macroeconomic}
and \citet{smeekes2021automated}.}\textbf{ }

We plot the prediction errors $(\widehat{y}_{n+1}-y_{n+1})$ in Figure
\ref{fig:res_UNRATE} under NT and ST with $h=1$ and 20-year rolling
window. The graphs are similar under other $h$ and rolling window
lengths. The prediction errors based on Slasso with NT (upper right
panel) fluctuate around 0 in a narrow range between $\pm0.5$, as
the 121 regressors form a linear combination that predicts well the
one-month-ahead unemployment rate. The errors produced by Plasso (upper
left panel) remain persistent, swinging wildly between 2008 and 2016.
Under ST (lower panels) the outliers are not as pronounced, but the
prediction errors go beyond the range of $\pm1.2$ and appear persistent.
Furthermore, under the same $h$ and $n$ we check LASSO's selected
variables under NT. The FRED database classifies all time series into
8 categories based on economic implications, and \texttt{UNRATE} belongs
to the \emph{Labor Market (LM)} group. Table \ref{tab:emp_select}
reports the top 10 most frequently selected variables over the rolling
window estimation. Among these 10 variables that Slasso picks out,
9 are from LM group, which showcase the economic relevance of the
variables chosen by Slasso. In contrast, Plasso is inclined to select
the variables with large s.d.

\begin{table}[]
\caption{Most Frequently Selected Variables}
\label{tab:emp_select}
\begin{center}
\small
\begin{tabular}{c|lccc|lccc}
\hline\hline 
\multicolumn{1}{c|}{\multirow{2}{*}{Rank}} & \multicolumn{4}{c}{Plasso}                             & \multicolumn{4}{c}{Slasso}                           \\
\cline{2-9}
\multicolumn{1}{c|}{}                      & Mnemonics    & LM          & Freq & s.d.~Rank & Mnemonics  & LM          & Freq & s.d.~Rank \\
\hline
1                                         &   \texttt{BOGMBASE}   &  & 360  & 1         & \texttt{CLAIMSx}   & \checkmark & 360  & 9         \\
2                                         &  \texttt{BUSINVx}   &             & 360  & 3         & \texttt{UEMP15OV}  & \checkmark & 360  & 24        \\
3                                         &  \texttt{CLAIMSx}   & \checkmark & 360  & 9         & \texttt{UEMP5TO14} & \checkmark & 360  & 39        \\
4                                         & \texttt{CMRMTSPLx}   &                       & 356  & 4         & \texttt{UEMPLT5}   & \checkmark & 353  & 41        \\
5                                         & \texttt{DTCTHFNM}    &                       & 342  & 6         & \texttt{HWI}       & \checkmark & 314  & 30        \\
6                                         & \texttt{AMDMUOx}     &                       & 315  & 5         & \texttt{AWOTMAN}   & \checkmark & 297  & 113       \\
7                                         & \texttt{NONBORRES}   &                       & 306  & 2         & \texttt{USTRADE}   & \checkmark & 262  & 19        \\
8                                         & \texttt{DTCOLNVHFNM} &                       & 301  & 8         & \texttt{UEMP27OV}  & \checkmark & 256  & 28        \\
9                                         & \texttt{UEMP15OV}    & \checkmark & 294  & 24        & \texttt{USCONS}    & \checkmark & 215  & 26        \\
10                                        & \texttt{PAYEMS}      & \checkmark & 275  & 13        & \texttt{PERMITW}  &                       & 200  & 49        \\
\hline\hline
\end{tabular}
\end{center}
\footnotesize{Notes: 
The estimation is conducted under NT with $h=1$ and $n=20$.
The variable names follow FRED's mnemonics; See \citet{McCracken2016}. 
The ``LM'' column ticks a variable if it belongs to the \emph{Labor Market} group. ``Freq'' displays the frequency of each variable being selected among 
the 360 regressions over the rolling windows. ``s.d.~Rank'' marks the ranking of each variable based on its sample s.d.~from high to low.
}
\end{table}

That the linear combination of 121 predictors under Slasso can outperform
the benchmarks is encouraging. It illustrates the value of a high
dimensional model estimated by an off-the-shelf machine learning method.
Next, we experiment with an augmented model. \citet{stock2002forecasting}
propose computing diffusion indices---the principal components from
many potential predictors, and \citet{bai2008forecasting} further
add lagged dependent variables into predictive regressions. Following
\citet{medeiros2021forecasting}, we incorporate the lagged dependent
variable and four diffusion indices, making 126 unique regressors,
and to allow potential delayed effects we include four time lags of
each predictor, totaling $126\times4=504$ regressors. 

Columns under ``504 predictors'' in Table \ref{tab:unrate_RMSPE_504}
show the corresponding RMPSE. While Plasso remains worse than the
simple benchmarks, we observe improvement in Slasso. First, under
ST the additional lagged dependent variables and diffusion indices
mitigate the imbalance in the predictive regression and therefore
strengthen the performance of Slasso. Moreover, these additional regressors
improve Slasso under NT, which is the overall best performer. It reduces
the RMPSE in 29 out of the 36 instances relative to the counterpart
with ``121 predictors'', and beats the last column in most instances
except $h=3$, where the errors mainly occur during 2000--2009. These
results indicate that macroeconomic domain knowledge is instrumental
in guiding the initial specification to determine the pool of regressors,
and then Slasso takes care of the estimation of many coefficients.
This fusion of field expertise and machine learning is more effective
than simply LASSOing with all variables in the database. 

\section{Conclusion\label{sec:Conclusion} }

This paper studies asymptotic properties of LASSO in predictive regressions
where many nonstationary time series are present. We establish new
bounds for the RE, which  allows us to derive convergence rates for
Plasso and Slasso. The consistency of Slasso is extended to the model
of mixed stationary and nonstationary regressors, and it can further
digest information from cointegrated variables. The simulations and
the empirical application provide numerical evidence that supports
the merits of Slasso, which we recommend for practice.

As a first step of exploration, this paper uses the unit root process
as a representative of nonstationary time series. There are other
popular models that characterize persistence, for example, local-to-unity
and fractional integration. Future investigation of these nonstationary
time series will generalize the theory and further guide practical
implementation. Moreover, the theoretical results of this paper rely
on a tuning parameter expanding at some rate based on the same size.
It will be important to explore the behaviors of LASSO involving nonstationary
regressors when the tuning parameter is selected by a data-driven
method, such as the CV. Last but not least, a formal testing procedure
for coefficients will be feasible in high dimensions if we debias
the LASSO estimator.

\bigskip \bigskip

\bibliographystyle{chicagoa}
\bibliography{HDPredictive}

\newpage{}

\setcounter{footnote}{0}
\setcounter{table}{0} 
\setcounter{figure}{0} 
\setcounter{equation}{0} 
\renewcommand{\thefootnote}{\thesection.\arabic{footnote}} 
\renewcommand{\theequation}{\thesection.\arabic{equation}} 
\renewcommand{\thefigure}{\thesection.\arabic{figure}} 
\renewcommand{\thetable}{\thesection.\arabic{table}} 
\setcounter{thm}{0} 
\setcounter{lem}{0} 
\setcounter{rem}{0}
\setcounter{cor}{0} 
\setcounter{prop}{0} 
\renewcommand{\thethm}{\thesection.\arabic{thm}}
\renewcommand{\thelem}{\thesection.\arabic{lem}} 
\renewcommand{\therem}{\thesection.\arabic{rem}} 
\renewcommand{\thecor}{\thesection.\arabic{cor}} 
\renewcommand{\theprop}{\thesection.\arabic{prop}} 

\begin{appendices}
\begin{center}
{\LARGE Online Appendix for \\
``On LASSO for High Dimensional Predictive Regression''}
\end{center}
\begin{center} \large
Ziwei Mei and Zhentao Shi
\end{center}
Section \ref{sec:RE_example} provides probabilistic calculation and
numerical evidence to demonstrate the behavior of the minimum eigenvalues
of the Gram matrix under stationary and unit root processes. Section
\ref{sec:Proofs} collects the proofs of all the theoretical statements
in the main text, and the supporting lemmas along with their proofs.
Section \ref{sec:Additional-Numerical-Results} contains Monte Carlo
simulations for the data generating processes (DGP) with cointegrated
variables, and additional supporting results for the numerical study
in the main text. 

\section{Technical Calculation \label{sec:RE_example}}

This section illustrates the behavior of the Gram matrix when the
underlying processes are stationary or unit roots. For simplicity,
we assume the unit root vector $X_{t}$ is generated by the $s$-dimensional
innovation $e_{t}\sim\mathrm{i.i.d.}\mathcal{N}\left(0,I_{s}\right)$.
When $n\to\infty$, the $j$th diagonal entry of the (scaled) Gram
matrix $\hat{\Sigma}_{s}/n$ is 
\[
\mathcal{D}_{jj}=\int_{0}^{1}\mathcal{B}_{j}^{2}(r)dr-\left(\int_{0}^{1}\mathcal{B}_{j}(r)dr\right)^{2}
\]
 where $\mathcal{B}_{j}(r)$ is the standard Brownian motion (Wiener
process). Proposition \ref{prop:RE_KL} shows that $\mathcal{D}_{jj}$
is smaller than any fixed positive constant with non-trivial probability
that is bounded away from 0. This result is proved at the end of this
section.
\begin{prop}
\label{prop:RE_KL} (a) For any $\delta>0$, there exists a $\zeta_{2}>0$
such that $\Pr\left\{ \min_{j\in[s]}\mathcal{D}_{jj}\geq\delta\right\} \leq\left(1-\zeta_{2}\right)^{s}$.
(b) $\mathbb{E}\left[\mathcal{D}\right]=\frac{1}{6}I_{s}$. 
\end{prop}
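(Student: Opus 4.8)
The plan rests on two structural observations. Since $X_t$ is driven by i.i.d.\ $\mathcal{N}(0,I_s)$ innovations, its coordinates are independent scalar random walks, so in the limit $\mathcal{B}_1,\dots,\mathcal{B}_s$ are independent standard Brownian motions; consequently the diagonal entries $\mathcal{D}_{11},\dots,\mathcal{D}_{ss}$ are i.i.d.\ copies of a single nonnegative random variable $\mathcal{D}:=\int_0^1\bigl(\mathcal{B}_1(r)-\int_0^1\mathcal{B}_1(u)\,du\bigr)^2\,dr$, namely the squared $L^2([0,1])$-norm of a demeaned Brownian path. I would prove part (b) first and then use it inside part (a).

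For part (b) I would take expectations entry by entry. By Fubini, $\mathbb{E}\,\mathcal{D}_{jk}=\int_0^1\mathbb{E}[\mathcal{B}_j(r)\mathcal{B}_k(r)]\,dr-\int_0^1\!\!\int_0^1\mathbb{E}[\mathcal{B}_j(r)\mathcal{B}_k(u)]\,dr\,du$. For $j\neq k$, independence and zero means kill both terms, so $\mathbb{E}\,\mathcal{D}_{jk}=0$. For $j=k$, using $\mathbb{E}[\mathcal{B}_j(r)\mathcal{B}_j(u)]=r\wedge u$ gives $\int_0^1 r\,dr=\tfrac12$ for the first term and $\int_0^1\!\!\int_0^1(r\wedge u)\,dr\,du=\tfrac13$ for the second, hence $\mathbb{E}\,\mathcal{D}_{jj}=\tfrac12-\tfrac13=\tfrac16$. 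Collecting the entries yields $\mathbb{E}[\mathcal{D}]=\tfrac16 I_s$.

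For part (a), independence of the diagonal entries reduces the claim to showing $\Pr\{\mathcal{D}<\delta\}>0$ for every $\delta>0$: once this holds, setting $\zeta_2:=\Pr\{\mathcal{D}<\delta\}\in(0,1]$ gives $\Pr\{\min_{j\in[s]}\mathcal{D}_{jj}\ge\delta\}=\bigl(\Pr\{\mathcal{D}\ge\delta\}\bigr)^s=(1-\zeta_2)^s$. To see $\Pr\{\mathcal{D}<\delta\}>0$, note that on the event $\{\sup_{r\in[0,1]}|\mathcal{B}_1(r)|\le\epsilon\}$ one also has $|\int_0^1\mathcal{B}_1|\le\epsilon$, so the demeaned path is bounded by $2\epsilon$ and $\mathcal{D}\le4\epsilon^2$; taking $\epsilon=\tfrac12\sqrt\delta$ gives $\Pr\{\mathcal{D}<\delta\}\ge\Pr\{\sup_{[0,1]}|\mathcal{B}_1|\le\tfrac12\sqrt\delta\}$, and this small-ball probability is strictly positive (via the reflection-principle formula for $\Pr\{\sup_{[0,1]}|\mathcal{B}_1|\le a\}$, or simply because the constant path $0$ lies in the support of Wiener measure on $C[0,1]$). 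Consistent with the label of the proposition, one could instead argue via the Karhunen--Lo\`eve expansion $\mathcal{D}=\sum_{k\ge1}\lambda_k\xi_k^2$ with $\xi_k$ i.i.d.\ $\mathcal{N}(0,1)$ and $\sum_k\lambda_k=\mathbb{E}\,\mathcal{D}=\tfrac16<\infty$ by part (b): truncate the tail, bound $\sum_{k>N}\lambda_k\xi_k^2$ by Markov's inequality, and use the positive probability that $\xi_1,\dots,\xi_N$ are all small. Either route delivers $\zeta_2:=\Pr\{\mathcal{D}<\delta\}>0$ and the stated bound.

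The only step that is not a mechanical computation is the strict positivity of the one-dimensional small-ball probability --- equivalently, that $0$ lies in the support of the law of $\mathcal{D}$. This is a classical property of Brownian motion, which I would invoke by reference rather than reconstruct; the reduction by independence and the moment calculation in part (b) are short and routine.
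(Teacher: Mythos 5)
Your proposal is correct, but it reaches the result by a different route than the paper. For part (b) you integrate the covariance kernel directly ($\mathbb{E}[\mathcal{B}(r)\mathcal{B}(u)]=r\wedge u$, Fubini, $\tfrac12-\tfrac13=\tfrac16$), whereas the paper computes both terms through the Karhunen--Lo\`eve expansion with the sine basis; your calculation is shorter and equally rigorous. For part (a) the paper also works through the Karhunen--Lo\`eve representation: it bounds $\mathcal{D}_{jj}\leq\int_0^1\mathcal{B}^2$, writes this as $\sum_k\bigl(\xi_k/((k-0.5)\pi)\bigr)^2$, and obtains an explicit, quantitative lower bound on $\Pr\{\mathcal{D}_{jj}\leq\delta\}$ by bounding from below the infinite product $\prod_k\Pr\{|\xi_k|\leq\delta_1[(k-0.5)\pi]^{1/4}\}$ via $\log(1-x)\geq-x$ and a Gaussian tail estimate. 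Your primary argument instead reduces everything to the one-dimensional small-ball probability $\Pr\{\sup_{[0,1]}|\mathcal{B}_1|\leq\epsilon\}>0$, invoked as a classical fact; the reduction (demeaned path bounded by $2\epsilon$, hence $\mathcal{D}\leq4\epsilon^2$, then independence across $j$ gives the $(1-\zeta_2)^s$ bound exactly as in the paper) is sound, and your sketched Karhunen--Lo\`eve alternative with truncation plus Markov essentially recovers the paper's route. What the paper's approach buys is an explicit expression for $\zeta_2$ as a function of $\delta$ (a concrete product/exponential bound), while yours is more elementary and delegates the only nontrivial ingredient to the standard positivity of Wiener-measure small balls. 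One cosmetic point: with $\epsilon=\tfrac12\sqrt{\delta}$ you get $\mathcal{D}\leq\delta$ rather than strictly less; taking $\epsilon$ slightly smaller (or noting $\mathcal{D}\leq\int_0^1\mathcal{B}^2\leq\epsilon^2$ already) removes the issue, and in any case only $\Pr\{\mathcal{D}_{jj}<\delta\}>0$ is needed, so this is not a gap.
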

The above Proposition \ref{prop:RE_KL} (a) implies 
\[
\Pr\left\{ \phi_{\min}\left(\mathcal{D}\right)\geq\delta\right\} \leq\Pr\left\{ \min_{j\in[s]}\mathcal{D}_{jj}\geq\delta\right\} \leq\left(1-\zeta_{2}\right)^{s}
\]
and the right-hand side shrinks to 0 as $s\to\infty$; in other words
$\phi_{\min}\left(\mathcal{D}\right)\stackrel{p}{\to}0$. It characterizes
the behavior of the minimum eigenvalue that is suitable for the case
$s\ll n$ as in our analysis. Part (b) highlights that the behavior
of the population expectation is in sharp contrast with the minimum
diagonal element. The difference stems from the fact that $\mathcal{D}$
is a random matrix, not a constant matrix. 

We conduct a simulation exercise to provide further numerical evidence.
For comparison, we generate i.i.d.~sequence $e_{t}\sim\mathcal{N}\left(0,I_{s}\right)$,
compute the $s\times s$ Gram matrix of $\left(e_{t}\right)_{t=1}^{n}$,
and denote it as $\hat{\Sigma}_{s}^{\natural}$, to be distinguished
with the Gram matrix $\hat{\Sigma}_{s}$ when we generate the underlying
$X_{t}=\sum_{r=1}^{t}e_{r}$ as independent unit root processes. In
theory we should set the sample size $n$ as large as possible to
mimic the continuous path of the Brownian motion, whereas in practice
we find $n=5000$ is sufficiently large for our purpose. 

Figure \ref{fig:RE} shows the numerical evidence averaged over 1000
replications. Panel (A) displays the logarithm of the minimum diagonal
entries of the Gram matrix $\hat{\Sigma}_{s}^{\natural}$ (i.i.d.~regressors)
and $\hat{\Sigma}_{s}/n$ (unit root regressors). As $s$ growing,
$\min_{i\in[s]}(\hat{\Sigma}_{s,ii}^{\natural})$ is stable around
1 ($\log1=0$) as $\mathbb{E}\left[e_{t}e_{t}^{\top}\right]=I_{s}$
in theory. In contrast, as $s$ grows $\min_{i\in[s]}(\hat{\Sigma}_{s}/n)$
declines; for example, its value falls below ${\rm e}^{-3}=0.05$
when $s=128$. The phenomenon supports Proposition \ref{prop:RE_KL}.

Panel (B) displays parallel results on the minimum eigenvalue. Compared
with Panel (A) counterparts, we observe that $\lambda_{\min}(\hat{\Sigma}_{s}^{\natural})$
remains stable near 1, whereas $\lambda_{\min}(\hat{\Sigma}_{s}/n)$
vanishes much faster and becomes smaller than ${\rm e}^{-7}=0.001$
as $s=128$. Moreover, from $s=4$ to $s=128$ the points largely
align on a straight line, which echos the rate with respect to $s$
on the right-hand side of (\ref{eq:RE-unit}) in Lemma \ref{lem:Normal_RE}. 

The derivation and the numerical results provide clear evidence of
the drastically different behavior of the minimum eigenvalue of the
Gram matrix when the underlying regressors are i.i.d\@.~or unit
roots. Despite the shrinking minimum eigenvalue toward zero, the relatively
slow rate in terms of $s$ can be compensated by the super-consistency
due to the strong signal of unit roots in terms of $n$, making it
possible for LASSO to maintain consistency, as shown in the main text.

\begin{figure}
\begin{centering}
\includegraphics[scale=0.73]{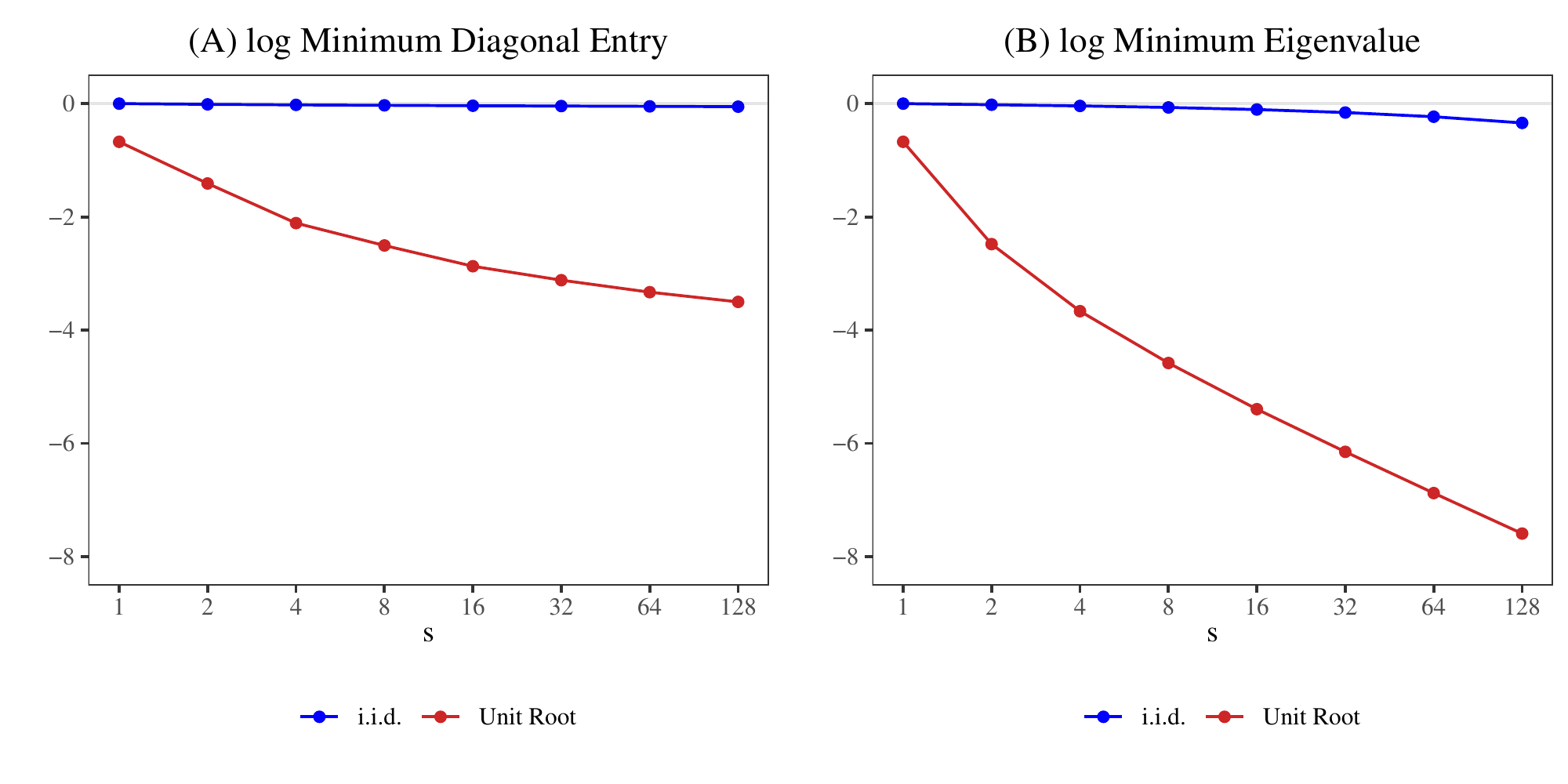}
\par\end{centering}
\footnotesize Note: The y-axis is the \emph{logarithm} \emph{of the
corresponding value (averaged over 1000 replications), }and the x-axis
is the dimension $s$.

\caption{\label{fig:RE} Numerical Illustrations for the RE Condition}
\end{figure}
\medskip
\begin{proof}[Proof of Proposition \ref{prop:RE_KL}]
 In this proof we discuss fixed $j,k\in[s]$ as $n\to\infty$. To
simplify the notations, for a diagonal element we denote $\mathcal{D}_{jj}$
as $d^{\diamondsuit}$ by suppressing its dependence on $j$, and
for an off-diagonal element we denote $\mathcal{D}_{j,k}$ as $d^{\ddagger}$. 

\textbf{Part (a).} Notice that $d^{\diamondsuit}=\int_{0}^{1}\mathcal{B}^{2}(r)dr-\left(\int_{0}^{1}\mathcal{B}(r)dr\right)^{2}\leq\int_{0}^{1}\mathcal{B}^{2}(r)dr.$
The standard Brownian motion (Wiener process) admits the Karhunen-Lo\`{e}ve
representation 
\[
\mathcal{B}\left(r\right)=\sqrt{2}\sum_{k=1}^{\infty}\frac{\sin\left((k-0.5)\pi r\right)}{\left(k-0.5\right)\pi}\xi_{k}
\]
 where $\xi_{k}\sim N\left(0,1\right)$ are i.i.d.~random coefficients
and $\left\{ \sin\left((k-0.5)\pi r\right)\right\} _{k=1}^{\infty}$
an orthogonal basis \citep{phillips1998new}. We can thus bound $d^{\diamondsuit}$
by the random series $\sum_{k=1}^{\infty}\left(\frac{\xi_{k}}{\left(k-0.5\right)\pi}\right)^{2}.$

Define an event $\mathcal{G}\left(\delta_{1}\right):=\bigcup_{k=1}^{\infty}\left\{ \left|\xi_{k}\right|\leq\delta_{1}\left[(k-0.5)\pi\right]^{1/4}\right\} $
for a fixed $\delta_{1}>0$. Its probability
\begin{align}
\Pr\left\{ \mathcal{G}\left(\delta_{1}\right)\right\}  & =\Pr\left\{ \bigcup_{k=1}^{\infty}\left\{ \frac{\left|\xi_{k}\right|}{\left(k-0.5\right)\pi}\leq\frac{\delta_{1}}{\left[(k-0.5)\pi\right]^{3/4}}\right\} \right\} \nonumber \\
 & \leq\Pr\left\{ \sum_{k=1}^{\infty}\left(\frac{\xi_{k}}{\left(k-0.5\right)\pi}\right)^{2}\leq\sum_{k=1}^{\infty}\frac{\delta_{1}^{2}}{\left[(k-0.5)\pi\right]^{3/2}}\right\} \leq\Pr\left\{ d^{\diamondsuit}\leq\delta_{1}^{2}m_{1}\right\} \label{eq:G2}
\end{align}
where $m_{1}:=\sum_{k=1}^{\infty}\left[(k-0.5)\pi\right]^{-3/2}$
is convergent. 

Since $\xi_{k}\sim\mathrm{i.i.d.}\mathcal{N}\left(0,1\right)$, we
use $\Phi\left(\cdot\right)$ to denote the cumulative distribution
function of $N\left(0,1\right)$. The probability of the event is
bounded below by 
\begin{eqnarray}
\Pr\left\{ \mathcal{G}\left(\delta_{1}\right)\right\}  & = & \prod_{k=1}^{\infty}\Pr\left\{ \left\{ \left|\xi_{k}\right|\leq\delta_{1}\left[(k-0.5)\pi\right]^{1/4}\right\} \right\} =\prod_{k=1}^{\infty}\left[1-2\Phi\left(-\delta_{1}\left[(k-0.5)\pi\right]^{1/4}\right)\right]\nonumber \\
 & = & \exp\left\{ \sum_{k=1}^{\infty}\log\left[1-2\Phi\left(-\delta_{1}\left[(k-0.5)\pi\right]^{1/4}\right)\right]\right\} \nonumber \\
 & \geq & \exp\left\{ -2\sum_{k=1}^{\infty}\Phi\left(-\delta_{1}\left[(k-0.5)\pi\right]^{1/4}\right)\right\} \nonumber \\
 & \geq & \exp\left\{ -2\sum_{k=1}^{\infty}\exp\left(-\frac{\delta_{1}^{2}}{2}\sqrt{(k-0.5)\pi}\right)\right\} ,\label{eq:G1}
\end{eqnarray}
where the first inequality is due to $\log(1-x)\geq-x$ for $x\in[0,1)$,
and the last inequality by $\Phi\left(-x\right)\leq\exp\left(-\frac{1}{2}x^{2}\right)$
for all $x>0$. Since $\sum_{k=1}^{\infty}\exp\left(-\frac{\delta_{1}^{2}}{2}\sqrt{(k-0.5)\pi}\right)$
is a convergent series, the probability $\Pr\left\{ \mathcal{G}\left(\delta_{1}\right)\right\} >0$
for any fixed $\delta_{1}>0$. Combine (\ref{eq:G2}) and (\ref{eq:G1}):
\[
\Pr\left\{ d^{\diamondsuit}\leq\delta_{1}^{2}m_{1}\right\} \geq\Pr\left\{ \mathcal{G}\left(\delta_{1}\right)\right\} \geq\exp\left\{ -2\sum_{k=1}^{\infty}\exp\left(-\frac{\delta_{1}^{2}}{2}\sqrt{(k-0.5)\pi}\right)\right\} >0.
\]
In other words, for any $\delta>0$, there exists an absolute constant
$\zeta_{2}>0$ such that $\Pr\left\{ d^{\diamondsuit}\leq\delta\right\} \geq\zeta_{2}$.
Since the diagonal elements of $\mathcal{D}$ are independent, 
\[
\Pr\left\{ \min\mathcal{D}\geq\delta\right\} =\left(\Pr\left\{ d^{\diamondsuit}\geq\delta\right\} \right)^{s}\leq\left(1-\zeta_{2}\right)^{s}.
\]

\textbf{Part (b)}. The behavior the population expectation of $\mathcal{D}$
is very different from the minimum diagonal element $\mathcal{D}$.
For the diagonal element we have $\mathbb{E}\left[d^{\diamondsuit}\right]=1/6$
as the difference between 
\[
\mathbb{E}\left[\int_{0}^{1}\mathcal{B}_{j}^{2}(r)dr\right]=\sum_{k=1}^{\infty}\frac{\mathbb{E}\left[\xi_{k}^{2}\right]}{\left(k-0.5\right)^{2}\pi^{2}}=\frac{1}{2}
\]
and 
\begin{align*}
\mathbb{E}\left[\left(\int_{0}^{1}\mathcal{B}(r)dr\right)^{2}\right] & =\sum_{k=1}^{\infty}\left(\int_{0}^{1}\sqrt{2}\frac{\sin\left((k-0.5)\pi r\right)}{\left(k-0.5\right)\pi}dr\right)^{2}\mathbb{E}\left[\xi_{k}^{2}\right]\\
 & =\sum_{k=1}^{\infty}\left(\int_{0}^{1}\sqrt{2}\frac{\sin\left((k-0.5)\pi r\right)}{\left(k-0.5\right)\pi}dr\right)^{2}=\frac{1}{3}.
\end{align*}

On the other hand, the off-diagonal element is
\[
d^{\ddagger}=\int_{0}^{1}\mathcal{B}_{1}(r)\mathcal{B}_{2}(r)dr-\int_{0}^{1}\mathcal{B}_{1}(r)dr\int_{0}^{1}\mathcal{B}_{2}(r)dr
\]
where $\mathcal{B}_{1}(r)$ and $\mathcal{B}_{2}(r)$ are two independent
Wiener processes. Its population expectation $\mathbb{E}\left[d^{\ddagger}\right]=0$,
because 
\[
\mathbb{E}\left[\int_{0}^{1}\mathcal{B}_{1}(r)dr\right]=\sum_{k=1}^{\infty}\int_{0}^{1}\sqrt{2}\frac{\sin\left((k-0.5)\pi r\right)}{\left(k-0.5\right)\pi}dr\mathbb{E}\left[\xi_{k}\right]=0
\]
and 
\[
\mathbb{E}\left[\int_{0}^{1}\mathcal{B}_{1}(r)\mathcal{B}_{2}(r)dr\right]=\sum_{k=1}^{\infty}\frac{\mathbb{E}\left[\xi_{1k}\xi_{2k}\right]}{\left(k-0.5\right)^{2}\pi^{2}}=0.
\]
We complete the proof.
\end{proof}

\section{Proofs\label{sec:Proofs}}

Section \ref{subsec:Preliminary-Lemmas} provides several preliminary
lemmas. Section \ref{subsec:Prop} includes the preparatory propositions
for DB and RE. Section \ref{subsec:ProofMain} collects the proofs
of the results in the main text. Section \ref{subsec:Proofs-of-Lemmas}
proves the Lemmas. We use $c$ and $C$, with no superscript or subscript,
to denote generic positive constants that may vary in occasions. 

For notational simplicity, in the proofs we assume $p\geq n^{\nu_{1}}$
for some absolute constant $\nu_{1}$. This is reasonable as we focus
on the high-dimensional case with large $p$ relative to $n$. There
is no technical difficulty in allowing $p$ to grow either slowly
at a logarithmic or quickly at an exponential rate of $n$, but without
the polynomial rate lower bound $\nu_{1}$ we have to compare $\log p$
and $\log n$ in many places, and in many conditions and rates the
term ``$\log p$'' has to be changed into $\log(np)$. 

\subsection{\label{subsec:Preliminary-Lemmas} Lemmas }

Lemma \ref{lem:mixing} shows the mixing properties of $\varepsilon=(\varepsilon_{t})_{t\in\mathbb{Z}}$
where $\varepsilon_{t}$ is the linear process defined in (\ref{eq:linrProc}).
Lemma \ref{lem:BernsteinSum} establishes a Bernstein-type concentration
inequality for the partial sums of independent sub-exponential variables.
Lemma \ref{lem:LinComb} and Corollary \ref{cor:BNtail} maintain
the sub-exponential property of linear combinations of independent
sub-exponential variables, which help bound the errors term from the
Wold decomposition of $\varepsilon_{t}$, and also the noises $e_{t}$
and $u_{t}$ which are linear transformations of $\varepsilon_{t}.$
The Gaussian approximation in Lemma \ref{lem:GaussianApprox} carries
RE over into non-Gaussian variables. The proofs of the lemmas are
relegated to Section \ref{subsec:Proofs-of-Lemmas}. 

The $\alpha$-mixing and $\rho$-mixing coefficients of two generic
$\sigma$-fields $\mathcal{A}$ and $\mathcal{B}$ are defined as
\begin{align}
\alpha(\mathcal{A},\mathcal{B}) & :=\sup_{A\in\mathcal{A},B\in\mathcal{B}}|\Pr(A\cap B)-\Pr(A)\Pr(B)|,\nonumber \\
\rho(\mathcal{A},\mathcal{B}) & :=\sup_{X\in\mathcal{A},Y\in\mathcal{B}}\left|\ensuremath{\mathbb{E}XY-\mathbb{E}X\mathbb{E}Y}\right|\big/\sqrt{\mathbb{E}X^{2}\mathbb{E}Y^{2}}\ \ \ \text{ for }\mathbb{E}X^{2},\mathbb{E}Y^{2}<\infty.\label{eq:def_rho}
\end{align}

In this section, we use the lowercase $x$ to denote a generic random
vector, and let $\sigma(x)$ be the $\sigma$-field generated by $x.$
For $d\in\mathbb{N}$, the $\alpha$-mixing and $\rho$-mixing coefficients
$x=(x_{t})_{t\in\mathbb{Z}}$ are defined as 
\begin{align*}
\alpha(x,d) & :=\sup_{s\in\mathbb{Z}}\alpha(\sigma((x_{t})_{t\leq s}),\sigma((x_{t})_{t\geq s+d}))\\
\rho(x,d) & :=\sup_{s\in\mathbb{Z}}\rho(\sigma((x_{t})_{t\leq s}),\sigma((x_{t})_{t\geq s+d})).
\end{align*}
The following Lemma \ref{lem:mixing} states that the linear process
$\varepsilon$ is geometric $\rho$-mixing.
\begin{lem}
\label{lem:mixing}Suppose that Assumptions \ref{assu:tail} and \ref{assu:alpha}
hold. Let $\varepsilon=\left(\varepsilon_{t}\right){}_{t\in\mathbb{Z}}$.
Then we have 
\begin{equation}
\alpha(\text{\ensuremath{\varepsilon}},d)\leq\rho(\text{\ensuremath{\varepsilon}},d)\leq C_{\alpha}\exp\left(-c_{\alpha}d^{r}\right)\label{eq:mixingBound}
\end{equation}
for $d$ sufficiently large, where $C_{\alpha}$ and $c_{\alpha}$
are two absolute constants.
\end{lem}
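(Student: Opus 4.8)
The plan is to establish the $\rho$-mixing bound (the right inequality in \eqref{eq:mixingBound}); the left inequality $\alpha(\varepsilon,d)\le\rho(\varepsilon,d)$ is the elementary, always-valid comparison between the strong-mixing and maximal-correlation coefficients. I would first reduce to a single scalar linear process. Because $(\eta_{jt})$ is i.i.d.\ across $j$, the $p+1$ processes $(\varepsilon_{j\cdot})_{j\in[p+1]}$ are mutually independent, and the maximal-correlation coefficient of a vector process assembled from independent components equals the maximum of the componentwise coefficients; since every constant in Assumptions~\ref{assu:tail} and~\ref{assu:alpha} is absolute, a bound for one $(\varepsilon_{j\cdot})$ that is uniform in $j$ (and does not involve $p$) gives \eqref{eq:mixingBound}. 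So it suffices to bound $\rho\big((\varepsilon_{jt})_{t},d\big)$ for a fixed, generic $j$, suppressed below.

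Next I would extract the consequences of the invertibility condition. Since $|\psi(z)|\ge c_\psi>0$ on $\{|z|\le1\}$ and, by \eqref{eq:psi}, $\psi$ is analytic on a disc of radius strictly greater than one, $1/\psi$ is analytic there too, so its Taylor coefficients $(\pi_{k})$ decay geometrically and $\eta_{t}=\sum_{k\ge0}\pi_{k}\varepsilon_{t-k}$. Combined with the MA representation, this yields the measurability identity $\sigma(\varepsilon_{t}:t\le s)=\sigma(\eta_{t}:t\le s)$ for every $s$. Fixing $s$ and a gap $d$, I would split each future coordinate as $\varepsilon_{s+d+\ell}=\varepsilon^{\dagger}_{s+d+\ell}+R_{\ell}$ for $\ell\ge0$, where $\varepsilon^{\dagger}_{s+d+\ell}=\sum_{m<d+\ell}\psi_{m}\eta_{s+d+\ell-m}$ depends only on $(\eta_{t})_{t>s}$---hence, by the identity above, is independent of the past---while the remainder $R_{\ell}=\sum_{m\ge d+\ell}\psi_{m}\eta_{s+d+\ell-m}$ is past-measurable and obeys $\mathbb{E}R_{\ell}^{2}\le C_{\psi}^{2}\sum_{m\ge d}e^{-2c_{\psi}m^{r}}\le C e^{-c_{\alpha}d^{r}}$ for $d$ large, using the elementary estimate $\sum_{m\ge d}e^{-\gamma m^{r}}\le C'e^{-\gamma'd^{r}}$ valid for any $\gamma'<\gamma$. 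Thus, in $L^{2}$, the future $\sigma$-field is an $O(e^{-c_{\alpha}d^{r}})$ perturbation of a $\sigma$-field independent of the past, and this perturbation is what must be shown to inflate the maximal correlation by no more than the same order.

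The last step is the delicate one, and the density-regularity condition \eqref{eq:density} is exactly what makes it work: unlike $\alpha$- or $\beta$-mixing, $\rho$ is \emph{not} controlled merely by the $L^{2}$-distance between the future and an independent copy of it, since arbitrary measurable functionals can amplify small perturbations. I would therefore argue at the level of the conditional law of a finite block of the future given the past, which is a translate---by the small, past-measurable amount $(R_{\ell})$---of the law of $(\varepsilon^{\dagger}_{s+d+\ell})$; because $\psi_{0}=1$ the density of $\varepsilon_{s+d}$ (and, in the appropriate sense, of a finite block) is a convolution involving the innovation density $f$ and therefore inherits the $L^{1}$-Lipschitz property \eqref{eq:density}, so a shift of magnitude $\|R_{\ell}\|$ moves this conditional density in total variation by $O(\|R_{\ell}\|)=O(e^{-c_{\alpha}d^{r}})$; combined with non-degeneracy of the reference density this bounds the operator norm of $E(\cdot\mid\sigma(\varepsilon_{t}:t\le s))$ on mean-zero functions of the future, which equals $\rho(\varepsilon,d)$. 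Uniformity in $s$ is automatic from stationarity, and the polynomial-in-$d$ prefactors are absorbed into ``$d$ sufficiently large'' at the cost of a slightly smaller $c_{\alpha}<c_{\psi}$. The main obstacle is precisely this conversion of an $L^{2}$/coupling estimate into a maximal-correlation bound; if one prefers not to carry it out by hand, an equivalent route is to verify the hypotheses---\eqref{eq:density}, the decay \eqref{eq:psi}, and invertibility---of a standard $\rho$-mixing theorem for causal invertible linear processes with absolutely continuous innovations (the Gaussian prototype being the Kolmogorov--Rozanov spectral-density criterion) and read off the rate $e^{-c_{\alpha}d^{r}}$ from the MA-tail rate $\sum_{m>M}|\psi_{m}|=O(e^{-c_{\psi}M^{r}})$.
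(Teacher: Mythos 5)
Your overall architecture is sound in two places where it coincides with the paper: the reduction to a single scalar component (for mutually independent component processes the maximal correlation of the vector process is the maximum of the componentwise ones, which is exactly the role of \citet[Theorem 5.2(b)]{bradley2005basic} in the paper's proof), and the tail estimate $\sum_{m\ge d}e^{-\gamma m^{r}}\le C e^{-\gamma' d^{r}}$, which reproduces the paper's generic inequality. The elementary bound $\alpha(\varepsilon,d)\le\rho(\varepsilon,d)$ is also fine.

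The genuine gap is at the step you yourself flag as delicate: converting the translation/coupling estimate into a bound on $\rho$. Your argument shows that, given the past, the law of a finite future block is a translate (by the past-measurable remainder $R$) of a law independent of the past, and that \eqref{eq:density} makes this translate close in total variation. But total-variation closeness of the conditional law to a law independent of the past is precisely a $\beta$-mixing (hence $\alpha$-mixing) estimate: it controls $\sup_{\|Y\|_\infty\le1}|\mathbb{E}[Y\mid\text{past}]-\mathbb{E}Y|$, whereas $\rho(\varepsilon,d)$ is the $L^{2}\!\to\!L^{2}$ operator norm of the conditional expectation over mean-zero, generally unbounded, $L^{2}$ functionals of the entire future $\sigma$-field. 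In general $\rho$ is not dominated by $\beta$, and the ``non-degeneracy of the reference density'' you invoke to bridge this is neither available (Assumption \ref{assu:tail} only imposes the $L^{1}$-smoothness \eqref{eq:density}; a density on $\mathbb{R}$ cannot be bounded below) nor sufficient — testing against $L^{2}$ functions would require a chi-square/weighted-$L^{2}$ control of the density perturbation, plus an approximation passing from finite blocks to the full future $\sigma$-field, none of which is supplied. The correct way to close the argument is your fallback made precise, and it is what the paper does: \citet{gorodetskii1978strong} gives the $\alpha$-mixing rate of a scalar linear process in terms of the coefficient tails $\sum_{q\ge m}|\psi_{jq}|$ (this is where \eqref{eq:density} actually enters), \citet[Theorem 1]{bradley1993equivalent} gives $\rho\le 2\pi\alpha$ for linear processes with i.i.d.\ innovations, and the independence reduction then yields \eqref{eq:mixingBound}. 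As written, your main route does not go through at the $\rho$-step, and the fallback remains a pointer rather than a proof (the Kolmogorov--Rozanov criterion you name covers only the Gaussian prototype).
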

Lemma \ref{lem:BernsteinSum} provides a probabilistic order for the
maximum of the partial sum along its path when the innovations are
sub-exponential and geometric $\alpha$-mixing. 
\begin{lem}
\label{lem:BernsteinSum} Let $x_{t}$ be a $p\times1$ random vector
strictly stationary over $t$ and $\|\mathbb{E}x_{t}\|_{\infty}<\infty$.
Assume there exist absolute constants $C_{x}$ and $b_{x}$ such that
\[
\max_{j\in[p]}\Pr\left\{ |x_{jt}|>\mu\right\} \leq C_{x}\exp\left(-\mu/b_{x}\right)
\]
for all $\mu>0$. Moreover, assume the $\alpha$-mixing coefficient
of $x=(x_{t})_{t\in\mathbb{Z}}$ satisfies $\alpha(x,d)\leq C_{\alpha}\exp\left(-c_{\alpha}d^{r}\right)$
for some absolute constants $C_{\alpha},c_{\alpha}$ and $r$. If
$(\log p)^{1+2/r}=o(n)$, then 
\begin{equation}
\max_{j\in[p],t\in[n]}|\sum_{s=1}^{t}(x_{js}-\mathbb{E}x_{js})|\stackrel{\mathrm{p}}{\preccurlyeq}\sqrt{n\log p}.\label{eq:sumProbBound}
\end{equation}
\end{lem}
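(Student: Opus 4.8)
The plan is to apply a Bernstein-type deviation inequality for geometrically $\alpha$-mixing sequences one coordinate at a time, and then absorb the maximum over $j\in[p]$ by a union bound, with the sub-exponential tails and the rate restriction $(\log p)^{1+2/r}=o(n)$ doing the bookkeeping. First I would center the process, writing $\tilde x_{jt}=x_{jt}-\mathbb Ex_{jt}$ and $S_{jt}=\sum_{s=1}^{t}\tilde x_{js}$; since $\|\mathbb Ex_t\|_\infty<\infty$, the quantity $\max_{j,t}|\sum_{s\le t}(x_{js}-\mathbb Ex_{js})|$ is unaffected, so assume $\mathbb Ex_{jt}=0$. Fix $j\in[p]$. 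The scalar sequence $(x_{js})_{s\in\mathbb Z}$ is strictly stationary, has a sub-exponential marginal (Weibull index $1$, by the tail hypothesis of the lemma), and inherits the geometric mixing bound $\alpha(x,d)\le C_\alpha\exp(-c_\alpha d^{r})$ (Weibull index $r$). Hence \citet[Theorem 1]{merlevede2011bernstein} applies with combined index $\gamma=r/(r+1)\in(0,1)$, i.e. $1/\gamma=1+1/r$, yielding, for every $\eta>0$,
\[
\Pr\Big\{\max_{t\le n}|S_{jt}|\ge\eta\Big\}\le n\exp\!\Big(-\frac{\eta^{\gamma}}{C_1}\Big)+\exp\!\Big(-\frac{\eta^{2}}{C_2(1+nV)}\Big)+\exp\!\Big(-\frac{\eta^{2}}{C_3 n}\exp\!\Big(\frac{\eta^{\gamma(1-\gamma)}}{C_4(\log\eta)^{\gamma}}\Big)\Big),
\]
where $C_1,\dots,C_4$ are absolute constants and $V$ is the variance proxy $\mathrm{Var}(x_{j0})+2\sum_{h\ge1}|\mathrm{Cov}(x_{j0},x_{jh})|$, which is bounded by an absolute constant uniformly in $j$ because sub-exponential marginals give $\mathrm{Var}(x_{j0})\le C$ and the geometric mixing rate makes the autocovariances summable via a standard mixing covariance inequality.

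Next I would set $\eta=M\sqrt{n\log p}$ for a large constant $M$. The Gaussian term is then at most $\exp(-M^{2}n\log p/(C_2(1+V)n))=p^{-cM^{2}}$ for $n$ large, and the third term is no larger. The only delicate point is the heavy term $n\exp(-\eta^{\gamma}/C_1)$: the identity $(1+1/r)\gamma=1$ gives $(n\log p)^{\gamma/2}\ge(\log p)^{(1+1/r)\gamma}=\log p$ as soon as $n\ge(\log p)^{1+2/r}$, which holds eventually under the hypothesis $(\log p)^{1+2/r}=o(n)$; therefore $\eta^{\gamma}\ge M^{\gamma}\log p$ and the heavy term is at most $n\,p^{-M^{\gamma}/C_1}$. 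Combining, for $n$ large and uniformly in $j$,
\[
\Pr\Big\{\max_{t\le n}|S_{jt}|\ge M\sqrt{n\log p}\Big\}\le n\,p^{-M^{\gamma}/C_1}+2\,p^{-cM^{2}}.
\]
A union bound over $j\in[p]$, together with the standing assumption $p\ge n^{\nu_{1}}$ (so $n\le p^{1/\nu_{1}}$), then gives
\[
\Pr\Big\{\max_{j\in[p],\,t\in[n]}|S_{jt}|\ge M\sqrt{n\log p}\Big\}\le p^{\,1+1/\nu_{1}-M^{\gamma}/C_1}+2\,p^{\,1-cM^{2}},
\]
which vanishes once $M$ is chosen large enough that both exponents are negative. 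Hence the event $\{\max_{j,t}|S_{jt}|\le M\sqrt{n\log p}\}$ holds w.p.a.1, i.e. $\max_{j\in[p],t\in[n]}|S_{jt}|\stackrel{\mathrm{p}}{\preccurlyeq}\sqrt{n\log p}$, which is the claim.

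The proof is mostly a matter of invoking the right inequality and tracking exponents, so there is no deep obstacle; the one place that genuinely needs care is the verification that $(\log p)^{1+2/r}=o(n)$ is exactly the borderline condition that lets the sub-Weibull term $n\exp(-\eta^{\gamma}/C_1)$ survive the union bound over the $p$ coordinates — a weaker rate would leave this term uncontrolled, while a stronger rate would be wasteful. A minor secondary point is confirming that the variance proxy $V$ is bounded by an absolute constant uniformly in $j$; and if one prefers to avoid the maximal form of the mixing Bernstein inequality, one can instead also union-bound over $t\in[n]$, which only costs a further polynomial-in-$p$ factor and is therefore still absorbed by taking $M$ large.
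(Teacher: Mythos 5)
Your proposal is correct and follows essentially the same route as the paper's proof: center the process, invoke the maximal Bernstein-type inequality of \citet[Theorem 1 and Remark 1]{merlevede2011bernstein} coordinate-by-coordinate with combined index $r^{*}=\gamma=(1+1/r)^{-1}$, take $\mu\asymp\sqrt{n\log p}$, and absorb the union bound over $j\in[p]$ (and the $n$ factor, via the standing assumption $p\geq n^{\nu_1}$) using exactly the rate condition $(\log p)^{1+2/r}=o(n)$. The only cosmetic difference is that you kill the sub-Weibull and Gaussian terms by taking the constant $M$ large, whereas the paper fixes $C^{2}\geq2(C_{2}+1)V+2C_{3}$ and lets $(n\log p)^{r^{*}/2}/\log p\to\infty$ handle the first term; both are valid.
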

Lemma \ref{lem:LinComb} gives the tail bounds of linear combinations
of generic independent mean-zero noises with sub-exponential tails. 
\begin{lem}
\label{lem:LinComb} Let $x_{1},x_{2},\cdots$ be independent random
variables with $\mathbb{E}x_{i}=0$ for all $i\in\mathbb{N}$. Suppose
there exist absolute constants $C_{x}$ and $b_{x}$ such that 
\begin{equation}
\max_{i\in\mathbb{N}}\Pr\left\{ |x_{i}|>\mu\right\} \leq C_{x}\exp\left(-\mu/b_{x}\right)\label{eq:xprobineq}
\end{equation}
for all $\mu>0$. Then there exists an absolute constant $K_{x}$
such that for any vector $a=(a_{i}\in\mathbb{R})_{i\in\mathbb{N}}$
and $\mu>0$, we have 
\begin{equation}
\Pr\left\{ \sum_{i\in\mathbb{N}}|a_{i}x_{i}|>\mu\right\} \leq\exp\left(-\frac{1}{\|a\|_{\infty}}\left(\dfrac{\mu}{2K_{x}{\rm e}}-\left\Vert a\right\Vert _{1}\right)\right).\label{eq:lincomb_p}
\end{equation}
\end{lem}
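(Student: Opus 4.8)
The plan is to prove \eqref{eq:lincomb_p} by the exponential Chernoff method, after first converting the sub-exponential tail hypothesis \eqref{eq:xprobineq} into a moment generating function bound for each $|x_i|$. Integrating the tail, $\mathbb{E}|x_i|^p=\int_0^\infty p\mu^{p-1}\Pr\{|x_i|>\mu\}\,d\mu\le C_x p\int_0^\infty\mu^{p-1}e^{-\mu/b_x}\,d\mu=C_x b_x^p\,p!$ for every integer $p\ge1$, so with the absolute constant $K_x:=b_x\max\{C_x,1\}$ and the bound $p!\le p^p$ we obtain $\mathbb{E}|x_i|^p\le(K_x p)^p$, uniformly over $i$. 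Expanding the exponential series and using $p!\ge(p/e)^p$,
\[
\mathbb{E}\exp(t|x_i|)\le\sum_{p\ge0}\frac{t^p(K_x p)^p}{p!}\le\sum_{p\ge0}(tK_x e)^p=\frac{1}{1-tK_x e}\qquad\text{whenever }tK_x e<1 ,
\]
and since $-\log(1-x)\le 2x$ on $[0,\tfrac12]$ this yields $\mathbb{E}\exp(t|x_i|)\le\exp(2tK_x e)$ for every $0\le t\le\tfrac1{2K_xe}$.

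With this in hand I would run Chernoff's argument on $\sum_i|a_ix_i|$, whose summands are nonnegative. The degenerate cases are immediate ($\|a\|_1=\infty$ makes the right-hand side of \eqref{eq:lincomb_p} infinite; $a=0$ makes both sides zero), so assume $0<\|a\|_\infty\le\|a\|_1<\infty$ and set $\lambda:=\tfrac1{2K_xe\|a\|_\infty}$. For each finite $N$, Markov's inequality for $\exp(\lambda\,\cdot)$ combined with independence gives $\mathbb{E}\exp\!\big(\lambda\sum_{i\le N}|a_ix_i|\big)=\prod_{i\le N}\mathbb{E}\exp(\lambda|a_i|\,|x_i|)$; because $\lambda|a_i|K_xe\le\lambda\|a\|_\infty K_xe=\tfrac12$ for every $i$, the previous step bounds each factor by $\exp(2\lambda|a_i|K_xe)$, so the product is at most $\exp\!\big(2\lambda K_xe\sum_{i\le N}|a_i|\big)\le\exp(\|a\|_1/\|a\|_\infty)$, uniformly in $N$. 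Monotone convergence extends the bound to the full series, in particular $\sum_i|a_ix_i|<\infty$ almost surely, and therefore
\[
\Pr\Big\{\sum_{i\in\mathbb{N}}|a_ix_i|>\mu\Big\}\le e^{-\lambda\mu}\,\mathbb{E}\exp\Big(\lambda\sum_{i\in\mathbb{N}}|a_ix_i|\Big)\le\exp\Big(-\frac{\mu}{2K_xe\|a\|_\infty}+\frac{\|a\|_1}{\|a\|_\infty}\Big) ,
\]
which is exactly \eqref{eq:lincomb_p}.

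The proof is largely routine, and I do not expect a serious obstacle. The two points that need care are: (i) producing the moment bound with a genuinely absolute constant so that the final rate comes out as the stated $2K_xe$ rather than an unspecified constant --- this is why I route through $\mathbb{E}|x_i|^p\le(K_xp)^p$ together with Stirling's inequality instead of a looser $\psi_1$-Orlicz estimate; and (ii) justifying the passage from finite truncations to the infinite sum, which is legitimate only because the truncated moment generating function bound is uniform in $N$. I would also note that $\mathbb{E}x_i=0$ is not used in this lemma, since the statement involves only $|x_i|$; the hypothesis is retained merely because it is needed wherever Lemma \ref{lem:LinComb} is subsequently applied.
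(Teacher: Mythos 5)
Your proof is correct and follows essentially the same route as the paper's: a uniform moment bound $\mathbb{E}|x_i|^p\leq(K_xp)^p$, a per-factor moment generating function bound $\mathbb{E}\exp(\tau_x|a_ix_i|)\leq\exp(|a_i|/\|a\|_{\infty})$ at $\tau_x=(2K_x\mathrm{e}\|a\|_{\infty})^{-1}$, the product over independent terms, and Markov's inequality. The only (minor) differences are that the paper cites \citet[Lemma 5]{wong2020lasso} for the moment bound while you derive it directly from the tail integral, and that you make explicit the truncation/monotone-convergence step and the fact that $\mathbb{E}x_i=0$ is not needed, both of which the paper leaves implicit.
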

Corollary \ref{cor:BNtail} applies Lemma \ref{lem:LinComb} to deduce
the tail bounds of the variables used in the main text. For example,
the stationary components in (\ref{eq:def-I0-mix}), where the linear
processes $\varepsilon_{jt}$ is an (infinite) linear combination
of the innovations $\eta_{jt}$ that appears in the Beveridge-Nelson
decomposition.
\begin{cor}
\label{cor:BNtail}Suppose that Assumptions \ref{assu:tail} and \ref{assu:alpha}
hold. Then there are absolute constants $C_{\eta}^{\prime},b_{\eta}^{\prime},\widetilde{C}_{\eta}$
and $\widetilde{b}_{\eta}$ such that 
\begin{eqnarray}
\Pr\left\{ |\varepsilon_{jt}|>\mu\right\}  & \leq & C_{\eta}^{\prime}\exp(-\mu/b_{\text{\ensuremath{\eta}}}^{\prime})\label{eq:epsBound}\\
\Pr\left\{ |\sum_{d=0}^{\infty}\widetilde{\psi}_{jd}\eta_{j,t-d}|>\mu\right\}  & \leq & \widetilde{C}_{\eta}\exp(-\mu/\widetilde{b}_{\eta})\label{eq:etaWoldSumBound}
\end{eqnarray}
for any $j\in\left[p+1\right]$, $t\in\mathbb{Z}$ and $\mu>0,$ where
$\widetilde{\psi}_{jd}=\sum_{\ell=d+1}^{\infty}\psi_{j\ell}.$ In
addition, if Assumption \ref{assu:covMat} holds and $v_{t}=(e_{t}^{\top},Z_{t}^{\top},u_{t})^{\top}$
is generated by (\ref{eq:def-I0-mix}), then there are absolute constants
$C_{v}$ and $b_{v}$ such that for any $\mu>0$
\begin{equation}
\sup_{j\in[p+1]}\Pr\left\{ |v_{jt}|>\mu\right\} \leq C_{v}\exp\left[-\left(\mu/b_{v}\right)\right].\label{eq:euBound}
\end{equation}
\end{cor}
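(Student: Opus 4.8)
The plan is to obtain all three bounds as instances of Lemma \ref{lem:LinComb}, wrapped in a simple observation that turns its conclusion into a clean sub-exponential bound whenever the coefficient vector has an absolutely bounded $L_1$-norm. Concretely, suppose $(x_i)$ are independent, mean zero, with a common tail $\max_i\Pr\{|x_i|>\mu\}\le C_x\exp(-\mu/b_x)$, let $K_x$ be the associated absolute constant from Lemma \ref{lem:LinComb}, and let $a$ be a coefficient vector with $\|a\|_1\le T$ for an absolute constant $T$. Then $\Pr\{|\sum_ia_ix_i|>\mu\}\le\Pr\{\sum_i|a_ix_i|>\mu\}\le\exp\big(-\frac{1}{\|a\|_\infty}(\mu/(2K_x\mathrm{e})-\|a\|_1)\big)$ by (\ref{eq:lincomb_p}); splitting on whether $\mu\ge4K_x\mathrm{e}\|a\|_1$ and using $\|a\|_\infty\le\|a\|_1\le T$ on the first piece and the trivial bound $\Pr\le1$ on the second, one gets $\Pr\{|\sum_ia_ix_i|>\mu\}\le\mathrm{e}\exp(-\mu/(4K_x\mathrm{e}T))$ for every $\mu>0$. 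The rest is three verifications that the coefficients in question have absolutely bounded $L_1$-norm, uniformly in $j$ (and $t$, $p$).

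For (\ref{eq:epsBound}) I would take $x_i=\eta_{j,t-i}$ --- independent, mean zero, and sub-exponential with the absolute constants $C_\eta,b_\eta$ of (\ref{eq:innovtail}) --- and $a_i=\psi_{j,i}$, so that $|\varepsilon_{jt}|\le\sum_{d\ge0}|\psi_{jd}\eta_{j,t-d}|$. Assumption \ref{assu:alpha} gives $\|a\|_1=\sum_{d\ge0}|\psi_{jd}|\le C_\psi\sum_{d\ge0}\exp(-c_\psi d^r)=:S_\psi$, an absolute constant independent of $j$, and the wrapper yields (\ref{eq:epsBound}) with, say, $C_\eta'=\mathrm{e}$ and $b_\eta'=4K_\eta\mathrm{e}S_\psi$.

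For (\ref{eq:etaWoldSumBound}) I would run the identical argument with the same $(x_i)$ but $a_i=\tilde\psi_{j,i}=\sum_{\ell=i+1}^\infty\psi_{j\ell}$. The only extra step is checking that this tail-coefficient sequence is still $L_1$-summable by an absolute constant: $\sum_{d\ge0}|\tilde\psi_{jd}|\le\sum_{d\ge0}\sum_{\ell>d}|\psi_{j\ell}|=\sum_{\ell\ge1}\ell\,|\psi_{j\ell}|\le C_\psi\sum_{\ell\ge1}\ell\exp(-c_\psi\ell^r)<\infty$, where the last series converges because the exponential decay dominates the linear factor. This bounds $\|a\|_1$ by an absolute constant $\tilde S_\psi$, and the wrapper gives (\ref{eq:etaWoldSumBound}) (this $\tilde\eta_{jt}:=\sum_d\tilde\psi_{jd}\eta_{j,t-d}$ is the Beveridge--Nelson remainder that recurs in the proof of Proposition \ref{prop:UnitRE}).

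For (\ref{eq:euBound}) the key remark is that, since $(\eta_{jt})$ is i.i.d.\ over the cross section, the variables $\varepsilon_{1t},\ldots,\varepsilon_{p+1,t}$ are mutually independent (each $\varepsilon_{\ell t}$ depends only on $\{\eta_{\ell s}:s\le t\}$), each is mean zero, and by (\ref{eq:epsBound}) each is sub-exponential with the common absolute constants $C_\eta',b_\eta'$; let $K_\varepsilon$ be the corresponding Lemma \ref{lem:LinComb} constant. Applying the wrapper with $x_\ell=\varepsilon_{\ell t}$ and $a_\ell=\Phi_{j\ell}$ (zero for $\ell>p+1$), and using $\|a\|_1=\sum_{\ell}|\Phi_{j\ell}|\le C_L$ from Assumption \ref{assu:covMat}(b), yields $\Pr\{|v_{jt}|>\mu\}\le\mathrm{e}\exp(-\mu/(4K_\varepsilon\mathrm{e}C_L))$, i.e.\ (\ref{eq:euBound}) with $C_v=\mathrm{e}$, $b_v=4K_\varepsilon\mathrm{e}C_L$, uniformly in $j$ and $t$. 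None of this is genuinely hard once Lemma \ref{lem:LinComb} is in hand; the only places that need a moment's care are the $L_1$-summability of $(\tilde\psi_{jd})_d$ and the fact that, because each coefficient vector has absolutely bounded $L_1$-norm, the small-$\mu$ split keeps the final constants absolute and uniform in $j$, $t$ and $p$.
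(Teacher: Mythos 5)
Your proposal is correct and follows essentially the same route as the paper: all three bounds are reduced to Lemma \ref{lem:LinComb} together with a uniform (in $j$) absolute bound on the relevant $L_{1}$-norms, namely $\sum_{d}|\psi_{jd}|$, $\sum_{d}|\widetilde{\psi}_{jd}|$ (which you bound by Fubini as $\sum_{\ell}\ell|\psi_{j\ell}|$, where the paper instead iterates its integral-comparison inequality), and $\sum_{\ell}|\Phi_{j\ell}|\leq C_{L}$, using independence of $(\varepsilon_{\ell t})_{\ell}$ across the cross section for the last part. Your explicit small-$\mu$ ``wrapper'' just makes precise the absorption of the $\|a\|_{1}/\|a\|_{\infty}$ term into the absolute constants, which the paper handles implicitly, so there is no substantive difference.
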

Finally, if the innovations are non-Gaussian, Lemma \ref{lem:GaussianApprox}
provides Gaussian approximation when $n$ is large. 
\begin{lem}
\label{lem:GaussianApprox}Under the Assumptions in Proposition \ref{prop:UnitRE}
and Assumption \ref{assu:asym_n}, there exists standard Brownian
motions $\{\mathcal{B}_{j}(t)\}_{j\in[p+1]}\}$ with independent increment
$\mathcal{B}_{j}(t)-\mathcal{B}_{j}(s)\sim\mathcal{N}\left(t-s\right)$
for $t\geq s\geq0$ such that 

\[
\sup_{j\in[p+1],t\in[n]}\left|\dfrac{1}{\sqrt{n}}\left(\sum_{s=0}^{t-1}\varepsilon_{js}-\psi_{j}(1)\mathcal{B}_{j}\left(t\right))\right)\right|\stackrel{\mathrm{p}}{\preccurlyeq}\dfrac{\log p}{\sqrt{n}}.
\]
\end{lem}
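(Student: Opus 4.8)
The plan is to combine the Beveridge--Nelson (BN) decomposition of the linear process with a coordinate-wise Koml\'os--Major--Tusn\'ady (KMT) strong approximation, and then to upgrade the single-coordinate bound to one that is uniform over all $p+1$ coordinates through a union bound that exploits the \emph{exponential} tail of the KMT coupling error together with the polynomial growth $p\ge n^{\nu_1}$ (so that $\log n\asymp\log p$).

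First I would apply the BN decomposition to each coordinate $j$. Writing $\widetilde\psi_{jd}=\sum_{\ell>d}\psi_{j\ell}$ (the BN coefficients appearing in Corollary \ref{cor:BNtail}) and $\widetilde\eta_{jt}=\sum_{d\ge0}\widetilde\psi_{jd}\eta_{j,t-d}$, the identity $\psi_j(z)=\psi_j(1)-(1-z)\widetilde\psi_j(z)$ makes the partial sum telescope,
\[
\sum_{s=0}^{t-1}\varepsilon_{js}=\psi_j(1)\sum_{s=0}^{t-1}\eta_{js}+\widetilde\eta_{j,-1}-\widetilde\eta_{j,t-1}.
\]
By Assumption \ref{assu:alpha}, $|\psi_j(1)|\le 1+C_\psi\sum_{d\ge1}e^{-c_\psi d^{r}}$ is bounded by an absolute constant uniformly in $j$, and Corollary \ref{cor:BNtail}, inequality (\ref{eq:etaWoldSumBound}), gives a sub-exponential tail for $\widetilde\eta_{jt}$ with constants uniform in $(j,t)$. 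A union bound over $j\in[p+1]$ and $-1\le t\le n-1$ --- there are $O(np)$ such pairs, and $np\le p^{1+1/\nu_1}$ under $p\ge n^{\nu_1}$ --- then yields $\sup_{j\in[p+1],\,t\in[n]}|\widetilde\eta_{j,-1}-\widetilde\eta_{j,t-1}|\stackrel{\mathrm{p}}{\preccurlyeq}\log p$, which after division by $\sqrt n$ is already of the target order.

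Next I would handle the leading term by the KMT coupling. For each fixed $j$ the innovations $(\eta_{js})_{s\ge0}$ are i.i.d.\ with mean zero, unit variance and the common law $f$, which satisfies both the sub-exponential tail (\ref{eq:innovtail}) and the smoothness (\ref{eq:density}) of Assumption \ref{assu:tail}. The KMT strong approximation \citep{komlos1976approximation} then produces, on a suitably enlarged probability space, a standard Brownian motion $\mathcal{B}_j$ and absolute constants $C_1,K_1,\lambda_1$ --- depending only on $f$, hence not on $j$, $n$ or $p$ --- such that $\Pr\{\max_{1\le t\le n}|\sum_{s=0}^{t-1}\eta_{js}-\mathcal{B}_j(t)|>C_1\log n+x\}\le K_1 e^{-\lambda_1 x}$ for every $x>0$. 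Because the columns $(\eta_{j\cdot})_{j\in[p+1]}$ are mutually independent, the couplings can be built with independent auxiliary randomizations, so the $\mathcal{B}_j$ are mutually independent standard Brownian motions. Choosing $x$ a sufficiently large multiple of $\log p$ (admissible since $\lambda_1$ is absolute and $\log n\le\nu_1^{-1}\log p$) and unioning over $j$ gives $\sup_{j\in[p+1],\,t\in[n]}|\sum_{s=0}^{t-1}\eta_{js}-\mathcal{B}_j(t)|\stackrel{\mathrm{p}}{\preccurlyeq}\log p$.

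Assembling the two steps via
\[
\sum_{s=0}^{t-1}\varepsilon_{js}-\psi_j(1)\mathcal{B}_j(t)=\psi_j(1)\Big(\sum_{s=0}^{t-1}\eta_{js}-\mathcal{B}_j(t)\Big)+\big(\widetilde\eta_{j,-1}-\widetilde\eta_{j,t-1}\big)
\]
and using the uniform bound on $|\psi_j(1)|$, I obtain $\sup_{j,t}|\sum_{s=0}^{t-1}\varepsilon_{js}-\psi_j(1)\mathcal{B}_j(t)|\stackrel{\mathrm{p}}{\preccurlyeq}\log p$; dividing by $\sqrt n$ gives the claim. The main obstacle is the uniformity over $j$: the argument needs a version of the KMT coupling with \emph{explicit exponential} control on the approximation error and with constants that do not depend on the coordinate --- this is exactly what the common, uniformly sub-exponential (and smooth) law $f$ of Assumption \ref{assu:tail} supplies --- and it is essential that the coupling error is only of size $\log n\asymp\log p$, so that the $(p+1)$-fold union bound remains negligible. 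The BN remainder term, by contrast, is routine once Corollary \ref{cor:BNtail} is in hand.
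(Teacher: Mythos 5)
Your proposal is correct and follows essentially the same route as the paper's own proof: a Beveridge--Nelson decomposition whose remainder is handled by the sub-exponential tail bound of Corollary \ref{cor:BNtail} plus a union bound over the $O(np)$ pairs, and a coordinate-wise Koml\'os--Major--Tusn\'ady coupling with exponential error control, made uniform over $j\in[p+1]$ by a union bound with $\tau$ a large multiple of $\log p$ and $\log n\preccurlyeq\log p$. The only cosmetic differences are your explicit remark on building the couplings independently across $j$ and your appeal to the density condition (\ref{eq:density}), which the paper does not need since the finite moment generating function implied by (\ref{eq:innovtail}) already suffices for KMT.
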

The basic idea behind the proof of Lemma \ref{lem:GaussianApprox}
is that the innovation's temporal dependence can be handled by the
Beveridge-Nelson decomposition involving the long-run effect $\psi_{j}(1)$.
As a result, there exists a Gaussian process that behaves like the
underlying independent shocks $\frac{1}{\sqrt{n}}\sum_{s=0}^{t-1}\eta_{js}$,
as indicated by the Koml\'{o}s-Major-Tusn\'{a}dy coupling. The non-asymptotic
Koml\'{o}s-Major-Tusn\'{a}dy inequality allows us to extend by the
union bound for a uniform convergence over $p$. 

\subsection{Preparatory Propositions for DB and RE \label{subsec:Prop} }

The above lemmas have prepared for propositions that lead to the key
results on DB and RE.

\subsubsection{DB of Unit Root Components}
\begin{prop}
\label{prop:DB-All} Under Assumptions \ref{assu:tail}-\ref{assu:covMat}
and \ref{assu:asym_n}(a), there exits some absolute constant $C_{{\rm DB}}$
such that 
\begin{equation}
\left\{ \max_{j\in[p_{x}],k\in[p_{z}]}\dfrac{1}{n^{3/2}}\left|\sum_{t=1}^{n}X_{j,t-1}Z_{kt}\right|\vee\max_{j\in[p_{x}]}\dfrac{1}{n^{3/2}}\left|\sum_{t=1}^{n}X_{j,t-1}u_{t}\right|\right\} \stackrel{\mathrm{p}}{\preccurlyeq}\dfrac{(\log p)^{1+\frac{1}{2r}}}{\sqrt{n}}.\label{eq:DB-All-origin}
\end{equation}
\end{prop}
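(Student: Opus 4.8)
The plan is to control the two maxima in one stroke. Since the $(j,k)$-maximum ranges over at most $p^2$ pairs and the $j$-maximum over at most $p$ indices, while $\log(p^2)\asymp\log p$, it suffices to obtain a tail bound for a generic inner product $\sum_{t=1}^{n}X_{j,t-1}\xi_{t}$, with $\xi_{t}$ standing for either $Z_{kt}$ or $u_{t}$, that is summable after a union bound over $p^{2}$ indices, paying only a $\sqrt{\log p}$ (resp.\ $\log p$) factor for uniformity. First I would apply a Beveridge--Nelson decomposition to the stationary multiplier: writing $\xi_{t}=v_{m,t}=\sum_{\ell}\Phi_{m\ell}\varepsilon_{\ell t}$ and $\varepsilon_{\ell t}=\psi_{\ell}(1)\eta_{\ell t}+\tilde{\eta}_{\ell,t-1}-\tilde{\eta}_{\ell t}$ yields $\xi_{t}=\zeta_{t}+(\tilde{w}_{t-1}-\tilde{w}_{t})$, where $\zeta_{t}=\sum_{\ell}\Phi_{m\ell}\psi_{\ell}(1)\eta_{\ell t}$ is i.i.d.\ over $t$, mean zero, and sub-exponential (Lemma~\ref{lem:LinComb}, using $\sum_{\ell}|\Phi_{m\ell}\psi_{\ell}(1)|\leq C_{L}\sup_{\ell}|\psi_{\ell}(1)|<\infty$ under Assumptions~\ref{assu:alpha}--\ref{assu:covMat}), and $\tilde{w}_{t}=\sum_{\ell}\Phi_{m\ell}\tilde{\eta}_{\ell t}$ is strictly stationary, sub-exponential (Corollary~\ref{cor:BNtail}) and geometrically mixing with rate $r$ (Lemma~\ref{lem:mixing}). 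Abel summation then produces the three-term decomposition
\[
\sum_{t=1}^{n}X_{j,t-1}\xi_{t}=\underbrace{\sum_{t=1}^{n}X_{j,t-1}\zeta_{t}}_{\mathrm{I}_{j}}+\underbrace{X_{j,0}\tilde{w}_{0}-X_{j,n-1}\tilde{w}_{n}}_{\mathrm{II}_{j}}+\underbrace{\sum_{t=1}^{n-1}e_{jt}\tilde{w}_{t}}_{\mathrm{III}_{j}}.
\]

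The dominant piece is $\mathrm{I}_{j}$. Because $X_{j,t-1}$ is measurable with respect to $\mathcal{F}_{t-1}:=\sigma(\eta_{\ell s}:s\leq t-1,\ \ell\in[p+1])$ while $\zeta_{t}$ is independent of $\mathcal{F}_{t-1}$ with zero mean, $\{X_{j,t-1}\zeta_{t}\}_{t}$ is a martingale difference sequence for $(\mathcal{F}_{t})$. On the event $\mathcal{E}_{n}=\{\max_{j,t}|X_{j,t-1}|\leq C\sqrt{n}\,(\log p)^{(1+1/r)/2}\}$, which holds w.p.a.1.\ by Lemma~\ref{lem:BernsteinSum} applied to the mixing sub-exponential sequence $e_{jt}$ together with $\max_{j}|X_{j0}|=O_{p}(1)$, the conditional variance satisfies $\sum_{t}\mathbb{E}[\zeta_{t}^{2}\mid\mathcal{F}_{t-1}]X_{j,t-1}^{2}=\sigma_{\zeta}^{2}\sum_{t}X_{j,t-1}^{2}\leq\sigma_{\zeta}^{2}\,n\max_{j,t}X_{j,t-1}^{2}\preccurlyeq n^{2}(\log p)^{1+1/r}$, and the conditional increments are sub-exponential with scale $\preccurlyeq\sqrt{n}\,(\log p)^{(1+1/r)/2}$. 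A Bernstein--Freedman inequality for martingale differences with sub-exponential conditional increments then gives, for an absolute $c>0$,
\[
\Pr\bigl\{|\mathrm{I}_{j}|>\tau,\ \mathcal{E}_{n}\bigr\}\leq2\exp\!\Bigl(-\tfrac{c\,\tau^{2}}{n^{2}(\log p)^{1+1/r}}\Bigr)+2\exp\!\Bigl(-\tfrac{c\,\tau}{\sqrt{n}\,(\log p)^{(1+1/r)/2}\log n}\Bigr),
\]
so choosing $\tau\asymp n(\log p)^{1+1/(2r)}$ (the first regime binds for $n$ large) and a union bound over the at most $p^{2}$ index pairs yields $\max_{j,k}|\mathrm{I}_{j}|\preccurlyeq_{p}n(\log p)^{1+1/(2r)}$; division by $n^{3/2}$ gives exactly the target rate $(\log p)^{1+1/(2r)}/\sqrt{n}$.

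It remains to absorb $\mathrm{II}_{j}$ and $\mathrm{III}_{j}$. For $\mathrm{II}_{j}$, combine $\max_{j}|X_{j,n-1}|\preccurlyeq_{p}\sqrt{n}\,(\log p)^{(1+1/r)/2}$ with $\max_{k}|\tilde{w}_{n}^{(k)}|\preccurlyeq_{p}\log p$ (a maximum of $\leq p$ sub-exponential variables at one time point) to get $\max_{j,k}|\mathrm{II}_{j}|\preccurlyeq_{p}\sqrt{n}\,(\log p)^{3/2+1/(2r)}$, which after division by $n^{3/2}$ is $o\bigl((\log p)^{1+1/(2r)}/\sqrt{n}\bigr)$ because $(\log p)^{1/2}/\sqrt{n}\to0$ (under Assumption~\ref{assu:asym_n}(a), $\log p\asymp\log n$). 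For $\mathrm{III}_{j}$, the summands $e_{jt}\tilde{w}_{t}^{(k)}$ form a strictly stationary, geometrically mixing array with tails of order $\exp(-c\sqrt{\mu})$; truncating at level $(\log(np))^{2}$, applying Lemma~\ref{lem:BernsteinSum} to the bounded truncated part (the discarded part vanishes uniformly by a union bound), and bounding $|\mathbb{E}[e_{jt}\tilde{w}_{t}^{(k)}]|\leq C$ via Assumption~\ref{assu:covMat}, yields $\max_{j,k}|\mathrm{III}_{j}|\preccurlyeq_{p}n$, hence $n^{-3/2}\max_{j,k}|\mathrm{III}_{j}|=o\bigl((\log p)^{1+1/(2r)}/\sqrt{n}\bigr)$. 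Summing the three contributions proves \eqref{eq:DB-All-origin}.

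I expect the martingale step for $\mathrm{I}_{j}$ to be the main obstacle: it requires a Bernstein/Freedman-type inequality that simultaneously accommodates a random (random-walk–valued) weight and only sub-exponential conditional increments, together with a sharp high-probability envelope for $\max_{j,t}|X_{j,t-1}|$. It is precisely this envelope — which encodes the weak temporal dependence of the innovations through the mixing rate $r$ and is supplied by Lemma~\ref{lem:BernsteinSum} (equivalently, by the Beveridge--Nelson leading term) — that pins down the exponent $1+\tfrac{1}{2r}$: the $\sqrt{\log p}$ comes from uniformity over the index set and the extra $(\log p)^{1/(2r)}$ from the envelope feeding into the conditional variance.
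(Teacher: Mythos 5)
Your argument is correct in substance but takes a genuinely different route from the paper's. The paper splits $\sum_{t}X_{j,t-1}Z_{kt}$ by time-blocking with a gap $G\asymp(\log p)^{1/r}$ into three pieces $T_{1}+T_{2}+T_{3}$, controls $T_{1},T_{2}$ directly, and handles the dominant lagged piece $T_{3}$ through a conditional-MGF recursion in which the $\rho$-mixing coefficient at lag $G$ substitutes for exact independence; the blocking cost $\sqrt{G}$ is exactly what produces the exponent $1+\tfrac{1}{2r}$ there. You instead apply Beveridge--Nelson to the stationary multiplier, so the leading term $\sum_{t}X_{j,t-1}\zeta_{t}$ is an exact martingale difference sequence ($\zeta_{t}$ involves only the time-$t$ innovations $\eta_{\ell t}$, while $X_{j,t-1}\in\mathcal{F}_{t-1}$), treated by a Freedman-type bound with predictably truncated weights, and all temporal dependence is pushed into the telescoped remainder $\mathrm{II}_{j}+\mathrm{III}_{j}$, where $\mathrm{III}_{j}\stackrel{\mathrm{p}}{\preccurlyeq}n$ is exactly the kind of stationary-cross-product bound the paper proves in Proposition \ref{prop:DB-Ze}. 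Your route buys a cleaner main step (no blocking, exact conditional independence) and in fact a slightly stronger conclusion: Lemma \ref{lem:BernsteinSum} gives the envelope $\max_{j,t}|X_{j,t-1}|\stackrel{\mathrm{p}}{\preccurlyeq}\sqrt{n\log p}$, not $\sqrt{n}\,(\log p)^{(1+1/r)/2}$, so running your own computation with the sharp envelope yields $\max_{j,k}|\mathrm{I}_{j}|\stackrel{\mathrm{p}}{\preccurlyeq}n\log p$ with no $(\log p)^{1/(2r)}$ factor at all. Your closing diagnosis that the envelope ``pins down'' the exponent $1+\tfrac{1}{2r}$ is therefore a misattribution: in the paper that exponent comes from the blocking gap, and in your argument it appears only because you inflated the envelope (harmless, since the larger threshold still holds w.p.a.1 and merely weakens your bound to exactly the stated rate).

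Two loose ends should be tightened. First, the Freedman/Bernstein step must be run with the $\mathcal{F}_{t-1}$-measurable indicator $\boldsymbol{1}\{|X_{j,t-1}|\leq B\}$ inserted into each increment before taking conditional MGFs, and only then intersected with $\mathcal{E}_{n}$; intersecting with the global event alone does not preserve the martingale structure. This is the same device the paper uses for $T_{3}$ via the events $\mathcal{H}_{t-G}$, so it is standard but needs to be stated. Second, $e_{jt}\tilde{w}_{t}$ has tails of order $\exp(-c\sqrt{\mu})$ and $\tilde{w}_{t}$ is a linear process in the $\eta$'s rather than in $\varepsilon_{t}$, so Lemmas \ref{lem:mixing} and \ref{lem:BernsteinSum} do not apply verbatim: either take the truncation level $(C\log(np))^{2}$ with $C$ large enough that the union bound over the $np^{2}$ discarded events vanishes, or invoke directly the Bernstein inequality for mixing sub-Weibull sequences as the paper does for its $T_{22}$ term, and note that the geometric-mixing proof of Lemma \ref{lem:mixing} extends verbatim to the remainder process $\tilde{\eta}_{\ell t}$ since its coefficients $\tilde{\psi}_{\ell d}$ also decay at rate $\exp(-cd^{r})$.
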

\begin{proof}[Proof of Proposition \ref{prop:DB-All}]
 Since $u_{t}$ and $Z_{kt}$ are both stationary and geometrically
$\alpha$-mixing, it suffices to show the order of the first term
$\max_{j\in[p_{x}],k\in[p_{z}]}n^{-3/2}\left|\sum_{t=1}^{n}X_{j,t-1}Z_{kt}\right|$,
and then the same order applies to the cross product involving $u_{t}$.
Let $G=\lfloor(2c_{\alpha}^{-1}\log(np))^{1/r}\rfloor$, and the triangular
inequality gives
\begin{eqnarray*}
 &  & \max_{j\in[p_{x}],k\in[p_{z}]}\left|\sum_{t=1}^{n}X_{j,t-1}Z_{kt}\right|\\
 & \leq & \max_{j\in[p_{x}],k\in[p_{z}]}\left|\sum_{t=1}^{G}X_{j,t-1}Z_{kt}\right|+\max_{j\in[p_{x}],k\in[p_{z}]}\left|\sum_{t=G+1}^{n}Z_{kt}\sum_{r=t-G+1}^{t-1}e_{jr}\right|+\max_{j\in[p_{x}],k\in[p_{z}]}\left|\sum_{t=G+1}^{n}X_{j,t-G}Z_{kt}\right|\\
 & =: & T_{1}+T_{2}+T_{3}.
\end{eqnarray*}
We will analyze one by one the three terms on the right-hand side. 

\textbf{Bound of $T_{1}.$ }Repeatedly applying Lemma \ref{lem:BernsteinSum}
yields 
\begin{equation}
\max_{j\in[p_{x}],t\in[n]}|X_{j,t-1}|=\max_{j\in[p_{x}],t\in[n]}|\sum_{s=0}^{t-1}e_{j,s}|\stackrel{\mathrm{p}}{\preccurlyeq}\sqrt{n\log p}\label{eq:I1-bound}
\end{equation}
 and $\max_{k\in[p_{z}]}|\sum_{t=1}^{G}(|Z_{kt}|-\mathbb{E}|Z_{kt}|)|\stackrel{\mathrm{p}}{\preccurlyeq}\sqrt{n\log p}.$
We deduce by the triangular inequality 
\begin{align}
T_{1} & \leq\max_{j\in[p_{x}]}|X_{j,t-1}|\cdot\max_{k\in[p_{z}]}\left|\sum_{t=1}^{G}Z_{kt}\right|\nonumber \\
 & \stackrel{\mathrm{p}}{\preccurlyeq}\sqrt{n\log p}\left[\max_{k\in[p_{z}]}\left|\sum_{t=1}^{G}(|Z_{kt}|-\mathbb{E}|Z_{kt}|)\right|+\max_{k\in[p_{z}]}\sum_{t=1}^{G}\mathbb{E}|Z_{kt}|\right]\nonumber \\
 & \stackrel{\mathrm{p}}{\preccurlyeq}\sqrt{n\log p}\left(\sqrt{n\log p}+G\right)=O(n\log p).\label{eq:T1bound}
\end{align}

\textbf{Bound of $T_{2}.$ }For any $d=1,2,\cdots,G-1,$ divide $\{1,2,\cdots,n-G\}$
into $d+1$ groups and assume $A_{d}=(n-G)/(d+1)$ is an integer for
simplicity of the notations. The $a$th group is given as $\mathcal{I}_{a}=\{a,a+(d+1),\cdots,a+(A_{d}-1)(d+1)\}$
for $a=1,2,\cdots,d+1.$ The triangular inequality gives 
\begin{align*}
T_{2} & \leq\max_{j\in[p_{x}],k\in[p_{z}]}\sum_{d=1}^{G-1}\left|\sum_{t=G+1}^{n}e_{j,t-d}Z_{kt}\right|\\
 & \leq\max_{j\in[p_{x}],k\in[p_{z}]}\sum_{d=1}^{G-1}\left|\sum_{t=G+1}^{n}\mathbb{E}(e_{j,t-d}Z_{kt})\right|+\max_{j\in[p_{x}],k\in[p_{z}]}\sum_{d=1}^{G-1}\left|\sum_{t=G+1}^{n}\left(e_{j,t-d}Z_{kt}-\mathbb{E}(e_{j,t-d}Z_{kt})\right)\right|\\
 & =:T_{21}+T_{22}.
\end{align*}
For $T_{21}$, by $\mathbb{E}(e_{j,t-d})=\mathbb{E}(Z_{kt})=0$ we
have 
\begin{align*}
T_{21} & =\max_{j\in[p_{x}],k\in[p_{z}]}\sum_{d=1}^{G-1}\sum_{t=G+1}^{n}\left|\mathbb{E}(e_{j,t-d}Z_{kt})-\mathbb{E}(e_{j,t-d})\mathbb{E}(Z_{kt})\right|\\
 & \leq\max_{j\in[p_{x}],k\in[p_{z}]}\sum_{d=1}^{G-1}\sum_{t=G+1}^{n}\rho(\varepsilon,d)\sqrt{\mathbb{E}(e_{j,t-d}^{2})\mathbb{E}(Z_{kt}^{2})}\\
 & =O\left(n\sum_{d=1}^{G-1}\rho(\varepsilon,d)\right)=O(n)
\end{align*}
where the inequality follows by the definition of $\rho$-mixing coefficient
in (\ref{eq:def_rho}), and the order follows as $\mathbb{E}(e_{j,t-d}^{2})$
and $\mathbb{E}(Z_{kt}^{2})$ are uniformly bounded for all $j$ and
$k$, with $\sum_{d=1}^{\infty}\rho(\varepsilon,d)$ being convergent
in view of (\ref{eq:mixingBound}). For $T_{22}$ we have 
\begin{align*}
T_{22} & \leq\max_{j\in[p_{x}],k\in[p_{z}]}\sum_{d=1}^{G-1}\sum_{a=1}^{d+1}\left|\sum_{t\in\mathcal{I}_{a}}\left(e_{j,t-d}Z_{kt}-\mathbb{E}(e_{j,t-d}Z_{kt})\right)\right|\\
 & \leq G^{2}\max_{\substack{\substack{j\in[p_{x}],k\in[p_{z}]\\
a\in[d+1],d\in[G-1]
}
}
}\left|\sum_{t\in\mathcal{I}_{a}}\left(e_{j,t-d}Z_{kt}-\mathbb{E}(e_{j,t-d}Z_{kt})\right)\right|.
\end{align*}
Note that for all $t\in\mathcal{I}_{a},$ the cross term $e_{j,t-d}Z_{kt}\in\sigma((\varepsilon_{s})_{t-d\leq s\leq t})$
and $|t-s|=d+1>d$ for $t,s\in\mathcal{I}_{a}$ and $t\neq s.$ Thus
$(e_{j,t-d}Z_{kt})_{t\in\mathcal{I}_{a}}$ is $\alpha$-mixing with
its coefficient bounded by (\ref{eq:mixingBound}). Moreover, for
any $\mu>0$ we can bound
\begin{align*}
\Pr\left\{ |e_{j,t-d}Z_{kt}|>\mu\right\}  & \leq\Pr\left\{ |e_{j,t-d}|>\sqrt{\mu}\right\} +\Pr\left\{ |Z_{kt}|>\sqrt{\mu}\right\} \\
 & \leq2C_{v}\exp\left[-\sqrt{\mu}/b_{v}\right]=2C_{v}\exp\left[-\left(\mu/b_{v}^{2}\right)^{1/2}\right]
\end{align*}
where the second inequality applies (\ref{eq:euBound}) in Corollary
\ref{cor:BNtail}. 

Let $r^{**}=\left(2+\frac{1}{r}\right)^{-1}<1$ and $|\mathcal{I}_{a}|\leq n$.
\citet[Theorem 1]{merlevede2011bernstein} yields 
\begin{align}
 & \Pr\left\{ \max_{\substack{\substack{j\in[p_{x}],k\in[p_{z}]\\
a\in[d+1],d\in[G-1]
}
}
}|\sum_{t\in\mathcal{I}_{a}}\left(e_{j,t-d}Z_{kt}-\mathbb{E}(e_{j,t-d}Z_{kt})\right)|>\mu\right\} \nonumber \\
\leq & \sum_{j=1}^{p_{x}}\sum_{k=1}^{p_{z}}\sum_{d=1}^{G-1}\sum_{a=1}^{d+1}\Pr\left\{ \sum_{t\in\mathcal{I}_{a}}\left(e_{j,t-d}Z_{kt}-\mathbb{E}(e_{j,t-d}Z_{kt})\right)|>\mu\right\} \nonumber \\
\leq & G^{2}p^{2}\left(n\exp\left(-\dfrac{\mu^{r^{**}}}{C_{1}}\right)+\exp\left(-\dfrac{\mu^{2}}{C_{2}(1+nV)}\right)+\exp\left(-\dfrac{\mu^{2}}{C_{3}n_{x}}\exp\left(\dfrac{\mu^{r^{**}(1-r^{**})}}{C_{4}(\log\mu)^{r^{**}}}\right)\right)\right)\label{eq:probupper}
\end{align}
 for any $\mu>0$, where $C_{1}$, $C_{2}$ and $C_{3}$ are absolute
constants. 

Given $p=O(n^{\nu_{2}})$, we have $(\log\left(p^{2}G^{2}\right))^{2/r^{**}-1}=o(n).$
Specify $\mu=C\sqrt{n\log\left(p^{2}G^{2}\right)}$ with $C^{2}>2(C_{2}+1)V+2C_{3}.$
Similar to the proof of Lemma \ref{lem:BernsteinSum}, (\ref{eq:probupper})
approaches zero as $n\to\infty$. Collecting the stochastic order
of $T_{12}$ and $T_{22}$, we obtain 
\begin{align}
T_{2} & \leq G^{2}\max_{\substack{\substack{j\in[p_{x}],k\in[p_{z}]\\
a\in[d+1],d\in[G-1]
}
}
}|\sum_{t\in\mathcal{I}_{a}}\left(e_{j,t-d}Z_{kt}-\mathbb{E}(e_{j,t-d}Z_{kt})\right)|+O(n)\nonumber \\
 & \stackrel{\mathrm{p}}{\preccurlyeq}G^{2}\sqrt{n\log\left(p^{2}G^{2}\right)}+n=O(n\log p).\label{eq:T2bound}
\end{align}
 \textbf{Bound of $T_{3}.$} Divide $\{1,2,\cdots,n\}$ samples into
$G$ groups. There exists some integers $A\geq1$ and $0\leq B\leq G$
such that $n=(A-1)G+B$. The $g$th group is given by $\mathcal{I}_{g}=\{g,g+G,\cdots,g+(A-1)G\}$
with $|\mathcal{I}_{g}|=A$ for $g\leq B$, and $\mathcal{I}_{g}=\{g,g+G,\cdots,g+(A-2)G\}$
with $|\mathcal{I}_{g}|=A-1$ for $g>B$. We can express 
\[
T_{3}=\sum_{g=1}^{G}\max_{j\in[p_{x}],k\in[p_{z}]}\left|\sum_{t\in\mathcal{I}_{g}}X_{j,t-G}Z_{kt}\right|.
\]
Let $\mathcal{F}_{a}^{b}$ be the $\sigma$-field generated by $(\varepsilon_{t})_{a\leq t\leq b}.$
For simplicity, we use $\mathbb{E}_{t}\left[\cdot\right]$ to denote
conditional expectation $\mathbb{E}\left[\cdot|\mathcal{F}_{-\infty}^{t}\right]$. 

The $\rho$-mixing coefficient defined in (\ref{eq:def_rho}) can
be equivalently written as 
\begin{align}
\rho(\mathcal{A},\mathcal{B}) & =\sup_{\substack{X\in\mathcal{A},\:\mathbb{E}X^{2}<\infty}
}\sqrt{\mathbb{E}\left(\ensuremath{\mathbb{E}(X|\mathcal{B})-\mathbb{E}X}\right)^{2}\big/\mathbb{E}X^{2}}.\label{eq:rho_cond}
\end{align}
For any $t\geq A+1$ and $d\in\mathbb{N}$ we have for any $k\in[p_{z}]$:
\[
\mathbb{E}\left[\left(\mathbb{E}_{t-G}(Z_{kt}^{d})-\mathbb{E}(Z_{kt}^{d})\right)^{2}\right]\big/\mathbb{E}|Z_{kt}^{2d}|\leq\rho^{2}(\varepsilon,G)
\]
by the $\rho$-mixing coefficient defined as (\ref{eq:rho_cond}).
Define $\mathcal{X}_{t}=\{\max_{j\in[p_{x}]}|X_{jt}|\leq C_{X}\sqrt{n\log p}\}$
for some large enough constant $C_{X}>0$, and further define
\[
\mathcal{E}_{t-G}=\left\{ \max_{k\in[p_{z}]}\left\{ \left|\mathbb{E}_{t-G}(Z_{kt}^{d})-\mathbb{E}(Z_{kt}^{d})\right|\big/\sqrt{\mathbb{E}(Z_{kt}^{2d})}\right\} \leq d\sqrt{\rho(\varepsilon,G)}\text{ for all }d>0\right\} ,
\]
and $\mathcal{H}_{t}=\mathcal{X}_{t}\cap\mathcal{E}_{t}$. The fact
${\bf 1}(\mathcal{H}_{t-G})\in\mathcal{F}_{-\infty}^{t-G}$ implies
\begin{align}
 & \mathbb{E}_{t-G}\left(\exp\left[\text{\ensuremath{\tau}}X_{j,t-G}Z_{kt}\right]{\bf 1}(\mathcal{H}_{t-G})\right)\nonumber \\
= & {\bf 1}(\mathcal{H}_{t-G})\mathbb{E}_{t-G}\left(\exp\left[\text{\ensuremath{\tau}}X_{j,t-G}Z_{kt}\right]\right)\nonumber \\
= & {\bf 1}(\mathcal{H}_{t-G})\left(1+\tau X_{j,t-G}\mathbb{E}_{t-G}(Z_{kt})+\sum_{d=2}^{\infty}\dfrac{|\tau X_{t-G}|^{d}}{d!}\mathbb{E}_{t-G}(|Z_{kt}^{d}|)\right)\nonumber \\
\leq & {\bf 1}(\mathcal{H}_{t-G})\left(1+\tau|X_{j,t-G}|\sqrt{\ensuremath{\rho(\varepsilon,G)}\mathbb{E}(Z_{kt}^{2})}+\sum_{d=2}^{\infty}\dfrac{|\tau X_{t-G}|^{d}}{d!}\mathbb{E}_{t-G}\left[|Z_{kt}^{d}|\right]\right).\label{eq:mgfX_0}
\end{align}
By\textbf{ }\citet[Lemma 5]{wong2020lasso} there exists some $K_{z}>0$
such that
\begin{equation}
\max_{k\in[p_{z}]}\mathbb{E}(|Z_{kt}|^{a})\leq K_{z}^{a}a^{a}\leq K_{z}^{a}a^{a}\label{eq:EZbound}
\end{equation}
 for any $a\geq1$. Under the event $\mathcal{E}_{t}$, we have 
\[
|\mathbb{E}_{t-G}(Z_{kt})|=|\mathbb{E}_{t-G}(Z_{kt})-\mathbb{E}(Z_{kt})|\leq\sqrt{\ensuremath{\rho(\varepsilon,G)}\mathbb{E}(Z_{kt}^{2})}
\]
for any $k\in[p_{z}]$, by specifying $d=1$ and using $\mathbb{E}(Z_{kt})=0$.
Thus for any $j\in[p]$, $t>G$ and 
\begin{equation}
\ensuremath{\tau}\in\left(0,\ \big(C_{X}^{2}n\log p\text{\ensuremath{\max_{k\in[p_{z}]}}}\mathbb{E}(Z_{kt}^{2})\big)^{-1/2}\right)\label{eq:tau_range}
\end{equation}
 under the event $\mathcal{H}_{t}$ we bound 
\begin{equation}
\text{\ensuremath{\tau}}|X_{j,t-G}|\sqrt{\ensuremath{\rho(\varepsilon,G)}\mathbb{E}(Z_{kt}^{2})}\leq\sqrt{\ensuremath{\rho(\varepsilon,G)}}=\exp\left(-\log(np)\right)=(np)^{-1}\label{eq:boundlambdaXE}
\end{equation}
and 
\begin{align}
\sum_{d=2}^{\infty}\dfrac{|\tau X_{t-G}|^{d}}{d!}\mathbb{E}_{t-G}(|Z_{kt}^{d}|) & \leq\sum_{d=2}^{\infty}\dfrac{|\tau X_{t-G}|^{d}}{d!}\left(\mathbb{E}|Z_{kt}|^{d}+d\sqrt{\ensuremath{\rho(\varepsilon,G)}\mathbb{E}(Z_{kt}^{2d})}\right)\nonumber \\
 & \leq\sum_{d=2}^{\infty}\dfrac{|\tau X_{t-G}|^{d}}{(d/{\rm e})^{d}}\left(K_{z}^{d}d^{d}+d\sqrt{K_{z}^{2d}(2d)^{2d}}\right)\nonumber \\
 & \leq\sum_{d=2}^{\infty}|{\rm e}\tau X_{t-G}|^{d}\cdot(3K_{z})^{d}\leq2|3K_{z}{\rm e}\tau X_{t-G}|^{2}\label{eq:mgfX_1}
\end{align}
as $\rho(\varepsilon,G)\leq1$ and $d!>(d/{\rm e})^{d}.$ Plug (\ref{eq:boundlambdaXE})
and (\ref{eq:mgfX_1}) into (\ref{eq:mgfX_0}): 
\begin{eqnarray}
\mathbb{E}_{t-G}\left(\exp\left[\text{\ensuremath{\tau}}X_{j,t-G}Z_{kt}\right]{\bf 1}(\mathcal{H}_{t-G})\right) & \leq & {\bf 1}(\mathcal{X}_{t-G})\left(1+(np)^{-1}+2|3K_{z}{\rm e}\tau X_{t-G}|^{2}\right)\nonumber \\
 & \leq & \text{\ensuremath{\left(1+(np)^{-1}\right)}}\exp\left[C\tau^{2}n\log p\right]\label{eq:mgfX}
\end{eqnarray}
with $C=18\left({\rm e}K_{z}C_{X}\right)^{2}$.

For any $g\in\{2,\cdots,G\}$, 
\begin{align*}
 & \Pr\left\{ \sum_{t\in\mathcal{I}_{g}}X_{j,t-G}Z_{kt}>\mu,\bigcap_{t\in\mathcal{I}_{g}}\mathcal{H}_{t-G}\right\} =\Pr\left\{ \exp\left[\tau\sum_{t\in\mathcal{I}_{g}}X_{j,t-G}Z_{kt}\right]>e^{\mu\tau},\bigcap_{t\in\mathcal{I}_{g}}\mathcal{H}_{t-G}\right\} \\
= & \Pr\left\{ \exp\left[\tau\sum_{s=1}^{|\mathcal{I}_{g}|}X_{j,g+(s-2)G}u_{g+(s-1)G}\right]>e^{\mu\tau},\bigcap_{s=1}^{|\mathcal{I}_{g}|}\mathcal{H}_{g+(s-2)-G}\right\} \\
= & \Pr\left\{ \exp\left[\tau\sum_{s=1}^{|\mathcal{I}_{g}|}X_{j,g+(s-2)G}u_{g+(s-1)G}\right]{\bf 1}(\bigcap_{s=1}^{|\mathcal{I}_{g}|}\mathcal{H}_{g+(s-2)-G})>e^{\mu\tau}\right\} .
\end{align*}
The Markov inequality implies that the above probability is bounded
by 
\begin{align*}
 & e^{-\mu\tau}\mathbb{E}\left[\exp\left[\tau\sum_{s=1}^{|\mathcal{I}_{g}|}X_{j,g+(s-2)G}u_{g+(s-1)G}\right]\prod_{s=1}^{|\mathcal{I}_{g}|}\boldsymbol{1}(\mathcal{H}_{g+(s-2)-G})\right]\\
= & e^{-\mu\tau}\mathbb{E}\left[\prod_{s=1}^{|\mathcal{I}_{g}|}\left(\exp\left[\tau X_{j,g+(s-2)G}u_{g+(s-1)G}\right]\cdot\boldsymbol{1}(\mathcal{H}_{g+(s-2)-G})\right)\right]\\
\leq & e^{-\mu\tau}\mathbb{E}\Bigg[\mathbb{E}_{g+(|\mathcal{I}_{g}|-2)-G}\left(\exp\left[\tau X_{j,g+(|\mathcal{I}_{g}|-2)G}u_{g+(|\mathcal{I}_{g}|-1)G}\right]\cdot\boldsymbol{1}(\mathcal{H}_{g+(|\mathcal{I}_{g}|-2)-G})\right)\\
 & \ \ \ \ \ \ \ \ \ \ \ \ \ \ \ \times\prod_{s=1}^{|\mathcal{I}_{g}|-1}\exp\left[\tau X_{j,g+(s-2)G}u_{g+(s-1)G}\right]\cdot\boldsymbol{1}(\mathcal{H}_{g+(s-2)-G})\Bigg]\\
\leq & e^{-\mu\tau}\left(1+(np)^{-1}\right)\exp(C\tau^{2}n\log p)\cdot\mathbb{E}\Bigg[\prod_{s=1}^{|\mathcal{I}_{g}|-1}\exp\left[\tau X_{j,g+(s-2)G}u_{g+(s-1)G}\right]\cdot\boldsymbol{1}(\mathcal{H}_{g+(s-2)-G})\Bigg].
\end{align*}
By induction, 
\begin{align*}
\Pr\left\{ \sum_{t\in\mathcal{I}_{g}}X_{j,t-G}Z_{kt}>\mu,\bigcap_{t\in\mathcal{I}_{g}}\mathcal{H}_{t-G}\right\}  & \leq\left(1+(np)^{-1}\right)^{A}\exp\left[-\mu\tau+A\cdot C\tau^{2}\cdot n\log p\right]\\
 & \leq2\exp\left[-\mu\tau+2C\tau^{2}\log p\cdot n^{2}G^{-1}\right]
\end{align*}
where the last inequality applies 
\[
\left(1+(np)^{-1}\right)^{A}\leq\left(1+(np)^{-1}\right)^{n}\leq2
\]
 and $A\leq nG^{-1}+1\leq2nG^{-1}$ with $n$ sufficiently large.
Let $\mu=4\sqrt{C}n(\log p+\log G)\cdot G^{-1/2}$ and 
\[
\tau=\dfrac{\mu G}{4Cn^{2}\log p}=\dfrac{\sqrt{C}(\log p+\log n)\sqrt{G}}{\sqrt{2}n\log p}.
\]
 When $n$ is sufficiently large $\tau$ falls into the interval of
(\ref{eq:tau_range}) and hence (\ref{eq:mgfX}) holds. Repeating
this argument for $-X_{j,t-G}Z_{kt}$, we obtain the same bound for
$\Pr\left\{ \sum_{t\in\mathcal{I}_{g}}X_{j,t-G}Z_{kt}<-\mu,\,\bigcap_{t\in\mathcal{I}_{g}}\mathcal{H}_{t-G}\right\} $.
 Therefore, we have 
\begin{align*}
 & \Pr\left\{ \sum_{g=1}^{G}\max_{j\in[p_{x}],k\in[p_{z}]}\left|\sum_{t\in\mathcal{I}_{g}}X_{j,t-G}Z_{kt}\right|>G\mu,\bigcap_{t=G+1}^{n}\mathcal{H}_{t-G}\right\} \\
\leq & \sum_{g=1}^{G}\sum_{j=1}^{p_{x}}\sum_{k=1}^{p_{z}}\Pr\left\{ \left|\sum_{t\in\mathcal{I}_{g}}X_{j,t-G}Z_{kt}\right|>\mu,\bigcap_{t\in\mathcal{I}_{g}}\mathcal{H}_{t-G}\right\} \\
\leq & 2\sum_{g=1}^{G}\sum_{j=1}^{p_{x}}\sum_{k=1}^{p_{z}}\Pr\left\{ \sum_{t\in\mathcal{I}_{g}}X_{j,t-G}Z_{kt}>\mu,\bigcap_{t\in\mathcal{I}_{g}}\mathcal{H}_{t-G}\right\} \\
\leq & 4Gp^{2}\exp\left[-\mu\tau+2C\tau^{2}\log p\cdot n^{2}G^{-1}\right]\\
= & 4Gp^{2}\exp\left[-\dfrac{\mu^{2}G}{8Cn^{2}\log p}\right]=4Gp^{2}\exp\left[-2(\log p+\log G)\right]
\end{align*}
and it follows that 
\begin{eqnarray}
 &  & \Pr\left\{ T_{3}>\sqrt{6C}n(\log p+\log G)\text{\ensuremath{\sqrt{G}}}\right\} \nonumber \\
 & \leq & \Pr\left\{ \sum_{g=1}^{G}\max_{j\in[p_{x}],k\in[p_{z}]}\left|\sum_{t\in\mathcal{I}_{g}}X_{j,t-G}Z_{kt}\right|>G\mu,\bigcap_{t\in G+1}^{n}\mathcal{H}_{t-G}\right\} +\Pr\left\{ \bigcup_{t=G+1}^{n}\mathcal{H}_{t-G}^{c}\right\} \nonumber \\
 & \leq & 4Gp^{2}\exp\left[-2(\log p+\log G)\right]+\Pr\left\{ \bigcup_{t=G+1}^{n}\mathcal{H}_{t-G}^{c}\right\} .\label{eq:P_T3}
\end{eqnarray}
Because the Chebyshev inequality and the union bound imply 
\begin{align}
\Pr\left\{ \bigcup_{t=A+1}^{n}\left(\mathcal{E}_{t-G}\right)^{c}\right\}  & \leq\sum_{t=A+1}^{n}\Pr\left\{ \max_{k\in[p_{z}]}\dfrac{\left|\mathbb{E}_{t-G}(Z_{kt}^{d})-\mathbb{E}(Z_{kt}^{d})\right|}{\sqrt{\mathbb{E}|Z_{kt}^{2d}|}}>d\text{\ensuremath{\sqrt{\rho(\varepsilon,G)}}}\text{ for some }d>0\right\} \nonumber \\
 & \leq\sum_{t=A+1}^{n}\sum_{k=1}^{p_{z}}\sum_{d=1}^{\infty}\Pr\left\{ \dfrac{\left|\mathbb{E}_{t-G}(Z_{kt}^{d})-\mathbb{E}(Z_{kt}^{d})\right|}{\sqrt{\mathbb{E}(Z_{kt}^{2d})}}>d\sqrt{\rho(\varepsilon,G)}\right\} \nonumber \\
 & \leq O(np)\sum_{d=1}^{\infty}\dfrac{\rho(\varepsilon,G)}{d^{2}}=O(np)\exp\left(-2\log(np)\right)\to0\label{eq:eventE}
\end{align}
given that the series $\sum_{d=1}^{\infty}d^{-2}$ is convergent,
and by Lemmas \ref{lem:mixing} and \ref{lem:BernsteinSum} $\Pr\left\{ \bigcup_{t=1}^{n}\mathcal{X}_{t}^{c}\right\} \to0$,
we have $\Pr\left\{ \bigcup_{t=1}^{n}\mathcal{H}_{t}^{c}\right\} =o(1)$
as well. It follows by (\ref{eq:P_T3}) that 
\[
T_{3}\stackrel{\mathrm{p}}{\preccurlyeq}n(\log p+\log G)\text{\ensuremath{\sqrt{G}}}=O\left(n(\log p)^{1+\frac{1}{2r}}\right)
\]
 in view of $G=O\left((\log(np))^{1/r}\right)=O\left((\log p)^{1/r}\right)$
where the second step applies Assumption \ref{assu:asym_n}.

Collecting the stochastic order of $T_{1}$, $T_{2}$ and $T_{3}$,
we complete the proof. 
\end{proof}
\begin{rem}
\label{rem:DB-demean}We can deduce similar upper bounds for the demeaned
variables. First, by the triangular inequality the demeaned cross
product is bounded by 
\[
\dfrac{1}{n^{3/2}}\left|\sum_{t=1}^{n}\ddot{X}_{j,t-1}Z_{kt}\right|\leq n^{-3/2}\left|\sum_{t=1}^{n}X_{j,t-1}Z_{kt}\right|+n^{-3/2}\left|\bar{X_{j}}\right|\left|\sum_{t=1}^{n}Z_{kt}\right|.
\]
 Lemmas \ref{lem:mixing} and \ref{lem:BernsteinSum} implies 
\begin{align}
\max_{k\in[p_{z}]}\left|\sum_{t=1}^{n}Z_{kt}\right| & \stackrel{\mathrm{p}}{\preccurlyeq}\sqrt{n\log p}\label{eq:boundSumZ}\\
\max_{j\in[p_{x}]}\left|\bar{X_{j}}\right| & \leq\max_{j\in[p_{x}],t\in[n]}|X_{j,t-1}|=\max_{j\in[p_{x}],t\in[n]}|\sum_{s=0}^{t-1}e_{js}|\stackrel{\mathrm{p}}{\preccurlyeq}\sqrt{n\log p}\label{eq:boundSumX}
\end{align}
 w.p.a.1. For all $j$ and $k$ we have 
\[
n^{-3/2}\left|\bar{X_{j}}\right|\left|\sum_{t=1}^{n}Z_{kt}\right|\stackrel{\mathrm{p}}{\preccurlyeq}\dfrac{\log p}{\sqrt{n}}=o\left(\dfrac{1}{\sqrt{n}}(\log p)^{1+\frac{1}{2r}}\right)
\]
with $n$ large enough given that $r>0.$ The same argument applies
to the cross product involving $u_{t}.$ As a result, we have 
\begin{equation}
\max_{j\in[p_{x}],k\in[p_{z}]}\dfrac{1}{n^{3/2}}\left|\sum_{t=1}^{n}\ddot{X}_{j,t-1}Z_{kt}\right|+\max_{j\in[p_{x}]}\dfrac{1}{n^{3/2}}\left|\sum_{t=1}^{n}\ddot{X}_{j,t-1}u_{t}\right|\stackrel{\mathrm{p}}{\preccurlyeq}\dfrac{1}{\sqrt{n}}(\log p)^{1+\frac{1}{2r}}.\label{eq:DB-All-demean}
\end{equation}
\end{rem}
Proposition \ref{prop:DB-All} shows the small order of the interaction
terms between the stationary and nonstdationary components. The upper
bound of the first term of (\ref{eq:DB-All-demean}) is used for the
RE condition of mixed regressors, while the second term is for the
DB condition of the unit root regressors. For simplicity of the notations
in the proofs, we use the lagged time subscript $t-1$ for both terms.
It is trivial to handle the case with the same time subscript for
both $X$ and $Z$ in the RE condition. Recall that $X_{jt}=X_{j,t-1}+e_{jt}$
and hence $n^{-3/2}\left|\sum_{t=1}^{n}X_{jt}Z_{kt}\right|\leq n^{-3/2}\left|\sum_{t=1}^{n}X_{j,t-1}Z_{kt}\right|+n^{-3/2}\left|\sum_{t=1}^{n}e_{jt}Z_{kt}\right|$.
It is easy to show the small order of the second term in the upper
bound given that both $e_{t}$ and $Z_{t}$ are stationary. See the
proof of Proposition \ref{prop:MixRE-cn} for more details. 

\subsubsection{DB of Stationary Components }

Proposition \ref{prop:DB-Ze} gives the DB condition for the stationary
components. 
\begin{prop}
\label{prop:DB-Ze}Suppose that Assumptions \ref{assu:tail}-\ref{assu:covMat},
\ref{assu:asym_n}(a) and \ref{assu:EZu} hold, and $v_{t}=(e_{t}^{\top},Z_{t}^{\top},u_{t})^{\top}$
follows (\ref{eq:def-I0-mix}). Then
\begin{gather}
\max_{k\in[p_{z}]}\max_{j\in[p_{x}]}\left|n^{-1}\sum_{t=1}^{n}Z_{k,t-1}e_{j,t-1}\right|\stackrel{\mathrm{p}}{\preccurlyeq}1\label{eq:ZeDB}\\
\max_{k\in[p_{z}]}\left|n^{-1}\sum_{t=1}^{n}\ddot{Z}_{k,t-1}u_{t}\right|\stackrel{\mathrm{p}}{\preccurlyeq}\sqrt{\dfrac{\log p}{n}}\label{eq:ZuDB}\\
\|\overline{\Sigma}^{(z)}-\Sigma^{(z)}\|_{\max}\stackrel{\mathrm{p}}{\preccurlyeq}\sqrt{\dfrac{\log p}{n}}\label{eq:maxSqZ}
\end{gather}
where $\overline{\Sigma}^{(z)}:=n^{-1}\sum_{t=1}^{n}Z_{t-1}Z_{t-1}^{\top}$
and $\Sigma^{(z)}=\mathbb{E}[Z_{t-1}Z_{t-1}^{\top}]$. 
\end{prop}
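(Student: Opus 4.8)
The plan is to treat all three bounds as instances of a single recipe: each is a deviation bound for a sample average of products or lags of the stationary coordinates of $v_t=\Phi\varepsilon_t$, and each such average is controlled by the Bernstein-type concentration inequality for $\alpha$-mixing sequences already invoked in the proof of Proposition~\ref{prop:DB-All} \citep[Theorem 1]{merlevede2011bernstein}, followed by a union bound over the $p^2$ (or $p$) coordinate pairs. The ingredients are: (i) by Lemma~\ref{lem:mixing}, $\varepsilon$ is geometrically $\rho$-mixing, hence $\alpha$-mixing with coefficient $\preccurlyeq\exp(-c_\alpha d^r)$, and since $Z_{t-1}$, $e_{t-1}$ and $u_t$ are fixed-lag coordinate projections of $\varepsilon$, the sequences $(Z_{j,t-1}Z_{k,t-1})_t$, $(Z_{k,t-1}e_{j,t-1})_t$ and $(Z_{k,t-1}u_t)_t$ inherit geometric $\alpha$-mixing with the same rate exponent $r$; (ii) by Corollary~\ref{cor:BNtail}, in particular (\ref{eq:euBound}), each coordinate of $v_t$ has a sub-exponential tail, so from $\Pr\{|ab|>\mu\}\le\Pr\{|a|>\sqrt\mu\}+\Pr\{|b|>\sqrt\mu\}$ a product of two such variables has a sub-Weibull tail of exponent $1/2$. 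Combining the tail exponent $1/2$ with the mixing exponent $r$ gives the effective exponent $r^{**}=(2+r^{-1})^{-1}$ in Merlevède's inequality, and its growth requirement $(\log p)^{1/r^{**}-1}=(\log p)^{1+1/r}=o(n)$ holds automatically under Assumption~\ref{assu:asym_n}(a) since $p=O(n^\nu)$.

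For (\ref{eq:maxSqZ}), write $\overline\Sigma^{(z)}_{jk}-\Sigma^{(z)}_{jk}=n^{-1}\sum_{t=1}^n\big(Z_{j,t-1}Z_{k,t-1}-\mathbb{E}[Z_{j,t-1}Z_{k,t-1}]\big)$; applying the concentration inequality with threshold $\mu=C\sqrt{n\log p}$ for a suitably large absolute constant $C$ and taking a union bound over the $p^2$ pairs yields $\|\overline\Sigma^{(z)}-\Sigma^{(z)}\|_{\max}\stackrel{\mathrm{p}}{\preccurlyeq}\sqrt{(\log p)/n}$ — this is exactly the $T_{22}$ step in Proposition~\ref{prop:DB-All}, minus the block decomposition, which is needed there only to handle the unit-root factors. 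For (\ref{eq:ZeDB}), split $n^{-1}\sum_{t=1}^n Z_{k,t-1}e_{j,t-1}$ into its centered part plus $\mathbb{E}[Z_{k,t-1}e_{j,t-1}]$: the centered average is $\stackrel{\mathrm{p}}{\preccurlyeq}\sqrt{(\log p)/n}=o(1)$ by the same argument, while $|\mathbb{E}[Z_{k,t-1}e_{j,t-1}]|\le(\mathbb{E}[Z_{k,t-1}^2]\,\mathbb{E}[e_{j,t-1}^2])^{1/2}$ is uniformly $O(1)$ because each second moment is controlled by $\Omega_{jj}\le\lambda_{\max}(\Omega)\le C_\Omega$ (Assumption~\ref{assu:covMat}(a)) together with the summable linear-process coefficients of Assumption~\ref{assu:alpha} and the cross-sectional independence of the $\varepsilon_{\ell t}$ (the $\eta_\ell$ being i.i.d.\ in $\ell$). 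Hence $\max_{j,k}|n^{-1}\sum_t Z_{k,t-1}e_{j,t-1}|\stackrel{\mathrm{p}}{\preccurlyeq}1$.

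For (\ref{eq:ZuDB}), decompose $n^{-1}\sum_{t=1}^n\ddot Z_{k,t-1}u_t=n^{-1}\sum_{t=1}^n Z_{k,t-1}u_t-\bar Z_k\bar u$. Under Assumption~\ref{assu:EZu}, $\mathbb{E}[Z_{k,t-1}u_t]=0$, so $n^{-1}\sum_t Z_{k,t-1}u_t$ is a centered, geometrically $\alpha$-mixing, sub-Weibull$(1/2)$ average and the concentration-plus-union-bound argument gives $\max_k|n^{-1}\sum_t Z_{k,t-1}u_t|\stackrel{\mathrm{p}}{\preccurlyeq}\sqrt{(\log p)/n}$; for the remainder, Lemma~\ref{lem:BernsteinSum} with $t=n$ gives $\max_k|\bar Z_k|\stackrel{\mathrm{p}}{\preccurlyeq}\sqrt{(\log p)/n}$, and since $\mathbb{E}u_t=0$ we have $|\bar u|\stackrel{\mathrm{p}}{\preccurlyeq}n^{-1/2}$, so $\max_k|\bar Z_k\bar u|\stackrel{\mathrm{p}}{\preccurlyeq}\sqrt{\log p}/n=o(\sqrt{(\log p)/n})$. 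Adding the two pieces proves (\ref{eq:ZuDB}).

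The proof is essentially routine once these lemmas are available; because every factor here is stationary, none of the time-series blocking used in Propositions~\ref{prop:UnitDB} and~\ref{prop:DB-All} is required, and the temporal dependence is absorbed entirely by the mixing version of Bernstein's inequality. The only points demanding a little care are tracking the effective sub-Weibull exponent $r^{**}$ of the product summands so the growth hypothesis of Merlevède's inequality is met (guaranteed by Assumption~\ref{assu:asym_n}(a)), and checking in (\ref{eq:ZeDB}) that the non-vanishing mean $\mathbb{E}[Z_{k,t-1}e_{j,t-1}]$ is uniformly bounded over $(j,k)$, which is where cross-sectional independence of the $\eta_\ell$ together with $\lambda_{\max}(\Omega)\le C_\Omega$ comes in.
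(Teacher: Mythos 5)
Your proposal follows essentially the same route as the paper's own proof: the coordinates of $v_t$ are sub-exponential by Corollary \ref{cor:BNtail}, so products like $Z_{k,t-1}e_{j,t-1}$, $Z_{k,t-1}u_t$ and $Z_{j,t-1}Z_{k,t-1}$ have sub-Weibull tails of exponent $1/2$ and are geometrically $\alpha$-mixing by Lemma \ref{lem:mixing}, and the Merlev\`{e}de-type concentration argument (as in Lemma \ref{lem:BernsteinSum} and the $T_{22}$ step of Proposition \ref{prop:DB-All}) with a union bound, a Cauchy--Schwarz bound on $\mathbb{E}[Z_{k,t-1}e_{j,t-1}]$, and the negligible demeaning remainder $\bar{Z}_k\bar{u}$ yields all three displays exactly as in the paper. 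The only slip is cosmetic: with tail exponent $r^{**}=(2+1/r)^{-1}$ the growth requirement is $(\log p)^{2/r^{**}-1}=(\log p)^{3+2/r}=o(n)$ rather than $(\log p)^{1+1/r}=o(n)$, but since $p=O(n^{\nu})$ either version holds automatically, so nothing in the argument is affected.
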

\begin{proof}[Proof of Proposition \ref{prop:DB-Ze}]

To prove (\ref{eq:ZeDB}), by (\ref{eq:EZbound}) $\mathbb{E}[Z_{k,t-1}^{2}]$
is uniformly bounded for all $k$, and so is $\mathbb{E}[e_{j,t-1}^{2}]$
for all $j$. As a result, $\left|\mathbb{E}[Z_{k,t-1}e_{j,t-1}]\right|\leq\sqrt{\mathbb{E}[Z_{k,t-1}^{2}]\mathbb{E}[e_{j,t-1}^{2}]}$
is uniformly bounded for all $k$ and $j$. By Lemma \ref{lem:mixing},
$(Z_{k,t-1}e_{j,t-1})_{t\geq1}$ is strong mixing with an $\alpha$-mixing
coefficient bounded by $\rho(\varepsilon,d)\leq C_{\alpha}\exp\left(-c_{\alpha}d{}^{r}\right)$,
and $(Z_{k,t-1}u_{t})_{t\geq1}$ is strong mixing with an $\alpha$-mixing
coefficient 
\[
\rho(\varepsilon,d-1)\leq C_{\alpha}\exp\left(-c_{\alpha}(d-1)^{r}\right)\leq C_{\alpha}\exp\left(-0.5c_{\alpha}d{}^{r}\right)
\]
 with sufficiently large $d.$ Besides, for any $\mu>0$
\begin{align*}
\Pr\left(|Z_{k,t-1}e_{j,t-1}|>\mu\right) & \leq\Pr\left(|e_{j,t}|>\sqrt{\mu}\right)+\Pr\left(|Z_{k,t-1}|>\sqrt{\mu}\right)\\
 & \leq2C_{v}\exp\left[-\sqrt{\mu}/b_{v}\right]=2C_{v}\exp\left[-\left(\mu/b_{v}^{2}\right)^{1/2}\right]
\end{align*}
where the second inequality applies (\ref{eq:euBound}) in Corollary
\ref{cor:BNtail}. It follows from the proof of (\ref{eq:sumProbBound})
in Lemma \ref{lem:BernsteinSum} that 
\begin{equation}
\max_{j\in[p_{x}],k\in[p_{z}]}\left|n^{-1}\sum_{t=1}^{n}\left(Z_{k,t-1}e_{j,t-1}-\mathbb{E}\left[Z_{k,t-1}e_{j,t-1}\right]\right)\right|\stackrel{\mathrm{p}}{\preccurlyeq}\sqrt{\dfrac{\log p}{n}}.\label{eq:Zelesprop}
\end{equation}
Then 
\begin{eqnarray*}
 &  & \max_{k\in[p_{z}]}\max_{j\in[p_{x}]}\left|n^{-1}\sum_{t=1}^{n}Z_{k,t-1}e_{j,t-1}\right|\\
 & \leq & \max_{j\in[p_{x}],k\in[p_{z}]}\left|n^{-1}\sum_{t=1}^{n}\left(Z_{k,t-1}e_{j,t-1}-\mathbb{E}\left[Z_{k,t-1}e_{j,t-1}\right]\right)\right|+\max_{j\in[p_{x}],k\in[p_{z}]}\left|\mathbb{E}\left[Z_{k,t-1}e_{j,t-1}\right]\right|\\
 & \stackrel{\mathrm{p}}{\preccurlyeq} & \sqrt{\dfrac{\log p}{n}}+1=O\left(1\right).
\end{eqnarray*}
The same bound applies to $\max_{k\text{\ensuremath{\in}[\ensuremath{p_{z}}]}}\left|n^{-1}\sum_{t=1}^{n}Z_{k,t-1}u_{t}\right|$
as $\mathbb{E}[Z_{k,t-1}u_{t}]=0$. Then (\ref{eq:ZeDB}) follows
by 
\[
\left|n^{-1}\sum_{t=1}^{n}\ddot{Z}_{k,t-1}u_{t}\right|\leq\left|n^{-1}\sum_{t=1}^{n}Z_{k,t-1}u_{t}\right|+\left|\bar{Z}_{k}\bar{u}\right|\stackrel{\mathrm{p}}{\preccurlyeq}\sqrt{\frac{\log p}{n}}+\frac{\log p}{n}=O\left(\sqrt{\frac{\log p}{n}}\right).
\]

To prove (\ref{eq:maxSqZ}), notice $\{Z_{k,t-1}Z_{m,t-1}\}_{t\geq1}$
is $\alpha$-mixing with 
\[
\Pr\left(Z_{k,t-1}Z_{m,t-1}-\mathbb{E}[Z_{k,t-1}Z_{m,t-1}]>\mu\right)\leq C_{v}\exp\left[-\left(\mu/b_{v}^{2}\right)^{1/2}\right].
\]
 Furthermore, by normalization $\psi_{j0}=1$ we have for all $j\in[p]$
that $\mathbb{E}\varepsilon_{jt}^{2}=\sum_{d=0}^{\infty}\psi_{jd}^{2}\geq\psi_{j0}^{2}=1$
and by Assumption \ref{assu:alpha} and (\ref{eq:gene_ineq1}) we
have 
\[
\mathbb{E}\left[\varepsilon_{jt}^{2}\right]=\sum_{d=0}^{\infty}\psi_{jd}^{2}\leq C_{\psi}^{2}\sum_{d=0}^{\infty}\exp\left(-2c_{\psi}d^{r}\right)\leq C_{\psi}^{2}\left[M+\sum_{d=M}^{\infty}\exp\left(-2c_{\psi}d^{r}\right)\right]\leq\dfrac{C_{\psi}^{2}}{2c_{\psi}}\exp\left(-c_{\psi}C^{r}\right)
\]
with $C$ large enough so that $C^{\frac{1}{r}-1}\leq\exp\left(c_{\psi}C\right)$. 

(\ref{eq:Zelesprop}) follows by similar procedures as for (\ref{eq:maxSqZ}).
\end{proof}

\subsubsection{RE for Demeaned Unit Roots}

For any square matrix $A$, define 
\begin{equation}
\phi_{\min}(A,s+m):=\inf_{\delta\in\mathbb{R}^{p},\|\delta\|_{0}\leq s+m}\dfrac{\delta^{\top}A\delta}{\delta^{\top}\delta},\ \ \phi_{\max}(A,s+m):=\sup_{\delta\in\mathbb{R}^{p},\|\delta\|_{0}\leq s+m}\dfrac{\delta^{\top}A\delta}{\delta^{\top}\delta}.\label{eq:def-phi-minmax}
\end{equation}
Let 
\begin{equation}
C_{m}=C_{m}(L):=\left\lceil 4L^{2}\widetilde{C}/\widetilde{c}\right\rceil \label{eq:Cm-def}
\end{equation}
for some $L\geq1,$ where $\widetilde{c}=0.5(1-\sqrt{1/2})^{2}c_{\Omega}$
and $\widetilde{C}=2(1+\sqrt{1/2})^{2}C_{\Omega}$ for $c_{\Omega}$
and $C_{\Omega}$ in Assumption \ref{assu:covMat}. Define 
\begin{equation}
m:=C_{m}s.\label{eq:m-def}
\end{equation}
Furthermore, recall that $v_{t}=(e_{t}^{\top},u_{t})^{\top}=\Phi\varepsilon_{t}$
as defined in (\ref{eq:def-error}). Let $\Phi=(\Phi_{e}^{\top},\Phi_{u}^{\top})^{\top}$
with $\Phi_{e}$ being $p\times(p+1)$ and $\Phi_{u}$ being $1\times(p+1).$
We consider the pure unit root case in Section \ref{sec:UnitRoot}.
We have the following RE for $\text{\ensuremath{\widehat{\Sigma}}}$,
under normality as in Part (a) and non-normal innovation in Part (b). 
\begin{prop}
\label{prop:UnitRE-cn} Suppose that $(1+C_{m}(L))s=o(n\wedge p)$
as $n\to\infty$.

(a) If Assumption \ref{assu:covMat} (a) holds and $\varepsilon_{t}\sim i.i.d.\ \mathcal{N}(0,I_{p})$,
then there exists some absolute constant $\widetilde{c}_{\kappa}$
such that 
\begin{equation}
\frac{\kappa_{I}(\hat{\Sigma},L,s)}{n}\geq\dfrac{\widetilde{c}_{\kappa}}{L^{2}s\log p}\label{eq:RE-unit-1}
\end{equation}
holds w.p.a.1.~for any $L\geq1$.\footnote{Here we use a generic $L\geq1$ to unify the proofs. Plasso applies
this result with $L=3$. }

(b) If Assumptions \ref{assu:tail}-\ref{assu:covMat} holds and in
addition $s^{2}L^{4}(\log p)^{5/2}=o(n^{1/2})$, then (\ref{eq:RE-unit-1})
is satisfied w.p.a.1. 
\end{prop}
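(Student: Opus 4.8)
The plan is to establish (a) first for i.i.d.\ Gaussian innovations and then deduce (b) by Gaussian approximation. For (a), the first move is that demeaning annihilates the initial value: since $X_{t-1}=X_{0}+\Phi_{e}\widetilde{S}_{t-1}$ with $\widetilde{S}_{t-1}=\sum_{s=1}^{t-1}\varepsilon_{s}$ a standard $(p+1)$-dimensional Gaussian random walk, the term $X_{0}$ is constant in $t$ and cancels, so $\ddot{X}_{t-1}=\Phi_{e}\ddot{\widetilde{S}}_{t-1}$ and hence $\delta^{\top}\hat{\Sigma}\delta=n^{-1}v^{\top}Mv$ where $v:=\Phi_{e}^{\top}\delta$ and $M:=\sum_{t=1}^{n}\ddot{\widetilde{S}}_{t-1}\ddot{\widetilde{S}}_{t-1}^{\top}$. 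Assumption \ref{assu:covMat} controls this change of variables: the $p\times p$ block $\Phi_{e}\Phi_{e}^{\top}$ of $\Omega$ inherits $c_{\Omega}\le\lambda_{\min}\le\lambda_{\max}\le C_{\Omega}$, so $\|v\|_{2}^{2}\in[c_{\Omega},C_{\Omega}]\|\delta\|_{2}^{2}$, while the bounded row sums give $\|v\|_{1}\le C_{L}\|\delta\|_{1}$. Hence any $\delta\in\mathcal{R}(L,s)$ maps to a $v$ with $\|v\|_{1}\le (C_{L}(1+L)/\sqrt{c_{\Omega}})\sqrt{s}\,\|v\|_{2}$, i.e.\ $v$ lies in an $\ell_{1}$-restricted cone with cone constant $L'$ of order $L$, and $\kappa_{I}(\hat{\Sigma},L,s)/n\ge (c_{\Omega}/n^{2})\inf_{v}\{v^{\top}Mv/\|v\|_{2}^{2}\}$ over that cone.

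Next I would carry out the standard reduction of a cone infimum to sparse eigenvalues (in the spirit of \citet{bickel2009simultaneous}): peel the off-support part of $v$ into successive blocks $T_{1},T_{2},\dots$ of size $m=C_{m}s$ by decreasing magnitude, so that $\sum_{j\ge2}\|v_{T_{j}}\|_{2}\le L'\sqrt{s/m}\,\|v_{T\cup T_{1}}\|_{2}$, and obtain $v^{\top}Mv\ge\|v_{T\cup T_{1}}\|_{2}^{2}\big(\sqrt{\phi_{\min}(M,s+m)}-L'\sqrt{s/m}\,\sqrt{\phi_{\max}(M,s+m)}\big)^{2}$ with $\phi_{\min},\phi_{\max}$ as in (\ref{eq:def-phi-minmax}). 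The choice of $C_{m}$ in (\ref{eq:Cm-def}) is exactly what forces $L'\sqrt{s/m}\sqrt{\phi_{\max}/\phi_{\min}}\le1/2$ on the high-probability event where the sparse eigenvalue ratio of $M$ is pinned between the Wishart constants $(1\pm\sqrt{1/2})^{2}$ times $c_{\Omega},C_{\Omega}$; this yields $\kappa_{I}(\hat{\Sigma},L,s)\gtrsim n^{-1}\phi_{\min}(M,s+m)$ up to absolute constants.

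Everything thus reduces to bounding $\phi_{\min}(M,s+m)$ below (and $\phi_{\max}(M,s+m)$ above) by a multiple of $n^{2}/((s+m)\log p)$ uniformly over all index sets $\mathcal{J}$ with $|\mathcal{J}|=s+m$. For a fixed $\mathcal{J}$, $M_{\mathcal{J}\mathcal{J}}$ is the demeaned Gram matrix of an $(s+m)$-dimensional standard Gaussian random walk; writing it as $g^{\top}N^{\top}Ng$ with $N$ the demeaned cumulative-sum operator and $g$ an $n\times(s+m)$ i.i.d.\ Gaussian array, rotation invariance gives $M_{\mathcal{J}\mathcal{J}}\stackrel{d}{=}\sum_{k=1}^{n}\lambda_{k}h_{k}h_{k}^{\top}$ with $h_{k}$ i.i.d.\ $\mathcal{N}(0,I_{s+m})$ and $\lambda_{k}=\lambda_{k}(N^{\top}N)\asymp n^{2}/k^{2}$, the discrete analogue of the Karhunen--Lo\`eve eigenvalues already used in Proposition \ref{prop:RE_KL}. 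Truncating to the top $\ell$ terms, $M_{\mathcal{J}\mathcal{J}}\succeq\lambda_{\ell}\sum_{k\le\ell}h_{k}h_{k}^{\top}$, and \citet[Theorem 6.1]{Wainwright2019high} bounds $\lambda_{\min}(\sum_{k\le\ell}h_{k}h_{k}^{\top})$ below (and $\lambda_{\max}$ above) by a constant multiple of $\ell$ with failure probability at most $e^{-c\ell}$. The critical calibration is $\ell\asymp(s+m)\log p$: it is just large enough that $e^{-c\ell}=p^{-c(s+m)}$ survives the union bound over the $\binom{p+1}{s+m}\le p^{s+m}$ subsets $\mathcal{J}$, while $\lambda_{\ell}\asymp n^{2}/((s+m)\log p)^{2}$ then gives $\lambda_{\min}(M_{\mathcal{J}\mathcal{J}})\gtrsim\lambda_{\ell}\ell\asymp n^{2}/((s+m)\log p)$ simultaneously over all $\mathcal{J}$, w.p.a.1; the hypothesis $(1+C_{m}(L))s=o(n\wedge p)$ keeps $\ell=o(n)$ so that $\lambda_{\ell}$ is of the asserted order. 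Plugging back, $\kappa_{I}(\hat{\Sigma},L,s)/n\gtrsim c_{\Omega}n^{-2}\cdot n^{2}/(L^{2}s\log p)=\tilde{c}_{\kappa}/(L^{2}s\log p)$, which is (a). I expect this last step to be the main obstacle, since it requires balancing three competing scales — the deterministic spectral decay $\lambda_{k}\asymp n^{2}/k^{2}$, the Wishart concentration window, and the combinatorial cost $\log\binom{p+1}{s+m}\asymp(s+m)\log p$ — and it is precisely the union-bound cost that forces $\ell\asymp(s+m)\log p$ and thereby produces the $\log p$ factor in the RE rate.

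For (b) I would keep the same skeleton and absorb non-Gaussianity and temporal dependence by coupling. The Beveridge--Nelson decomposition writes $\sum_{s=1}^{t-1}\varepsilon_{js}=\psi_{j}(1)\sum_{s=1}^{t-1}\eta_{js}+(\widetilde{\varepsilon}_{j0}-\widetilde{\varepsilon}_{j,t-1})$ with sub-exponential remainders uniformly in $j,t$ by Corollary \ref{cor:BNtail}, and the Koml\'os--Major--Tusn\'ady coupling (Lemma \ref{lem:GaussianApprox}) supplies Brownian motions $\mathcal{B}_{j}$ with $\sup_{j,t}|n^{-1/2}(\sum_{s<t}\varepsilon_{js}-\psi_{j}(1)\mathcal{B}_{j}(t))|\stackrel{\mathrm{p}}{\preccurlyeq}(\log p)/\sqrt{n}$. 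Setting $\Phi_{*}=\Phi_{e}\,\mathrm{diag}(\psi_{j}(1))$, which still obeys Assumption \ref{assu:covMat}-type bounds because $\psi_{j}(1)$ is bounded away from $0$ and $\infty$ by Assumption \ref{assu:alpha}, we get $\ddot{X}_{t-1}=\Phi_{*}\ddot{\mathcal{B}}_{t-1}+\ddot{r}_{t-1}$ with $\sup_{j,t}|r_{j,t-1}|\stackrel{\mathrm{p}}{\preccurlyeq}\log p$. Applying part (a) to the exact standard Gaussian random walk formed by the independent increments $\mathcal{B}_{j}(t)-\mathcal{B}_{j}(t-1)$ (with $\Phi_{*}$ in place of $\Phi_{e}$) gives $\inf_{\delta\in\mathcal{R}(L,s)}n^{-1}\|\Phi_{*}\ddot{\mathcal{B}}\delta\|_{2}^{2}\gtrsim n\|\delta\|_{2}^{2}/(L^{2}s\log p)$ w.p.a.1. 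It remains to show the perturbation is negligible: on the cone $\|\delta\|_{1}\le(1+L)\sqrt{s}\|\delta\|_{2}$, so $\|\ddot{r}\delta\|_{2}^{2}\le n(\sup_{j,t}|r_{j,t-1}|)^{2}\|\delta\|_{1}^{2}\stackrel{\mathrm{p}}{\preccurlyeq}n(\log p)^{2}(1+L)^{2}s\|\delta\|_{2}^{2}$, and a comparison of rates shows this is $o(\|\Phi_{*}\ddot{\mathcal{B}}\delta\|_{2}^{2})$ under Assumption \ref{assu:asym_n} together with the stated condition $s^{2}L^{4}(\log p)^{5/2}=o(n^{1/2})$; the triangle inequality $\|\ddot{X}\delta\|_{2}\ge\|\Phi_{*}\ddot{\mathcal{B}}\delta\|_{2}-\|\ddot{r}\delta\|_{2}$ then delivers (\ref{eq:RE-unit-1}). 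The delicate point in (b) is that the $\ell_{1}$-to-$\ell_{2}$ inflation $\|\delta\|_{1}\le(1+L)\sqrt{s}\|\delta\|_{2}$ is unavoidable when bounding $\|\ddot{r}\delta\|_{2}$, so the sparsity budget must be spent carefully, which is the source of the sharper growth restriction required in (b) relative to (a).
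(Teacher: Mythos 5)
Your proposal for part (b) and the overall architecture of part (a) — the $n^{2}/k^{2}$ spectral decay of the (demeaned) cumulative-sum operator, truncation to the top $\ell$ directions, Wishart concentration via \citet[Theorem 6.1]{Wainwright2019high}, the calibration $\ell\asymp(s+m)\log p$ against the union bound over $p^{s+m}$ subsets, and the Beveridge--Nelson plus Koml\'os--Major--Tusn\'ady coupling step of Lemma \ref{lem:GaussianApprox} with a cone-based perturbation bound — coincide with the paper's proof. However, part (a) as you have written it contains a genuine gap in the order of operations. You first run the \citet{bickel2009simultaneous}-style peeling on the \emph{full} Gram matrix $M=\sum_{t=1}^{n}\ddot{\widetilde S}_{t-1}\ddot{\widetilde S}_{t-1}^{\top}$ with block size $m=C_{m}s$ for the \emph{constant} $C_{m}$ of (\ref{eq:Cm-def}), which requires the sparse-eigenvalue ratio of $M$ to be bounded by absolute constants, and you then assert that both $\phi_{\min}(M,s+m)$ and $\phi_{\max}(M,s+m)$ are of order $n^{2}/((s+m)\log p)$. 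The upper bound is false: even a single diagonal entry $M_{jj}=\sum_{t}\ddot{\widetilde S}_{j,t-1}^{2}$ is of order $n^{2}$ (times an $O_{p}(1)$ Brownian-bridge functional, and the maximum over $j\in[p]$ is of order $n^{2}\log p$), so $\phi_{\max}(M,m)\big/\phi_{\min}(M,s+m)\asymp(s+m)(\log p)^{2}\to\infty$. With this ratio the condition $L'\sqrt{s/m}\,\sqrt{\phi_{\max}/\phi_{\min}}\le1/2$ cannot be forced by any constant $C_{m}$, and enlarging $m$ does not help because $\phi_{\min}(M,s+m)$ itself shrinks like $n^{2}/((s+m)\log p)$, making the requirement self-defeating. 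Your subsequent truncation $M_{\mathcal{J}\mathcal{J}}\succeq\lambda_{\ell}\sum_{k\le\ell}h_{k}h_{k}^{\top}$ only delivers the lower bound and cannot repair the missing upper bound.

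The repair is to reverse the two steps, which is exactly what the paper does: use the deterministic matrix inequality $R^{\top}(I_{n}-J_{n})R\succeq\lambda_{\ell}\,V_{\cdot[\ell]}V_{\cdot[\ell]}^{\top}$ (valid simultaneously for all directions, with $\lambda_{\ell}\ge\widetilde{\lambda}_{\ell+1}\asymp n^{2}/\ell^{2}$ by interlacing and the exact cosine formula) \emph{before} any cone reduction, so that the restricted analysis is applied to the normalized $\ell$-sample Wishart $\Gamma_{\ell}=\ell^{-1}e^{\top}V_{\cdot[\ell]}V_{\cdot[\ell]}^{\top}e$. For $\Gamma_{\ell}$ both $\phi_{\min}(\Gamma_{\ell},s+m)$ and $\phi_{\max}(\Gamma_{\ell},s+m)$ are bounded between absolute constants w.p.a.1.\ (this is where the Wainwright bound and the union bound enter), so the constant $C_{m}$ of (\ref{eq:Cm-def}) suffices for the \citet{bickel2009simultaneous} condition, and multiplying back by $\ell\widetilde{\lambda}_{\ell+1}/n\gtrsim n/\ell$ with $\ell\asymp(s+m)\log p$ yields (\ref{eq:RE-unit-1}). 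Your change of variables $v=\Phi_{e}^{\top}\delta$ is harmless but unnecessary; the paper keeps $\delta$ and lets $\Omega_{e}$ enter through the Wishart covariance. Part (b) of your argument is essentially the paper's: the only difference is that you bound the remainder through $\sup_{j,t}|r_{j,t-1}|$ and $\|\delta\|_{1}\le(1+L)\sqrt{s}\|\delta\|_{2}$ rather than through $\|\hat{\Sigma}-\hat{\Upsilon}\|_{\max}$, and either route is negligible under $s^{2}L^{4}(\log p)^{5/2}=o(n^{1/2})$.
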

\begin{proof}[Proof of Proposition \ref{prop:UnitRE-cn}]
 \textbf{Part (a).} The normal distribution $\varepsilon_{t}\sim i.i.d.\ \mathcal{N}(0,I_{p})$
implies $e_{t}\sim i.i.d.\,\mathcal{N}(0,\Omega_{e})$ with $\Omega_{e}=\Phi_{e}\Phi_{e}^{\top}.$
 Let $R$ be an $n\times n$ lower triangular matrix of ones on and
below the diagonal, and $J_{n}=n^{-1}1_{n}1_{n}^{\top}.$ Note that
$\underset{(n\times p)}{X}=\underset{(n\times n)}{R}\underset{(n\times p)}{e}$
with $e=(e_{0},e_{1},\cdots,e_{n-1})^{\top}$, we write 
\[
\text{\ensuremath{\widehat{\Sigma}}}=n^{-1}X^{\top}(I_{n}-J_{n})^{2}X=n^{-1}e^{\top}R^{\top}(I_{n}-J_{n})Re.
\]
Let $\lambda_{1}\geq\lambda_{2}\geq\cdots\geq\lambda_{n}\geq0$ and
$\widetilde{\lambda}_{1}\geq\widetilde{\lambda}_{2}\geq\cdots\geq\widetilde{\lambda}_{n}\geq0$
be the eigenvalues of $R^{\top}(I_{n}-J_{n})R$ and $R^{\top}R$,
respectively, ordered from large to small. 

Let $\mu_{\ell}$ be the $\ell$th largest singular value of the idempotent
matrix $I_{n}-J_{n}$. Recall $\boldsymbol{1}\left(\cdot\right)$
is the indicator function, and obviously $\mu_{\ell}=\boldsymbol{1}(1\leq\ell\leq n-1)$
for $\ell\in[n]$. Denote the $\ell$th eigenvalue values of $R^{\top}(I_{n}-J_{n})R$
and $R^{\top}R$ be $\lambda_{\ell}$ and $\widetilde{\lambda}_{\ell}$,
respectively. When $\ell\in[n-1]$, the first inequality of Eq.(15)
in \citet[Theorem 9]{merikoski2004inequalities} gives $\lambda_{\ell}\geq\widetilde{\lambda}_{\ell+1}\mu_{n-1}=\widetilde{\lambda}_{\ell+1}$. 

Following the technique used to prove Remark 3.5 in \citet{zhang2019identifying},
which is also used for Theorem B.2 in \citet{smeekes2021automated},
we diagonalize $R(I_{n}-J_{n})R^{\top}=V{\rm diag}(\lambda_{1},\lambda_{2},\cdots,\lambda_{n})V^{\top}$,
where $V$ is an orthonormal matrix. For any $\delta\in\mathbb{R}^{p}$,
$\delta\neq0$, the quadratic form 
\begin{align}
\delta^{\top}\text{\ensuremath{\widehat{\Sigma}}}\delta & =\dfrac{1}{n}e^{\top}R^{\top}(I_{n}-J_{n})Re=\frac{1}{n}\delta^{\top}e^{\top}V{\rm diag}(\lambda_{1},\lambda_{2},\cdots,\lambda_{n})V^{\top}e\delta\nonumber \\
 & \geq\frac{1}{n}\delta^{\top}e^{\top}V_{\cdot[\ell]}{\rm diag}(\lambda_{1},\cdots,\lambda_{\ell})V_{\cdot[\ell]}^{\top}e\delta\geq\frac{\lambda_{\ell}}{n}\delta^{\top}e^{\top}V_{\cdot[\ell]}V_{\cdot[\ell]}^{\top}e\delta\nonumber \\
 & \geq\frac{\ell\widetilde{\lambda}_{\ell+1}}{n}\cdot\delta^{\top}\Gamma_{\ell}\delta\label{eq:LBeigen}
\end{align}
for any $\ell\in[n-1]$, where $V_{\cdot[\ell]}$ is the submatrix
composed of the first $\ell$ columns of $V$ and $\Gamma_{\ell}=\ell^{-1}e^{\top}V_{\cdot[\ell]}V_{\cdot[\ell]}^{\top}e$. 

We first work with the first factor $\ell\widetilde{\lambda}_{\ell+1}/n$
in (\ref{eq:LBeigen}). \citet{smeekes2021automated} provide the
exact formula of $\widetilde{\lambda}_{\ell}$: 
\begin{equation}
\widetilde{\lambda}_{\ell}=\left[2\left(1-\cos\left(\dfrac{(2\ell-1)\pi}{2n+1}\right)\right)\right]^{-1}\text{ for all }\ell\in[n].\label{eq:lambda_exact}
\end{equation}
 A Taylor expansion of $\cos\left(x\pi\right)$ around $x=0$ yields
\[
\widetilde{\lambda}_{\ell+1}^{-1}=2\left(1-\cos\left(\dfrac{(2\ell+1)\pi}{2n+1}\right)\right)=\left(\dfrac{(2\ell+1)\pi}{2n+1}\right)^{2}\left(1+o\left(\frac{\ell}{n}\right)\right)=\left(\dfrac{\ell\pi}{n}\right)^{2}\left(1+o\left(\frac{\ell}{n}\right)\right)
\]
whenever $\ell=o\left(n\right)$. This implies 
\begin{equation}
\dfrac{\ell\widetilde{\lambda}_{\ell+1}}{n}=\dfrac{n}{\pi^{2}\ell\left(1+o\left(\ell/n\right)\right)}\geq\dfrac{n}{2\pi^{2}\ell}\label{eq:eigen_LB}
\end{equation}
for $\ell=o\left(n\right)$ when $n$ is sufficiently large. 

Next, we focus on the second factor $\delta^{\top}\Gamma_{\ell}\delta$
in (\ref{eq:LBeigen}). For any $\mathcal{M}\subseteq[p]$, the submatrix
of $\Gamma_{\ell}$ with the rows and columns indexed by $\mathcal{M}$
is 
\[
\Gamma_{\ell}(\mathcal{M}):=(\Gamma_{\ell,ij})_{i,j\in\mathcal{M}}=\dfrac{1}{\ell}e_{\cdot\mathcal{M}}^{\top}V_{\cdot[\ell]}V_{\cdot[\ell]}^{\top}e_{\cdot\mathcal{M}}\sim\dfrac{1}{\ell}W_{p}\left(\Omega_{e}(\mathcal{M}),\ell\right),
\]
following a Wishart distribution $\mathcal{W}_{p}\left(\Omega_{e}(\mathcal{M}),\ell\right)$
divided by $\ell$. There are as many as 
\[
K=\begin{pmatrix}p\\
s+m
\end{pmatrix}\leq p^{s+m}
\]
 submatrices $\mathcal{M}$ of the dimension $(s+m)\times(s+m)$ for
$\Gamma_{\ell}(\mathcal{M})$. Index these matrices by $k=1,\ldots,K$
and denote them as $\Gamma_{\ell}(\mathcal{M}_{k})$. 

To establish uniformity over all $\mathcal{M}_{k}$, we invoke Theorem
6.1 of \citet{Wainwright2019high}: for all $c\in(0,1)$ and $k\in[K]$,
we have the non-asymptotic deviation bounds for Wishart random matrices:
\begin{align*}
\Pr\left\{ \sqrt{\lambda_{\max}\left(\Gamma_{\ell}(\mathcal{M}_{k})\right)}\geq\sqrt{\lambda_{\max}(\Omega_{e}(\mathcal{M}_{k}))}\cdot(1+c)+\sqrt{\dfrac{\text{tr}(\Omega_{e}(\mathcal{M}_{k}))}{\ell}}\right\}  & \leq\exp(-\ell c^{2}/2)\\
\Pr\left\{ \sqrt{\lambda_{\min}\left(\Gamma_{\ell}(\mathcal{M}_{k})\right)}\leq\sqrt{\lambda_{\min}(\Omega_{e}(\mathcal{M}_{k}))}\cdot(1-c)-\sqrt{\dfrac{\text{tr}(\Omega_{e}(\mathcal{M}_{k}))}{\ell}}\right\}  & \leq\exp\left(-\ell c^{2}/2\right).
\end{align*}
Since $c_{\Omega}\leq\lambda_{\min}(\Omega_{e}(\mathcal{M}_{k}))\leq\lambda_{\max}(\Omega_{e}(\mathcal{M}_{k}))\leq C_{\Omega}$
and $c_{\Omega}(s+m)\leq\text{tr}(\Omega(\mathcal{M}_{k}))\leq C_{\Omega}(s+m)$
for all $k\in[K]$ in our context, we bound 
\begin{align*}
 & \Pr\left\{ \sqrt{\phi_{\max}(\Gamma_{\ell},s+m)}\geq\sqrt{C_{\Omega}}\left(1+c\right)+\sqrt{C_{\Omega}}\sqrt{\dfrac{s+m}{\ell}}\right\} \\
\leq & \sum_{k=1}^{K}\Pr\left\{ \sqrt{\lambda_{\max}\left(\Gamma_{\ell}(\mathcal{M}_{k})\right)}\geq\sqrt{\lambda_{\max}(\Omega_{e}(\mathcal{M}_{k}))}\cdot(1+c)+\sqrt{\dfrac{\text{tr}(\Omega(\mathcal{M}_{k}))}{\ell}}\right\} \\
\leq & K\exp\left(-\ell c^{2}/2\right)\leq p^{s+m}\exp\left(-\ell c^{2}/2\right)
\end{align*}
and similarly 
\begin{align*}
 & \Pr\left\{ \sqrt{\phi_{\min}(\Gamma_{\ell},s+m)}\leq\sqrt{c_{\Omega}}\left(1-c\right)-\sqrt{C_{\Omega}}\sqrt{\dfrac{s+m}{\ell}}\right\} \\
\leq & \sum_{k=1}^{K}\Pr\left\{ \sqrt{\lambda_{\min}\left(\Gamma_{\ell}(\mathcal{M}_{k})\right)}\leq\sqrt{\lambda_{\min}(\Omega_{e}(\mathcal{M}_{k}))}\cdot(1-c)-\sqrt{\dfrac{\text{tr}(\Omega_{e}(\mathcal{M}_{k}))}{\ell}}\right\} \\
\leq & K\exp\left(-\ell c^{2}/2\right)\leq p^{s+m}\exp\left(-\ell c^{2}/2\right).
\end{align*}

Let $\ell=16(s+m)\log p$ and $c=0.5$. When $p$ is sufficiently
large, 
\begin{align*}
\sqrt{c_{\Omega}}(1-c)-\sqrt{C_{\Omega}}\sqrt{\dfrac{s+m}{\ell}} & =(1-0.5)\sqrt{c_{\Omega}}-\dfrac{\sqrt{C_{\Omega}}}{\sqrt{\log p}}>0.4\sqrt{c_{\Omega}}\\
\sqrt{C_{\Omega}}\left(1+c\right)+\sqrt{C_{\Omega}}\sqrt{\dfrac{s+m}{\ell}} & =(1+0.5)\sqrt{C_{\Omega}}+\dfrac{\sqrt{C_{\Omega}}}{\sqrt{\log p}}<1.6\sqrt{c_{\Omega}}.
\end{align*}
These two inequalities give us 
\begin{eqnarray*}
 &  & \Pr\left\{ \left\{ \phi_{\min}(\Gamma_{\ell},s+m)\leq0.16c_{\Omega}\right\} \cup\left\{ \phi_{\max}(\Gamma_{\ell},s+m)\geq2.56C_{\Omega}\right\} \right\} \\
 & \leq & 2p^{s+m}\exp\left(-\ell\cdot0.5^{2}/2\right)=2p^{s+m}\exp\left(-2(s+m)\log p\right)=2p^{-(s+m)}\to0.
\end{eqnarray*}
In other words, 
\[
0.16c_{\Omega}=\tilde{c}\leq\phi_{\min}(\Gamma_{\ell},s+m)\leq\phi_{\max}(\Gamma_{\ell},s+m)\leq\tilde{C}=2.56C_{\Omega}
\]
 holds w.p.a.1. As a result, 
\begin{equation}
m\phi_{\min}(\Gamma_{\ell},s+m)\geq m\widetilde{c}=C_{m}s\widetilde{c}\geq4L^{2}s\widetilde{C}>4L^{2}s\phi_{\max}(\Gamma_{\ell},m)\label{eq:Bickel_assu_2}
\end{equation}
holds w.p.a.1.~as well. Under the condition $s+m=(1+C_{m})s=o(p)$,
the inequality (\ref{eq:Bickel_assu_2}) verifies \citet{bickel2009simultaneous}'s
Assumption 2 $m\phi_{\min}(s+m)>L^{2}s\phi_{\max}(m)$ w.p.a.1.. Let
$\mathcal{S}_{01}=\mathcal{S}_{0}\cup\mathcal{S}_{1}$ where $\mathcal{S}_{0}=\{j\in[p]:\delta_{j}\neq0\}$
and $\mathcal{S}_{1}\subset[p]$ is another index set corresponding
to the $m$ largest (in terms of absolute value) coordinates of $\delta$
outside of $\mathcal{S}_{0}$. Let $P_{01}$ be the projection matrix
that maps any $p\times1$ vector onto the linear space spanned by
the columns of $V_{\cdot[\ell]}^{\top}e$ indexed by the set $\mathcal{S}_{01}$,
i.e.~$(V_{\cdot[\ell]}^{\top}e)_{\cdot\mathcal{S}_{01}}$. We have
\[
\delta^{\top}\Gamma_{\ell}\delta=\dfrac{1}{\ell}\|V_{\cdot[\ell]}^{\top}e\delta\|_{2}^{2}\geq\dfrac{1}{\ell}\|P_{01}V_{\cdot[\ell]}^{\top}e\delta\|_{2}^{2}.
\]
For all $\delta\in\mathcal{R}(L,s)$\textbf{ }in the restricted set
defined below (\ref{eq:RE}), we have
\begin{equation}
\dfrac{1}{\ell}\|P_{01}V_{\cdot[\ell]}^{\top}e\delta\|_{2}^{2}\geq\tilde{\phi}^{2}\|\delta_{\mathcal{S}_{01}}\|_{2}^{2}\label{eq:quadraticLower1}
\end{equation}
w.p.a.1.~by\textbf{ }\citet{bickel2009simultaneous}'s Lemma 4.1
(ii), where 
\begin{align}
\tilde{\phi} & =\sqrt{\phi_{\min}(\Gamma_{\ell},s+m)}\left(1-L\sqrt{\dfrac{s\phi_{\max}(\Gamma_{\ell},m)}{m\phi_{\min}(\Gamma_{\ell},s+m)}}\right)\geq\sqrt{\widetilde{c}}\left(1-L\sqrt{\dfrac{\widetilde{C}}{C_{m}\widetilde{c}}}\right)\geq\frac{\sqrt{\tilde{c}}}{2}.\label{eq:tildekappa}
\end{align}
\citet{bickel2009simultaneous}'s Eq.(B.28) yields 
\begin{equation}
\|\delta_{\mathcal{S}_{01}}\|_{2}\geq\dfrac{1}{1+L\sqrt{s/m}}\|\delta\|_{2}=\frac{1}{1+L/\sqrt{C_{m}}}\|\delta\|_{2}\geq\dfrac{\sqrt{\widetilde{c}}}{\sqrt{\widetilde{c}}+2\sqrt{\widetilde{C}}}\|\delta\|_{2}\label{eq:delta_s01}
\end{equation}
where the last inequality follows by $C_{m}=\left\lceil 4L^{2}\widetilde{C}/\widetilde{c}\right\rceil \geq4L^{2}\widetilde{C}/\widetilde{c}$.
 Inserting (\ref{eq:tildekappa}) and (\ref{eq:delta_s01}) into (\ref{eq:quadraticLower1}),
the second factor of (\ref{eq:LBeigen}) is bounded from below by
\begin{align}
\delta^{\top}\Gamma_{\ell}\delta & \geq\tilde{\phi}^{2}\|\delta_{\mathcal{S}_{01}}\|_{2}^{2}\geq\ensuremath{\dfrac{\widetilde{c}}{4}\left(\dfrac{\sqrt{\widetilde{c}}}{\sqrt{\widetilde{c}}+2\sqrt{\widetilde{C}}}\right)^{2}}\|\delta\|_{2}^{2}=C_{\kappa}\|\delta\|_{2}^{2}\label{eq:LB-RE}
\end{align}
w.p.a.1., where $C_{\kappa}=\text{\ensuremath{\widetilde{c}\left(\sqrt{\widetilde{c}}/(\sqrt{\widetilde{c}}+2\sqrt{\widetilde{C}})\right)^{2}/4.}}$
Insert (\ref{eq:eigen_LB}) and (\ref{eq:LB-RE}) into (\ref{eq:LBeigen})
and rearrange: 
\begin{align}
\frac{\delta^{\top}\text{\ensuremath{\widehat{\Sigma}}}\delta}{n\|\delta\|_{2}^{2}} & \geq\dfrac{C_{\kappa}}{2\pi^{2}\ell}\geq\dfrac{C_{\kappa}}{32\pi^{2}\cdot(s+m)\log p}\nonumber \\
 & \geq\dfrac{C_{\kappa}}{32\pi^{2}(1+C_{m})s\log p}=\dfrac{C_{\kappa}}{32\pi^{2}\left(1+\left\lceil 4L^{2}\widetilde{C}/\widetilde{c}\right\rceil \right)s\log p}\nonumber \\
 & \geq\dfrac{C_{\kappa}}{32\pi^{2}\left(2+4L^{2}\widetilde{C}/\widetilde{c}\right)s}\geq\dfrac{\widetilde{c}\cdot C_{\kappa}}{128\pi^{2}(\widetilde{c}+\widetilde{C})L^{2}s\log p}=\dfrac{\tilde{c}_{\kappa}}{L^{2}s\log p}\label{eq:LB-RE2}
\end{align}
w.p.a.1., where $\tilde{c}_{\kappa}=\widetilde{c}\cdot C_{\kappa}/[128\pi^{2}(\widetilde{c}+\widetilde{C})]$. 

\textbf{Part (b).} When $e_{t}$ is non-normal, we define $\xi_{j,t-1}:=\sum_{s=0}^{t-1}\varepsilon_{j,s}$
and a companion Brownian motion $\zeta_{t}:=\{\zeta_{j,t-1}=\psi_{j}(1)\mathcal{B}_{j}(t)\}_{j\in[p+1]}$.
Let $\Phi=(\Phi_{e}^{\top},\Phi_{u}^{\top})^{\top}$ where $\Phi_{e}$
is $p\times(p+1)$ and $\Phi_{u}$ is $1\times(p+1)$, and $\hat{\Upsilon}=\Phi_{e}(n^{-1}\sum_{t=1}^{n}\ddot{\zeta}_{t-1}\ddot{\zeta}_{t-1}{}^{\top})\Phi_{e}^{\top}$.
The triangular inequality yields
\begin{align}
\delta^{\top}\text{\ensuremath{\widehat{\Sigma}}}\delta & \geq\delta^{\top}\hat{\Upsilon}\delta-\left|\delta^{\top}(\text{\ensuremath{\widehat{\Sigma}}}-\hat{\Upsilon})\delta\right|.\label{eq:norm_app_1}
\end{align}
 The procedures as in Part (a) bounds the first term on the right-hand
side of the above expression 
\begin{equation}
\delta^{\top}\hat{\Upsilon}\delta\geq\dfrac{c_{\kappa}^{\prime}}{L^{2}s\log p}n\|\delta\|_{2}^{2}\label{eq:boundmaxtemp5}
\end{equation}
w.p.a.1 for some absolute constant $c_{\kappa}^{\prime}$. We move
on to the second term
\begin{align}
\left|\delta^{\top}(\text{\ensuremath{\widehat{\Sigma}}}-\hat{\Upsilon})\delta\right| & \leq\|\delta\|_{1}^{2}\|\text{\ensuremath{\widehat{\Sigma}}}-\hat{\Upsilon}\|_{\max}\leq\left(\|\delta_{\mathcal{S}}\|_{1}+\|\delta_{\mathcal{S}^{c}}\|_{1}\right)^{2}\|\text{\ensuremath{\widehat{\Sigma}}}-\hat{\Upsilon}\|_{\max}\nonumber \\
 & \leq(1+L)^{2}\|\delta_{\mathcal{S}}\|_{1}^{2}\|\text{\ensuremath{\widehat{\Sigma}}}-\hat{\Upsilon}\|_{\max}\leq4L^{2}s\|\delta_{\mathcal{S}}\|_{2}^{2}\|\text{\ensuremath{\widehat{\Sigma}}}-\hat{\Upsilon}\|_{\max}\nonumber \\
 & \leq4L^{2}s\|\delta\|_{2}^{2}\|\text{\ensuremath{\widehat{\Sigma}}}-\hat{\Upsilon}\|_{\max}\label{eq:boundmaxtemp4}
\end{align}
for any $L\geq1$, where the third inequality applies the restriction
$\delta\in\mathcal{R}(L,s).$ 

Since $X_{t}=\sum_{s=0}^{t}e_{s}=\Phi_{e}\sum_{s=0}^{t}\varepsilon_{s}=\Phi_{e}\xi_{t-1}$,
it follows that 
\begin{eqnarray*}
\|\text{\ensuremath{\widehat{\Sigma}}}-\hat{\Upsilon}\|_{\text{\ensuremath{\max}}} & = & \|\Phi_{e}n^{-1}\sum_{t=1}^{n}(\ddot{\xi}_{t-1}\ddot{\xi}{}_{t-1}^{\top}-\ddot{\zeta}_{t-1}\ddot{\zeta}_{t-1}^{\top})\Phi_{e}\|_{\max}\\
 & \leq & \left(\max_{j\in[p]}\sum_{\ell=1}^{p+1}\left|\Phi_{j\ell}\right|\right)^{2}\|n^{-1}\sum_{t=1}^{n}(\ddot{\xi}_{t-1}\ddot{\xi}{}_{t-1}^{\top}-\ddot{\zeta}_{t-1}\ddot{\zeta}_{t-1}^{\top})\|_{\max}\\
 & \leq & C_{L}^{2}\|n^{-1}\sum_{t=1}^{n}(\ddot{\xi}_{t-1}\ddot{\xi}{}_{t-1}^{\top}-\ddot{\zeta}_{t-1}\ddot{\zeta}_{t-1}^{\top})\|_{\max}\\
 & = & C_{L}^{2}\|n^{-1}\sum_{t=1}^{n}(\ddot{\xi}_{t-1}\ddot{\xi}{}_{t-1}^{\top}-\bar{\xi}_{t-1}\bar{\xi}{}_{t-1}^{\top}-\ddot{\zeta}_{t-1}\ddot{\zeta}_{t-1}^{\top}+\bar{\zeta}_{t-1}\bar{\zeta}_{t-1}^{\top})\|_{\max}\\
 & \leq & C_{L}^{2}\|n^{-1}\sum_{t=1}^{n}(\xi_{t-1}\xi{}_{t-1}^{\top}-\zeta_{t-1}\zeta{}_{t-1}^{\top})\|_{\max}+n^{-1}\|\bar{\xi}\bar{\xi}^{\top}-\bar{\zeta}\bar{\zeta}^{\top}\|_{\max},
\end{eqnarray*}
 where the second inequality follows by Assumption \ref{assu:covMat}.
Notice
\begin{align}
 & \sup_{j,\ell\in\left[p+1\right]}\left|\frac{1}{n}\sum_{t=1}^{n}(\xi_{j,t-1}\xi{}_{\ell,t-1}^{\top}-\zeta_{j,t-1}\zeta{}_{\ell,t-1}^{\top})\right|\nonumber \\
\leq & \ensuremath{\sup_{j,\ell\in\left[p+1\right]}\dfrac{1}{n}\sum_{t=0}^{n-1}(\text{\ensuremath{\left|\xi_{jt}-\zeta_{jt}\right|}}\cdot|\xi_{\ell t}|\text{\ensuremath{+}\ensuremath{\left|\xi_{\ell t}-\zeta_{\ell t}\right|}}\cdot|\zeta_{jt}|)}\nonumber \\
\leq & \sup_{j,\ell\in\left[p+1\right]}\sup_{0\leq t\leq n-1}\left(|\xi_{\ell t}|+|\zeta_{jt}|\right)\cdot\sup_{j\in\left[p+1\right]}\dfrac{1}{n}\sum_{t=0}^{n-1}\text{\ensuremath{\left|\xi_{jt}-\zeta_{jt}\right|}}\nonumber \\
= & O_{p}\left(\sqrt{n\log p}\right)\cdot\sqrt{n}\sup_{j\in\left[p+1\right]}\sup_{0\leq t\leq n-1}n^{-1/2}\left|\xi_{jt}-\zeta_{jt}\right|\label{eq:boundmaxtemp}
\end{align}
where the last equality applies Lemmas \ref{lem:mixing} and \ref{lem:BernsteinSum}
for the stochastic order of $\sup_{j,\ell,t}\left(|\xi_{\ell t}|+|\zeta_{jt}|\right)$.
We invoke Lemma \ref{lem:GaussianApprox} to obtain 
\[
\sup_{j\in\left[p+1\right]}\sup_{0\leq t\leq n-1}n^{-1/2}\left|\xi_{jt}-\zeta_{jt}\right|=O_{p}(n^{-1/2}\log p).
\]
These bounds imply 
\[
\|n^{-1}\sum_{t=1}^{n}\xi_{t-1}\xi{}_{t-1}^{\top}-n^{-1}\sum_{t=1}^{n}\zeta_{t-1}\zeta{}_{t-1}^{\top}\|_{\max}=O_{p}\left(\dfrac{n(\log p)^{3/2}}{n^{1/2}}\right)=O_{p}\left(n^{1/2}(\log p)^{3/2}\right).
\]
Similar derivation also shows $\|\bar{\xi}\bar{\xi}^{\top}-\bar{\zeta}\bar{\zeta}^{\top}\|_{\max}=O_{p}\left(n^{1/2}(\log p)^{3/2}\right)$
and therefore 
\[
\|\text{\ensuremath{\widehat{\Sigma}}}-\hat{\Upsilon}\|_{\text{\ensuremath{\max}}}=O_{p}\left(n^{1/2}(\log p)^{3/2}\right).
\]

Inserting the above expression into (\ref{eq:boundmaxtemp4}), we
have 
\begin{equation}
\frac{|\delta^{\top}(\text{\ensuremath{\widehat{\Sigma}}}-\hat{\Upsilon})\delta|}{n\|\delta\|_{2}^{2}}\leq4L^{2}sO_{p}\left(n^{-1/2}(\log p)^{3/2}\right)=o_{p}\left(\frac{L^{-2}}{s\log p}\right)\label{eq:boundmaxtemp6}
\end{equation}
given $s^{2}L^{4}(\log p)^{5/2}=o(n^{1/2})$. (\ref{eq:boundmaxtemp5})
and (\ref{eq:boundmaxtemp6}) then provide 
\begin{align*}
\frac{\delta^{\top}\text{\ensuremath{\widehat{\Sigma}}}\delta}{n\|\delta\|_{2}^{2}} & \geq\dfrac{c_{\kappa}^{\prime}}{L^{2}s\log p}-o_{p}\left(\frac{L^{-2}}{s\log p}\right)\geq\dfrac{\widetilde{c}_{\kappa}}{L^{2}s\log p}
\end{align*}
w.p.a.1.~when $n$ is large enough, where $\widetilde{c}_{\kappa}=0.5c_{\kappa}^{\prime}.$
\end{proof}

\subsubsection{RE and DB for Standardized Unit Roots}

The Slasso estimator is equivalent to $\hat{\theta}^{\mathrm{S}}:=D^{-1}\check{\theta}$,
where 
\[
\check{\theta}:=\arg\min_{\theta}\left\{ n^{-1}\left\Vert \ddot{Y}-\ddot{W}D^{-1}\theta\right\Vert _{2}^{2}+\lambda\left\Vert \theta\right\Vert _{1}\right\} .
\]
The scale-normalization transforms $W_{j}$ to $\tilde{W}_{j}=\ddot{W}_{j}/\widehat{\sigma}_{j}$.
Here we deduce RE and DB for the standardized time series. Define
$\widehat{\varsigma}:=\widehat{\sigma}_{\max}/\widehat{\sigma}_{\text{\ensuremath{\min}}}$
as the ratio of the maximum and the minimum standard deviation. Recall
$\widehat{\kappa}_{D}=\kappa_{D}(\hat{\Sigma},3,s)$.
\begin{prop}
\label{prop:REDBstdUnit} Suppose that Assumptions \ref{assu:tail}-\ref{assu:covMat}
hold, $(1+C_{m}(3\widehat{\varsigma}))s=o(n\wedge p)$ with $C_{m}(L)$
defined as (\ref{eq:Cm-def}), and $s^{2}\widehat{\varsigma}^{4}(\log p)^{5/2+1/(2r)}=o_{p}(n^{1/2})$.
Then we have 
\begin{equation}
n^{-1}\|\sum_{t=1}^{n}D^{-1}\ddot{W}_{t-1}u_{t}\|_{\infty}\leq\widehat{\sigma}_{\min}^{-1}C_{{\rm DB}}(\log p)^{1+\frac{1}{2r}}\label{eq:DB-SL-sigma}
\end{equation}
w.p.a.1., and 
\begin{equation}
\Pr\left\{ \widehat{\kappa}_{D}\geq\dfrac{nc_{\kappa}}{s\log p\cdot\widehat{\varsigma}^{2}\widehat{\sigma}_{{\rm \max}}^{2}}\right\} \to1\label{eq:RE-SL-sigma}
\end{equation}
 for some absolute constant $c_{\kappa}$.
\end{prop}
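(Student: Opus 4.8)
The plan is to obtain both bounds by reduction to the results already proved for the unstandardized unit root regressors --- Proposition \ref{prop:UnitDB} for the deviation bound and Proposition \ref{prop:UnitRE-cn}(b) for the restricted eigenvalue --- via the change of variables $\widetilde{\delta}=D^{-1}\delta$ that absorbs $D={\rm diag}(\widehat{\sigma}_{1},\ldots,\widehat{\sigma}_{p})$. The bound (\ref{eq:DB-SL-sigma}) is immediate: the $j$th coordinate of $n^{-1}\sum_{t=1}^{n}D^{-1}\ddot{W}_{t-1}u_{t}$ equals $\widehat{\sigma}_{j}^{-1}\,n^{-1}\sum_{t=1}^{n}\ddot{X}_{j,t-1}u_{t}$, so $n^{-1}\|\sum_{t=1}^{n}D^{-1}\ddot{W}_{t-1}u_{t}\|_{\infty}\le\widehat{\sigma}_{\min}^{-1}\,\|n^{-1}\sum_{t=1}^{n}\ddot{X}_{t-1}u_{t}\|_{\infty}$, and Proposition \ref{prop:UnitDB} bounds the last factor by $\tfrac{1}{4}C_{{\rm DB}}(\log p)^{1+\frac{1}{2r}}$ w.p.a.1.; a fortiori (\ref{eq:DB-SL-sigma}) holds with the same constant $C_{{\rm DB}}$.

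For the restricted eigenvalue bound (\ref{eq:RE-SL-sigma}), write $\widehat{\kappa}_{D}=\inf_{\delta\in\mathcal{R}(3,s)}\widetilde{\delta}^{\top}\hat{\Sigma}\widetilde{\delta}\big/\big(\widetilde{\delta}^{\top}D^{2}\widetilde{\delta}\big)$. Two elementary facts give $\widehat{\kappa}_{D}\ge\widehat{\sigma}_{\max}^{-2}\kappa_{I}(\hat{\Sigma},3\widehat{\varsigma},s)$: since $D^{2}\preceq\widehat{\sigma}_{\max}^{2}I_{p}$ the denominator is at most $\widehat{\sigma}_{\max}^{2}\|\widetilde{\delta}\|_{2}^{2}$; and for every $\mathcal{M}$ with $|\mathcal{M}|\le s$, $\|\delta_{\mathcal{M}^{c}}\|_{1}=\sum_{j\in\mathcal{M}^{c}}\widehat{\sigma}_{j}|\widetilde{\delta}_{j}|\ge\widehat{\sigma}_{\min}\|\widetilde{\delta}_{\mathcal{M}^{c}}\|_{1}$ while $\|\delta_{\mathcal{M}}\|_{1}\le\widehat{\sigma}_{\max}\|\widetilde{\delta}_{\mathcal{M}}\|_{1}$, so $\|\delta_{\mathcal{M}^{c}}\|_{1}\le3\|\delta_{\mathcal{M}}\|_{1}$ implies $\|\widetilde{\delta}_{\mathcal{M}^{c}}\|_{1}\le3\widehat{\varsigma}\|\widetilde{\delta}_{\mathcal{M}}\|_{1}$, i.e.\ $D^{-1}\mathcal{R}(3,s)\subseteq\mathcal{R}(3\widehat{\varsigma},s)$. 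Then Proposition \ref{prop:UnitRE-cn}(b) applied with $L=3\widehat{\varsigma}$ (so $L\ge1$) gives $\kappa_{I}(\hat{\Sigma},3\widehat{\varsigma},s)/n\ge\widetilde{c}_{\kappa}\big/\big(9\widehat{\varsigma}^{2}s\log p\big)$ w.p.a.1.: its hypotheses $(1+C_{m}(3\widehat{\varsigma}))s=o(n\wedge p)$ and $s^{2}(3\widehat{\varsigma})^{4}(\log p)^{5/2}=o_{p}(n^{1/2})$ follow from those of the present proposition, the second because $(\log p)^{1/(2r)}\ge1$ and the numerical factor $81$ is absorbed. Combining, $\widehat{\kappa}_{D}\ge nc_{\kappa}\big/\big(\widehat{\varsigma}^{2}\widehat{\sigma}_{\max}^{2}s\log p\big)$ w.p.a.1.\ with $c_{\kappa}=\widetilde{c}_{\kappa}/9$, which is (\ref{eq:RE-SL-sigma}).

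The one step requiring a word of care is that $L=3\widehat{\varsigma}$ is data-dependent, whereas Proposition \ref{prop:UnitRE-cn} is phrased for a fixed $L$. This is harmless because in the proof of Proposition \ref{prop:UnitRE-cn}(b) the parameter $L$ enters only through the deterministic quantities $C_{m}(L)$, $m=C_{m}(L)s$, $\ell\asymp(s+m)\log p$, and numerical constants, whereas the probabilistic inputs --- the Wishart deviation inequalities \citep[Theorem 6.1]{Wainwright2019high} applied uniformly over all submatrices up to the relevant order, and the Koml\'{o}s--Major--Tusn\'{a}dy coupling of Lemma \ref{lem:GaussianApprox} --- hold simultaneously with probability tending to one and do not reference $L$. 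Hence that proof adapts verbatim to the data-dependent $L=3\widehat{\varsigma}$: on the good event one uses the monotonicity of $\phi_{\min}(\cdot,k)$ in $k$ to pass from a deterministic envelope for $s+m$ to its realized value, and every inequality there --- in particular the verification of Assumption 2 of \citet{bickel2009simultaneous} at (\ref{eq:Bickel_assu_2}), the lower bound on $\widetilde{\phi}$, and the Gaussian-approximation remainder estimate (\ref{eq:boundmaxtemp6}) --- remains valid. Carrying out this substitution carefully is the main obstacle; the change of variables and the norm inequalities above are routine.
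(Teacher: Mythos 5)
Your proposal follows the paper's own proof essentially verbatim: the deviation bound is obtained coordinate-wise as $\widehat{\sigma}_{\min}^{-1}$ times the bound in Proposition \ref{prop:UnitDB}, and the restricted-eigenvalue bound via the change of variables $\widetilde{\delta}=D^{-1}\delta$, the inclusion $D^{-1}\mathcal{R}(3,s)\subseteq\mathcal{R}(3\widehat{\varsigma},s)$, the bound $\widetilde{\delta}^{\top}D^{2}\widetilde{\delta}\leq\widehat{\sigma}_{\max}^{2}\|\widetilde{\delta}\|_{2}^{2}$, and Proposition \ref{prop:UnitRE-cn}(b) with $L=3\widehat{\varsigma}$. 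Your closing remark on the data-dependence of $L=3\widehat{\varsigma}$ is a point the paper silently glosses over, and your sketch of how to handle it (deterministic envelope plus monotonicity on the good event) is a sound, if extra, refinement rather than a different route.
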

\begin{proof}
\textbf{DB}. It follows from (\ref{eq:DB-All-demean}) that w.p.a.1.:
\begin{align*}
n^{-1}\|\sum_{t=1}^{n}D^{-1}\ddot{W}_{t-1}u_{t}\|_{\infty} & =\max_{j\in[p]}n^{-1}\left|\sum_{t=1}^{n}\frac{\ddot{X}_{j,t-1}}{\widehat{\sigma}_{j}}u_{t}\right|\leq\widehat{\sigma}_{\text{\ensuremath{\min}}}C_{{\rm DB}}(\log p)^{1+\frac{1}{2r}}.
\end{align*}

\textbf{RE}. Define $\widetilde{\delta}:=D^{-1}\delta=(\widehat{\sigma}_{j}^{-1}\delta_{j}){}_{j\in[p]}.$
Obviously, $\|\delta_{\mathcal{M}}\|_{1}\leq\widehat{\sigma}_{\max}\|\widetilde{\delta}_{\mathcal{M}}\|_{1}$
and $\widehat{\sigma}_{\min}\|\widetilde{\delta}_{\mathcal{M}^{c}}\|_{1}\leq\|\delta_{\mathcal{M}^{c}}\|_{1}$.
Whenever $\delta\in\mathcal{R}(3,s)$ such that for any $|\mathcal{M}|\leq s$
we have $\|\delta_{\mathcal{M}^{c}}\|_{1}\leq3\|\delta_{\mathcal{M}}\|_{1},$
and thus $\widetilde{\delta}\in\mathcal{R}\left(\widehat{\varsigma},s\right)$.
Then 
\begin{align*}
\hat{\kappa}_{D} & =\inf_{\delta\in\mathcal{R}(3,s)}\dfrac{\delta^{\top}D^{-1}\hat{\Sigma}D^{-1}\delta}{\delta^{\top}\delta}=\inf_{\delta\in\mathcal{R}(3,s)}\dfrac{\delta^{\top}D^{-1}\text{\ensuremath{\widehat{\Sigma}}}D^{-1}\delta}{\delta^{\top}D^{-1}D^{2}D^{-1}\delta}=\inf_{\widetilde{\delta}\in\mathcal{R}\left(3,s\right)}\dfrac{\widetilde{\delta}^{\top}\text{\ensuremath{\widehat{\Sigma}}}\widetilde{\delta}}{\widetilde{\delta}^{\top}D^{2}\widetilde{\delta}}\\
 & \geq\inf_{\widetilde{\delta}\in\mathcal{R}\left(3\widehat{\varsigma},s\right)}\dfrac{\widetilde{\delta}^{\top}\text{\ensuremath{\widehat{\Sigma}}}\widetilde{\delta}}{\widetilde{\delta}^{\top}D^{2}\widetilde{\delta}}\geq\widehat{\sigma}_{{\rm \max}}^{-2}\inf_{\widetilde{\delta}\in\mathcal{R}\left(3\widehat{\varsigma},s\right)}\dfrac{\widetilde{\delta}^{\top}\text{\ensuremath{\widehat{\Sigma}}}\widetilde{\delta}}{\widetilde{\delta}^{\top}\widetilde{\delta}}=\widehat{\sigma}_{{\rm \max}}^{-2}\kappa_{I}(\text{\ensuremath{\widehat{\Sigma}}},3\widehat{\varsigma},s).
\end{align*}
Taking $L=3\widehat{\varsigma}$. By Proposition \ref{prop:UnitRE-cn}
$\kappa_{I}(\text{\ensuremath{\widehat{\Sigma}}},3\widehat{\varsigma},s)\geq\dfrac{\widetilde{c}_{\kappa}n}{9s\log p\widehat{\varsigma}^{2}}$
w.p.a.1., the result holds as stated in  (\ref{eq:RE-SL-sigma}).
\end{proof}

\subsubsection{RE for Demeaned Mixed Regressors}

The following proposition considers the case of mixed regressors formulated
in Section \ref{sec:Mixed-Regressors}. Here we scale-normalize $X_{t}$
by $\sqrt{n}$ and define $W_{t}^{*}:=(n^{-1/2}X_{t}^{\top},Z_{t}^{\top})^{\top}$.
We also define a corresponding Gram matrix of the $\sqrt{n}$-scaled
regressors as 
\[
\widehat{\Sigma}^{*}:=n^{-1}\sum_{t=1}^{n}(W_{t}^{*}-\bar{W}^{*})(W_{t}^{*}-\bar{W}^{*})^{\top}=n^{-1}\sum_{t=1}^{n}\ddot{W}_{t}^{*}\ddot{W}_{t}^{*\top}.
\]

Recall $v_{t}=(e_{t}^{\top},Z_{t}^{\top},u_{t})^{\top}=\Phi\varepsilon_{t}$
in (\ref{eq:def-error}). Partition $\Phi=(\Phi_{e}^{\top},\Phi_{z}^{\top},\Phi_{u}^{\top})^{\top}$
with $\Phi_{e}$ being $p_{x}\times(p+1)$, $\Phi_{z}$ being $p_{z}\times(p+1)$,
and $\Phi_{u}$ being $1\times(p+1).$ We have the following RE for
$\widehat{\Sigma}^{*}.$ 
\begin{prop}
\label{prop:MixRE-cn} Suppose that Assumptions \ref{assu:tail}-\ref{assu:covMat}
hold. As $n\to\infty$, if $L^{2}s=o(n\wedge p)$ and $s^{2}L^{4}(\log p)^{5/2+1/(2r)}=o(n^{1/2})$,
then w.p.a.1 
\begin{equation}
\kappa_{I}(\widehat{\Sigma}^{*},L,s)\geq\dfrac{\widetilde{c}_{\kappa}}{L^{2}s\log p}\label{eq:RE-unit-1-1}
\end{equation}
for any $L\geq1$, where $\widetilde{c}_{\kappa}$ is an absolute
constant. 
\end{prop}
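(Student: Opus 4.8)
The plan is to exploit the block structure of $\widehat{\Sigma}^{*}$ coming from the partition of $W_{t}^{*}=(n^{-1/2}X_{t}^{\top},Z_{t}^{\top})^{\top}$. Writing $\delta=(\delta_{X}^{\top},\delta_{Z}^{\top})^{\top}$, one has $\delta^{\top}\widehat{\Sigma}^{*}\delta=\delta_{X}^{\top}(\widehat{\Sigma}_{X}/n)\delta_{X}+2\delta_{X}^{\top}\widehat{\Sigma}_{XZ}^{*}\delta_{Z}+\delta_{Z}^{\top}\widehat{\Sigma}_{Z}\delta_{Z}$, where $\widehat{\Sigma}_{X}=\ddot{X}^{\top}\ddot{X}/n$, $\widehat{\Sigma}_{Z}=\ddot{Z}^{\top}\ddot{Z}/n$, and the off-diagonal block $\widehat{\Sigma}_{XZ}^{*}$ has entries $n^{-3/2}\sum_{t}\ddot{X}_{j,t-1}\ddot{Z}_{k,t-1}$. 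For any $\delta\in\mathcal{R}(L,s)$ we have $\|\delta\|_{1}\le(1+L)\sqrt{s}\,\|\delta\|_{2}$, so the cross term and the fluctuation of $\widehat{\Sigma}_{Z}$ can be controlled in the $\|\cdot\|_{\max}$ metric: by Proposition \ref{prop:DB-All} and Remark \ref{rem:DB-demean}, $\|\widehat{\Sigma}_{XZ}^{*}\|_{\max}\stackrel{\mathrm{p}}{\preccurlyeq}n^{-1/2}(\log p)^{1+\frac{1}{2r}}$, while \eqref{eq:maxSqZ} in Proposition \ref{prop:DB-Ze} gives $\|\widehat{\Sigma}_{Z}-\Sigma^{(z)}\|_{\max}\stackrel{\mathrm{p}}{\preccurlyeq}\sqrt{(\log p)/n}$. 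Hence $|2\delta_{X}^{\top}\widehat{\Sigma}_{XZ}^{*}\delta_{Z}|\le\tfrac{1}{2}\|\delta\|_{1}^{2}\|\widehat{\Sigma}_{XZ}^{*}\|_{\max}$ and $|\delta_{Z}^{\top}(\widehat{\Sigma}_{Z}-\Sigma^{(z)})\delta_{Z}|\le\|\delta\|_{1}^{2}\|\widehat{\Sigma}_{Z}-\Sigma^{(z)}\|_{\max}$ are both $o_{p}\!\big(\tfrac{1}{L^{2}s\log p}\big)\|\delta\|_{2}^{2}$, using $L^{2}s=o(n\wedge p)$ and $s^{2}L^{4}(\log p)^{5/2+\frac{1}{2r}}=o(n^{1/2})$. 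Since $\Sigma^{(z)}=\mathbb{E}[Z_{t-1}Z_{t-1}^{\top}]$ is, up to a diagonal rescaling by factors $\ge1$, a principal sub-block of $\Omega$, Assumption \ref{assu:covMat}(a) yields $\lambda_{\min}(\Sigma^{(z)})\ge c_{\Omega}$, so $\delta_{Z}^{\top}\widehat{\Sigma}_{Z}\delta_{Z}\ge c_{\Omega}\|\delta_{Z}\|_{2}^{2}-o_{p}\!\big(\tfrac{1}{L^{2}s\log p}\big)\|\delta\|_{2}^{2}$.

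The remaining task is to bound $\delta_{X}^{\top}(\widehat{\Sigma}_{X}/n)\delta_{X}$, and here lies the main obstacle: a sub-vector of a restricted-cone vector need not itself lie in a restricted cone, so Proposition \ref{prop:UnitRE-cn} cannot be applied directly to $\delta_{X}$. I would resolve this by a case split on where the $\ell_{1}$-mass of the cone constraint sits. Let $\mathcal{S}$ collect the indices of the $s$ largest coordinates of $\delta$, split as $\mathcal{S}=\mathcal{S}_{X}\sqcup\mathcal{S}_{Z}$ across the two blocks, so that $\|\delta_{\mathcal{S}^{c}}\|_{1}\le L\|\delta_{\mathcal{S}}\|_{1}$ reads $\|\delta_{X,\mathcal{S}_{X}^{c}}\|_{1}+\|\delta_{Z,\mathcal{S}_{Z}^{c}}\|_{1}\le L(\|\delta_{X,\mathcal{S}_{X}}\|_{1}+\|\delta_{Z,\mathcal{S}_{Z}}\|_{1})$. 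In the case $\|\delta_{Z,\mathcal{S}_{Z}}\|_{1}\le\|\delta_{X,\mathcal{S}_{X}}\|_{1}$ this gives $\|\delta_{X,\mathcal{S}_{X}^{c}}\|_{1}\le2L\|\delta_{X,\mathcal{S}_{X}}\|_{1}$, and since $\mathcal{S}_{X}$ consists of the largest coordinates of $\delta_{X}$ it follows that $\delta_{X}\in\mathcal{R}_{p_{x}}(2L,s)$; Proposition \ref{prop:UnitRE-cn} (applied with $2L$ in place of $L$, whose hypotheses hold under the stated conditions) then yields $\delta_{X}^{\top}(\widehat{\Sigma}_{X}/n)\delta_{X}\ge\tfrac{\widetilde{c}_{\kappa}}{4L^{2}s\log p}\|\delta_{X}\|_{2}^{2}$, and combining with the $Z$-block bound and $c_{\Omega}\ge\tfrac{\widetilde{c}_{\kappa}}{4L^{2}s\log p}$ for large $p$ gives $\delta^{\top}\widehat{\Sigma}^{*}\delta\ge\tfrac{\widetilde{c}_{\kappa}}{8L^{2}s\log p}\|\delta\|_{2}^{2}$. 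In the complementary case $\|\delta_{Z,\mathcal{S}_{Z}}\|_{1}>\|\delta_{X,\mathcal{S}_{X}}\|_{1}$ one gets $\|\delta_{X}\|_{1}\le3L\|\delta_{Z,\mathcal{S}_{Z}}\|_{1}\le3L\sqrt{s}\|\delta_{Z}\|_{2}$, hence $\|\delta_{X}\|_{2}\le3L\sqrt{s}\|\delta_{Z}\|_{2}$ and $\|\delta\|_{2}^{2}\le10L^{2}s\|\delta_{Z}\|_{2}^{2}$; dropping the nonnegative $X$-block, $\delta^{\top}\widehat{\Sigma}^{*}\delta\ge c_{\Omega}\|\delta_{Z}\|_{2}^{2}-o_{p}(\cdot)\|\delta\|_{2}^{2}\ge\tfrac{c_{\Omega}}{20L^{2}s}\|\delta\|_{2}^{2}\ge\tfrac{c_{\Omega}}{20L^{2}s\log p}\|\delta\|_{2}^{2}$, which even beats the target.

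Taking $\widetilde{c}_{\kappa}$ to be the smaller of the two constants delivers \eqref{eq:RE-unit-1-1} w.p.a.1., since each ingredient (Proposition \ref{prop:UnitRE-cn}, the deviation bound for $\widehat{\Sigma}_{XZ}^{*}$, the concentration of $\widehat{\Sigma}_{Z}$) holds w.p.a.1.\ and the case analysis is deterministic on the corresponding events; the trivially sparse case $\|\delta\|_{0}\le s$ is subsumed. Besides the case split just described, the only other point requiring care is the bookkeeping of logarithmic exponents: the power $5/2$ in the condition $s^{2}L^{4}(\log p)^{5/2+\frac{1}{2r}}=o(n^{1/2})$ is exactly the one inherited from the Gaussian-approximation step inside Proposition \ref{prop:UnitRE-cn}(b), while the extra $1/(2r)$ comes from the deviation bound for the $X$--$Z$ cross products in Proposition \ref{prop:DB-All}; one must verify that both the cross term and the $\widehat{\Sigma}_{Z}$ fluctuation, although $o_{p}(1)$ in the restricted cone, are in fact $o_{p}\!\big(\tfrac{1}{L^{2}s\log p}\big)$, which is where this combined condition enters.
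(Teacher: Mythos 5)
Your argument is correct, and the first stage coincides with the paper's: both decompose $\delta^{\top}\widehat{\Sigma}^{*}\delta$ into the $X$-block and $Z$-block quadratic forms plus a cross term, and both kill the cross term and the fluctuation $\widehat{\Sigma}_{Z}-\Sigma^{(z)}$ through $\|\delta\|_{1}^{2}\leq 4L^{2}s\|\delta\|_{2}^{2}$ on the cone together with the max-norm rates from Proposition \ref{prop:DB-All} (plus the demeaning remark) and (\ref{eq:maxSqZ}), which is exactly where the condition $s^{2}L^{4}(\log p)^{5/2+\frac{1}{2r}}=o(n^{1/2})$ is spent. Where you genuinely diverge is the core lower bound on $\frac{1}{n}\delta_{x}^{\top}\widehat{\Sigma}^{(x)}\delta_{x}+\delta_{z}^{\top}\Sigma^{(z)}\delta_{z}$: the paper never extracts $\delta_{x}$ as a cone vector at all; it reduces the $X$-block via the eigen-decomposition of $R^{\top}(I_{n}-J_{n})R$ to $\frac{n}{2\pi^{2}\ell}\,\delta_{x}^{\top}\Gamma_{\ell}\delta_{x}$, forms the block-diagonal comparison matrix $\Lambda_{\ell}={\rm diag}(\Gamma_{\ell},\Sigma^{(z)})$ whose sparse eigenvalues lie in a common interval $[\tilde{c},\tilde{C}]$ (Wishart concentration for $\Gamma_{\ell}$, Assumption \ref{assu:covMat}(a) for $\Sigma^{(z)}$), and then runs the Bickel--Ritov--Tsybakov restricted-eigenvalue argument once on the full vector $\delta\in\mathcal{R}(L,s)$, thereby sidestepping the sub-vector-of-a-cone issue that you confront head-on. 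Your route instead treats Proposition \ref{prop:UnitRE-cn} as a black box and resolves that issue by a deterministic case split on whether the top-$s$ $\ell_{1}$-mass sits in the $X$- or the $Z$-block: in the first case $\delta_{x}$ lands in a cone with constant $2L$ (valid, since the relevant hypotheses of Proposition \ref{prop:UnitRE-cn} at level $2L$ follow from your rate conditions, and its proof applies verbatim to the $e_{t}=\Phi_{e}\varepsilon_{t}$ sub-system because $\Omega_{e}$ inherits the eigenvalue bounds as a principal sub-block of $\Omega$); in the second case the comparison $\|\delta\|_{2}^{2}\leq 10L^{2}s\|\delta_{Z}\|_{2}^{2}$ plus $\lambda_{\min}(\Sigma^{(z)})\geq c_{\Omega}$ even overshoots the target rate. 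Both deliver (\ref{eq:RE-unit-1-1}); your version is more modular (it does not reopen the unit-root RE proof, only costs harmless constants like the $2L$ and the factor $10$), while the paper's joint $\Lambda_{\ell}$ construction avoids any case analysis and makes the two blocks enter symmetrically. Two cosmetic points to tidy up: (\ref{eq:maxSqZ}) is stated for the non-demeaned $\overline{\Sigma}^{(z)}$, so add the $\bar{Z}\bar{Z}^{\top}$ correction of order $O_{p}(\log p/n)$ as the paper does; and the off-diagonal block bound requires the small adjustments for demeaning and for the time-index alignment ($X_{jt}=X_{j,t-1}+e_{jt}$) spelled out after Remark \ref{rem:DB-demean}, neither of which changes the rate.
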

\begin{proof}[Proof of Proposition \ref{prop:MixRE-cn}]
For any $\delta\in\mathbb{R}^{p}$, write $\delta=(\delta_{x}^{\top},\delta_{z}^{\top})^{\top}$
with $\delta_{x}\in\mathbb{R}^{p_{x}}$ and $\delta_{z}\in\mathbb{R}^{p_{z}}$.
Then 
\begin{equation}
\delta^{\top}\widehat{\Sigma}^{*}\delta=\dfrac{1}{n}\delta_{x}^{\top}\text{\ensuremath{\widehat{\Sigma}}}^{(x)}\delta_{x}+\delta_{z}^{\top}\widehat{\Sigma}^{(z)}\delta_{z}+\frac{1}{n^{3/2}}\delta_{x}^{\top}\sum_{t=1}^{n}(\ddot{X}_{t-1}\ddot{Z}_{t-1}^{\top}+\ddot{Z}_{t-1}\ddot{X}_{t-1}^{\top})\delta_{z}\label{eq:delta2Lower}
\end{equation}
consists of two quadratic terms and a cross term. The third term in
(\ref{eq:delta2Lower}) is bounded by 
\begin{align*}
\left\Vert \sum_{t=1}^{n}\ddot{X}_{t-1}\ddot{Z}_{t-1}^{\top}\right\Vert _{\max} & \leq\left\Vert \sum_{t=1}^{n}X_{t-2}Z_{t-1}^{\top}\right\Vert _{\max}+\left\Vert \sum_{t=1}^{n}e_{t-1}Z_{t-1}^{\top}\right\Vert _{\max}+\left\Vert \sum_{t=1}^{n}X_{t-1}\bar{Z}{}^{\top}\right\Vert _{\max}\\
 & =O_{p}\left(n(\log p)^{1+1/(2r)}\right)+O_{p}\left(n\right)+O_{p}\left(n\log p\right)
\end{align*}
where the stochastic order of the first term follows by Proposition
\ref{prop:DB-All}, that of the second term by (\ref{eq:ZuDB}), and
that of the third term by (\ref{eq:boundSumZ}) and (\ref{eq:boundSumX}).
As $\delta\in\mathcal{R}(L,s)$ implies $\|\delta\|_{1}^{2}\leq(1+L)^{2}\|\delta_{\mathcal{S}}\|_{1}^{2}\leq4L^{2}s\|\delta\|_{2}^{2}$,
we have 
\begin{align}
\delta_{x}^{\top}\frac{1}{n^{3/2}}\sum_{t=1}^{n}(\ddot{X}_{t-1}\ddot{Z}_{t-1}^{\top}+\ddot{Z}_{t-1}\ddot{X}_{t-1}^{\top})\delta_{z} & \leq\frac{2}{n^{3/2}}\|\delta_{x}\|_{1}\|\delta_{z}\|_{1}\left\Vert \ddot{X}_{t-1}\ddot{Z}_{t-1}^{\top}\right\Vert _{\max}\nonumber \\
 & \leq\frac{2}{n^{3/2}}\|\delta\|_{1}^{2}\left\Vert \ddot{X}_{t-1}\ddot{Z}_{t-1}^{\top}\right\Vert _{\max}\nonumber \\
 & \stackrel{\mathrm{p}}{\preccurlyeq}\dfrac{\|\delta\|_{1}^{2}}{4\sqrt{n}}(\log p)^{1+\frac{1}{2r}}\leq\|\delta\|_{2}^{2}\dfrac{L^{2}s}{\sqrt{n}}(\log p)^{1+\frac{1}{2r}}.\label{eq:delta2Bound1}
\end{align}

The second term in (\ref{eq:delta2Lower}) can be decomposed into
\[
\delta_{z}^{\top}\overline{\Sigma}^{(z)}\delta_{z}=\delta_{z}^{\top}\overline{\Sigma}^{(z)}\delta_{z}-\delta_{z}^{\top}\bar{Z}\bar{Z}^{\top}\delta_{z}=\delta_{z}^{\top}\Sigma^{(z)}\delta_{z}-\delta_{z}^{\top}\bar{Z}\bar{Z}^{\top}\delta_{z}-\delta_{z}^{\top}\left(\Sigma^{(z)}-\overline{\Sigma}^{(z)}\right)\delta_{z}
\]
where 
\[
\delta_{z}^{\top}\bar{Z}\bar{Z}^{\top}\delta_{z}\leq\|\delta_{z}\|_{1}^{2}\max_{k\in[p_{z}]}\left|\bar{Z}_{k}\right|\stackrel{\mathrm{p}}{\preccurlyeq}\|\delta\|_{2}^{2}\frac{L^{2}s}{\sqrt{n}}\sqrt{\log p}
\]
 by (\ref{eq:boundSumZ}), and 
\[
\delta_{z}^{\top}(\Sigma^{(z)}-\overline{\Sigma}^{(z)})\delta_{z}\leq\|\delta_{z}\|_{1}^{2}\|\Sigma^{(z)}-\overline{\Sigma}^{(z)}\|_{\max}\stackrel{\mathrm{p}}{\preccurlyeq}\|\delta\|_{2}^{2}\frac{L^{2}s}{\sqrt{n}}\sqrt{\log p}
\]
by (\ref{eq:maxSqZ}). We thus continue (\ref{eq:delta2Lower}):
\begin{align*}
\delta^{\top}\widehat{\Sigma}^{*}\delta & \geq\delta_{x}^{\top}\text{\ensuremath{\widehat{\Sigma}}}^{(x)}\delta_{x}+\delta_{z}^{\top}\Sigma^{(z)}\delta_{z}-\|\delta\|_{2}^{2}O_{p}\left(\frac{L^{2}s}{\sqrt{n}}\sqrt{\log p}+\dfrac{L^{2}s}{\sqrt{n}}(\log p)^{1+1/(2r)}\right)\\
 & =\delta_{x}^{\top}\text{\ensuremath{\widehat{\Sigma}}}^{(x)}\delta_{x}+\delta_{z}^{\top}\Sigma^{(z)}\delta_{z}-\|\delta\|_{2}^{2}O_{p}\left(\dfrac{L^{2}s}{\sqrt{n}}(\log p)^{1+1/(2r)}\right)\\
 & =\delta_{x}^{\top}\text{\ensuremath{\widehat{\Sigma}}}^{(x)}\delta_{x}+\delta_{z}^{\top}\Sigma^{(z)}\delta_{z}-\|\delta\|_{2}^{2}o_{p}\left(1/(L^{2}s\log p)\right)
\end{align*}
where the last line follows by the condition $L^{4}s^{2}(\log p)^{2+1/(2r)}=o(n^{1/2})$
. Therefore there exists an absolute constant $\widetilde{c}_{\kappa}$
such that 
\[
\delta_{x}^{\top}\text{\ensuremath{\widehat{\Sigma}}}^{(x)}\delta_{x}+\delta_{z}^{\top}\Sigma^{(z)}\delta_{z}\geq\|\delta\|_{2}^{2}\dfrac{\widetilde{c}_{\kappa}}{L^{2}s\log p}
\]
 w.p.a.1. 

Parallel to Proposition \ref{prop:UnitRE-cn}, in the rest of the
proof Step 1 will establish RE under normal innovations, and Step
2 will allow non-normal innovations.

\textbf{Step 1.} If $\varepsilon_{jt}\sim i.i.d.\ \mathcal{N}(0,1)$,
then $e_{t}\sim i.i.d.\ \mathcal{N}(0,\Omega_{e})$ with $\Omega_{e}=\Phi_{e}\Phi_{e}^{\top}$,
and $Z_{t}\sim i.i.d.\ \mathcal{N}(0,\Sigma^{(z)})$ with $\Sigma^{(z)}=\Omega_{z}:=\Phi_{z}\Phi_{z}^{\top}.$
Similar to (\ref{eq:LBeigen}) and (\ref{eq:eigen_LB}) in the proof
of Proposition \ref{prop:UnitRE-cn}, we deduce that 
\begin{align*}
\delta_{x}^{\top}\text{\ensuremath{\widehat{\Sigma}}}^{(x)}\delta_{x} & \geq\dfrac{n}{2\pi^{2}\ell}\cdot\delta^{\top}\Gamma_{\ell}\delta
\end{align*}
for any $\ell\leq n-1$, where $V$ is the orthonormal matrix used
in (\ref{eq:LBeigen}), $V_{\cdot[\ell]}$ is the submatrix composed
of the first $\ell$ columns of $V$ and $\Gamma_{\ell}:=\dfrac{1}{\ell}e^{\top}V_{\cdot[\ell]}V_{\cdot[\ell]}^{\top}e$.
Then 
\begin{align*}
\dfrac{1}{n}\delta_{x}^{\top}\text{\ensuremath{\widehat{\Sigma}}}^{(x)}\delta_{x}+\delta_{z}^{\top}\Sigma^{(z)}\delta_{z} & \geq\dfrac{1}{2\pi^{2}\ell}\cdot\delta^{\top}\Gamma_{\ell}\delta+\delta_{z}^{\top}\Sigma^{(z)}\delta_{z}\\
 & \geq\dfrac{1}{2\pi^{2}\ell}\cdot\left(\delta^{\top}\Gamma_{\ell}\delta+\delta_{z}^{\top}\Sigma^{(z)}\delta_{z}\right)=\dfrac{1}{2\pi^{2}\ell}\delta^{\top}\Lambda_{\ell}\delta
\end{align*}
where $\Lambda_{\ell}={\rm diag}(\Gamma_{\ell},\Sigma^{(z)})$. The
second inequality follows by $(2\pi^{2}\ell)^{-1}<1$ as $\ell\geq1$.

The proof of Proposition \ref{prop:UnitRE-cn} has shown that when
$m=C_{m}s$ with $C_{m}\geq1$ and $\ell=16(s+m)\log p$, there are
absolute constants $\widetilde{c}$ and $\widetilde{C}$ such that
\[
\tilde{c}\leq\phi_{\min}(\Gamma_{\ell},s+m)\leq\phi_{\max}(\Gamma_{\ell},s+m)\leq\tilde{C}.\ \ \text{w.p.a.1.}
\]
Similarly, for the stationary part $c_{\Omega}\leq\lambda_{\min}(\Phi_{z}\Phi_{z}^{\top})\leq\lambda_{\max}(\Phi_{z}\Phi_{z}^{\top})\leq C_{\Omega}$
and therefore the bounds are also applicable. It follows 
\begin{equation}
\dfrac{1}{n}\delta_{x}^{\top}\text{\ensuremath{\widehat{\Sigma}}}^{(x)}\delta_{x}+\delta_{z}^{\top}\Sigma^{(z)}\delta_{z}\geq\dfrac{1}{2\pi^{2}\ell}\delta^{\top}\Lambda_{\ell}\delta\geq\dfrac{\widetilde{c}_{\kappa}\|\delta\|_{2}^{2}}{L^{2}s\log p}\ \ \text{w.p.a.1}\label{eq:LB-RE2-Mix}
\end{equation}
 for some absolute constant $\widetilde{c}_{\kappa}$. 

\textbf{Step 2.} When $e_{t}$ is non-normal, we decompose 
\begin{align*}
\dfrac{1}{n}\delta_{x}^{\top}\text{\ensuremath{\widehat{\Sigma}}}^{(x)}\delta_{x}+\delta_{z}^{\top}\Sigma^{(z)}\delta_{z} & \geq\delta_{x}^{\top}\hat{\Upsilon}^{(x)}\delta_{x}^{\top}+\delta_{z}^{\top}\Sigma^{(z)}\delta_{z}-\left|\delta_{x}^{\top}\left(\text{\ensuremath{\widehat{\Sigma}}}^{(x)}-\hat{\Upsilon}^{(x)}\right)\delta_{x}\right|.
\end{align*}
Following the same argument for Part (b) of Proposition \ref{prop:UnitRE-cn},
under non-normality (\ref{eq:LB-RE2-Mix}) remains valid under the
specified orders of $L$, $s$, $n$, $p$, and we further bound $\|\text{\ensuremath{\widehat{\Sigma}}}^{(x)}-\hat{\Upsilon}^{(x)}\|_{\text{\ensuremath{\max}}}$
to obtain the conclusion.
\end{proof}

\subsection{\label{subsec:ProofMain}Proofs of Results in Main Text}
\begin{proof}[Proof of Lemma \ref{lem:gene-Lasso}]
 The minimization of (\ref{eq:Lasso.theta.origin}) with respect
to $\left(\alpha,\theta\right)$ is numerically equivalent to a two-step
minimization 
\[
\min_{\theta}\min_{\alpha\left(\theta\right)}\left\{ \frac{1}{n}\left\Vert Y\boldsymbol{-}\alpha\left(\theta\right)1_{N}-W\theta\right\Vert _{2}^{2}+\lambda\left\Vert H\theta\right\Vert _{1}\right\} 
\]
where the outer step is with respect to $\theta$ and the inner step
is with respect to $\alpha$ under a given $\theta$, which we denote
as $\alpha\left(\theta\right)$. Since the $L_{1}$-penalty term is
irrelevant to inner optimization, we immediately get a closed-form
solution $\alpha\left(\theta\right)=\bar{Y}-\bar{W}^{\top}\theta$.
Substituting this inner solution back to eliminate the inner optimization,
the criterion function is
\[
\frac{1}{n}\left\Vert Y\boldsymbol{-}\alpha\left(\theta\right)1_{N}-W\theta\right\Vert _{2}^{2}+\lambda\left\Vert H\theta\right\Vert _{1}=\frac{1}{n}\left\Vert \ddot{Y}-\ddot{W}\theta\right\Vert _{2}^{2}+\lambda\left\Vert H\theta\right\Vert _{1}.
\]
From now on, we focus on
\begin{equation}
\min_{\theta}\left\{ \frac{1}{n}\left\Vert \ddot{Y}-\ddot{W}\theta\right\Vert _{2}^{2}+\lambda\left\Vert \theta\right\Vert _{1}\right\} .\label{eq:Lasso.theta}
\end{equation}

The following steps are known from \citet{buhlmann2011statistics};
here we include the proof for completeness. Since the estimator minimizes
the criterion function, we have 
\[
\dfrac{1}{n}\|\ddot{Y}-\ddot{W}\hat{\theta}\|_{2}^{2}+\lambda\|H\hat{\theta}\|_{1}\leq\dfrac{1}{n}\|\ddot{Y}-\ddot{W}\theta^{*}\|_{2}^{2}+\lambda\|H\theta^{*}\|_{1}.
\]
Define $\check{W}:=\ddot{W}H^{-1},$ $\check{\theta}:=H\hat{\theta}$,
$\check{\theta}^{*}:=H\theta^{*}$ and $\check{\Sigma}:=n^{-1}\check{W}^{\top}\check{W}$.
The inequality above can be written as

\[
\dfrac{1}{n}\|\ddot{Y}-\check{W}\check{\theta}\|_{2}^{2}+\lambda\|\check{\theta}\|_{1}\leq\dfrac{1}{n}\|\ddot{Y}-\check{W}\check{\theta}^{*}\|_{2}^{2}+\lambda\|\check{\theta}^{*}\|_{1},
\]
which implies the basic inequality 
\[
(\check{\theta}-\check{\theta}^{*})^{\top}\check{\Sigma}(\check{\theta}-\check{\theta}^{*})+\lambda\|\check{\theta}\|_{1}\leq\dfrac{2}{n}\ddot{u}^{\top}\check{W}(\check{\theta}-\check{\theta}^{*})+\lambda\|\check{\theta}^{*}\|_{1}.
\]
By the Holder's inequality $\ddot{u}^{\top}\check{W}(\check{\theta}-\theta^{*})\leq\|\check{W}^{\top}\ddot{u}\|_{\infty}\|\check{\theta}-\check{\theta}^{*}\|_{1}$
and the specified condition for the tuning parameter $n^{-1}\|\check{W}^{\top}\ddot{u}\|_{\infty}\leq\lambda/4$,
we have 
\begin{align}
(\check{\theta}-\check{\theta}^{*})^{\top}\check{\Sigma}(\check{\theta}-\check{\theta}^{*})+\lambda\|\check{\theta}\|_{1} & \leq\dfrac{1}{n}\|\check{W}^{\top}\ddot{u}\|_{\infty}\|\check{\theta}-\check{\theta}^{*}\|_{1}+\lambda\|\check{\theta}^{*}\|_{1}\leq\dfrac{\lambda}{2}\|\check{\theta}-\check{\theta}^{*}\|_{1}+\lambda\|\check{\theta}^{*}\|_{1}\nonumber \\
 & =\dfrac{\lambda}{2}\|(\check{\theta}-\check{\theta}^{*})_{\mathcal{S}}\|_{1}+\|(\check{\theta}-\check{\theta}^{*})_{\mathcal{S}^{c}}\|_{1}+\lambda\|\check{\theta}^{*}\|_{1}\nonumber \\
 & =\dfrac{\lambda}{2}\left(\|(\check{\theta}-\check{\theta}^{*})_{\mathcal{S}}\|_{1}+\|\check{\theta}_{\mathcal{S}^{c}}\|_{1}\right)+\lambda\|\check{\theta}_{\mathcal{S}}^{*}\|_{1}.\label{eq:eqPl1}
\end{align}
We substitute the following triangular inequality 
\begin{equation}
\|\check{\theta}\|_{1}=\|\check{\theta}_{\mathcal{S}}\|_{1}+\|\check{\theta}_{\mathcal{S}^{c}}\|_{1}\geq\|\check{\theta}_{\mathcal{S}}^{*}\|_{1}-\|(\check{\theta}-\check{\theta}^{*})_{\mathcal{S}}\|_{1}+\|\check{\theta}_{\mathcal{S}^{c}}\|_{1}\label{eq:eqPl2}
\end{equation}
into (\ref{eq:eqPl1}); after rearrangement we obtain 
\begin{align*}
(\check{\theta}-\check{\theta}^{*})^{\top}\check{\Sigma}(\check{\theta}-\check{\theta}^{*})+\lambda\|\check{\theta}_{\mathcal{S}^{c}}\|_{1} & \leq\lambda\|(\check{\theta}-\check{\theta}^{*})_{\mathcal{S}}\|_{1}+\dfrac{\lambda}{2}\left(\|(\check{\theta}-\check{\theta}^{*})_{\mathcal{S}}\|_{1}+\|\check{\theta}_{\mathcal{S}^{c}}\|_{1}\right)\\
 & =\dfrac{3\lambda}{2}\|(\check{\theta}-\check{\theta}^{*})_{\mathcal{S}}\|_{1}+\dfrac{\lambda}{2}\|\check{\theta}_{\mathcal{S}^{c}}\|_{1}
\end{align*}
or equivalently 
\begin{equation}
2(\check{\theta}-\check{\theta}^{*})^{\top}\check{\Sigma}(\check{\theta}-\check{\theta}^{*})+\lambda\|(\check{\theta}-\check{\theta}^{*})_{\mathcal{S}^{c}}\|_{1}\leq3\lambda\|(\check{\theta}-\check{\theta}^{*})_{\mathcal{S}}\|_{1}.\label{eq:A60}
\end{equation}
Add $\lambda\|(\check{\theta}-\check{\theta}^{*})_{\mathcal{S}}\|_{1}$
on both sides of the above inequality:
\begin{align*}
2(\check{\theta}-\check{\theta}^{*})^{\top}\check{\Sigma}(\check{\theta}-\check{\theta}^{*})+\lambda\|\check{\theta}-\check{\theta}^{*}\|_{1} & \leq4\lambda\|(\check{\theta}-\check{\theta}^{*})_{\mathcal{S}}\|_{1}\leq4\lambda\sqrt{s}\|(\check{\theta}-\check{\theta}^{*})_{\mathcal{S}}\|_{2}\\
 & \leq4\lambda\sqrt{s}\|\check{\theta}-\check{\theta}^{*}\|_{2}\leq4\lambda\sqrt{\dfrac{s}{\kappa_{H}}(\check{\theta}-\check{\theta}^{*})^{\top}\check{\Sigma}(\check{\theta}-\check{\theta}^{*})}\\
 & \leq4\lambda s/\kappa_{H}+(\check{\theta}-\check{\theta}^{*})^{\top}\check{\Sigma}(\check{\theta}-\check{\theta}^{*})
\end{align*}
where the fourth inequality follows the fact that $(\check{\theta}-\check{\theta}^{*})\in\mathcal{R}(3,s)$
implied by (\ref{eq:A60}), and the last inequality applies the generic
inequality $4ab\leq4a^{2}+b^{2}.$ Rearrange the above inequality
into 
\[
(\check{\theta}-\check{\theta}^{*})^{\top}\check{\Sigma}(\check{\theta}-\check{\theta}^{*})+\lambda\|\check{\theta}-\check{\theta}^{*}\|_{1}\leq4\lambda^{2}s/\kappa_{H}.
\]
The first and the second inequalities in the statement of the lemma
immediately follow. The last inequality is deduced by $\|\check{\theta}-\check{\theta}^{*}\|_{2}^{2}\leq\kappa_{H}^{-1}(\check{\theta}-\check{\theta}^{*})^{\top}\check{\Sigma}(\check{\theta}-\theta^{*})\leq4\lambda^{2}s/\kappa_{H}^{2}.$
\end{proof}
\medskip
\begin{proof}[Proof of Proposition \ref{prop:UnitDB}]
 The DB for pure unit root regressors is a special case of (\ref{eq:DB-All-demean})
with $p_{x}=p$ and $p_{z}=0.$
\end{proof}
\medskip
\begin{proof}[Proof of Lemma \ref{lem:Normal_RE}]
 When $n$ is sufficiently large, Assumption \ref{assu:asym_n} implies
$(1+C_{m}(3))s=o(n\wedge p)$ for Part (a) of Proposition \ref{prop:UnitRE-cn}.
Lemma \ref{lem:Normal_RE} is a direct consequence of Proposition
\ref{prop:UnitRE-cn} (a) by taking $L=3$ and $c_{\kappa}=\widetilde{c}_{\kappa}/9$.
\end{proof}
\medskip
\begin{proof}[Proof of Proposition \ref{prop:UnitRE}]
 Assumption \ref{assu:asym_n} implies $(1+C_{m}(3))s=o(n\wedge p)$
and that $s^{2}(\log p)^{5/2}=o(n^{1/2})$ for Proposition \ref{prop:UnitRE-cn}
(b). Proposition \ref{prop:UnitRE} is a direct result of Proposition
\ref{prop:UnitRE-cn} (b) by taking $L=3$ and $c_{\kappa}=\widetilde{c}_{\kappa}/9$.
\end{proof}
\medskip
\begin{proof}[Proof of Proposition \ref{prop:MinMax-Unit}]
 We first show Part (a). Since $\hat{\sigma}_{\min}^{2}$ is the
minimum diagonal matrix of $\hat{\Sigma}$ and $(1+C_{m}(1))\leq p$
with $n$ large enough, it can be bounded below by a special restricted
eigenvalue 
\[
\widehat{\sigma}_{\min}^{2}=\inf_{\delta\in\mathcal{R}(1,1)}\dfrac{\delta^{\top}\text{\ensuremath{\widehat{\Sigma}}}\delta}{\delta^{\top}\delta}=\kappa_{I}(\text{\ensuremath{\widehat{\Sigma}}},1,1)\stackrel{\mathrm{p}}{\succcurlyeq}n\left(\log p\right)^{-1}
\]
where the last inequality applies Proposition \ref{prop:UnitRE-cn}
with $L=1$ and $s=1$. Next, we bound the maximum sample variance
from above by 
\[
\widehat{\sigma}_{\max}^{2}\leq\max_{j\in[p]}n^{-1}\sum_{t=1}^{n}X_{j,t-1}^{2}\leq\max_{j\in[p],t\in[n]}X_{j,t-1}^{2}\stackrel{\mathrm{p}}{\preccurlyeq}n\log p.
\]

The DB in Part (b) is implied by (\ref{eq:DB-SL-sigma}) and (\ref{eq:sigma-minmax}).
Regarding the RE, note that Assumption \ref{assu:asym_n} implies
$s^{2}(\log p)^{13/2}=o(n^{1/2})$ and hence by (\ref{eq:sigma-minmax})
we have $s^{2}\text{\ensuremath{\widehat{\varsigma}}}^{4}(\log p)^{5/2}=o_{p}(n^{1/2})$.
Besides, Assumption \ref{assu:asym_n} and $s(\log p)^{2}=o(p)$ also
implies $(1+C_{m}(3\widehat{\varsigma}))s\log p\stackrel{\mathrm{p}}{\preccurlyeq}s(\log p)^{2}=o(n\wedge p).$
Taking $L=3\widehat{\varsigma},$ by (\ref{eq:RE-SL-sigma}) we have
w.p.a.1 that 
\[
\hat{\kappa}_{D}\geq\dfrac{nc_{\kappa}}{s\log p\cdot\widehat{\varsigma}^{2}\widehat{\sigma}_{{\rm \max}}^{2}}\stackrel{\mathrm{p}}{\succcurlyeq}\frac{1}{s(\log p)^{4}}
\]
 for some absolute constant $c_{\kappa}^{\prime}>0$. 
\end{proof}
\medskip
\begin{proof}[Proof of Theorem \ref{thm:LassoError}]
 Propositions \ref{prop:UnitDB} and \ref{prop:UnitRE} have constructed
the DB and RE for $\hat{\beta}^{\mathrm{P}}$, respectively. We plug
them into Lemma \ref{lem:gene-Lasso} and the rates of convergence
follow.
\end{proof}
\medskip
\begin{proof}[Proof of Theorem \ref{thm:SlassoError}]
 Proposition \ref{prop:MinMax-Unit} has constructed the DB and RE
for $\hat{\beta}^{{\rm S}}$. We plug them into Lemma \ref{lem:gene-Lasso}
and the rates of convergence follow by 
\[
\|\hat{\beta}^{{\rm S}}-\beta^{*}\|_{q}\leq\max_{j\in[p]}\hat{\sigma}_{j}\|D^{-1}(\hat{\beta}^{{\rm S}}-\beta^{*})\|_{q}\stackrel{\mathrm{p}}{\preccurlyeq}\sqrt{\log p}\|D^{-1}(\hat{\beta}^{{\rm S}}-\beta^{*})\|_{q}
\]
for $q=1,2$.
\end{proof}
\medskip
\begin{proof}[Proofs of Theorem \ref{thm:SlassoError-Mix}]
 Define $\mathcal{M}_{x}:=[p_{x}]$ be the index set for the unit
root regressors and $\mathcal{M}_{z}:=\{p_{x}+1,p_{x}+2,\cdots,p\}$
be the index set for the stationary regressors. Let $\widehat{\sigma}_{j}^{*}=\frac{\hat{\sigma}_{j}}{\sqrt{n}}\cdot\boldsymbol{1}(j\in\mathcal{M}_{x})+\hat{\sigma}_{j}\cdot\boldsymbol{1}(j\in\mathcal{M}_{z}),$
and $\widehat{\sigma}_{{\rm \max}}^{*}=\max_{j\in[p]}\hat{\sigma}_{j}^{*}$
and $\widehat{\sigma}_{{\rm \min}}^{*}=\min_{j\in[p]}\hat{\sigma}_{j}^{*}$.
Using Proposition \ref{prop:MinMax-Unit}, we have for $j\in\mathcal{M}_{x}$,
the sample variances are bounded by 
\begin{align*}
\left(\log p\right)^{-1}\stackrel{\mathrm{p}}{\preccurlyeq}\min_{j\in\mathcal{M}_{x}}\ensuremath{\widehat{\sigma}_{j}^{*2}}\leq & \max_{j\in\mathcal{M}_{x}}\ensuremath{\widehat{\sigma}_{j}^{*2}}=\frac{1}{n}\min_{j\in\mathcal{M}_{x}}\ensuremath{\widehat{\sigma}_{j}^{2}}\leq\frac{1}{n}\max_{j\in\mathcal{M}_{x}}\ensuremath{\widehat{\sigma}_{j}^{2}}\stackrel{\mathrm{p}}{\preccurlyeq}\log p.
\end{align*}
For $j\in\mathcal{M}_{z}$, it is easy to show $\widehat{\sigma}_{j}^{2}$
are uniformly bounded away from 0 and $\infty$ as 
\begin{gather}
\max_{j\in\mathcal{M}_{z}}\ensuremath{\widehat{\sigma}_{j}^{2}\leq\max_{k}\dfrac{1}{n}\sum_{t=1}^{n}Z_{k,t-1}^{2}}\leq\max_{k\in[p_{z}]}\dfrac{1}{n}\sum_{t=1}^{n}\mathbb{E}(Z_{k,t-1}^{2})+C\sqrt{\dfrac{\log p}{n}}\stackrel{\mathrm{p}}{\preccurlyeq}1,\label{eq:max_sigma_z}\\
\min_{j\in\mathcal{M}_{z}}\ensuremath{\widehat{\sigma}_{j}^{2}\geq}\min_{k}\mathbb{E}(Z_{k,t-1}^{2})-C\sqrt{(\log p)/n}\stackrel{\mathrm{p}}{\succcurlyeq}1.\label{eq:min_sigma_z}
\end{gather}
As a result, the sample variances of the mixed regressors are bounded
by 
\begin{equation}
\left(\log p\right)^{-1}\stackrel{\mathrm{p}}{\preccurlyeq}\widehat{\sigma}_{\text{\ensuremath{\min}}}^{*}\leq\widehat{\sigma}_{\max}^{*}\stackrel{\mathrm{p}}{\preccurlyeq}\log p.\label{eq:sigma_star}
\end{equation}

\textbf{DB}.\textbf{ }The bounds for $\widehat{\sigma}_{j}^{*}$ implies\textbf{
\begin{align}
\text{\ensuremath{n^{-1}\|\sum_{t=1}^{n}\widetilde{W}_{t-1}u_{t}\|_{\infty}}} & \ensuremath{\leq\widehat{\sigma}_{\min}^{*-1}\left(\max_{j}\dfrac{1}{n^{3/2}}\left|\sum_{t=1}^{n}\ddot{X}_{j,t-1}u_{t}\right|+\max_{k}\dfrac{1}{n}\left|\sum_{t=1}^{n}\ddot{Z}_{k,t-1}u_{t}\right|\right)}\nonumber \\
 & \stackrel{\mathrm{p}}{\preccurlyeq}\widehat{\sigma}_{\min}^{*-1}\left(\dfrac{1}{\sqrt{n}}(\log p)^{1+\frac{1}{2r}}+\sqrt{\dfrac{\log p}{n}}\right)\stackrel{\mathrm{p}}{\preccurlyeq}\dfrac{1}{\sqrt{n}}(\log p)^{\frac{3}{2}+\frac{1}{2r}}\label{eq:DB-mix-std}
\end{align}
}by (\ref{eq:DB-All-demean}) abd (\ref{eq:ZeDB}). 

\textbf{RE}.\textbf{ }Setting $L=3\widehat{\varsigma}$, we have w.p.a.1.
\begin{equation}
\hat{\kappa}_{D}\geq\frac{\kappa_{I}(\widehat{\Sigma}^{*},3\widehat{\varsigma},s)}{\widehat{\sigma}_{{\rm \max}}^{*2}}\geq\dfrac{\widetilde{c}_{\kappa}/(9s\log p\widehat{\varsigma}^{2})}{\widehat{\sigma}_{{\rm \max}}^{*2}}\stackrel{\mathrm{p}}{\succcurlyeq}\dfrac{1}{s(\log p)^{4}},\label{eq:RE-mix-std}
\end{equation}
where the first inequality follows the proof of Proposition \ref{prop:REDBstdUnit},
the second by Proposition \ref{prop:MixRE-cn}, and the last one by
the relative size of $n,$ $s$ and $p$ in Assumption \ref{assu:asym_n}. 

We plug these two building blocks, DB and RE, into Lemma \ref{lem:gene-Lasso}
and the rates of convergence follow by
\begin{gather*}
\|\hat{\beta}^{{\rm S}}-\beta^{*}\|_{q}\leq\max_{j\in\mathcal{M}_{x}}\hat{\sigma}_{j}\|D^{-1}(\hat{\theta}^{{\rm S}}-\theta^{*})\|_{q}\stackrel{\mathrm{p}}{\preccurlyeq}\sqrt{\log p}\|D^{-1}(\hat{\theta}^{{\rm S}}-\theta^{*})\|_{q}\\
\|\hat{\gamma}^{{\rm S}}-\gamma^{*}\|_{q}\leq\max_{j\in\mathcal{M}_{z}}\hat{\sigma}_{j}\|D^{-1}(\hat{\theta}^{{\rm S}}-\theta^{*})\|_{q}\stackrel{\mathrm{p}}{\preccurlyeq}\|D^{-1}(\hat{\theta}^{{\rm S}}-\theta^{*})\|_{q}
\end{gather*}
for $q=1,2$.
\end{proof}
\bigskip
Before moving to the proof of Theorem \ref{thm:cointegration}, we
introduce some additional notations for the model with cointegrated
variables.  Let $\mathcal{M}_{1}$, $\mathcal{M}_{2}$, $\mathcal{M}_{x}$,
and $\mathcal{M}_{z}$ be the index sets of the location of $X_{t}^{co(1)}$,
$X_{t}^{co(2)}$, $X_{t}$, and $Z_{t}$, respectively. Let $D^{{\rm co}(1)}:={\rm diag}(\hat{\sigma}_{j}^{{\rm co}(1)})_{j\in\mathcal{M}_{1}}$
where $\hat{\sigma}_{j}^{{\rm co}(1)}$ be the sample s.d.~of $X_{j}^{\mathrm{co}(1)}$.
Obviously under the conditions in Theorem \ref{thm:cointegration}
we have 
\begin{equation}
\min_{j\in[p_{c1}]}\hat{\sigma}_{j}^{{\rm co}(1)}\stackrel{\mathrm{p}}{\succcurlyeq}\sqrt{n/\log p},\label{eq:coint_variance}
\end{equation}
as these $X_{j}^{\mathrm{co}(1)}$ behaves as a unit root process
individually. Similarly we define the sample s.d~$\hat{\sigma}_{j}^{{\rm co}(2)}$,
$\hat{\sigma}_{j}^{X}$, and $\hat{\sigma}_{j}^{Z}$ to be embedded
into the diagonal matrices $D^{{\rm co}(2)}$, $D^{X}$ and $D^{Z}$.
Define a big diagonal matrix $D:={\rm diag}(D^{{\rm co}(1)},D^{{\rm co}(2)},D^{X},D^{Z})$
to concatenate all variables.

Denote a lower-triangular $p\times p$ matrix
\[
\Pi:=\begin{pmatrix}I_{p_{c1}}\\
A^{\top} & I_{p_{c2}}\\
 &  & I_{p_{x}}\\
 &  &  & I_{p_{z}}
\end{pmatrix}
\]
as the rotation matrix, where the blank entries are zeros. Its inverse
rotates the observed regressor matrix $W$ into the infeasible counterpart
$W_{\Pi}:=W\Pi^{-1}=(V^{(1)},W^{(0)})$, where $W^{(0)}:=(X^{{\rm co}(2)\top},X^{\top},Z^{\top})^{\top}$
is defined as the regressor matrix for the components invariant to
the rotation. 

The benchmark model for Slasso is (\ref{eq:cointegration_y_2}), with
the pseudo true coefficients $\theta^{*}=\left(0_{p_{c}}^{\top},\beta^{*\top},\gamma^{(1)*\top}\right)^{\top}$
associated with $W$, and then the true coefficients associated with
$W^{(0)}$ is $\theta^{*(0)}=\theta_{[p]\backslash\mathcal{M}_{1}}^{*}$.
Also, for a generic $\theta\in\mathbb{R}^{p}$, we define $\theta^{(0)}:=\theta_{[p]\backslash\mathcal{M}_{1}}$,
$\theta^{(1)}:=\theta_{\mathcal{M}_{1}}$, and $\theta^{(2)}:=\theta_{\mathcal{M}_{2}}.$ 
\begin{proof}[Proof of Theorem \ref{thm:cointegration}]
 This proof works exclusively with Slasso; therefore for conciseness
we use $\widehat{\theta}$ to denote the Slasso estimator by suppressing
the superscript ``$\mathrm{S}$''. This proof contains three new
lemmas given new notations are defined as the deduction advances.

As the minimizer of the criterion function, Slasso gives
\begin{equation}
\dfrac{1}{n}\|\ddot{Y}-\ddot{W}\hat{\theta}\|_{2}^{2}+\lambda\|D\hat{\theta}\|_{1}\leq\dfrac{1}{n}\|\ddot{Y}-\ddot{W}\theta^{*}\|_{2}^{2}+\lambda\|D\theta^{*}\|_{1},\label{eq:Slasso_cri}
\end{equation}
where $\theta^{*}$ has been defined as the pseudo-true coefficient
in the benchmark model (\ref{eq:cointegration_y_2}).\textbf{\textcolor{red}{{} }}

Notice for a generic $\theta$, rotation and scaling yield the fitted
value 
\[
\ddot{W}\theta=\ddot{W}\Pi^{-1}D^{-1}\cdot D\Pi\theta=\tilde{W}_{\Pi}\tilde{\theta},
\]
 where $\tilde{W}_{\Pi}:=\ddot{W}\Pi^{-1}D^{-1}$ and $\tilde{\theta}:=D\Pi\theta$,
and the corresponding penalized vector $D\theta=D\Pi^{-1}D^{-1}\tilde{\theta}=Q\tilde{\theta}$
with $Q:=D\Pi^{-1}D^{-1}$. Then (\ref{eq:Slasso_cri}) is equivalent
to 
\[
\dfrac{1}{n}\|\ddot{Y}-\tilde{W}_{\Pi}\tilde{\theta}\|_{2}^{2}+\lambda\|Q\tilde{\theta}\|_{1}\leq\dfrac{1}{n}\|\ddot{Y}-\tilde{W}_{\Pi}\tilde{\theta}^{*}\|_{2}^{2}+\lambda\|Q\tilde{\theta}^{*}\|_{1}=\dfrac{1}{n}\|\ddot{Y}-\tilde{W}_{\Pi}\tilde{\theta}^{*}\|_{2}^{2}+\lambda\|\tilde{\theta}^{*}\|_{1},
\]
where the equality applies the fact that $Q\tilde{\theta}^{*}=\tilde{\theta}^{*}$
because the first $p_{c}$ entries of $\theta^{*}$ are zeros. Given
$\ddot{Y}=\tilde{W}_{\Pi}\tilde{\theta}+\ddot{u}^{(1)}$, we have
the basic inequality 
\begin{align*}
\dfrac{1}{n}\|\tilde{W}_{\Pi}(\tilde{\theta}-\tilde{\theta}^{*})\|_{2}^{2}+\lambda\|Q\tilde{\theta}\|_{1} & \leq\dfrac{2}{n}\ddot{u}{}^{(1)\top}\tilde{W}_{\Pi}(\tilde{\theta}-\tilde{\theta}^{*})+\lambda\|\tilde{\theta}^{*}\|_{1}\\
 & \leq\dfrac{2}{n}\left(\ddot{u}{}^{(1)\top}\tilde{v}^{(1)}\tilde{\theta}_{1}+\ddot{u}{}^{(1)\top}\tilde{W}^{(0)}(\tilde{\theta}^{(0)}-\tilde{\theta}^{\text{(0)}*})\right)+\lambda\|\tilde{\theta}^{*}\|_{1}\\
 & \leq\dfrac{2}{n}\left(\|\tilde{v}^{(1)\top}\ddot{u}^{(1)}\|_{\infty}\|\tilde{\theta}^{(1)}\|_{1}+\|\tilde{W}^{(0)\top}\ddot{u}^{(1)}\|_{\infty}\|\tilde{\theta}^{(0)}-\tilde{\theta}^{\text{(0)}*}\|_{1}\right)+\lambda\|\tilde{\theta}^{*}\|_{1}.
\end{align*}

Recall that $\|\cdot\|_{r1}$ is the maximum row-wise norm defined
above Assumption \ref{assu:cointegration}. For a generic matrix
$A$, define the maximum column-wise norm as $\|A\|_{c1}:=\|A^{\top}\|_{r1}$.
We have the following Lemma.

\begin{lem}\label{lem:Coint-DB}Under the conditions of Theorem \ref{thm:cointegration},
we have w.p.a.1. 
\begin{equation}
1\leq\|Q^{-1}\|_{c1}\stackrel{\mathrm{p}}{\preccurlyeq}\log p\label{eq:L1Q-1}
\end{equation}
and 
\begin{equation}
\left\{ n^{-1}\|\tilde{V}^{(1)\top}\ddot{u}^{(1)}\|_{\infty}\vee n^{-1}\|\tilde{W}^{(0)\top}\ddot{u}^{(1)}\|_{\infty}\right\} \leq\frac{\lambda}{4\|Q^{-1}\|_{c1}}.\label{eq:Coint-DB}
\end{equation}

\end{lem}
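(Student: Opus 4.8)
\textbf{Proof proposal for Lemma \ref{lem:Coint-DB}.}

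The plan is to treat the two claims in turn, exploiting the block lower-triangular structure of $\Pi$ (hence of $Q = D\Pi^{-1}D^{-1}$) and the scale bounds for the sample standard deviations already established. First I would compute $\Pi^{-1}$ explicitly: since $\Pi$ differs from the identity only in the $(2,1)$ block $A^{\top}$, its inverse differs from the identity only in that same block, which becomes $-A^{\top}$. Therefore $Q = D\Pi^{-1}D^{-1}$ is the identity plus a single nonzero block in the rows indexed by $\mathcal{M}_2$ and columns indexed by $\mathcal{M}_1$, namely $-D^{\mathrm{co}(2)}A^{\top}(D^{\mathrm{co}(1)})^{-1}$; likewise $Q^{-1}$ has the block $+D^{\mathrm{co}(2)}A^{\top}(D^{\mathrm{co}(1)})^{-1}$ in that position. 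For the column-wise $L_1$ norm $\|Q^{-1}\|_{c1} = \|(Q^{-1})^{\top}\|_{r1}$: the lower bound $1$ is immediate because every column of $Q^{-1}$ contains a diagonal entry equal to $1$. For the upper bound, the only columns whose $L_1$ norm exceeds $1$ are those indexed by $\mathcal{M}_1$, where the extra mass is $\sum_{k}\big|(D^{\mathrm{co}(2)}A^{\top}(D^{\mathrm{co}(1)})^{-1})_{kj}\big| \le \max_k \hat\sigma^{\mathrm{co}(2)}_k \cdot \|A\|_{r1} \cdot \max_j (\hat\sigma^{\mathrm{co}(1)}_j)^{-1}$. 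By Proposition \ref{prop:MinMax-Unit}(a) applied to the $I(1)$ components $X^{\mathrm{co}(2)}$ we have $\max_k \hat\sigma^{\mathrm{co}(2)}_k \stackrel{\mathrm p}{\preccurlyeq} \sqrt{n\log p}$; by \eqref{eq:coint_variance} $\max_j (\hat\sigma^{\mathrm{co}(1)}_j)^{-1} \stackrel{\mathrm p}{\preccurlyeq} \sqrt{\log p / n}$; and $\|A\|_{r1} \le C_A$ by Assumption \ref{assu:cointegration}. Multiplying, the $\sqrt n$ factors cancel and the extra mass is $\stackrel{\mathrm p}{\preccurlyeq} \log p$, which gives \eqref{eq:L1Q-1}.

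Next I would establish the deviation bound \eqref{eq:Coint-DB}. Recall $\tilde W^{(0)} = \ddot W^{(0)} (D^{(0)})^{-1}$ collects the rotation-invariant regressors ($X^{\mathrm{co}(2)}$, $X$, $Z$) standardized by their sample s.d., and $\tilde V^{(1)} = \ddot V^{(1)}(D^{\mathrm{co}(1)})^{-1}$ is the (infeasible) standardized cointegration error. The residual is $u_t^{(1)} = u_t + v^{(1)\top}_{t-1}\phi_1^* - Z^{\top}_{t-1}\omega^*$; since $\|\phi_1^*\|_1 + \|\omega^*\|_1 \le C_0$ by Assumption \ref{assu:cointegration}, $u_t^{(1)}$ is a finite linear combination of stationary, sub-exponential, geometrically mixing innovations, so Corollary \ref{cor:BNtail} and Lemma \ref{lem:BernsteinSum} give it the same sub-exponential tail and mixing behavior as $u_t$. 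For the $I(1)$-type components of $\tilde W^{(0)}$ (those indexed by $\mathcal{M}_{\mathrm{co}(2)}$ and $\mathcal{M}_x$), the numerator $n^{-1}|\sum_t \ddot X_{j,t-1} u_t^{(1)}|$ is bounded exactly as in Proposition \ref{prop:DB-All}/Remark \ref{rem:DB-demean} by $\stackrel{\mathrm p}{\preccurlyeq} n^{-1/2}(\log p)^{1+\frac{1}{2r}}$, and dividing by $\hat\sigma_j \stackrel{\mathrm p}{\succcurlyeq}\sqrt{n/\log p}$ yields $\stackrel{\mathrm p}{\preccurlyeq} n^{-1/2}(\log p)^{\frac32+\frac{1}{2r}}$. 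For the stationary components of $\tilde W^{(0)}$ and for $\tilde V^{(1)}$, both numerator and denominator are $O_p(1)$-scale stationary quantities, so Proposition \ref{prop:DB-Ze} (using $\mathbb E(Z_{t-1}u_t)=0$ and $\mathbb E(v^{(1)}_{t-1}u_t)=0$ from the identification conditions) gives the cross-product $\stackrel{\mathrm p}{\preccurlyeq}\sqrt{(\log p)/n}$, which is smaller. Taking the maximum over all components shows the left-hand side of \eqref{eq:Coint-DB} is $\stackrel{\mathrm p}{\preccurlyeq} n^{-1/2}(\log p)^{\frac32+\frac{1}{2r}}$. Finally, with $\lambda = \tilde C_{\mathrm{DB}}^w n^{-1/2}(\log p)^{\frac52+\frac{1}{2r}}$ and $\|Q^{-1}\|_{c1}\stackrel{\mathrm p}{\preccurlyeq}\log p$ from part one, the ratio $\lambda/(4\|Q^{-1}\|_{c1}) \stackrel{\mathrm p}{\succcurlyeq} n^{-1/2}(\log p)^{\frac32+\frac{1}{2r}}$, so choosing $\tilde C_{\mathrm{DB}}^w$ large enough relative to the constants in Propositions \ref{prop:DB-All} and \ref{prop:DB-Ze} makes \eqref{eq:Coint-DB} hold w.p.a.1.

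The main obstacle I anticipate is the bookkeeping around $\tilde V^{(1)}$: it is an \emph{infeasible} standardized regressor, and one must be careful that $\hat\sigma_j^{\mathrm{co}(1)}$ (the sample s.d.\ of the observed $X_j^{\mathrm{co}(1)}$, which is $O_p(\sqrt n)$ because $X^{\mathrm{co}(1)}$ inherits the stochastic trend of $X^{\mathrm{co}(2)}$ through $A$) is the correct normalizer appearing after rotation, whereas the \emph{intrinsic} scale of $v^{(1)}_{t-1}$ is $O_p(1)$. Tracking this mismatch through the factor $Q$ — and confirming that the extra $\log p$ in $\|Q^{-1}\|_{c1}$ is exactly what is absorbed by the extra $(\log p)$ in the prescribed $\lambda$ relative to Theorem \ref{thm:SlassoError-Mix} — is the delicate part; the rest is a direct assembly of Propositions \ref{prop:DB-All}, \ref{prop:DB-Ze}, \ref{prop:MinMax-Unit} and Corollary \ref{cor:BNtail}.
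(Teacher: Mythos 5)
Your treatment of (\ref{eq:L1Q-1}) and of the $\tilde{W}^{(0)}$ block of (\ref{eq:Coint-DB}) is correct and follows the same route as the paper: $Q^{-1}=D\Pi D^{-1}$ is the identity plus the single block $D^{\mathrm{co}(2)}A^{\top}[D^{\mathrm{co}(1)}]^{-1}$, whose column sums are bounded by $\|A\|_{r1}\max_{k}\hat{\sigma}_{k}^{\mathrm{co}(2)}\big/\min_{j}\hat{\sigma}_{j}^{\mathrm{co}(1)}\stackrel{\mathrm{p}}{\preccurlyeq}\sqrt{n\log p}\big/\sqrt{n/\log p}=\log p$; and since $u_{t}^{(1)}=u_{t}+v_{t-1}^{(1)\top}\phi_{1}^{*}-Z_{t-1}^{\top}\omega^{*}$ has bounded $L_{1}$ loadings by Assumption \ref{assu:cointegration}, the mixed-regressor bound (\ref{eq:DB-mix-std}) carries over and gives $n^{-1}\|\tilde{W}^{(0)\top}\ddot{u}^{(1)}\|_{\infty}\stackrel{\mathrm{p}}{\preccurlyeq}n^{-1/2}(\log p)^{\frac{3}{2}+\frac{1}{2r}}$, which is dominated by $\lambda/(4\|Q^{-1}\|_{c1})$ under the prescribed $\lambda$. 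This is exactly the paper's argument.

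The gap is in your handling of the $\tilde{v}^{(1)}$ term. You lump it with the stationary $Z$ components, asserting that both numerator and denominator are $O_{p}(1)$-scale and that the cross-product concentrates around zero because $\mathbb{E}(v_{t-1}^{(1)}u_{t})=0$. Both premises are wrong as stated: (i) the relevant residual is $u_{t}^{(1)}$, not $u_{t}$, and $\mathbb{E}(v_{t-1}^{(1)}u_{t}^{(1)})=\mathbb{E}(v_{t-1}^{(1)}v_{t-1}^{(1)\top})\phi_{1}^{*}-\mathbb{E}(v_{t-1}^{(1)}Z_{t-1}^{\top})\omega^{*}\neq0$ in general (only orthogonality to $Z_{t-1}$ is built into the projection defining $\omega^{*}$), so the mean-zero concentration rate $\sqrt{(\log p)/n}$ is not available for $n^{-1}\sum_{t}\ddot{v}_{t-1}^{(1)}\ddot{u}_{t}^{(1)}$; and (ii) the normalizer of $\tilde{v}_{j}^{(1)}$ is $\hat{\sigma}_{j}^{\mathrm{co}(1)}$, the sample s.d.\ of the observed $X_{j}^{\mathrm{co}(1)}$, which is of order $\sqrt{n}$, not $O_{p}(1)$ --- a point you flag at the end as the delicate part but do not actually use in the bound. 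Correcting only one of these premises breaks the argument: an $O_{p}(1)$ denominator together with the true nonzero mean gives only $O_{p}(1)$ for this term, far too large relative to $\lambda/(4\|Q^{-1}\|_{c1})$. Your two inaccuracies happen to offset, so the final rate $\sqrt{(\log p)/n}$ is right, but the valid derivation (and the paper's) needs no mean-zero property at all: bound the stationary cross average $\|n^{-1}\sum_{t}\ddot{v}_{t-1}^{(1)}\ddot{u}_{t}^{(1)}\|_{\infty}=O_{p}(1)$ (sub-exponential, mixing, possibly nonzero mean) and divide by $\min_{j}\hat{\sigma}_{j}^{\mathrm{co}(1)}\stackrel{\mathrm{p}}{\succcurlyeq}\sqrt{n/\log p}$ from (\ref{eq:coint_variance}), which yields $\stackrel{\mathrm{p}}{\preccurlyeq}\sqrt{(\log p)/n}\leq\lambda/(4\|Q^{-1}\|_{c1})$.
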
 Let $\mathcal{S}_{(0)}$ denote the active set of $\tilde{\theta}^{\text{(0)}*}$.
Note that $\mathcal{S}_{(0)}=\mathcal{S}_{x}\cup\mathcal{S}_{z}$
where $\mathcal{S}_{x}$ and $\mathcal{S}_{z}$ are the active sets
of $\beta^{*}$ and $\gamma^{(1)*}$ respectively, and thus $\|\tilde{\theta}^{\text{(0)}*}\|_{0}=\|\beta^{*}\|_{0}+\|\psi^{*}\|_{0}\leq s$
by Assumption \ref{assu:cointegration}. Also, define $\mathcal{S}_{(0)}^{c}:=[p]\backslash\mathcal{S}_{(0)}.$
By Lemma \ref{lem:Coint-DB}, we have 
\begin{eqnarray}
 &  & \dfrac{1}{n}\|\tilde{W}_{\Pi}(\tilde{\theta}-\tilde{\theta}^{*})\|_{2}^{2}+\lambda\|Q\tilde{\theta}\|_{1}\leq\dfrac{\lambda}{2\|Q^{-1}\|_{c1}}\left(\|\tilde{\theta}^{(1)}\|_{1}+\|\tilde{\theta}^{(0)}-\tilde{\theta}^{\text{(0)}*}\|_{1}\right)+\lambda\|\tilde{\theta}^{*}\|_{1}\nonumber \\
 & = & \dfrac{\lambda}{2\|Q^{-1}\|_{c1}}\left(\|(\tilde{\theta}^{(0)}-\tilde{\theta}^{\text{(0)}*})_{\mathcal{S}_{(0)}}\|_{1}+\|(\tilde{\theta}^{(0)}-\tilde{\theta}^{\text{(0)}*})_{\mathcal{S}_{(0)}^{c}}\|_{1}+\|\tilde{\theta}^{(1)}\|_{1}\right)+\lambda\|\tilde{\theta}^{*}\|_{1}\nonumber \\
 & = & \dfrac{\lambda}{2\|Q^{-1}\|_{c1}}\left(\|(\tilde{\theta}^{(0)}-\tilde{\theta}^{\text{(0)}*})_{\mathcal{S}_{(0)}}\|_{1}+\|\tilde{\theta}_{\mathcal{S}_{(0)}^{c}}^{(0)}\|_{1}+\|\tilde{\theta}^{(1)}\|_{1}\right)+\lambda\|\tilde{\theta}^{*}\|_{1}.\label{eq:eqPl1-1}
\end{eqnarray}
Recall that for a generic $\theta$, we have defined $\theta^{(1)}:=\theta_{\mathcal{M}_{1}}$
and $\theta^{(2)}:=\theta_{\mathcal{M}_{2}}.$ Define $\tilde{\theta}^{{\rm co}}:=\left(\tilde{\theta}^{(1)\top},\tilde{\theta}^{(2)\top}\right)^{\top}$
and 
\[
Q^{{\rm co}}:=\begin{pmatrix}I_{p_{c1}} & 0\\
-D^{{\rm co}(2)}A^{\top}[D^{{\rm co}(1)}]^{-1} & I_{p_{c2}}
\end{pmatrix}.
\]
Thus $\tilde{\theta}=(\tilde{\theta}^{{\rm co}\top},\tilde{\beta}^{\top},\tilde{\gamma}^{\top})^{\top}$
and $Q={\rm diag}\left(Q^{{\rm co}},I_{p_{x}},I_{p_{z}}\right)$ is
a block diagonal matrix. 

Further define $\mathcal{S}_{x}^{c}:=\mathcal{M}_{x}\backslash\mathcal{S}_{x}$
and $\mathcal{S}_{z}^{c}:=\mathcal{M}_{z}\backslash\mathcal{S}_{z}$.
We then derive 
\begin{align}
\|Q\tilde{\theta}\|_{1} & =\|Q^{{\rm co}}\tilde{\theta}^{{\rm co}}\|_{1}+\|\tilde{\beta}\|_{1}+\|\tilde{\gamma}\|_{1}\nonumber \\
 & \geq\dfrac{\|\tilde{\theta}^{{\rm co}}\|_{1}}{\|(Q^{{\rm co}})^{-1}\|_{c1}}+\|\tilde{\beta}_{\mathcal{S}_{x}^{c}}\|_{1}+\|\tilde{\gamma}_{\mathcal{S}_{z}^{c}}\|_{1}+\|\tilde{\beta}_{\mathcal{S}_{x}}\|_{1}+\|\tilde{\gamma}_{\mathcal{S}_{z}}\|_{1}\nonumber \\
 & \geq\left(\|\tilde{\theta}^{{\rm co}}\|_{1}+\|\tilde{\beta}_{\mathcal{S}_{x}^{c}}\|_{1}+\|\tilde{\gamma}_{\mathcal{S}_{z}^{c}}\|_{1}\right)\big/\|(Q^{{\rm co}})^{-1}\|_{c1}-\|(\tilde{\beta}-\tilde{\beta}^{*})_{\mathcal{S}_{x}}\|_{1}-\|(\tilde{\gamma}-\tilde{\psi}^{*})_{\mathcal{S}_{z}}\|_{1}+\|\tilde{\beta}^{*}\|_{1}+\|\tilde{\psi}^{*}\|_{1}\nonumber \\
 & =\left(\|\tilde{\theta}^{(1)}\|_{1}+\|\tilde{\theta}^{(2)}\|_{1}+\|\tilde{\beta}_{\mathcal{S}_{x}^{c}}\|_{1}+\|\tilde{\gamma}_{\mathcal{S}_{z}^{c}}\|_{1}\right)\big/\|Q{}^{-1}\|_{c1}-\|(\tilde{\theta}^{(0)}-\tilde{\theta}^{\text{(0)}*})_{\mathcal{S}_{(0)}}\|_{1}+\|\tilde{\theta}^{*}\|_{1}\nonumber \\
 & =\left(\|\tilde{\theta}^{(1)}\|_{1}+\|\tilde{\theta}_{\mathcal{S}_{(0)}^{c}}^{(0)}\|_{1}\right)\big/\|Q^{-1}\|_{c1}-\|(\tilde{\theta}^{(0)}-\tilde{\theta}^{\text{(0)}*})_{\mathcal{S}_{(0)}}\|_{1}+\|\tilde{\theta}^{*}\|_{1}\label{eq:Qtheta ineq}
\end{align}
where the second inequality follows by $\|\tilde{\theta}^{{\rm co}}\|_{1}=\|(Q^{{\rm co}})^{-1}Q^{{\rm co}}\tilde{\theta}^{{\rm co}}\|_{1}\leq\|(Q^{{\rm co}})^{-1}\|_{c1}\|Q^{{\rm co}}\tilde{\theta}^{{\rm co}}\|_{1},$
the third line applies the fact that $\|(Q^{c})^{-1}\|_{1}\geq1$
and the triangular inequality, the fourth line applies $\|(Q^{{\rm co}})^{-1}\|_{c1}=\|Q^{-1}\|_{c1}$,
and the last line applies the fact that the inactive set of $\tilde{\theta}^{\text{(0)}*}$
includes all entries corresponding to $\tilde{\theta}^{(2)}$ and
the inactive entries in $\beta^{*}$ and $\gamma^{(1)*}$, so $\|\tilde{\theta}_{\mathcal{S}_{(0)}^{c}}^{(0)}\|_{1}=\|\tilde{\theta}^{(2)}\|_{1}+\|\tilde{\beta}_{\mathcal{S}_{x}^{c}}\|_{1}+\|\tilde{\gamma}_{\mathcal{S}_{z}^{c}}\|_{1}$. 

We substitute (\ref{eq:Qtheta ineq}) into (\ref{eq:eqPl1-1}) and
rearrange 
\begin{eqnarray*}
 &  & \dfrac{1}{n}\|\tilde{W}_{\Pi}(\tilde{\theta}-\tilde{\theta}^{*})\|_{2}^{2}+\dfrac{\lambda}{\|Q^{-1}\|_{c1}}\left(\|\tilde{\theta}_{\mathcal{S}_{(0)}^{c}}^{(0)}\|_{1}+\|\tilde{\theta}^{(1)}\|_{1}\right)\\
 & \leq & \lambda\|(\tilde{\theta}^{(0)}-\tilde{\theta}^{\text{(0)}*})_{\mathcal{S}_{(0)}}\|_{1}+\dfrac{\lambda}{2\|Q^{-1}\|_{c1}}\left(\|(\tilde{\theta}^{(0)}-\tilde{\theta}^{\text{(0)}*})_{\mathcal{S}_{(0)}}\|_{1}+\|\tilde{\theta}_{\mathcal{S}_{(0)}^{c}}^{(0)}\|_{1}+\|\tilde{\theta}^{(1)}\|_{1}\right)\\
 & = & \dfrac{3\lambda}{2}\|(\tilde{\theta}^{(0)}-\tilde{\theta}^{\text{(0)}*})_{\mathcal{S}_{(0)}}\|_{1}+\dfrac{\lambda}{2\|Q^{-1}\|_{c1}}\left(\|\tilde{\theta}_{\mathcal{S}_{(0)}^{c}}^{(0)}\|_{1}+\|\tilde{\theta}^{(1)}\|_{1}\right),
\end{eqnarray*}
and further rearranging the above inequality yields 
\begin{equation}
\dfrac{2}{n}\|\tilde{W}_{\Pi}(\tilde{\theta}-\tilde{\theta}^{*})\|_{2}^{2}+\dfrac{\lambda}{\|Q^{-1}\|_{c1}}\left(\|\tilde{\theta}_{\mathcal{S}_{(0)}^{c}}^{(0)}\|_{1}+\|\tilde{\theta}^{(1)}\|_{1}\right)\leq3\lambda\|(\tilde{\theta}^{(0)}-\tilde{\theta}^{\text{(0)}*})_{\mathcal{S}_{(0)}}\|_{1}.\label{eq:inequality-Lasso-proof}
\end{equation}
This expression bounds the in-sample fitting 
\begin{equation}
\dfrac{1}{n}\left\Vert \ddot{W}\hat{\theta}-(\ddot{X}_{t-1}^{\top}\beta^{*}+\ddot{Z}_{t-1}^{\top}\gamma^{*(1)})\right\Vert _{2}^{2}=\dfrac{\|\tilde{W}_{\Pi}(\tilde{\theta}-\tilde{\theta}^{*})\|_{2}^{2}}{n}\leq\dfrac{3\lambda}{2}\|\tilde{\theta}^{(0)}-\tilde{\theta}^{\text{(0)}*}\|_{1}.\label{eq:coint-in-sample}
\end{equation}

To characterize the rate of convergence, we expand the first term
\begin{align*}
\dfrac{1}{n}\|\tilde{W}_{\Pi}(\tilde{\theta}-\tilde{\theta}^{*})\|_{2}^{2} & =(\tilde{\theta}^{(0)}-\tilde{\theta}^{\text{(0)}*})^{\top}\tilde{\Sigma}^{(0)}(\tilde{\theta}^{(0)}-\tilde{\theta}^{\text{(0)}*})+2(\tilde{\theta}^{(0)}-\tilde{\theta}^{\text{(0)}*})^{\top}\tilde{\Sigma}^{(01)}\tilde{\theta}^{(1)}+\tilde{\theta}^{(1)}\tilde{\Sigma}^{(1)}\tilde{\theta}^{(1)}\\
 & \ge(\tilde{\theta}^{(0)}-\tilde{\theta}^{\text{(0)}*})^{\top}\tilde{\Sigma}^{(0)}(\tilde{\theta}^{(0)}-\tilde{\theta}^{\text{(0)}*})-2\left|(\tilde{\theta}^{(0)}-\tilde{\theta}^{\text{(0)}*})^{\top}\tilde{\Sigma}^{(01)}\tilde{\theta}^{(1)}\right|+\tilde{\theta}^{(1)}\tilde{\Sigma}^{(1)}\tilde{\theta}^{(1)}
\end{align*}
where $\tilde{\Sigma}^{(0)}:=\tilde{W}^{(0)\top}\tilde{W}^{(0)}/n$
, $\tilde{\Sigma}^{(01)}:=\tilde{W}^{(0)\top}\tilde{v}^{(1)}/n$,
and $\tilde{\Sigma}^{(1)}:=\tilde{v}^{(1)\top}\tilde{v}^{(1)}/n$.
The following lemma controls the magnitude of the cross term. 

\begin{lem}\label{lem:Coint-DB-2}Under the conditions of Theorem
\ref{thm:cointegration}, there exists an absolute constant $C$ such
that $\|\tilde{\Sigma}^{(01)}\|_{\max}\leq C(\log p)^{\frac{3}{2}+\frac{1}{2r}}/\sqrt{n}$
w.p.a.1. 

\end{lem}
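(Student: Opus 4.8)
\textbf{Proof proposal for Lemma \ref{lem:Coint-DB-2}.}
The object to bound is $\tilde{\Sigma}^{(01)} = \tilde{W}^{(0)\top}\tilde{v}^{(1)}/n$, a matrix whose rows are indexed by the rotation-invariant regressors $W^{(0)}=(X^{\mathrm{co}(2)},X,Z)$ and whose columns are indexed by the cointegration errors $v^{(1)}$, all after demeaning and scaling by the corresponding sample standard deviations. The key observation is that $v^{(1)}$ is a \emph{stationary} (strongly mixing, sub-exponential) process by the triangular representation \eqref{eq:cointegration X1 X2}, exactly like $u_t$ and $Z_t$, whereas $W^{(0)}$ contains both stationary components ($X$, $Z$) and genuinely nonstationary ones ($X^{\mathrm{co}(2)}$, whose individual coordinates are unit roots). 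The plan is therefore to split $\tilde\Sigma^{(01)}$ into blocks according to the three groups of rows and bound each block's $\|\cdot\|_{\max}$ separately.

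First I would handle the block corresponding to $X^{\mathrm{co}(2)}$. For a coordinate $j$ of $X^{\mathrm{co}(2)}$ and a coordinate $k$ of $v^{(1)}$, the unscaled cross product $n^{-1}|\sum_t \ddot X^{\mathrm{co}(2)}_{j,t-1} v^{(1)}_{k,t}|$ is precisely of the form covered by Proposition \ref{prop:DB-All} and Remark \ref{rem:DB-demean}: a demeaned unit root times a stationary, mixing, mean-zero process (the argument there for $X_{j,t-1}Z_{kt}$ goes through verbatim, since $v^{(1)}$ satisfies the same sub-exponential and mixing conditions as $Z$ via Corollary \ref{cor:BNtail} and Lemma \ref{lem:mixing}). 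This gives the bound $\stackrel{\mathrm{p}}{\preccurlyeq} n^{3/2}\cdot n^{-1/2}(\log p)^{1+\frac1{2r}} = n (\log p)^{1+\frac1{2r}}$ before scaling. Dividing by $\hat\sigma^{\mathrm{co}(2)}_j \stackrel{\mathrm{p}}{\succcurlyeq}\sqrt{n/\log p}$ (Proposition \ref{prop:MinMax-Unit}(a), applicable since each $X^{\mathrm{co}(2)}_j$ is a unit root) and by $\mathrm{s.d.}(v^{(1)}_k)=O(1)$ yields $\stackrel{\mathrm{p}}{\preccurlyeq} n^{1/2}(\log p)^{3/2+\frac1{2r}}/\sqrt n \cdot \sqrt{\log p}$ — after the $1/n$ in front of $\tilde\Sigma^{(01)}$ is accounted for, the correct accounting is: $\frac1n\cdot n(\log p)^{1+\frac1{2r}} / \sqrt{n/\log p} = (\log p)^{3/2+\frac1{2r}}/\sqrt n$. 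This matches the claimed rate.

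For the blocks corresponding to the stationary regressors $X$ and $Z$, both factors in the cross product are stationary, mean-zero (after demeaning), sub-exponential, and strongly mixing, so the relevant tool is Lemma \ref{lem:BernsteinSum} (as applied in Proposition \ref{prop:DB-Ze}, e.g.\ equations \eqref{eq:ZuDB} and \eqref{eq:Zelesprop}): $n^{-1}|\sum_t \ddot Z_{k,t-1} v^{(1)}_{m,t}| \stackrel{\mathrm{p}}{\preccurlyeq} \sqrt{(\log p)/n}$ uniformly, since $\mathbb{E}[Z_{t-1} v^{(1)\top}_t]$ is bounded and the demeaning correction is of smaller order. Dividing by the corresponding sample standard deviations, which are bounded above and below by absolute constants w.p.a.1 for $Z$ (equations \eqref{eq:max_sigma_z}--\eqref{eq:min_sigma_z}) and similarly for $X$ here playing the role of a stationary regressor, leaves the rate $\sqrt{(\log p)/n}$, which is $o\big((\log p)^{3/2+\frac1{2r}}/\sqrt n\big)$. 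Taking the maximum over the three blocks, the dominant term is the $X^{\mathrm{co}(2)}$ block, giving $\|\tilde\Sigma^{(01)}\|_{\max}\stackrel{\mathrm{p}}{\preccurlyeq}(\log p)^{3/2+\frac1{2r}}/\sqrt n$ as claimed.

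The main obstacle I anticipate is bookkeeping around the demeaning and the scaling factors rather than any genuinely new probabilistic estimate: one must confirm that after demeaning $X^{\mathrm{co}(2)}$ the correction term $\bar X^{\mathrm{co}(2)}_j \cdot n^{-1}\sum_t v^{(1)}_{k,t}$ is of strictly smaller order (it is, by \eqref{eq:boundSumX}-type bounds and $n^{-1}|\sum v^{(1)}_{k,t}|\stackrel{\mathrm{p}}{\preccurlyeq}\sqrt{(\log p)/n}$), and that the scaling by $\hat\sigma^{\mathrm{co}(2)}_j$ is legitimately bounded below by $\sqrt{n/\log p}$ even though these coordinates sit inside a cointegrated block — which holds because each $X^{\mathrm{co}(2)}_j$ is \emph{individually} a unit root process, so Proposition \ref{prop:MinMax-Unit}(a) applies coordinatewise. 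Once these are in place the result is essentially a direct corollary of Propositions \ref{prop:DB-All} and \ref{prop:DB-Ze} together with the standard-deviation bounds, so I would present it compactly by citing those results.
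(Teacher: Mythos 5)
There is a genuine gap, and it originates in your reading of the column scaling of $\tilde v^{(1)}$. In the paper's construction $\tilde W_\Pi=\ddot W\Pi^{-1}D^{-1}$, so the cointegration-error columns are $\tilde v^{(1)}=\ddot v^{(1)}[D^{\mathrm{co}(1)}]^{-1}$: they are standardized by the sample s.d.\ of the \emph{observable} $X^{\mathrm{co}(1)}$, which behaves like a unit root coordinatewise and satisfies $\min_{j\in[p_{c1}]}\hat\sigma_j^{\mathrm{co}(1)}\stackrel{\mathrm{p}}{\succcurlyeq}\sqrt{n/\log p}$, not by the $O(1)$ standard deviation of the latent stationary $v^{(1)}$ as you assume when you divide by ``$\mathrm{s.d.}(v^{(1)}_k)=O(1)$''. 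This $\hat\sigma_j^{\mathrm{co}(1)}$ in the denominator is exactly what the paper's proof leans on: its first step bounds $\|\tilde\Sigma^{(01)}\|_{\max}$ by $n^{-1}\sqrt{(\log p)/n}\,\|\sum_{t}\tilde W^{(0)}_{t-1}\ddot v^{(1)\top}_{t-1}\|_{\max}$, and that extra $\sqrt{(\log p)/n}$ factor is indispensable.

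The place where the omission bites is the $Z$-block. You claim $n^{-1}|\sum_t \ddot Z_{k,t-1}v^{(1)}_{m,t}|\stackrel{\mathrm{p}}{\preccurlyeq}\sqrt{(\log p)/n}$ ``since $\mathbb{E}[Z_{t-1}v^{(1)\top}_t]$ is bounded'', but boundedness is not enough: nothing in the assumptions makes this expectation zero --- on the contrary, correlation between $Z_{t-1}$ and $v^{(1)}_{t-1}$ is precisely what generates $\omega^*$ and the recalibrated coefficient $\gamma^{(1)*}$ in (\ref{eq:cointegration_y_2}); only $u_t$ is assumed orthogonal to $Z_{t-1}$ and $v^{(1)}_{t-1}$. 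The Bernstein-type inequality controls only the deviation from the mean, so this cross product is $O_p(1)$ (as in (\ref{eq:ZeDB})), and under your $O(1)$ column scaling the $Z$-block of $\|\tilde\Sigma^{(01)}\|_{\max}$ would be $O_p(1)$, which destroys the claimed rate --- indeed under your interpretation the lemma would be false. The paper instead uses only the crude bound $\|\sum_t\tilde Z_{t-1}\ddot v^{(1)\top}_{t-1}\|_{\max}\stackrel{\mathrm{p}}{\preccurlyeq}n$ and lets the division by $\hat\sigma^{\mathrm{co}(1)}$ supply the missing $\sqrt{(\log p)/n}$. A secondary slip: $X_t$ is the pure unit-root block, not stationary, so those rows should be treated exactly as you treat $X^{\mathrm{co}(2)}$; for that nonstationary block your argument (Proposition \ref{prop:DB-All} plus the lower bound on the row-wise sample s.d.) does reproduce the paper's reasoning and yields the correct rate.
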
By Lemma \ref{lem:Coint-DB-2}, w.p.a.1\@.~we have
\begin{align}
\left|(\tilde{\theta}^{(0)}-\tilde{\theta}^{\text{(0)}*})^{\top}\tilde{\Sigma}^{(01)}\tilde{\theta}^{(1)}\right| & \leq\|\tilde{\Sigma}^{(01)}\|_{\max}\|\tilde{\theta}^{(0)}-\tilde{\theta}^{\text{(0)}*}\|_{1}\cdot\|\tilde{\theta}^{(1)}\|_{1}\nonumber \\
 & \leq Cn^{-1/2}(\log p)^{\frac{3}{2}+\frac{1}{2r}}\left(\|(\tilde{\theta}^{(0)}-\tilde{\theta}^{\text{(0)}*})_{\mathcal{S}_{(0)}}\|_{1}+\|\tilde{\theta}_{\mathcal{S}_{(0)}^{c}}^{(0)}\|_{1}\right)\cdot\|\tilde{\theta}^{(1)}\|_{1}\nonumber \\
 & \leq Cn^{-1/2}(\log p)^{\frac{3}{2}+\frac{1}{2r}}\cdot(1+3\|Q^{-1}\|_{c1})\cdot3\|Q^{-1}\|_{c1}\cdot\|(\tilde{\theta}^{(0)}-\tilde{\theta}^{\text{(0)}*})_{\mathcal{S}_{(0)}}\|_{1}^{2}\nonumber \\
 & \leq Cq_{n}\|(\tilde{\theta}^{(0)}-\tilde{\theta}^{\text{(0)}*})_{\mathcal{S}_{(0)}}\|_{2}^{2}.\label{eq:coint-cross-prod-bound}
\end{align}
where $q_{n}:=s\cdot n^{-1/2}(\log p)^{\frac{3}{2}+\frac{1}{2r}}\cdot(1+3\|Q^{-1}\|_{c1})\cdot3\|Q^{-1}\|_{c1}.$ 

We substitute 
\[
\dfrac{1}{n}\|\tilde{W}_{\Pi}(\tilde{\theta}-\tilde{\theta}^{*})\|_{2}^{2}\geq(\tilde{\theta}^{(0)}-\tilde{\theta}^{\text{(0)}*})^{\top}\tilde{\Sigma}^{(0)}(\tilde{\theta}^{(0)}-\tilde{\theta}^{\text{(0)}*})+\tilde{\theta}^{(1)}\tilde{\Sigma}^{(1)}\tilde{\theta}^{(1)}-2Cq_{n}\|(\tilde{\theta}^{(0)}-\tilde{\theta}^{\text{(0)}*})_{\mathcal{S}_{(0)}}\|_{2}^{2}
\]
 into (\ref{eq:inequality-Lasso-proof}) and add $\frac{\lambda}{\|Q^{-1}\|_{c1}}\|(\tilde{\theta}^{(0)}-\tilde{\theta}^{\text{(0)}*})_{\mathcal{S}_{(0)}}\|_{1}$
to both sides: 
\begin{align}
 & 2(\tilde{\theta}^{(0)}-\tilde{\theta}^{\text{(0)}*})^{\top}\tilde{\Sigma}^{(0)}(\tilde{\theta}^{(0)}-\tilde{\theta}^{\text{(0)}*})+2\tilde{\theta}^{(1)}\tilde{\Sigma}^{(1)}\tilde{\theta}^{(1)}+\dfrac{\lambda}{\|Q^{-1}\|_{c1}}\left(\|\tilde{\theta}^{(0)}-\tilde{\theta}^{\text{(0)}*}\|_{1}+\|\tilde{\theta}^{(1)}\|_{1}\right)\nonumber \\
\leq & \left(3+\dfrac{1}{\|Q^{-1}\|_{c1}}\right)\lambda\|(\tilde{\theta}^{(0)}-\tilde{\theta}^{\text{(0)}*})_{\mathcal{S}_{(0)}}\|_{1}+4Cq_{n}\|(\tilde{\theta}^{(0)}-\tilde{\theta}^{\text{(0)}*})_{\mathcal{S}_{(0)}}\|_{2}^{2}\nonumber \\
\leq & 4\lambda\sqrt{s}\|(\tilde{\theta}^{(0)}-\tilde{\theta}^{\text{(0)}*})_{\mathcal{S}_{(0)}}\|_{2}+4Cq_{n}\|(\tilde{\theta}^{(0)}-\tilde{\theta}^{\text{(0)}*})_{\mathcal{S}_{(0)}}\|_{2}^{2}\label{eq:inequality-Lasso-proof-2}
\end{align}
where the second inequality applies $\|(\tilde{\theta}^{(0)}-\tilde{\theta}^{\text{(0)}*})_{\mathcal{S}_{(0)}}\|_{1}\leq\sqrt{s}\|(\tilde{\theta}^{(0)}-\tilde{\theta}^{\text{(0)}*})_{\mathcal{S}_{(0)}}\|_{2}$
and $\|Q^{-1}\|_{c1}\geq1$.

The first term of the left-hand side of the above display inequality
is governed by the restricted eigenvalue of $\tilde{\Sigma}^{(0)}$,
which we denote as $\tilde{\kappa}^{(0)}=\kappa_{I}(\tilde{\Sigma}^{(0)},3\|Q^{-1}\|_{c1},s).$ 

\medskip

\begin{lem}\label{lem:Coint-RE}Under the conditions of Theorem \ref{thm:cointegration},
we have $\tilde{\kappa}^{(0)}\stackrel{\mathrm{p}}{\succcurlyeq}1/\left(s(\log p)^{6}\right).$

\end{lem}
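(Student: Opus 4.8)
The plan is to reduce Lemma~\ref{lem:Coint-RE} to the mixed-regressor restricted eigenvalue bound of Proposition~\ref{prop:MixRE-cn}. The key structural observation is that the rotation-invariant design matrix $W^{(0)}=(X^{\mathrm{co}(2)\top},X^{\top},Z^{\top})^{\top}$ is exactly a system of unit root regressors ($X^{\mathrm{co}(2)}$ together with $X$, driven by the innovation block $(e_{t}^{(2)\top},e_{t}^{\top})$) augmented by the stationary regressors $Z_{t}$, with cross-sectional covariance controlled by the cointegration-adapted $\Phi$ in~\eqref{eq:cointegration-def-I0-mix} which still obeys Assumption~\ref{assu:covMat}. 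Hence $\tilde{\Sigma}^{(0)}$ is simply the sample \emph{correlation} matrix of such a mixed system, and the task is to transfer the $\sqrt{n}$-scaled RE bound of Proposition~\ref{prop:MixRE-cn} to the sample-s.d.-standardized version, exactly as was done for the pure unit root case in Proposition~\ref{prop:REDBstdUnit}.

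First I would collect the scaling bounds. Let $D^{*}$ be the diagonal matrix whose $j$th entry is $\widehat\sigma_{j}/\sqrt{n}$ when the $j$th column of $W^{(0)}$ is a unit root series and $\widehat\sigma_{j}$ when it is stationary. Proposition~\ref{prop:MinMax-Unit}(a) applied to the unit root blocks and \eqref{eq:max_sigma_z}--\eqref{eq:min_sigma_z} applied to $Z_{t}$ give $(\log p)^{-1}\stackrel{\mathrm{p}}{\preccurlyeq}\widehat\sigma_{\min}^{*2}\leq\widehat\sigma_{\max}^{*2}\stackrel{\mathrm{p}}{\preccurlyeq}\log p$, so the standardization ratio $\widehat\varsigma^{*}:=\widehat\sigma_{\max}^{*}/\widehat\sigma_{\min}^{*}$ satisfies $\widehat\varsigma^{*}\stackrel{\mathrm{p}}{\preccurlyeq}\log p$. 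Letting $\widehat\Sigma^{*}$ be the $\sqrt{n}$-scaled mixed Gram matrix of Proposition~\ref{prop:MixRE-cn} built from $W^{*}_{t}=(n^{-1/2}X^{\mathrm{co}(2)\top}_{t},n^{-1/2}X^{\top}_{t},Z^{\top}_{t})^{\top}$, a direct computation shows the identity $\tilde{\Sigma}^{(0)}=(D^{*})^{-1}\widehat\Sigma^{*}(D^{*})^{-1}$, since rescaling the unit root columns by $\sqrt{n}$ versus by $\widehat\sigma_{j}$ differs precisely by the diagonal factor $D^{*}$.

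Next I would run the cone-inflation step exactly as in the proof of Proposition~\ref{prop:REDBstdUnit}: with $\tilde\delta=(D^{*})^{-1}\delta$, every $\delta\in\mathcal{R}(L,s)$ yields $\tilde\delta\in\mathcal{R}(L\widehat\varsigma^{*},s)$, whence
\[
\kappa_{I}(\tilde{\Sigma}^{(0)},L,s)=\inf_{\delta\in\mathcal{R}(L,s)}\frac{\tilde\delta^{\top}\widehat\Sigma^{*}\tilde\delta}{\tilde\delta^{\top}(D^{*})^{2}\tilde\delta}\;\geq\;\widehat\sigma_{\max}^{*-2}\,\kappa_{I}\!\big(\widehat\Sigma^{*},L\widehat\varsigma^{*},s\big).
\]
Apply this with $L=3\|Q^{-1}\|_{c1}$. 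By Lemma~\ref{lem:Coint-DB} we have $1\leq\|Q^{-1}\|_{c1}\stackrel{\mathrm{p}}{\preccurlyeq}\log p$, so the effective cone parameter $L':=3\|Q^{-1}\|_{c1}\widehat\varsigma^{*}$ obeys $1\leq L'\stackrel{\mathrm{p}}{\preccurlyeq}(\log p)^{2}$. Working on the event where $L'\leq C(\log p)^{2}$, which holds w.p.a.1, the hypotheses of Proposition~\ref{prop:MixRE-cn} are met because $L'^{2}s\preccurlyeq s(\log p)^{4}=o(n\wedge p)$ and $s^{2}L'^{4}(\log p)^{5/2+1/(2r)}\preccurlyeq s^{2}(\log p)^{21/2+1/(2r)}=o(n^{1/2})$, both consequences of Assumption~\ref{assu:asym_n} together with $\log p=O(\log n)$. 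Proposition~\ref{prop:MixRE-cn} then gives $\kappa_{I}(\widehat\Sigma^{*},L',s)\geq\tilde c_{\kappa}/(L'^{2}s\log p)$ w.p.a.1, and substituting back with the upper bounds $\widehat\sigma_{\max}^{*2}\preccurlyeq\log p$, $\|Q^{-1}\|_{c1}^{2}\preccurlyeq(\log p)^{2}$ and $\widehat\varsigma^{*2}\preccurlyeq(\log p)^{2}$ yields
\[
\tilde\kappa^{(0)}\;\geq\;\frac{\tilde c_{\kappa}}{9\,\widehat\sigma_{\max}^{*2}\,\|Q^{-1}\|_{c1}^{2}\,\widehat\varsigma^{*2}\,s\log p}\;\stackrel{\mathrm{p}}{\succcurlyeq}\;\frac{1}{s(\log p)^{6}},
\]
which is the claim. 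The routine parts are the scaling identity $\tilde{\Sigma}^{(0)}=(D^{*})^{-1}\widehat\Sigma^{*}(D^{*})^{-1}$ and the cone-inflation inequality, both of which copy earlier arguments. The point requiring care is the bookkeeping: one must verify that the inflated cone parameter $L'=O((\log p)^{2})$, which is much larger than the constant $L=3$ used for pure unit roots, still satisfies the sample-size restrictions of Proposition~\ref{prop:MixRE-cn} under Assumption~\ref{assu:asym_n}, and that the powers of $\log p$ contributed by $\widehat\sigma_{\max}^{*2}$, $\|Q^{-1}\|_{c1}^{2}$, $\widehat\varsigma^{*2}$ and the $\log p$ inside the mixed RE bound add up to exactly $6$.
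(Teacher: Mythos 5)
Your proposal is correct and follows essentially the same route as the paper's own proof: it reduces $\tilde{\kappa}^{(0)}$ to the $\sqrt{n}$-scaled mixed-regressor RE of Proposition \ref{prop:MixRE-cn} via the standardization and cone-inflation argument of Proposition \ref{prop:REDBstdUnit}, with the inflated cone parameter $3\|Q^{-1}\|_{c1}\widehat{\varsigma}^{(0)}\stackrel{\mathrm{p}}{\preccurlyeq}(\log p)^{2}$, and the same bookkeeping of $\log p$ powers (one from $\widehat{\sigma}_{\max}^{*2}$, two from $\|Q^{-1}\|_{c1}^{2}$, two from $\widehat{\varsigma}^{(0)2}$, one from the RE bound itself) yielding the exponent $6$. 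The rate verifications you perform ($L^{2}s\preccurlyeq s(\log p)^{4}=o(n\wedge p)$ and $s^{2}L^{4}(\log p)^{5/2+1/(2r)}\preccurlyeq s^{2}(\log p)^{21/2+1/(2r)}=o(n^{1/2})$) match those invoked in the paper.
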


\medskip

Recall that $\|\tilde{\theta}_{\mathcal{S}_{(0)}^{c}}^{(0)}\|_{1}\leq3\|Q^{-1}\|_{c1}\cdot\|(\tilde{\theta}^{(0)}-\tilde{\theta}^{\text{(0)}*})_{\mathcal{S}_{(0)}}\|_{1}$
by (\ref{eq:inequality-Lasso-proof}), and thus $\tilde{\theta}^{(0)}-\tilde{\theta}^{\text{(0)}*}\in\mathcal{R}(3\|Q^{-1}\|_{c1},s).$
Then Lemma \ref{lem:Coint-RE} implies
\begin{equation}
\|(\tilde{\theta}^{(0)}-\tilde{\theta}^{\text{(0)}*})_{\mathcal{S}_{(0)}}\|_{2}^{2}\leq\dfrac{1}{\tilde{\kappa}^{(0)}}(\tilde{\theta}^{(0)}-\tilde{\theta}^{\text{(0)}*})^{\top}\tilde{\Sigma}^{(0)}(\tilde{\theta}^{(0)}-\tilde{\theta}^{\text{(0)}*}).\label{eq:RE_temp}
\end{equation}
We continue (\ref{eq:inequality-Lasso-proof-2}):
\begin{align}
 & 2(\tilde{\theta}^{(0)}-\tilde{\theta}^{\text{(0)}*})^{\top}\tilde{\Sigma}^{(0)}(\tilde{\theta}^{(0)}-\tilde{\theta}^{\text{(0)}*})+2\tilde{\theta}^{(1)}\tilde{\Sigma}^{(1)}\tilde{\theta}^{(1)}+\dfrac{\lambda}{\|Q^{-1}\|_{c1}}\left(\|\tilde{\theta}^{(0)}-\tilde{\theta}^{\text{(0)}*}\|_{1}+\|\tilde{\theta}^{(1)}\|_{1}\right)\nonumber \\
\leq & 4\lambda\sqrt{\dfrac{s}{\tilde{\kappa}^{(0)}}}\cdot\sqrt{(\tilde{\theta}^{(0)}-\tilde{\theta}^{\text{(0)}*})^{\top}\tilde{\Sigma}^{(0)}(\tilde{\theta}^{(0)}-\tilde{\theta}^{\text{(0)}*})}+\frac{4Cq_{n}}{\tilde{\kappa}^{(0)}}\cdot(\tilde{\theta}^{(0)}-\tilde{\theta}^{\text{(0)}*})^{\top}\tilde{\Sigma}^{(0)}(\tilde{\theta}^{(0)}-\tilde{\theta}^{\text{(0)}*})\nonumber \\
\leq & \dfrac{4\lambda^{2}s}{\tilde{\kappa}^{(0)}}+\left(1+4C\frac{q_{n}}{\tilde{\kappa}^{(0)}}\right)(\tilde{\theta}^{(0)}-\tilde{\theta}^{\text{(0)}*})^{\top}\tilde{\Sigma}^{(0)}(\tilde{\theta}^{(0)}-\tilde{\theta}^{\text{(0)}*})\label{eq:before the key ineq}
\end{align}
holds w.p.a.1, where the first inequality follows by (\ref{eq:RE_temp}),
and the second inequality by the generic inequality $4ab\leq4a^{2}+b^{2}.$
Note that 
\begin{equation}
\frac{q_{n}}{\tilde{\kappa}^{(0)}}=\frac{3s}{\sqrt{n}\tilde{\kappa}^{(0)}}(\log p)^{\frac{3}{2}+\frac{1}{2r}}(1+3\|Q^{-1}\|_{c1})\|Q^{-1}\|_{c1}\stackrel{\mathrm{p}}{\preccurlyeq}\dfrac{s^{2}}{\sqrt{n}}\left(\log p\right)^{\frac{19}{2}+\frac{1}{2r}}\to0\label{eq:proof_logpn}
\end{equation}
where the second inequality applies (\ref{eq:L1Q-1}) and Lemma \ref{lem:Coint-RE},
and the limit applies the relative size of $n$, $p$ and $s$ specified
in Assumption \ref{assu:asym_n}. Thus, $1+4Cq_{n}/\tilde{\kappa}^{(0)}\leq1.5$
holds with w.p.a.1.~as sample size is sufficiently large. Rearranging
(\ref{eq:before the key ineq}) yields
\[
\frac{1}{2}(\tilde{\theta}^{(0)}-\tilde{\theta}^{\text{(0)}*})^{\top}\tilde{\Sigma}^{(0)}(\tilde{\theta}^{(0)}-\tilde{\theta}^{\text{(0)}*})+2\tilde{\theta}^{(1)}\tilde{\Sigma}^{(1)}\tilde{\theta}^{(1)}+\dfrac{1}{\|Q^{-1}\|_{c1}}\lambda\left(\|\tilde{\theta}^{(0)}-\tilde{\theta}^{\text{(0)}*}\|_{1}+\|\tilde{\theta}^{(1)}\|_{1}\right)\leq\dfrac{4\lambda^{2}s}{\tilde{\kappa}^{(0)}},
\]
which immediately implies 
\[
\|\tilde{\theta}^{(0)}-\tilde{\theta}^{\text{(0)}*}\|_{1}+\|\tilde{\theta}^{(1)}\|_{1}\leq\frac{4\lambda s}{\tilde{\kappa}^{(0)}}\|Q^{-1}\|_{c1}\stackrel{\mathrm{p}}{\preccurlyeq}\frac{4\lambda s}{\tilde{\kappa}^{(0)}}\log p
\]
and the quality of the in-sample fitting 
\[
\dfrac{1}{n}\left\Vert \ddot{W}\hat{\theta}-(\ddot{X}_{t-1}^{\top}\beta^{*}+\ddot{Z}_{t-1}^{\top}\gamma^{*(1)})\right\Vert _{2}^{2}\leq\dfrac{3\lambda}{2}\|\tilde{\theta}^{(0)}-\tilde{\theta}^{\text{(0)}*}\|_{1}\stackrel{\mathrm{p}}{\preccurlyeq}\dfrac{s^{2}}{n}(\log p)^{9+\frac{1}{r}}
\]
in view of (\ref{eq:coint-in-sample}). 

In terms of parameter estimation, the unit root and the stationary
components are governed by 
\begin{eqnarray*}
\|\hat{\beta}-\beta^{*}\|_{1} & \leq & \|\tilde{\theta}^{(0)}-\tilde{\theta}^{\text{(0)}*}\|_{1}/\min_{j\in\mathcal{M}_{x}}\hat{\sigma}_{j}^{X}\stackrel{\mathrm{p}}{\preccurlyeq}\dfrac{4\lambda s/\tilde{\kappa}^{(0)}\log p}{\sqrt{n/\log p}}\stackrel{\mathrm{p}}{\preccurlyeq}\dfrac{s^{2}}{n}(\log p)^{10+\frac{1}{2r}}\\
\|\hat{\gamma}-\gamma^{(1)*}\|_{1} & \leq & \|\tilde{\theta}^{(0)}-\tilde{\theta}^{\text{(0)}*}\|_{1}/\min_{j\in\mathcal{M}_{z}}\hat{\sigma}_{j}^{Z}\stackrel{\mathrm{p}}{\preccurlyeq}4\lambda s/\tilde{\kappa}^{(0)}\log p\stackrel{\mathrm{p}}{\preccurlyeq}\dfrac{s^{2}}{\sqrt{n}}(\log p)^{\frac{19}{2}+\frac{1}{2r}},
\end{eqnarray*}
respectively. The coefficients for the cointegrated variables shrink
toward zero as
\begin{align*}
\|\hat{\phi}_{1}\|_{1} & \leq\dfrac{\|\tilde{\theta}^{(1)}\|_{1}}{\min_{j\in[k_{1}]}\hat{\sigma}_{j}^{(1)}}\stackrel{\mathrm{p}}{\preccurlyeq}\dfrac{4\lambda s/\tilde{\kappa}^{(0)}\log p}{\sqrt{n/\log p}}\stackrel{\mathrm{p}}{\preccurlyeq}\dfrac{s^{2}}{n}(\log p)^{10+\frac{1}{2r}}
\end{align*}
and by the triangular inequality 
\begin{align*}
\|\hat{\phi}_{2}\|_{1} & \leq\|\hat{\phi}_{2}+A^{\top}\hat{\phi}_{1}\|_{1}+\|A^{\top}\hat{\phi}_{1}\|_{1}\stackrel{\mathrm{p}}{\preccurlyeq}\|\tilde{\theta}^{(0)}-\tilde{\theta}^{\text{(0)}*}\|_{1}/\min_{j\in\mathcal{M}_{2}}\hat{\sigma}_{j}^{(2)}+\|A\|_{r1}\cdot\|\hat{\phi}_{1}\|_{1}\\
 & \stackrel{\mathrm{p}}{\preccurlyeq}\dfrac{4\lambda s/\tilde{\kappa}^{(0)}\log p}{\sqrt{n/\log p}}+\dfrac{s^{2}}{n}(\log p)^{10+\frac{1}{2r}}\stackrel{\mathrm{p}}{\preccurlyeq}\dfrac{s^{2}}{n}(\log p)^{10+\frac{1}{2r}}.
\end{align*}
Hence $\|\hat{\phi}\|_{1}\stackrel{\mathrm{p}}{\preccurlyeq}\dfrac{s^{2}}{n}(\log p)^{10+\frac{1}{2r}}$. 
\end{proof}

\subsection{Proofs of Lemmas and Their Corollaries\label{subsec:Proofs-of-Lemmas}}
\begin{proof}[Proof of Lemma \ref{lem:mixing}]
 We first derive a generic inequality. Let $b>0$, $r>0$ and $a\in\mathbb{N}$.
If $a^{\frac{1}{r}-1}\leq\exp\left(\frac{b}{2}a\right)$, we have
\begin{align}
\sum_{t=a}^{\infty}\exp\left(-bt^{r}\right) & \leq\int_{a}^{\infty}\exp\left(-bt^{r}\right)dt=\int_{a^{r}}^{\infty}y^{\frac{1}{r}-1}\exp\left(-by\right)dy\nonumber \\
 & \leq\int_{a^{r}}^{\infty}\exp\left(-\frac{by}{2}\right)dy=\frac{1}{b}\exp\left(-\frac{ba^{r}}{2}\right).\label{eq:gene_ineq1}
\end{align}
Since $a\in\mathbb{N}$, the condition $a^{\frac{1}{r}-1}\leq\exp\left(\frac{b}{2}a\right)$
is trivial whenever $r\geq1$, while it also holds for a sufficiently
large $a$ if $r\in(0,1).$

We apply \citet[p.411]{gorodetskii1978strong}'s Theorem: When $d$
is large enough, there exists a constant $C_{g}$ such that 
\[
\alpha(\text{(\ensuremath{\varepsilon_{jt})_{t\in\mathbb{Z}}}},d)\leq C_{g}\sum_{m=d}^{\infty}\left(\sum_{q=m}^{\infty}|\psi_{jq}|\right)^{1/2}\leq C_{g}\sqrt{C_{\psi}}\sum_{m=d}^{\infty}\left(\sum_{q=m}^{\infty}\exp\left(-c_{\psi}q^{r}\right)\right)^{1/2}
\]
where the second inequality follows by Assumption \ref{assu:alpha}.
When $d$ is sufficiently large so that $d^{\frac{1}{r}-1}\leq\exp\left(\frac{b}{2}d\right)$,
we apply (\ref{eq:gene_ineq1}) to yield
\begin{eqnarray*}
 &  & C_{g}\sqrt{C_{\psi}}\sum_{m=d}^{\infty}\left(\sum_{q=m}^{\infty}\exp\left(-c_{\psi}q^{r}\right)\right)^{1/2}\leq C_{g}\sqrt{C_{\psi}}\sum_{m=d}^{\infty}\left(\frac{1}{c_{\psi}}\exp\left(-\frac{c_{\psi}}{2}m^{r}\right)\right)^{1/2}\\
 & = & C_{g}\sqrt{\frac{C_{\psi}}{c_{\psi}}}\sum_{m=d}^{\infty}\exp\left(-\frac{c_{\psi}}{4}m^{r}\right)\leq C_{g}\sqrt{\frac{C_{\psi}}{c_{\psi}}}\frac{4}{c_{\psi}}\exp\left(-\frac{c_{\psi}}{8}m^{r}\right)=\widetilde{C}_{\alpha}\exp\left(-c_{\alpha}k^{r}\right)
\end{eqnarray*}
where $\widetilde{C}_{\alpha}=4C_{g}C_{\psi}^{1/2}c_{\psi}^{-3/2}$
and $c_{\alpha}=c_{\psi}/8$. By \citet[Theorem 1]{bradley1993equivalent}
and the i.i.d.~of $\eta_{jt}$, we have 
\[
\rho(\text{(\ensuremath{\varepsilon_{jt})_{t\in\mathbb{Z}}}},d)\leq2\pi\alpha(\text{(\ensuremath{\varepsilon_{jt})_{t\in\mathbb{Z}}}},d)\leq C_{\alpha}\exp\left(-c_{\alpha}d^{r}\right)
\]
where $C_{\alpha}=2\pi\widetilde{C}_{\alpha}.$ \citet[Theorem 5.2 (b)]{bradley2005basic},
together with the independence of the components in $\varepsilon_{t}$,
implies 
\[
\alpha(\text{(\ensuremath{\varepsilon_{jt})_{t\in\mathbb{Z}}}},d)\leq\rho(\text{(\ensuremath{\varepsilon_{jt})_{t\in\mathbb{Z}}}},d)\leq\rho(\varepsilon,d)\leq C_{\alpha}\exp\left(-c_{\alpha}d^{r}\right)
\]
when $d$ is sufficiently large. 
\end{proof}
\medskip
\begin{proof}[Proof of Lemma \ref{lem:BernsteinSum}]
 The triangular inequality and the Markov inequality give 
\[
\max_{j\in[p]}\Pr\left\{ |x_{jt}-\mathbb{E}x_{jt}|>\mu\right\} \leq\max_{j\in[p]}\Pr\left\{ |x_{jt}|>\mu-|\mathbb{E}x_{jt}|\right\} \leq C_{x}\exp\left[-\left((\mu-|\mathbb{E}x_{jt})/b_{x}\right)\right]
\]
for all $\mu>|\mathbb{E}x_{jt}|$. This tail bound allows us to invoke
\citet[p.441]{merlevede2011bernstein}'s Theorem 1 and Remark 1 to
obtain that along the path of $t\in[n]$ the partial sum
\begin{align*}
 & \Pr\left\{ \max_{t\in[n]}|\sum_{s=1}^{t}(x_{js}-\mathbb{E}x_{js})|>\mu\right\} \\
\leq & n\exp\left(-\dfrac{\mu^{r^{*}}}{C_{1}}\right)+\exp\left(-\dfrac{\mu^{2}}{C_{2}(1+nV)}\right)+\exp\left(-\dfrac{\mu^{2}}{C_{3}n}\exp\left(\dfrac{\mu^{r^{*}(1-r^{*})}}{C_{4}(\log\mu)^{r^{*}}}\right)\right)
\end{align*}
for all $j\in[p]$, where $C_{1},C_{2},C_{3},C_{4}$ and $V$ are
absolute constants, and $r^{*}=(1+1/r)^{-1}\in(0,1)$. The union bound
is 
\begin{eqnarray*}
 &  & \Pr\left\{ \max_{j\in[p]}\max_{t\in[n]}|\sum_{s=1}^{t}(x_{js}-\mathbb{E}x_{js})|>\mu\right\} \\
 & \leq & p\left(n\exp\left(-\dfrac{\mu^{r^{*}}}{C_{1}}\right)+\exp\left(-\dfrac{\mu^{2}}{C_{2}(1+nV)}\right)+\exp\left(-\dfrac{\mu^{2}}{C_{3}n}\exp\left(\dfrac{\mu^{r^{*}(1-r^{*})}}{C_{4}(\log\mu)^{r^{*}}}\right)\right)\right)
\end{eqnarray*}
Set $\text{\ensuremath{\mu=C\sqrt{n\log p}}}$ with $C^{2}\geq2(C_{2}+1)V+2C_{3}.$
Recall that $(\log p)^{2/r^{*}-1}=(\log p)^{1+2/r}=o(n)$ and thus
$\log n+\log p=o\left(\left(n\log p\right)^{r^{*}/2}\right).$ When
$n$ is sufficiently large, all the three terms on the right-hand
side of the above expression shrinks to zero as 
\begin{align*}
pn\exp\left(-\dfrac{\mu^{r^{*}}}{C_{1}}\right) & =\exp\left(\log p+\log n-\dfrac{\left(C^{2}n\log p\right)^{r^{*}/2}}{C_{1}}\right)\to0\\
p\exp\left(-\dfrac{\mu^{2}}{C_{2}(1+nV)}\right) & =\exp\left(\left(1-\dfrac{C^{2}n}{C_{2}nV+C_{2}}\right)\log p\right)\leq p^{-1}\to0\\
p\exp\left(-\dfrac{\mu^{2}}{C_{3}n}\exp\left(\dfrac{\mu^{r^{*}(1-r^{*})}}{C_{4}(\log\mu)^{r^{*}}}\right)\right) & \leq\exp\left(\left(1-\dfrac{C^{2}n}{C_{3}n}\right)\log p\right)\leq p^{-1}\to0.
\end{align*}
We complete the proof.
\end{proof}
\medskip
\begin{proof}[Proof of Lemma \ref{lem:LinComb} ]
 \citet[Lemma 5]{wong2020lasso} gives the following inequality:
for a generic random variable $x_{i}$ with finite $\mathbb{E}(|x_{i}|^{b})$
for some $b>0$, there exists some $K_{x}>0$ such that 
\begin{equation}
\mathbb{E}(|x_{i}|^{b})\leq K_{i}^{b}b^{b}.\label{eq:Wong_lemma5}
\end{equation}

Now, let $K_{x}=\max_{i\in\mathbb{N}}K_{i}$. For any $a\in\mathbb{R}$,
it implies 
\begin{align*}
\mathbb{E}\left[\exp\left(\left|ax\right|\right)\right] & =1+\sum_{j=1}^{\infty}\frac{1}{j!}\left|a\right|^{j}\mathbb{E}\left[\left|x\right|^{j}\right]\leq1+\sum_{j=1}^{\infty}\left(\frac{\mathrm{e}}{j}\right)^{j}\left|a\right|^{j}\mathbb{E}\left[\left|x\right|^{j}\right]\\
 & \leq1+\sum_{j=1}^{\infty}\left(\frac{\mathrm{e}}{j}\right)^{j}\left|a\right|^{j}(jK_{x})^{j}=1+\sum_{j=1}^{\infty}\left(K_{x}\text{{\rm e}}\left|a\right|\right)^{j}.
\end{align*}
Let $\tau_{x}=(2K_{x}{\rm e}\|a\|_{\infty})^{-1}$ and then for each
$i$ we have 
\begin{align*}
\mathbb{E}\left[\exp\left[\tau_{x}\left(|a_{i}x_{i}|\right)\right]\right] & \leq1+\sum_{d=1}^{\infty}\left[|\tau_{x}K_{x}a_{i}|{\rm e}\right]^{d}\leq1+\sum_{d=1}^{\infty}\left[\frac{1}{2}\left(\left|a_{i}\right|/\|a\|_{\infty}\right)\right]^{d}\\
 & =1+\dfrac{\frac{1}{2}\left(\left|a_{i}\right|/\|a\|_{\infty}\right)}{1-\frac{1}{2}\left(\left|a_{i}\right|/\|a\|_{\infty}\right)}\leq1+\ensuremath{\dfrac{\left|a_{i}\right|}{\|a\|_{\infty}}}\leq\exp\left(\ensuremath{\dfrac{\left|a_{i}\right|}{\|a\|_{\infty}}}\right).
\end{align*}
Since $\left|\sum_{i\in\mathbb{N}}a_{i}x_{i}\right|\leq\sum_{i\in\mathbb{N}}|a_{i}x_{i}|$,
it further implies 
\begin{align*}
\mathbb{E}\left[\exp\left(\tau_{x}\left|\sum_{i\in\mathbb{N}}a_{i}x_{i}\right|\right)\right] & \leq\mathbb{E}\left[\exp\left(\tau_{x}\sum_{i\in\mathbb{N}}|a_{i}x_{i}|\right)\right]=\prod_{i\in\mathbb{N}}\mathbb{E}\left[\exp\left(\tau_{x}|a_{i}x_{i}|\right)\right]\\
 & \leq\prod_{i\in\mathbb{N}}\exp\left(\ensuremath{\dfrac{\left|a_{i}\right|}{\|a\|_{\infty}}}\right)\leq\exp\left(\ensuremath{\dfrac{\sum_{i\in\mathbb{N}}\left|a_{i}\right|}{\|a\|_{\infty}}}\right)=\exp\left(\ensuremath{\frac{\left\Vert a\right\Vert _{1}}{\|a\|_{\infty}}}\right)
\end{align*}
where the equality follows by the independence of $\{x_{i}\}_{i\in\mathbb{N}}$.
By the Markov inequality we have 
\begin{align*}
\Pr\left\{ \left|\sum_{i\in\mathbb{N}}a_{i}x_{i}\right|>\mu\right\}  & \leq\mathrm{e}^{-\mu\tau_{x}}\cdot\mathbb{E}\left[\exp\left(\tau_{x}\left|\sum_{i\in\mathbb{N}}a_{i}x_{i}\right|\right)\right]\leq\exp\left(-\dfrac{\mu}{2K_{x}{\rm e}\|a\|_{\infty}}+\ensuremath{\dfrac{\left\Vert a\right\Vert _{1}}{\|a\|_{\infty}}}\right).
\end{align*}
\end{proof}
\medskip
\begin{proof}[Proof of Corollary \ref{cor:BNtail}]
 Recall that $\varepsilon_{jt}=\sum_{d=0}^{\infty}\psi_{jd}\eta_{j,t-d}$
is a linear process with $(\eta_{j,t-d})_{d\in\mathbb{N}}$ independent
over the cross section and the time, and $\eta_{jt}$ satisfies (\ref{eq:xprobineq})
with $C_{x}=C_{\eta}$ and $b_{x}=b_{\eta}$. By Lemma \ref{lem:LinComb},
to verify (\ref{eq:epsBound}) it suffices to show that $\sum_{d=0}^{\infty}|\psi_{jd}|$
are uniformly bounded by some absolute constant for all $j\in[p+1]$.
Under Assumption \ref{assu:alpha}, the uniform bound holds as 
\begin{align}
\sum_{d=0}^{\infty}|\psi_{jd}| & \leq C_{\psi}\sum_{d=0}^{\infty}\exp\left(-c_{\psi}d^{r}\right)\leq C_{\psi}\left(M+\sum_{d=M}^{\infty}\exp\left(-c_{\psi}d^{r}\right)\right)\nonumber \\
 & \leq C_{\psi}\left(M+\dfrac{1}{c_{\psi}}\exp\left(-\dfrac{c_{\psi}M^{r}}{2}\right)\right)\label{eq:psi_inf_sum_bound}
\end{align}
where the last inequality applies (\ref{eq:gene_ineq1}) with a sufficiently
large integer $M$ so that $M^{\frac{1}{r}-1}\le\exp\left(\frac{c_{\psi}}{2}M\right)$. 

Similarly by Lemma \ref{lem:LinComb}, to verify (\ref{eq:etaWoldSumBound})
it suffices to show that $\sum_{d=0}^{\infty}|\widetilde{\psi}_{jd}|$
\begin{align*}
\sum_{d=0}^{\infty}|\widetilde{\psi}_{jd}| & \leq\sum_{d=0}^{\infty}\sum_{\ell=d+1}^{\infty}|\psi_{j\ell}|\leq C_{\psi}\sum_{d=0}^{\infty}\sum_{\ell=d+1}^{\infty}\exp\left(-c_{\psi}\ell^{r}\right)\leq\dfrac{C_{\psi}}{c_{\psi}}\sum_{d=0}^{\infty}\exp\left(-\frac{c_{\psi}}{2}(d+1)^{r}\right)\\
 & \leq\dfrac{C_{\psi}}{c_{\psi}}\left(\tilde{M}-1+\sum_{d=M}^{\infty}\exp\left(-\frac{c_{\psi}}{2}d^{r}\right)\right)\leq\dfrac{C_{\psi}}{c_{\psi}}\left(\tilde{M}-1+\frac{2}{c_{\psi}}\exp\left(-\frac{c_{\psi}\tilde{M}^{r}}{4}\right)\right)
\end{align*}
is uniformly bounded for all $j\in[p+1]$, where the third and the
last inequalities apply (\ref{eq:gene_ineq1}) with sufficiently large
integer $\tilde{M}$ so that $\tilde{M}^{\frac{1}{r}-1}\le\exp\left(\frac{c_{\psi}}{4}\tilde{M}\right)$. 

Finally we verify (\ref{eq:euBound}). Recall that $v_{jt}=\sum_{\ell=1}^{p+1}\Phi_{j\ell}\varepsilon_{\ell t}$
for all $j\in[p+1]$ as defined by (\ref{eq:def-error}). By (\ref{eq:epsBound})
and the independence of $\varepsilon_{\ell t}$ across all $\ell\in[p+1]$,
we have for any $t$, $(\varepsilon_{\ell t})_{\ell\in[p+1]}$ consists
of $p+1$ independent variables satisfying (\ref{eq:xprobineq}) with
$C_{x}=C_{\eta}^{\prime}$ and $b_{x}=b_{\eta}^{\prime}$. Assumption
\ref{assu:covMat} ensures that 
\[
\max_{\ell\in[p+1]}|\Phi_{j\ell}|\leq\sum_{\ell=1}^{p+1}|\Phi_{j\ell}|\leq C_{L}
\]
for all $j\in[p+1]$, so that we can invoke (\ref{eq:lincomb_p})
by specifying $\|a\|_{\infty}=\max_{\ell\in[p+1]}|\Phi_{j\ell}|$
and $\|a\|_{1}=\sum_{\ell=1}^{p+1}|\Phi_{j\ell}|$ and choose a sufficient
large $\mu$. 
\end{proof}
\medskip
\begin{proof}[Proof of Lemma \ref{lem:GaussianApprox}]
The Beveridge-Nelson decomposition makes $\left\{ \varepsilon_{jt}\right\} $
as 
\[
\varepsilon_{jt}=\psi_{j}(1)\eta_{jt}-(\widetilde{\varepsilon}_{jt}-\widetilde{\varepsilon}_{j,t-1}),\ \text{ where }\widetilde{\varepsilon}_{jt}=\sum_{d=0}^{\infty}\widetilde{\psi}_{jd}\eta_{j,t-d},\ \text{and }\widetilde{\psi}_{jd}=\sum_{\ell=d+1}^{\infty}\psi_{j\ell}
\]
and thus the partial sum is
\[
\sum_{s=0}^{t-1}\varepsilon_{js}=\psi_{j}(1)\sum_{s=0}^{t-1}\eta_{js}-\widetilde{\varepsilon}_{j,t-1}+\widetilde{\varepsilon}_{j,-1}.
\]
which deduces 
\[
\dfrac{1}{\sqrt{n}}\left|\sum_{s=0}^{t-1}\varepsilon_{js}-\psi_{j}(1)\mathcal{B}_{j}\left(t\right)\right|\leq|\psi_{j}(1)|\cdot\left|\dfrac{1}{\sqrt{n}}\left(\sum_{s=0}^{t-1}\eta_{js}-\mathcal{B}_{j}\left(t\right)\right)\right|+\dfrac{|\widetilde{\varepsilon}_{j,t-1}|+|\widetilde{\varepsilon}_{j,-1}|}{\sqrt{n}}.
\]
By (\ref{eq:etaWoldSumBound}) in Corollary \ref{cor:BNtail}, taking
$\mu=2\tilde{b}_{\eta}(\log n+\log p),$
\[
\Pr\left\{ |\widetilde{\varepsilon}_{j,t-1}|>2\tilde{b}_{\eta}(\log n+\log p)\right\} \leq\tilde{C}_{\eta}(np)^{-2}
\]
and by the union bound 
\[
\Pr\left\{ \sup_{j\in[p+1],t\in[n]}|\widetilde{\varepsilon}_{j,t-1}|>2\tilde{b}_{\eta}(\log n+\log p)\right\} \leq\tilde{C}_{\eta}n^{-1}p^{-2}(p+1)\to0
\]
and thus 
\[
\sup_{j\in[p+1],t\in[n]}n^{-1/2}\left(|\widetilde{\varepsilon}_{j,t-1}|\vee|\widetilde{\varepsilon}_{j,-1}|\right)=O_{p}\left(n^{-1/2}\left(\log n+\log p\right)\right)=O_{p}\left(\frac{\log p}{\sqrt{n}}\right)
\]
where the last inequality applies $\log n\leq\nu_{1}^{-1}\log p$
stated in the beginning of Section \ref{sec:Proofs}. Furthermore,
$\sup_{j\in[p+1]}|\psi_{j}(1)|=O(1)$ by (\ref{eq:psi_inf_sum_bound}). 

Next we work with $\sup_{j\in[p+1],t\in[n]}\left|\frac{1}{\sqrt{n}}\left(\sum_{s=0}^{t-1}\eta_{js}-\mathcal{B}_{j}\left(t\right)\right)\right|.$
Note that $\eta_{js}$ is sub-exponential by Assumption \ref{assu:tail}
and thus has a finite moment generating function within a compact
interval \citep[Proposition 2.7.1]{vershynin2018high}. Given this
fact, we use the Koml\'{o}s-Major-Tusn\'{a}dy coupling coupling
inequality \citep[Theorem 1]{komlos1976approximation}: for any $\tau>0$
and $j\in[p+1]$, there are absolute constants $C$, $K_{1}$ and
$K_{2}$ such that the following non-asymptotic inequality holds:
\[
\Pr\left\{ \sup_{t\in[n]}\left|\sum_{s=0}^{t-1}\eta_{js}-\mathcal{B}_{j}\left(t\right)\right|>C\log n+\tau\right\} \leq K_{1}\exp(-K_{2}\tau).
\]

Applying the union bound, we obtain 
\[
\Pr\left\{ \sup_{j\in[p+1],t\in[n]}\left|\sum_{s=0}^{t-1}\eta_{js}-\mathcal{B}_{j}\left(t\right)\right|>C\log n+\tau\right\} \leq K_{1}(p+1)\exp(-K_{2}\tau).
\]
Set $\tau=\frac{2}{K_{2}}\log p$ and we obtain
\[
\Pr\left\{ \sup_{j\in[p+1],t\in[n]}\left|\sum_{s=0}^{t-1}\eta_{js}-\mathcal{B}_{j}\left(t\right)\right|>C\log n+\frac{2}{K_{2}}\log p\right\} \leq K_{1}(p+1)p^{-2}\to0.
\]
We thus conclude 
\[
\sup_{j\in[p+1],t\in[n]}\dfrac{1}{\sqrt{n}}\left|\sum_{s=0}^{t-1}\eta_{js}-\mathcal{B}_{j}\left(t\right)\right|=O_{p}\left(\frac{1}{\sqrt{n}}C\log n+\frac{2}{K_{2}}\log p\right)=O_{p}\left(\frac{\log p}{\sqrt{n}}\right).
\]
We complete the proof.
\end{proof}
\medskip
\begin{proof}[Proof of Lemma \ref{lem:Coint-DB}]
\uline{ Proof of (\mbox{\ref{eq:L1Q-1}})}. It is easy to verify
\begin{align*}
Q^{-1} & =D\Pi D^{-1}=\begin{pmatrix}I_{k_{1}}\\
D^{{\rm co}(2)}A^{\top}[D^{{\rm co}(1)}]^{-1} & I_{k_{2}}\\
 &  & I_{p_{x}}\\
 &  &  & I_{p_{z}}
\end{pmatrix}
\end{align*}
and thus $\|Q^{-1}\|_{c1}=\|D^{{\rm co}(2)}A^{\top}[D^{{\rm co}(1)}]^{-1}\|_{c1}+1\geq1$
gives the lower bound. 

Notice that 
\[
\|D^{{\rm co}(2)}A^{\top}[D^{{\rm co}(1)}]^{-1}\|_{c1}\leq\|A\|_{r1}\cdot\frac{\max_{j\in[p_{c2}]}\hat{\sigma}_{j}^{{\rm co}(2)}}{\min_{j\in[p_{c1}]}\hat{\sigma}_{j}^{{\rm co}(1)}}.
\]
Besides, $\min_{j\in[p_{c1}]}\hat{\sigma}_{j}^{{\rm co}(1)}\stackrel{\mathrm{p}}{\succcurlyeq}\sqrt{n/\log p}$
by (\ref{eq:sigma-minmax}), and $\max_{j\in[p_{c2}]}\hat{\sigma}_{j}^{{\rm co}(2)}\stackrel{\mathrm{p}}{\preccurlyeq}\sqrt{n\log p}$
following (\ref{eq:coint_variance}) as $X_{t}^{{\rm co}(2)}$ is
a pure $I(1)$ vector. Then with $p$ large enough,
\[
\|Q^{-1}\|_{c1}\stackrel{\mathrm{p}}{\preccurlyeq}\dfrac{\sqrt{n\log p}}{\sqrt{n/\log p}}+1\leq2\log p.
\]

\uline{Proof of (\mbox{\ref{eq:Coint-DB}}).} We check the orders
of the two terms on the left-hand side of (\ref{lem:Coint-DB}). Recall
that the error term in (\ref{eq:cointegration_y_2}) is $u_{t}^{(1)}=u_{t}+v_{t-1}^{(1)\top}\phi_{1}^{*}-Z_{t-1}^{\top}\omega^{*}$
and thus
\begin{align*}
n^{-1}\|\tilde{v}^{(1)\top}\ddot{u}^{(1)}\|_{\infty} & \leq\dfrac{\|\frac{1}{n}\sum_{t=1}^{n}\ddot{v}_{t-1}^{(1)}\ddot{u}_{t}^{(1)}\|_{\infty}}{\min_{j\in[p_{c1}]}\hat{\sigma}_{j}^{(1)}}\stackrel{\mathrm{p}}{\preccurlyeq}\sqrt{\dfrac{\log p}{n}},
\end{align*}
where the denominator is governed by (\ref{eq:coint_variance}) and
the numerator is controlled by Lemma \ref{lem:LinComb} $\|\frac{1}{n}\sum_{t=1}^{n}\ddot{v}_{t-1}^{(1)}\ddot{u}_{t}^{(1)}\|_{\infty}=O_{p}(1)$
as $v_{t}^{(1)}$ and $u_{t}^{(1)}$ are both stationary, mixing and
sub-exponential.

Recall $\tilde{W}^{(0)}=\left(\tilde{W}_{j}\right)_{j\in[p]\backslash\mathcal{M}_{1}}=\left(\widehat{\sigma}_{j}^{-1}W_{j}\right)_{j\in[p]\backslash\mathcal{M}_{1}}$
is a scale-standardized vector of a mixture of $I(1)$ and $I(0)$
regressors, and $\mathbb{E}(Z_{t-1}u_{t}^{(1)})=0$. Thus 
\[
n^{-1}\|\tilde{W}^{(0)\top}\ddot{u}^{(1)}\|_{\infty}\stackrel{\mathrm{p}}{\preccurlyeq}\sqrt{n^{-1}(\log p)^{3+\frac{1}{r}}}=\frac{1}{\sqrt{n}}(\log p)^{\frac{3}{2}+\frac{1}{2r}}
\]
 following (\ref{eq:DB-mix-std}). We obtain (\ref{lem:Coint-DB})
in view of the choice of $\lambda$ and the order of $\|Q^{-1}\|_{c1}$
in (\ref{eq:L1Q-1}). 
\end{proof}
\medskip
\begin{proof}[Proof of Lemma \ref{lem:Coint-DB-2}]
 Recall that $\tilde{\Sigma}^{(01)}=n^{-1}\sum_{t=1}^{n}\tilde{W}_{t-1}^{(0)}\tilde{v}_{t-1}^{(1)\top}$
and thus 
\begin{equation}
\|\tilde{\Sigma}^{(01)}\|_{\max}\leq\dfrac{\|\sum_{t=1}^{n}\tilde{W}_{t-1}^{(0)}\ddot{v}_{t-1}^{(1)\top}\|_{\max}}{n\cdot\min_{j\in[p_{c1}]}\hat{\sigma}_{j}^{{\rm co}(1)}}\stackrel{\mathrm{p}}{\preccurlyeq}\frac{1}{n}\sqrt{\dfrac{\log p}{n}}\|\sum_{t=1}^{n}\tilde{W}_{t-1}^{(0)}\ddot{v}_{t-1}^{(1)\top}\|_{\max}\label{eq:Sigma_01}
\end{equation}
given the order of $\min_{j\in[p_{c1}]}\hat{\sigma}_{j}^{{\rm co}(1)}$.
Recall that $v_{t}^{(1)}$ is stationary and $\tilde{W}_{t}^{(0)}=(\tilde{X}_{t}^{{\rm co}(2)\top},\tilde{X}_{t}^{\top},\tilde{Z}_{t}^{\top})$
is a standardized vector collecting a mix of pure $I(1)$ and $I(0)$
regressors without cointegration; by (\ref{eq:DB-mix-std}), (\ref{eq:max_sigma_z})
and (\ref{eq:ZeDB}) we verify 
\begin{align*}
\|\sum_{t=1}^{n}\left(\tilde{X}_{t}^{{\rm co}(2)\top},\tilde{X}_{t}^{\top}\right)\ddot{v}_{t-1}^{(1)\top}\|_{\max} & \stackrel{\mathrm{p}}{\preccurlyeq}\sqrt{n}(\log p)^{\frac{3}{2}+\frac{1}{2r}}\\
\|\sum_{t=1}^{n}\tilde{Z}_{t-1}\ddot{v}_{t-1}^{(1)\top}\|_{\max} & \leq\max_{j\in[p_{z}]}\hat{\sigma}_{j}^{Z}\cdot\|\sum_{t=1}^{n}Z_{t-1}\ddot{v}_{t-1}^{(1)\top}\|_{\max}\stackrel{\mathrm{p}}{\preccurlyeq}n.
\end{align*}
When the sample size is sufficiently large such that $\log p\leq\sqrt{n},$
the leading term in $\|\sum_{t=1}^{n}\tilde{W}_{t-1}^{(0)}\ddot{v}_{t-1}^{(1)\top}\|_{\max}$
is $O_{p}\left((\log p)^{1+\frac{1}{2r}}\right)$. Insert it into
(\ref{eq:Sigma_01}) and we obtain the stated rate. 
\end{proof}
\medskip
\begin{proof}[Proof of Lemma \ref{lem:Coint-RE}]
Recall $\tilde{\Sigma}^{(0)}=n^{-1}\tilde{W}^{(0)\top}\tilde{W}^{(0)}$
where 
\[
\tilde{W}_{t}^{(0)}=(\ddot{X}_{t}^{{\rm co}(2)\top}[D^{{\rm co}(2)}]^{-1},\ddot{X}_{t}^{\top}[D^{X}]^{-1},\ddot{Z}_{t}^{\top}[D^{Z}]^{-1})^{\top}=(D^{(0)})^{-1}W_{t}^{(0)}
\]
is the scale-standardized vector that concatenate the components invariant
to the rotation. Define $W_{t}^{(0)*}:=(n^{-1/2}X_{t}^{(2)\top},n^{-1/2}X_{t}^{\top},Z_{t}^{\top})^{\top}$,
and $\hat{\Sigma}^{(0)*}=n^{-1}\sum_{t=1}^{n}W_{t}^{(0)*}W_{t}^{(0)*\top}$.
Further denote $\widehat{\sigma}_{{\rm \max}}^{(0)*}:=\max_{j}\hat{\sigma}_{j}^{(0)*}$
and $\widehat{\sigma}_{{\rm \min}}^{(0)*}:=\min_{j}\hat{\sigma}_{j}^{(0)*}$
where $\hat{\sigma}_{j}^{(0)*}$ is the sample s.d.~of $W_{jt}^{(0)*}$,
and their ratio $\widehat{\varsigma}^{(0)}:=\widehat{\sigma}_{{\rm \max}}^{(0)*}/\widehat{\sigma}_{{\rm \min}}^{(0)*}$.
Obviously $\widehat{\varsigma}^{(0)}\stackrel{\mathrm{p}}{\preccurlyeq}\log p$
By (\ref{eq:sigma_star}). 

Setting $L=3\|Q^{-1}\|_{c1}\widehat{\varsigma}^{(0)}\stackrel{\mathrm{p}}{\preccurlyeq}(\log p)^{2}$.
Following (\ref{eq:RE-mix-std}), w.p.a.1.~we have
\[
\tilde{\kappa}^{(0)}\geq\widehat{\sigma}_{{\rm \max}}^{*-2}\cdot\kappa_{I}(\widehat{\Sigma}^{(0)*},3\|Q^{-1}\|_{c1}\widehat{\varsigma}^{(0)},s)\geq\dfrac{\widehat{\sigma}_{{\rm \max}}^{*-2}\widetilde{c}_{\kappa}}{9\|Q^{-1}\|_{c1}^{2}s\log p[\hat{\varsigma}^{(0)}]^{2}}\stackrel{\mathrm{p}}{\succcurlyeq}\dfrac{1}{s(\log p)^{4}\|Q^{-1}\|_{c1}^{2}}\stackrel{\mathrm{p}}{\succcurlyeq}\dfrac{1}{s(\log p)^{6}}
\]
where the first inequality follows by the proof of Proposition \ref{prop:REDBstdUnit},
the second by Proposition \ref{prop:MixRE-cn} when $L^{2}\cdot s\stackrel{\mathrm{p}}{\preccurlyeq}s(\log p)^{4}=o(n\wedge p)$
and 
\[
s^{2}L^{4}(\log p)^{5/2+1/(2r)}\log(np)\preccurlyeq s^{2}(\log p)^{21/2+1/(2r)}=o(n^{1/2}),
\]
 the third by the relative size of $n,$ $s$ and $p$ in the condition
of Theorem \ref{thm:cointegration}, and the last one by (\ref{eq:L1Q-1}). 
\end{proof}

\section{Additional Numerical Results \label{sec:Additional-Numerical-Results}}

\subsection{Cointegration\label{subsec:simu_coint}}

This section provides Monte Carlo simulations to demonstrate the theoretical
results in Section \ref{subsec:Cointegration}. We consider the DGP
(\ref{eq:cointegration_y}) with cointegrated variables generated
by the triangular representation (\ref{eq:cointegration X1 X2}).
We set $p_{c1}=2$, $p_{c2}=p/2-p_{c1}$, $p_{x}=p/2$ and $p_{z}=2n-p$,
and let $A=1_{p_{c1}}\otimes(0.4\cdot1_{6}^{\top},0_{p_{c2}-6}^{\top})$
in (\ref{eq:cointegration X1 X2}), where ``$\otimes$'' denotes
the Kronecker product. The oracle includes only the first 6 predictors
in $X_{t-1}^{(2)}.$ We set the coefficients in (\ref{eq:cointegration_y})
as $\phi_{1}^{*}=0.8\cdot1_{p_{c1}}$, $\beta^{*}=(1,n^{-1/2}1_{s_{x}-1}^{\top},0_{p_{x}-s_{x}}^{\top})^{\top},$
and in particular $\gamma^{*}=0_{p_{z}}$ to highlight $Z_{t}$'s
partial digesting of the unobservable $v_{t}^{(1)}$ that is predicted
by the theory. 

We generate the innovation $v_{t}=(e_{t}^{(2)\top},e_{t}^{\top},v_{t}^{(1)\top},Z_{t}^{\top},u_{t})^{\top}$
by a (vector) autoregressive (AR) 
\begin{gather}
v_{t}=0.4v_{t-1}+\varepsilon_{t},\text{ for }\varepsilon_{t}\sim i.i.d.\,\mathcal{N}(0,\,0.84\Omega),\label{eq:sim_DGP_v-1}\\
\text{where }\Omega_{ij}=0.8^{|j-j'|}\times\boldsymbol{1}(\text{ }(j,j')\not\in\mathcal{O}).\nonumber 
\end{gather}
Here $\mathcal{O}$ is a subset of the two-dimensional index set that
marks the uncorrelated entries at the following locations: (a) $Z_{t}$
and $u_{t}$, (b) $v_{t}^{(1)}$ and $u_{t}$, (c) $v_{t}^{(1)}$
and $Z_{(s+1):p_{z},t}$, and (d) $Z_{1:s,t}$ and $Z_{(s+1):p_{z},t}$.
The above (a) and (b) parts of this design ensure $(Z_{t}^{\top},v_{t}^{(1)\top})^{\top}$
is orthogonal to $u_{t}$, and the (c) and (d) parts guarantee that
the oracle model, with the first $s$ predictors in $Z_{t}$ involved,
remains of low dimension. Notice that $v_{t}^{(1)}$ is correlated
with the first $s$ predictors in $Z_{t}$ so that the first $s$
entries in the coefficient vector $\gamma^{(1)*}=\omega^{*}$ in (\ref{eq:cointegration_feasible})
is nonzero. 

We consider the following three regressions: (1) Regressing $y_{t}$
on $X_{t-1}$ only; (2) Regressing $y_{t}$ on $\left(X_{t-1}^{\top},Z_{t-1}^{\top}\right)^{\top}$;
and (3) Regressing $y_{t}$ on all observable regressors $\left(X_{t-1}^{\mathrm{co}\top},X_{t-1}^{\top},Z_{t-1}^{\top}\right)^{\top}$.
In practice only Regression (3) is feasible for LASSO. Regression
(2) is infeasible by borrowing the oracle ``$X_{t-1}^{\mathrm{co}\top}$
is inactive in the benchmark DGP (\ref{eq:cointegration_y_2}).''
The oracle will strengthen its finite sample performance. 

Table \ref{tab:coint} reports the one-period-ahead prediction errors
where the tuning parameter is selected by cross validation described
in Section \ref{sec:Simulations}. Plasso slightly outperforms Slasso
in Regression (1), consistent with the theory that the convergence
rates in Theorem \ref{thm:LassoError} are faster than those in Theorem
\ref{thm:SlassoError} and the simulation results in Tables \ref{tab:unit_rmse_ARMA}
and \ref{tab:unit_mae_ARMA}. Regression (2) with the stationary $Z_{t-1}$
substantially improves the prediction of both the oracle OLS and Slasso
estimators, as $Z_{t-1}$ absorbs the correlated part of $v_{t-1}^{(1)}$
and turns the $\gamma^{*}=0$ to $\gamma^{*(1)}\neq0$. Plasso is
unsatisfactory due to the distinctive dynamic behaviors of the nonstationary
$X_{t-1}$ and the stationary $Z_{t-1}$, as explained in Remark \ref{rem:var-sel}. 

The oracle OLS achieves the best prediction in Regression (3) as the
correctly selected active $X_{t-1}^{\mathrm{co}}$ fully captures
the information in $v_{t-1}^{(1)}$. Slasso in the feasible Regression
(3), despite the lack of oracle information concerning $X_{t-1}^{\mathrm{co}}$,
is nearly as good as that in Slasso under Regression (2), and comparable
to the oracle OLS under Regression (2) when the same size is large.
The Plasso again performs poorly given the mixture of predictors in
this case. 

\begin{table}[]
\caption{Simulations with Cointegrated Data}
\label{tab:coint}
\small
\begin{tabular}{ccc|ccc|ccc|ccc}
\hline\hline
\multirow{2}{*}{$n$} & \multirow{2}{*}{$p_{c1},p_x$} & \multirow{2}{*}{$p_z$} & \multicolumn{3}{c}{Oracle}  & \multicolumn{3}{c}{Plasso}  & \multicolumn{3}{c}{Slasso}  \\
\cline{4-12}
                   &                         &                     & Reg(1) & Reg(2) & Reg(3) & Reg(1) & Reg(2) & Reg(3) & Reg(1) & Reg(2) & Reg(3) \\
                   \hline 
\multicolumn{12}{c}{RMPSE}                                                                                                                                   \\
\hline 
\multirow{4}{*}{120}               & 30                      & 180                 & 2.052   & 1.663   & 1.235   & 2.076   & 1.891   & 2.011   & 2.081   & 1.741   & 1.752   \\
               & 48                      & 144                 & 2.052   & 1.684   & 1.257   & 2.105   & 1.915   & 2.034   & 2.122   & 1.731   & 1.753   \\
                & 72                      & 96                  & 2.023   & 1.626   & 1.219   & 2.070   & 1.910   & 2.037   & 2.074   & 1.661   & 1.689   \\
               & 90                      & 60                  & 2.036   & 1.657   & 1.238   & 2.101   & 1.958   & 2.051   & 2.110   & 1.673   & 1.708   \\
               \hline 
\multirow{4}{*}{240}               & 60                      & 360                 & 1.963   & 1.568   & 1.125   & 1.981   & 1.994   & 2.101   & 2.003   & 1.641   & 1.655   \\
                 & 96                      & 288                 & 1.928   & 1.555   & 1.111   & 1.949   & 1.997   & 2.106   & 1.983   & 1.640   & 1.648   \\
                & 144                     & 192                 & 1.964   & 1.555   & 1.125   & 2.015   & 2.082   & 2.184   & 2.026   & 1.637   & 1.651   \\
                 & 180                     & 120                 & 1.939   & 1.551   & 1.116   & 1.989   & 2.044   & 2.131   & 2.007   & 1.609   & 1.626   \\
                 \hline 
\multirow{4}{*}{360}               & 90                      & 540                 & 1.922   & 1.511   & 1.076   & 1.955   & 2.066   & 2.163   & 1.968   & 1.618   & 1.625   \\
                & 144                     & 432                 & 1.882   & 1.498   & 1.082   & 1.899   & 2.059   & 2.187   & 1.917   & 1.592   & 1.602   \\
                & 216                     & 288                 & 1.915   & 1.527   & 1.097   & 1.957   & 2.121   & 2.213   & 1.978   & 1.620   & 1.622   \\
                & 270                     & 180                 & 1.902   & 1.497   & 1.083   & 1.952   & 2.121   & 2.224   & 1.976   & 1.583   & 1.602   \\
\hline 
\multicolumn{12}{c}{MPAE}                                                                                                                                    \\
\hline 
\multirow{4}{*}{120}              & 30                      & 180                 & 1.631   & 1.331   & 0.982   & 1.654   & 1.496   & 1.583   & 1.654   & 1.385   & 1.391   \\
           & 48                      & 144                 & 1.643   & 1.344   & 1.012   & 1.688   & 1.527   & 1.612   & 1.702   & 1.370   & 1.389   \\
              & 72                      & 96                  & 1.605   & 1.292   & 0.979   & 1.655   & 1.520   & 1.616   & 1.648   & 1.322   & 1.343   \\
               & 90                      & 60                  & 1.624   & 1.323   & 0.985   & 1.680   & 1.554   & 1.621   & 1.686   & 1.334   & 1.363   \\
               \hline 
\multirow{4}{*}{240}                & 60                      & 360                 & 1.559   & 1.247   & 0.892   & 1.579   & 1.576   & 1.664   & 1.591   & 1.299   & 1.307   \\
              & 96                      & 288                 & 1.542   & 1.246   & 0.889   & 1.561   & 1.592   & 1.678   & 1.589   & 1.311   & 1.319   \\
               & 144                     & 192                 & 1.560   & 1.235   & 0.898   & 1.601   & 1.638   & 1.712   & 1.610   & 1.303   & 1.311   \\
                & 180                     & 120                 & 1.542   & 1.236   & 0.889   & 1.589   & 1.620   & 1.683   & 1.597   & 1.284   & 1.297   \\
                \hline 
\multirow{4}{*}{360}               & 90                      & 540                 & 1.526   & 1.198   & 0.859   & 1.552   & 1.636   & 1.707   & 1.563   & 1.288   & 1.292   \\
            & 144                     & 432                 & 1.504   & 1.198   & 0.862   & 1.511   & 1.638   & 1.736   & 1.524   & 1.273   & 1.279   \\
               & 216                     & 288                 & 1.524   & 1.218   & 0.876   & 1.561   & 1.686   & 1.751   & 1.577   & 1.287   & 1.291   \\
              & 270                     & 180                 & 1.518   & 1.202   & 0.864   & 1.562   & 1.688   & 1.767   & 1.572   & 1.268   & 1.284   \\
\hline \hline 
\end{tabular} 
\end{table}

\subsection{Omitted Results from the Main Text\label{subsec:Omitted-num-Results}}

This section contains a few tables to which the main text has referred.
For the simulations in Section \ref{sec:Simulations}, Table \ref{tab:unit_mae_ARMA}
shows the MAPE for prediction and the MAE of parameter estimation
in the case of mixed regressors, and Table \ref{tab:unit_mae_ARMA}
reports those of the pure unit root regressors. To better understand
the performance of Plasso and Slasso in variable selection, Table
\ref{tab:cate_sel} shows the percentage of variables selected in
each category, and it is accompanied by the RMSE for each category
in Table \ref{tab:cate_rmse}. 

For the empirical application in Section \ref{sec:Empirical-demo},
Table \ref{tab:unrate_MAPE_504} displays the MAPE of the prediction
of unemployment rate.

\begin{table}[]
\begin{center}
\caption{MAPE for Mixed Regressors}
\label{tab:mix_mae_ARMA}
\small
\begin{tabular}{ccc|r|rr|rr|r|rr|rr}
\hline\hline 
\multirow{3}{*}{$n$}   & \multicolumn{1}{c}{\multirow{3}{*}{$p_x$}} & \multicolumn{1}{c|}{\multirow{3}{*}{$p_z$}} & \multicolumn{5}{c|}{MAPE}                                                                                                                                   & \multicolumn{5}{c}{MAE for estimated coefficients}                                                                                                                         \\ \cline{4-13}
                     & \multicolumn{1}{c}{}                    & \multicolumn{1}{c|}{}                    & \multicolumn{1}{c|}{\multirow{2}{*}{Oracle}} & \multicolumn{2}{c|}{CV $\lambda$} & \multicolumn{2}{c|}{Calibrated $\lambda$} & \multicolumn{1}{c|}{\multirow{2}{*}{Oracle}} & \multicolumn{2}{c|}{CV $\lambda$} & \multicolumn{2}{c}{Calibrated $\lambda$} \\ \cline{5-8} \cline{10-13}
                     & \multicolumn{1}{c}{}                    & \multicolumn{1}{c|}{}                    & \multicolumn{1}{c|}{}                        & Plasso                & Slasso                       & Plasso                     & Slasso                    & \multicolumn{1}{c|}{}                        & Plasso                & Slasso                       & Plasso            & Slasso                             \\
                     \hline 
\multicolumn{13}{c}{DGP1}                                                                                                                                                     \\
\hline 
\multirow{4}{*}{120} & 60  & 180 & 0.913 & 1.297 & \textit{1.006} & 1.228 & \textit{\textbf{0.996}} & 2.861 & 3.896 & \textit{3.572}          & 3.638 & \textit{\textbf{3.566}} \\
                     & 96  & 144 & 0.902 & 1.342 & \textit{0.988} & 1.213 & \textit{\textbf{0.975}} & 2.858 & 4.101 & \textit{3.496}          & 3.725 & \textit{\textbf{3.478}} \\
                     & 144 & 96  & 0.909 & 1.379 & \textit{0.999} & 1.244 & \textit{\textbf{0.986}} & 2.859 & 4.200 & \textit{3.375}          & 3.796 & \textit{\textbf{3.347}} \\
                     & 180 & 60  & 0.924 & 1.410 & \textit{1.001} & 1.243 & \textit{\textbf{0.990}} & 2.851 & 4.283 & \textit{3.259}          & 3.824 & \textit{\textbf{3.224}} \\
                     \hline 
\multirow{4}{*}{240} & 120 & 360 & 0.847 & 1.563 & \textit{0.971} & 1.205 & \textit{\textbf{0.921}} & 2.178 & 4.406 & \textit{\textbf{2.821}} & 3.372 & \textit{3.037}          \\
                     & 192 & 288 & 0.851 & 1.633 & \textit{0.971} & 1.209 & \textit{\textbf{0.925}} & 2.192 & 4.547 & \textit{\textbf{2.792}} & 3.427 & \textit{2.952}          \\
                     & 288 & 192 & 0.860 & 1.680 & \textit{0.973} & 1.217 & \textit{\textbf{0.924}} & 2.183 & 4.762 & \textit{\textbf{2.757}} & 3.478 & \textit{2.823}          \\
                     & 360 & 120 & 0.849 & 1.742 & \textit{0.973} & 1.244 & \textit{\textbf{0.923}} & 2.179 & 4.824 & \textit{\textbf{2.710}} & 3.514 & \textit{2.711}          \\
                     \hline 
\multirow{4}{*}{360} & 180 & 540 & 0.843 & 1.746 & \textit{0.961} & 1.222 & \textit{\textbf{0.911}} & 1.671 & 4.243 & \textit{\textbf{2.221}} & 2.946 & \textit{2.506}          \\
                     & 288 & 432 & 0.836 & 1.757 & \textit{0.959} & 1.206 & \textit{\textbf{0.908}} & 1.667 & 4.415 & \textit{\textbf{2.215}} & 2.988 & \textit{2.433}          \\
                     & 432 & 288 & 0.841 & 1.858 & \textit{0.956} & 1.229 & \textit{\textbf{0.902}} & 1.652 & 4.565 & \textit{\textbf{2.187}} & 3.031 & \textit{2.298}          \\
                     & 540 & 180 & 0.832 & 1.896 & \textit{0.955} & 1.243 & \textit{\textbf{0.901}} & 1.673 & 4.631 & \textit{\textbf{2.178}} & 3.063 & \textit{2.210}          \\
                     \hline 
\multicolumn{13}{c}{DGP2}                                                                                                                                                                                                                                                                                                                                                                                          \\
\hline 
\multirow{4}{*}{120} & 60  & 180 & 0.910 & 1.870 & \textit{1.046} & 1.628 & \textit{\textbf{1.034}} & 2.851 & 5.413 & \textit{3.815}          & 4.844 & \textit{\textbf{3.733}} \\
                     & 96  & 144 & 0.916 & 1.921 & \textit{1.047} & 1.630 & \textit{\textbf{1.032}} & 2.844 & 5.613 & \textit{3.732}          & 4.944 & \textit{\textbf{3.653}} \\
                     & 144 & 96  & 0.894 & 1.974 & \textit{1.022} & 1.645 & \textit{\textbf{1.011}} & 2.856 & 5.795 & \textit{3.628}          & 5.023 & \textit{\textbf{3.544}} \\
                     & 180 & 60  & 0.895 & 2.038 & \textit{1.009} & 1.656 & \textit{\textbf{0.996}} & 2.853 & 6.000 & \textit{3.524}          & 5.100 & \textit{\textbf{3.445}} \\
                     \hline 
\multirow{4}{*}{240} & 120 & 360 & 0.870 & 2.731 & \textit{1.039} & 1.646 & \textit{\textbf{0.974}} & 2.181 & 7.351 & \textit{\textbf{3.053}} & 4.691 & \textit{3.146}          \\
                     & 192 & 288 & 0.856 & 2.879 & \textit{1.030} & 1.662 & \textit{\textbf{0.961}} & 2.186 & 7.577 & \textit{\textbf{3.033}} & 4.744 & \textit{3.066}          \\
                     & 288 & 192 & 0.876 & 3.009 & \textit{1.052} & 1.689 & \textit{\textbf{0.980}} & 2.186 & 7.893 & \textit{3.012}          & 4.787 & \textit{\textbf{2.965}} \\
                     & 360 & 120 & 0.853 & 3.128 & \textit{1.020} & 1.715 & \textit{\textbf{0.947}} & 2.164 & 8.255 & \textit{2.957}          & 4.864 & \textit{\textbf{2.843}} \\
                     \hline 
\multirow{4}{*}{360} & 180 & 540 & 0.841 & 3.509 & \textit{1.004} & 1.665 & \textit{\textbf{0.922}} & 1.666 & 8.127 & \textit{\textbf{2.409}} & 4.166 & \textit{2.547}          \\
                     & 288 & 432 & 0.832 & 3.785 & \textit{1.013} & 1.691 & \textit{\textbf{0.927}} & 1.660 & 8.691 & \textit{\textbf{2.413}} & 4.238 & \textit{2.488}          \\
                     & 432 & 288 & 0.850 & 3.890 & \textit{1.016} & 1.699 & \textit{\textbf{0.939}} & 1.664 & 8.979 & \textit{2.395}          & 4.278 & \textit{\textbf{2.376}} \\
                     & 540 & 180 & 0.852 & 4.003 & \textit{1.014} & 1.718 & \textit{\textbf{0.933}} & 1.660 & 9.273 & \textit{2.387}          & 4.323 & \textit{\textbf{2.282}}  \\
                     \hline \hline  
\end{tabular}
\end{center}
\footnotesize{
Note: 
Italic numbers indicate the better performance between Plasso and Slasso with the same tuning method. 
Bold numbers indicate the best LASSO performance. }
\end{table}

\begin{table}[]
\begin{center}
\caption{MAPE for Pure Unit Root Regressors}
\label{tab:unit_mae_ARMA}
\small
\begin{tabular}{cc|r|rr|rr|r|rr|rr}
  \hline   \hline 
\multirow{3}{*}{$n$}   & \multicolumn{1}{c|}{\multirow{3}{*}{$p_x$}} & \multicolumn{5}{c|}{MAPE}                                                                                                                                   & \multicolumn{5}{c}{MAE for estimated coefficients}                                                                                                                          \\ \cline{3-12}
                     & \multicolumn{1}{c|}{}                   & \multicolumn{1}{c|}{\multirow{2}{*}{Oracle}} & \multicolumn{2}{c|}{CV $\lambda$} & \multicolumn{2}{c|}{Calibrated $\lambda$} & \multicolumn{1}{c|}{\multirow{2}{*}{Oracle}} & \multicolumn{2}{c|}{CV $\lambda$} & \multicolumn{2}{c}{Calibrated $\lambda$} \\  \cline{4-7} \cline{9-12}
                     & \multicolumn{1}{c|}{}                   & \multicolumn{1}{c|}{}                        & Plasso                    & Slasso                   & Plasso                     & Slasso                    & \multicolumn{1}{c|}{}                        & Plasso                    & Slasso                   & Plasso                             & Slasso            \\
                     \hline 
\multicolumn{12}{c}{DGP3}                                                                                                                                                        \\
\hline 
\multirow{4}{*}{120} & 60  & 0.880 & \textit{0.881} & 0.897          & \textit{\textbf{0.864}} & 0.876 & 0.943 & \textit{1.075}          & 1.133 & \textit{\textbf{0.925}} & 1.002 \\
                     & 96  & 0.857 & \textit{0.869} & 0.885          & \textit{\textbf{0.848}} & 0.858 & 0.942 & \textit{1.108}          & 1.177 & \textit{\textbf{0.939}} & 1.047 \\
                     & 144 & 0.855 & 0.898          & \textit{0.881} & \textit{\textbf{0.847}} & 0.858 & 0.944 & \textit{\textbf{0.905}} & 1.073 & \textit{0.948}          & 1.079 \\
                     & 180 & 0.850 & 0.899          & \textit{0.885} & \textit{\textbf{0.857}} & 0.864 & 0.943 & \textit{\textbf{0.924}} & 1.108 & \textit{0.953}          & 1.095 \\
                     \hline 
\multirow{4}{*}{240} & 120 & 0.829 & \textit{0.844} & 0.852          & \textit{\textbf{0.829}} & 0.841 & 0.613 & \textit{0.754}          & 0.844 & \textit{\textbf{0.694}} & 0.781 \\
                     & 192 & 0.846 & \textit{0.851} & 0.868          & \textit{\textbf{0.839}} & 0.853 & 0.610 & \textit{0.783}          & 0.878 & \textit{\textbf{0.703}} & 0.815 \\
                     & 288 & 0.829 & 0.896          & \textit{0.868} & \textit{\textbf{0.838}} & 0.854 & 0.612 & \textit{\textbf{0.691}} & 0.843 & \textit{0.706}          & 0.845 \\
                     & 360 & 0.836 & 0.925          & \textit{0.877} & \textit{\textbf{0.854}} & 0.861 & 0.609 & \textit{\textbf{0.701}} & 0.866 & \textit{0.711}          & 0.860 \\
                     \hline 
\multirow{4}{*}{360} & 180 & 0.818 & \textit{0.832} & 0.841          & \textit{\textbf{0.823}} & 0.834 & 0.403 & \textit{0.560}          & 0.648 & \textit{\textbf{0.525}} & 0.610 \\
                     & 288 & 0.824 & \textit{0.836} & 0.853          & \textit{\textbf{0.829}} & 0.840 & 0.406 & \textit{0.580}          & 0.681 & \textit{\textbf{0.530}} & 0.636 \\
                     & 432 & 0.830 & 0.911          & \textit{0.867} & \textit{\textbf{0.838}} & 0.849 & 0.404 & \textit{\textbf{0.534}} & 0.656 & \textit{0.537}          & 0.659 \\
                     & 540 & 0.811 & 0.891          & \textit{0.851} & \textit{\textbf{0.820}} & 0.838 & 0.406 & \textit{0.542}          & 0.681 & \textit{\textbf{0.539}} & 0.675 \\
                     \hline 
\multicolumn{12}{c}{DGP4}                                                                                                                                                        \\
\hline 
\multirow{4}{*}{120} & 60  & 0.886 & \textit{0.893} & 0.902          & \textit{\textbf{0.873}} & 0.892 & 0.951 & \textit{1.129}          & 1.221 & \textit{\textbf{0.957}} & 1.014 \\
                     & 96  & 0.869 & \textit{0.876} & 0.898          & \textit{\textbf{0.861}} & 0.884 & 0.945 & \textit{1.183}          & 1.286 & \textit{\textbf{0.977}} & 1.071 \\
                     & 144 & 0.854 & 1.006          & \textit{0.905} & \textit{\textbf{0.855}} & 0.885 & 0.943 & \textit{1.011}          & 1.133 & \textit{\textbf{0.995}} & 1.131 \\
                     & 180 & 0.874 & 1.037          & \textit{0.932} & \textit{\textbf{0.890}} & 0.913 & 0.947 & \textit{1.033}          & 1.161 & \textit{\textbf{1.008}} & 1.149 \\
                     \hline 
\multirow{4}{*}{240} & 120 & 0.839 & \textit{0.857} & 0.870          & \textit{\textbf{0.846}} & 0.867 & 0.615 & \textit{0.792}          & 0.900 & \textit{\textbf{0.703}} & 0.783 \\
                     & 192 & 0.830 & \textit{0.855} & 0.870          & \textit{\textbf{0.842}} & 0.868 & 0.608 & \textit{0.817}          & 0.964 & \textit{\textbf{0.712}} & 0.831 \\
                     & 288 & 0.824 & 1.108          & \textit{0.899} & \textit{\textbf{0.843}} & 0.867 & 0.607 & \textit{0.841}          & 0.894 & \textit{\textbf{0.726}} & 0.879 \\
                     & 360 & 0.827 & 1.122          & \textit{0.910} & \textit{\textbf{0.843}} & 0.880 & 0.619 & \textit{0.857}          & 0.923 & \textit{\textbf{0.735}} & 0.905 \\
                     \hline 
\multirow{4}{*}{360} & 180 & 0.847 & \textit{0.857} & 0.863          & \textit{\textbf{0.848}} & 0.859 & 0.404 & \textit{0.576}          & 0.690 & \textit{\textbf{0.526}} & 0.606 \\
                     & 288 & 0.790 & \textit{0.814} & 0.828          & \textit{\textbf{0.803}} & 0.829 & 0.403 & \textit{0.603}          & 0.751 & \textit{\textbf{0.533}} & 0.648 \\
                     & 432 & 0.834 & 1.200          & \textit{0.939} & \textit{\textbf{0.852}} & 0.883 & 0.404 & \textit{0.701}          & 0.716 & \textit{\textbf{0.541}} & 0.684 \\
                     & 540 & 0.821 & 1.205          & \textit{0.918} & \textit{\textbf{0.836}} & 0.869 & 0.401 & \textit{0.715}          & 0.730 & \textit{\textbf{0.546}} & 0.701 \\ 
                       \hline   \hline 
\end{tabular}
\end{center}
\footnotesize{
Note: 
Italic numbers indicate the better performance between Plasso and Slasso with the same tuning method. 
Bold numbers indicate the best LASSO performance. }
\end{table} 

\begin{sidewaystable}[ht]
\caption{Percentage of variables selected by LASSO} 
\label{tab:cate_sel}
\small
\vspace{-1.5em}
\small
\begin{center}
\begin{tabular}{ccc|rrrr|rrrr|rrrr|rrrr}
 \hline\hline 
\multirow{3}{*}{$n$}   & \multicolumn{1}{c}{\multirow{3}{*}{$p_x$}} & \multicolumn{1}{c|}{\multirow{3}{*}{$p_z$}} & \multicolumn{4}{c|}{Active $\beta^*$}                                                                          & \multicolumn{4}{c|}{Inactive $\beta^*$}                                                                        & \multicolumn{4}{c|}{Active $\gamma^*$}                                                                          & \multicolumn{4}{c}{Inactive $\gamma^*$}                                                                        \\ \cline{4-19}
                     & \multicolumn{1}{c}{}                    & \multicolumn{1}{c|}{}                    & \multicolumn{2}{c}{CV $\lambda$} & \multicolumn{2}{c|}{Calibrated $\lambda$} & \multicolumn{2}{c}{CV $\lambda$} & \multicolumn{2}{c|}{Calibrated $\lambda$} & \multicolumn{2}{c}{CV $\lambda$} & \multicolumn{2}{c|}{Calibrated $\lambda$} & \multicolumn{2}{c}{CV $\lambda$} & \multicolumn{2}{c}{Calibrated $\lambda$} \\
                     & \multicolumn{1}{c}{}                    & \multicolumn{1}{c|}{}                    & Plasso                & Slasso                & Plasso                    & Slasso                    & Plasso                & Slasso                & Plasso                    & Slasso                    & Plasso                & Slasso                & Plasso                    & Slasso                    & Plasso                & Slasso                & Plasso                    & Slasso                    \\
                     \hline 
                     \\
\multicolumn{19}{c}{DGP1}                                                                                                                                                                                                                                                                                                                                                                                                                                                                                                              \\
\multirow{4}{*}{120} & 60  & 180 & 55.70 & 56.43 & 57.61 & 58.96 & 7.77 & 4.32 & 7.03 & 4.73 & 92.66 & 98.00 & 93.17 & 98.18 & 0.32 & 5.96 & 0.06 & 6.56 \\
                     & 96  & 144 & 52.49 & 55.31 & 55.77 & 58.03 & 5.81 & 3.50 & 5.44 & 3.87 & 91.80 & 98.02 & 93.07 & 98.20 & 0.24 & 6.12 & 0.07 & 6.72 \\
                     & 144 & 96  & 49.99 & 54.46 & 53.63 & 57.17 & 4.49 & 3.02 & 4.29 & 3.32 & 91.50 & 97.94 & 92.83 & 98.13 & 0.23 & 6.60 & 0.11 & 7.15 \\
                     & 180 & 60  & 48.43 & 53.65 & 52.50 & 56.17 & 3.91 & 2.84 & 3.77 & 3.10 & 91.31 & 98.05 & 92.82 & 98.23 & 0.31 & 7.16 & 0.17 & 7.68 \\
                     \hline 
\multirow{4}{*}{240} & 120 & 360 & 54.15 & 56.04 & 63.06 & 61.98 & 4.27 & 2.37 & 4.42 & 3.05 & 91.51 & 99.18 & 94.22 & 99.53 & 0.01 & 1.92 & 0.01 & 4.51 \\
                     & 192 & 288 & 51.96 & 54.93 & 61.17 & 61.11 & 3.23 & 1.86 & 3.41 & 2.46 & 91.15 & 99.16 & 94.04 & 99.52 & 0.01 & 1.92 & 0.01 & 4.63 \\
                     & 288 & 192 & 48.66 & 53.88 & 59.31 & 60.05 & 2.47 & 1.56 & 2.67 & 2.06 & 90.44 & 99.11 & 94.03 & 99.50 & 0.01 & 2.01 & 0.01 & 4.84 \\
                     & 360 & 120 & 47.73 & 53.23 & 58.47 & 59.54 & 2.13 & 1.37 & 2.31 & 1.84 & 90.26 & 99.14 & 93.88 & 99.52 & 0.02 & 2.12 & 0.01 & 5.12 \\
                     \hline 
\multirow{4}{*}{360} & 180 & 540 & 55.48 & 57.77 & 67.00 & 65.23 & 2.97 & 1.65 & 3.12 & 2.19 & 90.28 & 99.58 & 93.88 & 99.80 & 0.00 & 0.91 & 0.00 & 3.43 \\
                     & 288 & 432 & 52.69 & 56.23 & 64.90 & 63.34 & 2.24 & 1.32 & 2.37 & 1.77 & 89.72 & 99.58 & 93.67 & 99.85 & 0.00 & 0.92 & 0.00 & 3.54 \\
                     & 432 & 288 & 50.16 & 55.33 & 63.30 & 62.37 & 1.70 & 1.03 & 1.83 & 1.42 & 89.41 & 99.57 & 93.68 & 99.80 & 0.00 & 0.95 & 0.00 & 3.68 \\
                     & 540 & 180 & 49.23 & 54.89 & 62.53 & 62.12 & 1.46 & 0.91 & 1.58 & 1.27 & 89.16 & 99.62 & 93.62 & 99.85 & 0.00 & 1.02 & 0.00 & 3.89 \\
                     \hline 
                     \\
\multicolumn{19}{c}{DGP2}                                                                                                                                                                                                                                                                                                                                                                                                                                                                                                                \\
\multirow{4}{*}{120} & 60  & 180 & 55.70 & 56.43 & 57.61 & 58.96 & 7.77 & 4.32 & 7.03 & 4.73 & 92.66 & 98.00 & 93.17 & 98.18 & 0.32 & 5.96 & 0.06 & 6.56 \\
                     & 96  & 144 & 52.49 & 55.31 & 55.77 & 58.03 & 5.81 & 3.50 & 5.44 & 3.87 & 91.80 & 98.02 & 93.07 & 98.20 & 0.24 & 6.12 & 0.07 & 6.72 \\
                     & 144 & 96  & 49.99 & 54.46 & 53.63 & 57.17 & 4.49 & 3.02 & 4.29 & 3.32 & 91.50 & 97.94 & 92.83 & 98.13 & 0.23 & 6.60 & 0.11 & 7.15 \\
                     & 180 & 60  & 48.43 & 53.65 & 52.50 & 56.17 & 3.91 & 2.84 & 3.77 & 3.10 & 91.31 & 98.05 & 92.82 & 98.23 & 0.31 & 7.16 & 0.17 & 7.68 \\
                     \hline 
\multirow{4}{*}{240} & 120 & 360 & 54.15 & 56.04 & 63.06 & 61.98 & 4.27 & 2.37 & 4.42 & 3.05 & 91.51 & 99.18 & 94.22 & 99.53 & 0.01 & 1.92 & 0.01 & 4.51 \\
                     & 192 & 288 & 51.96 & 54.93 & 61.17 & 61.11 & 3.23 & 1.86 & 3.41 & 2.46 & 91.15 & 99.16 & 94.04 & 99.52 & 0.01 & 1.92 & 0.01 & 4.63 \\
                     & 288 & 192 & 48.66 & 53.88 & 59.31 & 60.05 & 2.47 & 1.56 & 2.67 & 2.06 & 90.44 & 99.11 & 94.03 & 99.50 & 0.01 & 2.01 & 0.01 & 4.84 \\
                     & 360 & 120 & 47.73 & 53.23 & 58.47 & 59.54 & 2.13 & 1.37 & 2.31 & 1.84 & 90.26 & 99.14 & 93.88 & 99.52 & 0.02 & 2.12 & 0.01 & 5.12 \\
                     \hline 
\multirow{4}{*}{360} & 180 & 540 & 55.48 & 57.77 & 67.00 & 65.23 & 2.97 & 1.65 & 3.12 & 2.19 & 90.28 & 99.58 & 93.88 & 99.80 & 0.00 & 0.91 & 0.00 & 3.43 \\
                     & 288 & 432 & 52.69 & 56.23 & 64.90 & 63.34 & 2.24 & 1.32 & 2.37 & 1.77 & 89.72 & 99.58 & 93.67 & 99.85 & 0.00 & 0.92 & 0.00 & 3.54 \\
                     & 432 & 288 & 50.16 & 55.33 & 63.30 & 62.37 & 1.70 & 1.03 & 1.83 & 1.42 & 89.41 & 99.57 & 93.68 & 99.80 & 0.00 & 0.95 & 0.00 & 3.68 \\
                     & 540 & 180 & 49.23 & 54.89 & 62.53 & 62.12 & 1.46 & 0.91 & 1.58 & 1.27 & 89.16 & 99.62 & 93.62 & 99.85 & 0.00 & 1.02 & 0.00 & 3.89 \\        
                     \hline\hline            
\end{tabular} 
\end{center}
\end{sidewaystable}

\begin{sidewaystable}[ht]
\caption{Categorized RMSE of estimated coefficients} 
\label{tab:cate_rmse}
\small
\vspace{-1.5em}
\small
\begin{center}
\begin{tabular}{ccc|rrrr|rrrr|rrrr|rrrr}
 \hline\hline 
\multirow{3}{*}{$n$}   & \multicolumn{1}{c}{\multirow{3}{*}{$p_x$}} & \multicolumn{1}{c|}{\multirow{3}{*}{$p_z$}} & \multicolumn{4}{c|}{Active $\beta^*$}                                                                          & \multicolumn{4}{c|}{Inactive $\beta^*$}                                                                        & \multicolumn{4}{c|}{Active $\gamma^*$}                                                                          & \multicolumn{4}{c}{Inactive $\gamma^*$}                                                                        \\ \cline{4-19}
                     & \multicolumn{1}{c}{}                    & \multicolumn{1}{c|}{}                    & \multicolumn{2}{c}{CV $\lambda$} & \multicolumn{2}{c|}{Calibrated $\lambda$} & \multicolumn{2}{c}{CV $\lambda$} & \multicolumn{2}{c|}{Calibrated $\lambda$} & \multicolumn{2}{c}{CV $\lambda$} & \multicolumn{2}{c|}{Calibrated $\lambda$} & \multicolumn{2}{c}{CV $\lambda$} & \multicolumn{2}{c}{Calibrated $\lambda$} \\
                     & \multicolumn{1}{c}{}                    & \multicolumn{1}{c|}{}                    & Plasso                & Slasso                & Plasso                    & Slasso                    & Plasso                & Slasso                & Plasso                    & Slasso                    & Plasso                & Slasso                & Plasso                    & Slasso                    & Plasso                & Slasso                & Plasso                    & Slasso                    \\
                     \hline 
                     \\
\multicolumn{19}{c}{DGP1}                                                                                                                                                                                                                                                                                                                                                                                                                                                                                                              \\
\multirow{4}{*}{120} & 60  & 180 & 0.29 & 0.29 & 0.28 & 0.28 & 0.11 & 0.08 & 0.10 & 0.08 & 1.19 & 0.82 & 1.07 & 0.81 & 0.07 & 0.26 & 0.05 & 0.26 \\
                     & 96  & 144 & 0.29 & 0.29 & 0.28 & 0.28 & 0.12 & 0.09 & 0.11 & 0.10 & 1.27 & 0.82 & 1.08 & 0.81 & 0.05 & 0.23 & 0.04 & 0.24 \\
                     & 144 & 96  & 0.30 & 0.29 & 0.28 & 0.28 & 0.13 & 0.11 & 0.12 & 0.11 & 1.27 & 0.81 & 1.09 & 0.80 & 0.05 & 0.20 & 0.05 & 0.20 \\
                     & 180 & 60  & 0.30 & 0.29 & 0.28 & 0.28 & 0.14 & 0.11 & 0.12 & 0.12 & 1.31 & 0.80 & 1.09 & 0.79 & 0.05 & 0.17 & 0.04 & 0.17 \\
                     \hline 
\multirow{4}{*}{240} & 120 & 360 & 0.23 & 0.22 & 0.20 & 0.21 & 0.08 & 0.05 & 0.06 & 0.06 & 1.40 & 0.66 & 0.94 & 0.62 & 0.02 & 0.13 & 0.01 & 0.20 \\
                     & 192 & 288 & 0.23 & 0.22 & 0.20 & 0.21 & 0.09 & 0.06 & 0.07 & 0.06 & 1.44 & 0.66 & 0.95 & 0.61 & 0.01 & 0.12 & 0.01 & 0.18 \\
                     & 288 & 192 & 0.23 & 0.22 & 0.21 & 0.21 & 0.10 & 0.07 & 0.08 & 0.07 & 1.50 & 0.66 & 0.95 & 0.61 & 0.01 & 0.10 & 0.01 & 0.16 \\
                     & 360 & 120 & 0.23 & 0.22 & 0.21 & 0.21 & 0.10 & 0.07 & 0.08 & 0.07 & 1.52 & 0.65 & 0.96 & 0.61 & 0.01 & 0.09 & 0.01 & 0.13 \\
                     \hline 
\multirow{4}{*}{360} & 180 & 540 & 0.18 & 0.18 & 0.15 & 0.16 & 0.06 & 0.04 & 0.05 & 0.04 & 1.43 & 0.54 & 0.85 & 0.50 & 0.01 & 0.09 & 0.00 & 0.17 \\
                     & 288 & 432 & 0.18 & 0.18 & 0.16 & 0.17 & 0.07 & 0.05 & 0.05 & 0.05 & 1.48 & 0.54 & 0.85 & 0.50 & 0.01 & 0.08 & 0.00 & 0.15 \\
                     & 432 & 288 & 0.18 & 0.18 & 0.16 & 0.17 & 0.08 & 0.05 & 0.06 & 0.05 & 1.53 & 0.53 & 0.86 & 0.49 & 0.01 & 0.07 & 0.01 & 0.13 \\
                     & 540 & 180 & 0.18 & 0.18 & 0.16 & 0.17 & 0.08 & 0.05 & 0.06 & 0.06 & 1.54 & 0.54 & 0.86 & 0.50 & 0.00 & 0.06 & 0.00 & 0.11 \\
                     \hline 
                     \\
\multicolumn{19}{c}{DGP2}                                                                                                                                                                                                                                                                                                                                                                                                                                                                                                              \\
\multirow{4}{*}{120} & 60  & 180 & 0.42 & 0.37 & 0.38 & 0.35 & 0.13 & 0.09 & 0.11 & 0.09 & 1.80 & 0.85 & 1.46 & 0.83 & 0.05 & 0.27 & 0.04 & 0.27 \\
                     & 96  & 144 & 0.44 & 0.37 & 0.39 & 0.35 & 0.15 & 0.11 & 0.13 & 0.11 & 1.84 & 0.85 & 1.47 & 0.83 & 0.05 & 0.24 & 0.04 & 0.24 \\
                     & 144 & 96  & 0.45 & 0.38 & 0.40 & 0.36 & 0.17 & 0.12 & 0.14 & 0.12 & 1.88 & 0.84 & 1.48 & 0.83 & 0.05 & 0.21 & 0.04 & 0.21 \\
                     & 180 & 60  & 0.47 & 0.38 & 0.40 & 0.36 & 0.19 & 0.13 & 0.15 & 0.13 & 1.94 & 0.83 & 1.50 & 0.82 & 0.04 & 0.17 & 0.03 & 0.17 \\
                     \hline 
\multirow{4}{*}{240} & 120 & 360 & 0.35 & 0.28 & 0.26 & 0.25 & 0.12 & 0.07 & 0.08 & 0.06 & 2.51 & 0.69 & 1.36 & 0.63 & 0.01 & 0.14 & 0.01 & 0.20 \\
                     & 192 & 288 & 0.37 & 0.29 & 0.27 & 0.26 & 0.14 & 0.07 & 0.09 & 0.07 & 2.58 & 0.69 & 1.36 & 0.63 & 0.01 & 0.13 & 0.01 & 0.18 \\
                     & 288 & 192 & 0.38 & 0.30 & 0.27 & 0.26 & 0.16 & 0.08 & 0.09 & 0.08 & 2.65 & 0.69 & 1.36 & 0.63 & 0.03 & 0.11 & 0.02 & 0.16 \\
                     & 360 & 120 & 0.40 & 0.30 & 0.27 & 0.26 & 0.17 & 0.08 & 0.10 & 0.08 & 2.78 & 0.68 & 1.38 & 0.63 & 0.01 & 0.09 & 0.01 & 0.13 \\
                     \hline 
\multirow{4}{*}{360} & 180 & 540 & 0.30 & 0.23 & 0.20 & 0.19 & 0.11 & 0.05 & 0.06 & 0.05 & 2.92 & 0.56 & 1.24 & 0.51 & 0.00 & 0.09 & 0.00 & 0.16 \\
                     & 288 & 432 & 0.32 & 0.23 & 0.20 & 0.20 & 0.13 & 0.06 & 0.07 & 0.05 & 3.10 & 0.57 & 1.26 & 0.51 & 0.00 & 0.08 & 0.00 & 0.15 \\
                     & 432 & 288 & 0.33 & 0.23 & 0.20 & 0.20 & 0.14 & 0.06 & 0.07 & 0.06 & 3.19 & 0.57 & 1.26 & 0.51 & 0.00 & 0.07 & 0.00 & 0.13 \\
                     & 540 & 180 & 0.35 & 0.24 & 0.21 & 0.20 & 0.15 & 0.06 & 0.07 & 0.06 & 3.27 & 0.56 & 1.27 & 0.50 & 0.00 & 0.06 & 0.00 & 0.10   \\ 
                     \hline\hline            
\end{tabular} 
\end{center}
\end{sidewaystable}

\begin{table}[]
\caption{MAPE for \texttt{UNRATE}}
\label{tab:unrate_MAPE_504}
\small
\begin{center}
\begin{tabular}{cc|rr|rr|rr|rr|rr}
\hline\hline 
\multirow{3}{*}{$h$} & \multicolumn{1}{c|}{\multirow{3}{*}{$n$}} & \multicolumn{2}{c|}{\multirow{2}{*}{Benchmarks}} & \multicolumn{4}{c|}{121  Predictors}                                                        & \multicolumn{4}{c}{504 Predictors}                                                         \\ \cline{5-12}
                   & \multicolumn{1}{c|}{}                   & \multicolumn{2}{c|}{}                            & \multicolumn{2}{c}{NT}           & \multicolumn{2}{c|}{ST}                                  & \multicolumn{2}{c}{NT}           & \multicolumn{2}{c}{ST}                                  \\ \cline{3-12}
                   & \multicolumn{1}{c|}{}                   & \multicolumn{1}{c}{RWwD}                   & \multicolumn{1}{c|}{AR}                    & \multicolumn{1}{c}{Plasso} & \multicolumn{1}{c|}{Slasso}       & \multicolumn{1}{c}{Plasso} & \multicolumn{1}{c|}{Slasso}  & \multicolumn{1}{c}{Plasso} & \multicolumn{1}{c|}{Slasso}                  & \multicolumn{1}{c}{Plasso} & \multicolumn{1}{c}{Slasso} \\
                   \hline 
\multicolumn{12}{c}{\textbf{Entire testing sample: 1990--2019}}                                                                                                                                                                                                                                                                          \\
\multirow{3}{*}{1} & 120 & 0.114          & 0.116          & 0.454 & \textit{0.113}          & 0.647 & 0.418 & 0.408 & \textit{\textbf{0.109}} & 0.275 & 0.115                   \\
                   & 240 & 0.114          & 0.117          & 0.335 & \textit{0.113}          & 0.515 & 0.535 & 0.575 & \textit{\textbf{0.099}} & 0.177 & 0.105                   \\
                   & 360 & 0.114          & 0.114          & 0.304 & \textit{0.116}          & 0.722 & 0.646 & 0.577 & \textit{\textbf{0.101}} & 0.153 & 0.107                   \\
                   \hline 
\multirow{3}{*}{2} & 120 & 0.168          & 0.162          & 0.486 & \textit{0.151}          & 0.638 & 0.438 & 0.461 & \textit{\textbf{0.145}} & 0.332 & 0.162                   \\
                   & 240 & 0.167          & 0.161          & 0.394 & \textit{0.135}          & 0.517 & 0.531 & 0.618 & \textit{\textbf{0.127}} & 0.229 & 0.141                   \\
                   & 360 & 0.167          & 0.158          & 0.362 & \textit{0.146}          & 0.629 & 0.649 & 0.593 & \textit{\textbf{0.133}} & 0.202 & 0.141                   \\
                   \hline 
\multirow{3}{*}{3} & 120 & 0.218          & 0.211          & 0.514 & \textit{\textbf{0.200}} & 0.684 & 0.459 & 0.513 & \textit{0.201}          & 0.422 & 0.212                   \\
                   & 240 & 0.217          & 0.202          & 0.437 & \textit{0.178}          & 0.529 & 0.549 & 0.665 & \textit{\textbf{0.166}} & 0.297 & 0.174                   \\
                   & 360 & 0.217          & 0.200          & 0.417 & \textit{0.175}          & 0.629 & 0.627 & 0.621 & \textit{\textbf{0.165}} & 0.261 & 0.174                   \\
                   \hline 
\\
\multicolumn{12}{c}{Testing sub-sample:  1990--1999}                                                                                                                                                                                                                                                                          \\
\multirow{3}{*}{1} & 120 & 0.107          & 0.105          & 0.373 & \textit{0.109}          & 0.470 & 0.370 & 0.321 & \textit{\textbf{0.105}} & 0.213 & 0.111                   \\
                   & 240 & 0.107          & 0.111          & 0.172 & \textit{0.109}          & 0.534 & 0.614 & 0.450 & \textit{\textbf{0.099}} & 0.157 & 0.104                   \\
                   & 360 & 0.107          & 0.112          & 0.178 & \textit{0.119}          & 0.526 & 0.487 & 0.521 & 0.103                   & 0.144 & \textit{\textbf{0.101}} \\
                   \hline 
\multirow{3}{*}{2} & 120 & 0.153          & 0.141          & 0.388 & 0.140                   & 0.485 & 0.374 & 0.360 & \textit{\textbf{0.139}} & 0.238 & 0.151                   \\
                   & 240 & 0.153          & 0.151          & 0.224 & \textit{0.125}          & 0.547 & 0.584 & 0.468 & \textit{\textbf{0.125}} & 0.226 & 0.140                   \\
                   & 360 & 0.154          & 0.155          & 0.227 & \textit{0.144}          & 0.518 & 0.487 & 0.537 & \textit{\textbf{0.131}} & 0.198 & 0.136                   \\
                   \hline 
\multirow{3}{*}{3} & 120 & 0.188          & \textbf{0.178} & 0.406 & \textit{0.180}          & 0.519 & 0.387 & 0.381 & \textit{0.179}          & 0.272 & 0.190                   \\
                   & 240 & 0.188          & 0.185          & 0.292 & \textit{0.164}          & 0.548 & 0.597 & 0.472 & \textit{\textbf{0.155}} & 0.284 & 0.180                   \\
                   & 360 & 0.190          & 0.187          & 0.288 & \textit{0.168}          & 0.521 & 0.487 & 0.559 & 0.162                   & 0.246 & \textit{\textbf{0.160}} \\
                    \hline 
\multicolumn{12}{c}{Testing sub-sample:  2000--2009}                                                                                                                                                                                                                                                                          \\
\multirow{3}{*}{1} & 120 & 0.123          & 0.118          & 0.328 & \textit{0.116}          & 0.597 & 0.350 & 0.263 & \textit{\textbf{0.107}} & 0.334 & 0.117                   \\
                   & 240 & 0.123          & 0.115          & 0.298 & \textit{0.120}          & 0.538 & 0.516 & 0.354 & \textit{\textbf{0.097}} & 0.158 & 0.100                   \\
                   & 360 & 0.123          & 0.110          & 0.258 & \textit{0.125}          & 0.900 & 0.820 & 0.404 & \textit{\textbf{0.096}} & 0.153 & 0.103                   \\
                   \hline 
\multirow{3}{*}{2} & 120 & 0.192          & 0.180          & 0.380 & \textit{0.155}          & 0.655 & 0.390 & 0.323 & \textit{\textbf{0.142}} & 0.373 & 0.171                   \\
                   & 240 & 0.194          & 0.165          & 0.378 & \textit{0.143}          & 0.539 & 0.513 & 0.446 & \textit{\textbf{0.132}} & 0.189 & 0.138                   \\
                   & 360 & 0.193          & 0.160          & 0.321 & \textit{0.159}          & 0.642 & 0.817 & 0.403 & 0.141                   & 0.199 & \textit{\textbf{0.139}} \\
                   \hline 
\multirow{3}{*}{3} & 120 & 0.264          & 0.238          & 0.407 & \textit{\textbf{0.210}} & 0.675 & 0.390 & 0.392 & 0.222                   & 0.416 & \textit{0.243}          \\
                   & 240 & 0.266          & 0.224          & 0.356 & \textit{0.207}          & 0.545 & 0.500 & 0.540 & 0.185                   & 0.207 & \textit{\textbf{0.164}} \\
                   & 360 & 0.265          & 0.220          & 0.351 & \textit{0.199}          & 0.632 & 0.744 & 0.429 & 0.174                   & 0.247 & \textit{\textbf{0.173}} \\
                    \hline 
\multicolumn{12}{c}{Testing sub-sample:  2010--2019}                                                                                                                                                                                                                                                                          \\
\multirow{3}{*}{1} & 120 & \textbf{0.113} & 0.124          & 0.662 & \textit{0.113}          & 0.873 & 0.533 & 0.640 & \textit{0.115}          & 0.277 & 0.116                   \\
                   & 240 & 0.112          & 0.125          & 0.534 & \textit{0.112}          & 0.474 & 0.475 & 0.921 & \textit{\textbf{0.102}} & 0.216 & 0.113                   \\
                   & 360 & 0.111          & 0.120          & 0.476 & \textit{\textbf{0.103}} & 0.741 & 0.630 & 0.805 & \textit{0.103}          & 0.163 & 0.116                   \\
                   \hline 
\multirow{3}{*}{2} & 120 & 0.158          & 0.164          & 0.691 & \textit{0.158}          & 0.774 & 0.552 & 0.702 & \textit{\textbf{0.156}} & 0.384 & 0.163                   \\
                   & 240 & 0.156          & 0.169          & 0.580 & \textit{0.139}          & 0.465 & 0.495 & 0.941 & \textit{\textbf{0.125}} & 0.272 & 0.144                   \\
                   & 360 & 0.155          & 0.158          & 0.538 & 0.136                   & 0.727 & 0.643 & 0.840 & \textit{\textbf{0.128}} & 0.208 & 0.148                   \\
                   \hline 
\multirow{3}{*}{3} & 120 & \textbf{0.201} & 0.217          & 0.729 & \textit{0.209}          & 0.859 & 0.599 & 0.766 & 0.204                   & 0.578 & \textit{0.203}          \\
                   & 240 & 0.199          & 0.198          & 0.665 & \textit{0.164}          & 0.495 & 0.550 & 0.983 & \textit{\textbf{0.159}} & 0.401 & 0.178                   \\
                   & 360 & 0.196          & 0.191          & 0.612 & \textit{\textbf{0.159}} & 0.735 & 0.651 & 0.876 & \textit{0.160}          & 0.291 & 0.188                            \\
                        \hline \hline       
\end{tabular}
\end{center}
\footnotesize{Notes: NT and ST are abbreviations for no transformation and stationarization transformation respectively. Bold numbers indicate the best performance in each row. Italic numbers indicate the best LASSO performance with the same number of predictors.}
\end{table}

To dig in further, we count the number of selected variables in Table
\ref{tab:nonzero}(a) under each \texttt{TCODE} for $h=1$, averaged
over the entire testing sample. Under NT, Slasso selects more variables
than Plasso, for example, the stationary variables with \texttt{TCODE}
(1) and (4). The majority of variables selected by Plasso are of \texttt{TCODE}
(5) and (6), reflecting the issue we discussed in Remark \ref{rem:var-sel}
that Plasso tends to pick variables of large scale; Plasso under ST
further makes it clear as variables of \texttt{TCODE} (1), (2) and
(4) become of large scale after stationarization. The scale normalization
in Slasso allows all variables to have equal opportunities to start
with, and thus the selected ones are more evenly distributed. However,
despite that Slasso under ST selects many variables, the RMPSE is
still unsatisfactory due to the imbalance between the two sides of
this predictive regression where stationarized regressors on the right-hand
side do not match the persistent dependent variable on the left-hand
side. This imbalance explains the persistent prediction error in the
lower right panel of Figure \ref{fig:res_UNRATE}.

Table \ref{tab:nonzero} (b) presents the variable selection of the
126 unique regressors in each of the four lags, again averaged over
the entire testing sample. Across the lags, the numbers of selected
variables by Slasso monotonically decrease as the lags go farther
behind. Although the recent predictors are the most relevant, the
unemployment rate responds to further lags as well. Slasso under NT
is more parsimonious than that under ST. Overall, given 504 predictors
Slasso achieves smaller RMPSEs with fewer active variables than it
is fed with 121 predictors.

\begin{table}[t]
\caption{Average Numbers of Active Generic Predictors under $h=1$.}
\small
\label{tab:nonzero}
    \begin{minipage}{.5\textwidth}
      \centering
      \vspace{0.5em}
      \begin{tabular}{lc|rr|rr}
\multicolumn{6}{c}{(a) 121 Predictors} \\ \hline\hline 
TCODE & \multicolumn{1}{c|}{\multirow{2}{*}{$n$}} & \multicolumn{2}{c|}{NT} & \multicolumn{2}{c}{ST}                                                                    \\ \cline{3-6}
                       & \multicolumn{1}{c|}{}                   & \multicolumn{1}{r}{Plasso}    & \multicolumn{1}{r|}{Slasso}  & \multicolumn{1}{r}{Plasso} & \multicolumn{1}{r}{Slasso}  \\
                       \hline 
\multirow{3}{*}{All}   & 120                                    & 4.553                      & 16.206                     & 4.833                      & 26.228                     \\
                       & 240                                    & 12.381                     & 22.764                     & 21.275                     & 62.458                     \\
                       & 360                                    & 12.867                     & 32.808                     & 24.092                     & 66.156                     \\
                       \hline 
                       \\
\multicolumn{6}{c}{Each category of TCODE}                        \\
\multirow{3}{*}{1}     & 120                                    & 0                      & 0.953                      & 2.192                      & 4.058                      \\
                       & 240                                    & 0                      & 2.322                      & 7.519                      & 6.047                      \\
                       & 360                                    & 0                      & 3.569                      & 7.556                      & 6.819                      \\
                       \hline
\multirow{3}{*}{2}     & 120                                    & 0                      & 2.653                      & 1.461                      & 3.686                      \\
                       & 240                                    & 0.381                      & 4.144                      & 7.017                      & 9.536                      \\
                       & 360                                    & 0.425                      & 6.389                      & 8.519                      & 10.133                     \\
                       \hline
\multirow{3}{*}{4}     & 120                                    & 0                      & 1.569                      & 1.056                      & 4.294                      \\
                       & 240                                    & 0.006                      & 2.725                      & 5.100                      & 6.000                      \\
                       & 360                                    & 0                      & 3.972                      & 5.631                      & 6.117                      \\
                       \hline
\multirow{3}{*}{5}     & 120                                    & 2.736                      & 9.858                      & 0                      & 9.967                      \\
                       & 240                                    & 8.358                      & 11.925                     & 0.617                      & 25.381                     \\
                       & 360                                    & 8.875                      & 14.703                     & 1.131                      & 26.156                     \\
                       \hline
\multirow{3}{*}{6}     & 120                                    & 1.483                      & 1.117                      & 0                      & 4.108                      \\
                       & 240                                    & 2.786                      & 1.433                      & 0.642                      & 14.664                     \\
                       & 360                                    & 2.847                      & 3.672                      & 0.794                      & 16.064                     \\
                       \hline
\multirow{3}{*}{7}     & 120                                    & 0.333                      & 0.056                      & 0.125                      & 0.114                      \\
                       & 240                                    & 0.850                      & 0.214                      & 0.381                      & 0.831                      \\
                       & 360                                    & 0.719                      & 0.503                      & 0.461                      & 0.867                 \\
                       \hline \hline 
\end{tabular} 
    \end{minipage}
    \begin{minipage}{.6\textwidth}
      \centering
      \vspace{-9.25em}
      \vspace{0.5em}
      \begin{tabular}{lc|rr|rr}
\multicolumn{6}{c}{(b) 504 Predictors} \\ \hline\hline 
\multirow{2}{*}{Lag} & \multicolumn{1}{c|}{\multirow{2}{*}{$n$}} & \multicolumn{2}{c|}{NT} & \multicolumn{2}{c}{ST} \\ \cline{3-6}
                     & \multicolumn{1}{c|}{}                   & \multicolumn{1}{c}{Plasso}      & \multicolumn{1}{c|}{Slasso}     & \multicolumn{1}{c}{Plasso}       & \multicolumn{1}{c}{Slasso}   \\
                     \hline                      
\multirow{3}{*}{All} & 120 & 10.428 & 13.858 & 4.753 & 20.989 \\
                     & 240 & 9.494  & 10.472 & 4.033 & 22.756 \\
                     & 360 & 8.542  & 9.522  & 3.822 & 23.500 \\
                      \hline \\
\multicolumn{6}{c}{Each Lag Order}                        \\
\multirow{3}{*}{1}   & 120 & 3.164  & 5.028  & 1.397 & 6.686  \\
                     & 240 & 2.817  & 4.636  & 1.081 & 9.886  \\
                     & 360 & 2.747  & 5.464  & 0.919 & 10.964 \\
                     \hline
\multirow{3}{*}{2}   & 120 & 1.917  & 3.781  & 1.017 & 5.350  \\
                     & 240 & 1.767  & 2.608  & 0.981 & 5.178  \\
                     & 360 & 1.575  & 2.253  & 1.000 & 5.722  \\
                     \hline
\multirow{3}{*}{3}   & 120 & 2.231  & 2.703  & 1.178 & 5.322  \\
                     & 240 & 1.753  & 2.339  & 1.072 & 4.542  \\
                     & 360 & 1.289  & 1.036  & 0.989 & 3.194  \\
                     \hline
\multirow{3}{*}{4}   & 120 & 3.117  & 2.347  & 1.161 & 3.631  \\
                     & 240 & 3.158  & 0.889  & 0.900 & 3.150  \\
                     & 360 & 2.931  & 0.769  & 0.914 & 3.619    \\
                     \hline\hline 
\end{tabular}  
    \end{minipage}
  \end{table}

\end{appendices}
\end{document}